\algnewcommand\LeftComment[1]{\State\(\triangleright\)~\textit{\textcolor{green!60!black}{#1}}}
\date{}
\setlist[itemize]{topsep=0pt,itemsep=0pt}
\crefname{claim}{Claim}{Claims}
\Crefname{claim}{Claim}{Claims}
\DeclareFontShape{T1}{lmr}{m}{scit}{<->ssub * lmr/m/scsl}{}
\DeclareFontShape{T1}{lmr}{bx}{sc}{<->ssub * lmr/bx/n}{}
\declaretheorem[style=plain,name={Theorem}]{theorem}
\declaretheorem[style=plain,name={Claim},sibling=theorem]{claim}
\declaretheorem[style=plain,name={Lemma},sibling=theorem]{lemma}
\DeclarePairedDelimiter{\abs}{\lvert}{\rvert}
\DeclarePairedDelimiter{\ceil}{\lceil}{\rceil}
\DeclarePairedDelimiter{\floor}{\lfloor}{\rfloor}
\DeclarePairedDelimiter{\set}{\lbrace}{\rbrace}
\newcommand{\suchthat}{\mathrel{}\mathclose{}\ifnum\currentgrouptype=16\middle\fi\vert\mathopen{}\mathrel{}}
\DeclarePairedDelimiter{\paren}{\lparen}{\rparen}
\DeclareMathOperator{\Oh}{\mathcal{O}}
\DeclareMathOperator{\BigO}{\mathcal{O}}
\DeclareMathOperator{\BigOTilde}{\tilde{\BigO}}
\newcommand{\Weight}{\omega}
\newcommand{\bitor}{\mathbin{\mathrm{or}}}
\newcommand{\lmax}{\ell_{\max}}
\newcommand{\bnd}{\partial}
\newcommand{\LL}{\mathcal{L}}
\newcommand{\cover}{\mathrm{cover}}
\newcommand{\argcover}{\mathrm{argcover}}
\newcommand{\firstedge}{\mathrm{firstedge}}
\newcommand{\Insert}{\Call{Insert}}
\newcommand{\Delete}{\Call{Delete}}
\newcommand{\AreBiconnected}{\Call{AreBiconnected}}
\newcommand{\NextCutVertex}{\Call{NextCutVertex}}
\newcommand{\UncoverPath}{\Call{UncoverPath}}
\newcommand{\UniformUncover}{\Call{UniformUncover}}
\newcommand{\Select}{\Call{Select}}
\newcommand{\UpdateMark}{\Call{UpdateMark}}
\newcommand{\FindSize}{\Call{FindSize}}
\newcommand{\Cut}{\Call{Cut}}
\newcommand{\coverfrom}{\cover^{\mathrm{from}}}
\newcommand{\covertop}{\cover^{\mathrm{top}}}
\newcommand{\size}{\mathrm{size}}
\newcommand{\meet}{\mathrm{meet}}
\newcommand{\dist}{\mathrm{dist}}
\newcommand{\interface}{\mathrm{interface}}
\newcommand{\cluster}{\mathrm{cluster}}
\newcommand{\smallestpoint}{\mathrm{smallestpoint}}
\newcommand{\Link}{\Call{Link}}
\newcommand{\Expose}{\Call{Expose}}
\newcommand{\TransientExpose}{\Call{TransientExpose}}
\newcommand{\TransientUnexpose}{\Call{TransientUnexpose}}
\newcommand{\Create}{\Call{Create}}
\newcommand{\Destroy}{\Call{Destroy}}
\newcommand{\Merge}{\Call{Merge}}
\newcommand{\Split}{\Call{Split}}
\newcommand{\Connected}{\Call{Connected}}
\newcommand{\CoverLevel}{\Call{CoverLevel}}
\newcommand{\MinCoveredPair}{\Call{MinCoveredPair}}
\newcommand{\Cover}{\Call{Cover}}
\newcommand{\LocalUncover}{\Call{LocalUncover}}
\newcommand{\Mark}{\Call{Mark}}
\newcommand{\Unmark}{\Call{Unmark}}
\newcommand{\FindFirstReach}{\Call{FindFirstReach}}
\newcommand{\FindStrongReach}{\Call{FindStrongReach}}
\newcommand{\PromoteEdge}{\Call{PromoteEdge}}
\newcommand{\FindNextEvent}{\Call{FindNextEvent}}
\newcommand{\Swap}{\Call{Swap}}
\newcommand{\Unzip}{\Call{Unzip}}
\newcommand{\Zip}{\Call{Zip}}
\newcommand{\SetWeight}{\Call{SetWeight}}
\newcommand{\Level}{\Call{Level}}
\newcommand{\UpdateCounters}{\Call{UpdateCounters}}
\newcommand{\SumCounters}{\Call{SumCounters}}
\newcommand{\UpdateMarks}{\Call{UpdateMarks}}
\newcommand{\OrMarks}{\Call{OrMarks}}
\newcommand{\FindMarked}{\Call{FindMarked}}
\newcommand{\FindStrongMarked}{\Call{FindStrongMarked}}
\newcommand{\LongZip}{\Call{LongZip}}
\newcommand{\LongUnzip}{\Call{LongUnzip}}
\newcommand{\SelectedLevel}{\Call{SelectedLevel}}
\newcommand{\FindInPath}{\Call{FindInPath}}
\newcommand{\RefineClusterEdgeContainer}{\Call{RefineClusterEdgeContainer}}
\newcommand{\RefinePathClusterContainer}{\Call{RefinePathClusterContainer}}
\newcommand{\CasePointToPointPoint}{\hyperref[fig:toptree-cases]{(1)}\xspace}
\newcommand{\CasePointToPathPoint}{\hyperref[fig:toptree-cases]{(2)}\xspace}
\newcommand{\CasePointToPointPath}{\CasePointToPathPoint}
\newcommand{\CasePath}{\hyperref[fig:toptree-cases]{(3)}\xspace}
\newcommand{\OnCreate}{\Call{OnCreate}}
\newcommand{\OnMerge}{\Call{OnMerge}}
\newcommand{\Clean}{\Call{Clean}}
\newcommand{\vectorsplice}[3]{[{#1} \,:\, {#2} \,:\, {#3}]}
\newcommand{\matrixsplice}[3]{\llbracket {#1} \,:\, {#2} \,:\, {#3} \rrbracket}
\newcommand{\matrixsum}{\mathrm{sum}}
\newcommand{\uppersum}{\mathrm{uppersum}}
\newcommand{\addvector}{\mathrm{addvector}}
\newcommand{\clustercnt}{\mathrm{clustercnt}}
\newcommand{\clustercntvec}{\mathbf{clustercnt}}
\newcommand{\totalcnt}{\mathrm{totalcnt}}
\newcommand{\totalcntvec}{\mathbf{totalcnt}}
\newcommand{\diagcnt}{\mathrm{diagcnt}}
\newcommand{\diagcntvec}{\mathbf{diagcnt}}
\newcommand{\bmarks}{\mathrm{bmarks}}
\newcommand{\bmarksvec}{\mathbf{bmarks}}
\newcommand{\ismarked}{\mathrm{ismarked}}
\newcommand{\ismarkedvec}{\mathbf{ismarked}}
\newcommand{\totalmarks}{\mathrm{totalmarks}}
\newcommand{\totalmarksvec}{\mathbf{totalmarks}}
\newcommand{\diagmarksvec}{\mathbf{diagmarks}}
\newcommand{\ncost}{\mathrm{ncost}}
\newcommand{\Tc}{\mathcal{T}}
\newcommand{\Uc}{\mathcal{U}}
\newcommand{\Xsel}{X_{\mathrm{sel}}}
\algnewcommand{\Assert}{\textbf{assert }}
\algnewcommand{\Break}{\textbf{break}}
\algnewcommand{\Downto}{\textbf{ downto }}
\algnewcommand{\False}{\textbf{false}}
\algnewcommand{\True}{\textbf{true}}
\algnewcommand{\Not}{\textbf{not }}
\newcommand{\wn}[2][]{\todo[color=yellow!50,#1]{{\textbf{W:} #2}}}
\newcommand{\ms}[2][]{\todo[color=red!60,#1]{{\textbf{M:} #2}}}
\newcommand{\wninline}[2][]{\todo[inline, color=yellow!50,#1]{{\textbf{W:} #2}}}
\newcommand{\jhinline}[2][]{\todo[inline, color=blue!25,#1]{{\textbf{J:} #2}}}
\newcommand{\msinline}[2][]{\todo[inline, color=red!60,#1]{{\textbf{M:} #2}}}
\newcommand{\nologlogs}{{2}}
\newcommand{\mailto}[1]{\texttt{<\href{mailto:#1}{#1}>}}
\author[1]{Jacob Holm\thanks{Supported by the VILLUM Foundation grant 54451 ``Basic Algorithms Research Copenhagen (BARC)''.}}
\affil[1]{BARC, University of Copenhagen \mailto{jaho@di.ku.dk}}
\author[2]{Wojciech Nadara\thanks{ Supported by Independent Research Fund Denmark grant 2020-2023 (9131-00044B) “Dynamic Network Analysis” (while being employed in Denmark) and European Union’s Horizon 2020 research and
		innovation programme, grant agreement No. 948057 — BOBR (while being employed in Poland)}}
\affil[2]{University of Warsaw and Technical University of Denmark \mailto{w.nadara@mimuw.edu.pl} }
\author[3]{Eva Rotenberg\thanks{This work was supported by Independent Research Fund Denmark grant 2020-2023 (9131-00044B) ``Dynamic Network Analysis''. The 3rd author also thanks grant VIL37507 ``Efficient Recomputations for Changeful Problems''.}}
\affil[3]{Technical University of Denmark \mailto{erot@dtu.dk}}
\author[4]{Marek Sokołowski\thanks{This work was created during this author's work at the Institute of Informatics of the University of Warsaw.
  The work of this author on this manuscript is a part of a project that has received funding from the European Research Council (ERC), grant agreement No.\ 948057 -- BOBR.}}
\affil[4]{Max Planck Institute for Informatics \mailto{msokolow@mpi-inf.mpg.de}}
\title{Fully dynamic biconnectivity in $\BigOTilde(\log^2n)$ time}
\begin{document}
\maketitle
\thispagestyle{empty}

\begin{textblock}{20}(-1.83, 8.1)
	\includegraphics[width=40px]{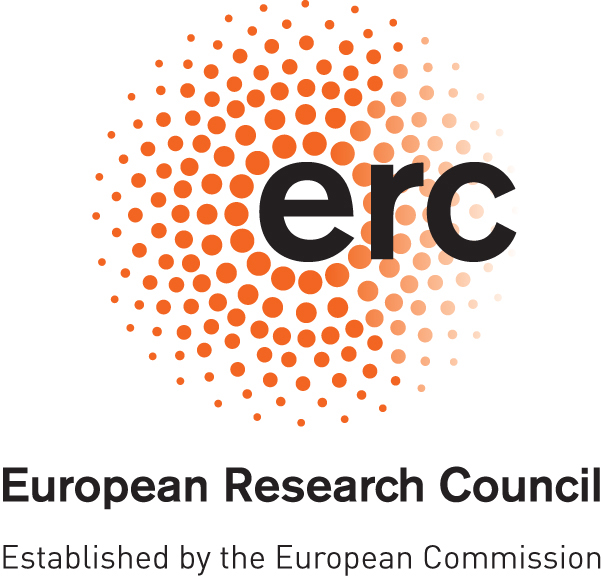}%
\end{textblock}
\begin{textblock}{20}(-2.05, 8.5)
	\includegraphics[width=60px]{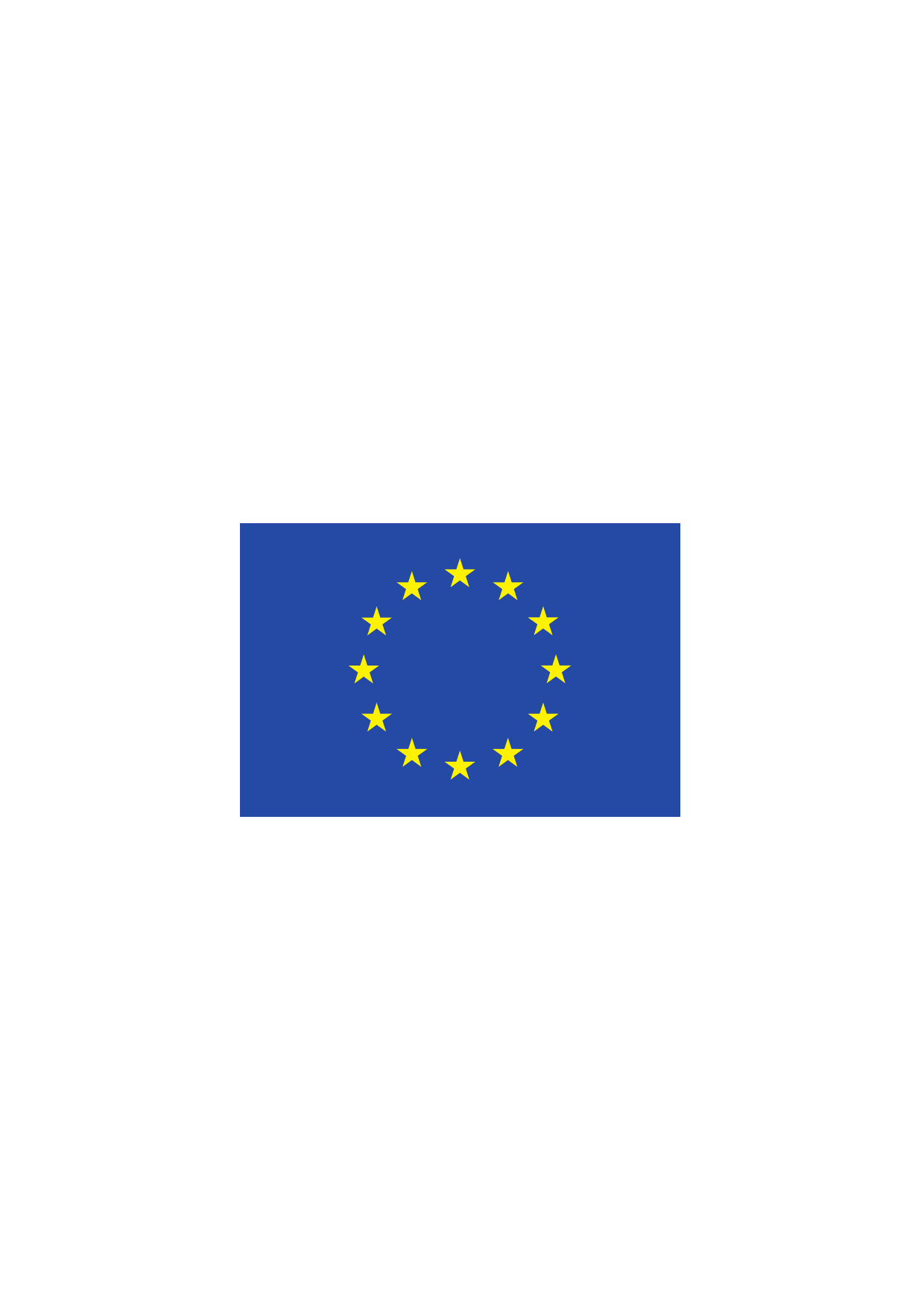}%
\end{textblock}

\begin{abstract}
We present a deterministic fully-dynamic data structure for maintaining information about the cut-vertices in a graph; i.e. the vertices whose removal would disconnect the graph. Our data structure supports insertion and deletion of edges, as well as queries to whether a pair of connected vertices are either biconnected, or can be separated by a cutvertex, and in the latter case we support access to separating cutvertices. All update operations are supported in amortized $\BigO(\log ^2 n \log^{\nologlogs}\log  n)$ time, and queries take worst-case $\BigO(\log n \log ^{\nologlogs}\log n)$ time. Note that these time bounds match the current best for deterministic dynamic connectivity up to $\log\log n$ factors. 

The previous best algorithm for biconnectivity had an update time of $\BigO(\log ^4 n \log \log n)$ by Thorup [STOC'00], based on the $\BigO(\log ^5 n)$ algorithm by Holm, de Lichtenberg, and Thorup [STOC'98].

We obtain our improved running time by a series of reductions from the original problem into well-defined data structure problems. While we do indeed apply the well-known techniques for improving running time of two-edge connectivity [STOC'00, SODA'18], surprisingly, these techniques alone do not lead to an update time of $\BigOTilde(\log^3 n)$, let alone the $\BigOTilde(\log^2 n)$ we give as a final result. 

Our contributions include a formally defined \emph{transient expose} operation, which can be thought of as a cheaper read-only expose operation on a top tree. For each vertex in the graph, we maintain a data structure over its neighbors, and in this data structure we apply biasing (twice) to save an $\BigOTilde(\log n)$ factor (twice, so two $\BigOTilde(\log n)$ factors). One of these biasing techniques is a new, simple biased disjoint sets data structure, which may be of independent interest. 
Moreover, in this neighborhood data structure, we facilitate that the vertex can select two VIP neighbors that get special treatment, corresponding to its potentially two neighbors on an exposed path, improving an otherwise $\log n$-time operation down to constant time. It is this combination of VIP neighbors with the transient expose operation that saves an $\BigOTilde(\log n)$-factor from another bottle neck. 

Combining these technical contributions with the well-known techniques for two-edge connectivity [STOC'00, SODA'18], we obtain the desired update times of $\BigO(\log^2 n \log^2 \log n)$. The near-linear query time follows directly from the usage of transient expose.




\end{abstract}

\newpage

\setcounter{page}{1}
\section{Introduction}

Graphs are an important discrete mathematical model for networks, and are useful for analysing networks and answering questions such as connectivity. In dynamic graph algorithms, we seek to efficiently update our analysis of a changing graph, in order to promptly answer questions about its properties. Historically, one of the first problems to be studied for dynamic graphs was connectivity~\cite{Frederickson85}, which has since received ample attention~\cite{Eppstein97,Henzinger:1999,HeTh97,HolmDeLichtenbergThorup,Thorup00,Eppstein:2003,patrascu06,Lacki15,ChuzhoyGLNPS20,HuangHKPT23}. A problem that is very related to whether a graph is connected is whether --- and where --- it is close to being disconnected, in the sense that the removal of one edge or one vertex would disconnect the graph. These problems, of $2$-edge connectivity~\cite{WestbrookT92,Frederickson97,Eppstein99,HolmDeLichtenbergThorup,Henzinger97,Holm18a,HolmR21}, 
and $2$-vertex connectivity, also known as biconnectivity~\cite{WestbrookT92,Rauch92,Rauch94,Henzinger95,HenzingerK95,Eppstein99,HolmDeLichtenbergThorup,Henzinger00}, 
also received attention, even in restricted graph classes~\cite{PoutreW98,HolmR20,HolmHR23,HolmR24}.

Biconnectivity can be described as a property of a vertex pair: two vertices are biconnected if there exist two vertex disjoint paths connecting them. Equivalently, by Menger's Theorem~\cite{menger}, there is no cut-edge or cut-vertex separating them. Another description of biconnectivity is as an equivalence relation on edges, where a pair of different edges are related if they are both on some cycle. Intuitively, the cutvertices partition the edge set into biconnected components under that equivalence relation. 
\wn{low prio: unify cut-vertex vs cutvertex}

The related notion of two-edge connectivity has a similar description: a pair of vertices are two-edge connected if there are two \emph{edge}-disjoint paths connecting them, or, by Menger's Theorem, there does not exist a cut-\emph{edge} separating them. And two-edge connectivity is an equivalence relation on the \emph{vertices}. 

In a dynamic graph, some of the main challenges with maintaining biconnectivity and two-edge connectivity compared to connectivity, can be understood via these equivalence relations on, respectively,  edges and vertices. For just connectivity, deleting an edge of the graph can lead to one equivalence class (i.e. one connected component) splitting up into two equivalence classes (i.e. two connected components). For two-edge and biconnectivity, this is no longer the case. Deleting just one edge may split one equivalence class into up to linearly many new ones. For two-edge connectivity, a data structure has to (explicitly or implicitly) notice or register new cut-edges that appear because of the deletion. For biconnectivity, the data structure has to be similarly aware both of new cut-edges and of new cutvertices, that now suddenly separate the endpoints of the deleted edge.
%
%

Due to these increasingly hard challenges, there has historically been a hierarchy between connectivity, two-edge connectivity, and biconnectivity, in that order, all of which have been studied using similar methods through the history of dynamic graph algorithms. Frequently, while the technical tools for dynamic connectivity are useful towards dynamic two-edge connectivity and biconnectivity, the solutions of the latter are increasingly more technically involved, with more complicated solutions or slower running times. A chief example of this is the seminal deterministic dynamic connectivity paper~\cite{HolmDeLichtenbergThorup}, where the running times are $\log ^2 n$, $\log ^4 n$, and $\log ^5 n $, respectively. 

With this paper, we show that while there is very much still a hierarchy in the complexity of the data structures needed to maintain those different notions of connectivity, their best deterministic algorithms now all have operation times that are in the order of $\BigOTilde(\log ^2 n)$, differing only in the number of $\log\log n$ factors. Here, we use $\BigOTilde(f(x))$ to hide $\log(f(x))$ factors.

Our main contribution is the following theorem:

\begin{theorem}
  \label{thm:main}
  There exists a~deterministic data structure maintaining a~fully dynamic $n$-vertex undirected graph in the word RAM model with $\Omega(\log n)$ word size.
  The data structure can handle the following updates and queries:
  \begin{itemize}
    \item $\Insert{v, w}$ insert an~edge with endpoints $v$ and $w$, 
	\item $\Delete{e}$ delete the edge $e$,
    \item $\AreBiconnected{v, w}$: given two vertices $v$, $w$, determine whether $v$ and $w$ are biconnected,
    \item $\NextCutVertex{v,w}$: given two vertices $v$ and $w$ that are in the same connected component, if there exists at least one cutvertex whose removal separates them, return the such cutvertex closest to $v$.  (Note that any path from $v$ to $w$ will have the cutvertices separating v from w in the same linear order, so ``first'' is well-defined.) If no such cutvertex exists, return $w$, regardless of whether $w$ is cutvertex.
  \end{itemize}

  The amortized time consumption of $\Insert{}$ and $\Delete{}$ is   $\BigO\left(\log^2 n \log ^\nologlogs \log n\right)$, and the queries take $\BigO\left( \log n \log ^\nologlogs \log n\right)$ worst-case time. 

  The space consumption of the data structure is $\BigO\left(m + n  \log ^\nologlogs \log n\right)$.
\end{theorem}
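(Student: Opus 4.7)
The plan is to follow the Holm--de Lichtenberg--Thorup (HLT) level framework as the backbone, but replace every component that would otherwise cost $\Omega(\log^3 n)$ per update with a more refined data structure, and finally combine everything with the two-edge connectivity speedups of Thorup~[STOC'00] and Holm--Rotenberg~[SODA'18]. Concretely, I would first reduce fully dynamic biconnectivity to maintaining, for each HLT level $\ell$, a spanning forest $F_\ell$ together with the set of non-tree edges at that level, and the ability to answer ``is there a level-$\ell$ non-tree edge \emph{covering} a given tree path $u \leftrightarrow v$ in $F_\ell$?'' To that end, I would encode the forest $F_\ell$ using a top tree augmented with per-cluster counters tracking the number of covering non-tree edges and marks recording whether a boundary vertex is incident to any such edge. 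A biconnectivity query $\AreBiconnected{u,v}$ and $\NextCutVertex{u,v}$ then reduce to path queries in the top-tree hierarchy.

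Next, I would formalise the \emph{transient expose} operation on the top tree: given endpoints $u,v$, it produces, in worst-case $\BigO(\log n)$ time, a read-only view of the path cluster $u \leftrightarrow v$ together with the $\BigO(\log n)$ clusters attached to the path, but without restructuring the tree. This is what enables queries to run in $\BigO(\log n \log^{\nologlogs}\log n)$ rather than amortised bounds. Around every vertex $x$ I would maintain a neighbourhood data structure storing all non-tree edges $(x,\cdot)$ and all incident tree edges, such that, upon an update, I can locate the interface between two clusters containing $x$. For this the standard approach spends $\BigO(\log n)$ per edge touched; I would replace it with a biased structure where an edge's rank depends on how often its incident cluster has been queried, and with a simple \emph{biased disjoint sets} structure (the paper flags this as possibly of independent interest) to manage which edges lie inside which attached cluster. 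Biasing here twice shaves two $\log n / \log\log n$ factors.

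The hardest step, and the real source of the $\BigOTilde(\log^2 n)$ bound, is the interaction of the neighbourhood structure with the top-tree path along which HLT level promotions occur. A naive implementation still spends $\Omega(\log n)$ time per exposed cluster because, at each level of the top tree, the exposed vertex has to re-identify the (at most two) path neighbours among $\deg(x)$ many neighbours. My plan is to let each vertex designate up to two \emph{VIP neighbours} which receive constant-time handles in the neighbourhood data structure. Because the transient expose traverses the top-tree path and, at each step, only re-identifies the path neighbours of the current cluster's boundary vertices, VIP neighbours allow this step to run in constant time; summed over $\BigO(\log n)$ levels this saves the final $\log n / \log\log n$ factor.

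With these ingredients in place, the proof proper is a bookkeeping argument. I would verify that each edge insertion/deletion triggers $\BigO(\log n)$ cluster restructurings per level, each restructuring costing $\BigO(\log n \log^{\nologlogs}\log n)$ amortised through the biased structures and VIP mechanism, and that HLT levels yield amortised $\BigO(\log n)$ level promotions per update, so the total amortised update cost is $\BigO(\log^2 n \log^{\nologlogs}\log n)$. Queries avoid the level-promotion cost entirely thanks to transient expose, so they run in $\BigO(\log n \log^{\nologlogs}\log n)$ worst case. The space bound $\BigO(m + n \log^{\nologlogs}\log n)$ then follows because only the VIP/neighbourhood tables add a sublogarithmic-per-vertex overhead on top of the $\BigO(m+n)$ footprint of the HLT forest and top trees.
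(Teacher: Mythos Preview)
Your outline captures several of the paper's key ingredients --- transient expose, VIP (selected) neighbours, and biased disjoint sets --- but it has two genuine gaps that would prevent the argument from going through.

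First, the biasing is not adaptive to query frequency. In the paper, the weight of an edge $\vec{vw}$ in the neighbourhood structure $N_v$ is the \emph{size of the subtree} $T[\vec{vw}]$ (or a closely related quantity for interface edges). The whole point is a \emph{Vertical Path Telescoping Lemma}: when you walk a root-to-leaf path in the top tree and, at each cluster with midpoint $x$, pay $1 + \log\frac{\Weight(x)}{\Weight(e)}$ for the one or two path edges $e$ incident to $x$, these logs telescope and the total along the path is $\BigO(\log n)$, not $\BigO(\log^2 n)$. Your frequency-based biasing gives no such structural telescoping; without it you are stuck at $\Theta(\log n)$ clusters times $\Theta(\log n)$ per neighbourhood access, i.e.\ $\BigOTilde(\log^3 n)$ per update.

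Second, your graph-to-tree reduction is too coarse. ``Is there a level-$\ell$ non-tree edge covering $u\ldots v$?'' is the $2$-edge-connectivity question; for biconnectivity the relevant object is, for every vertex $x$, a descending sequence of equivalence relations $\mathcal L^x_0 \supseteq \cdots \supseteq \mathcal L^x_{\lmax}$ on the tree edges incident to $x$, recording at which level each adjacent pair is (transitively) covered. Deleting a non-tree edge $uv$ may split these classes at every internal $x$ on $u\ldots v$, and the split is \emph{not} determined locally. The paper handles this by gradually lowering the level of $uv$ one step at a time and, at each step, distinguishing a path-wide $\UniformUncover{}$ (legal only when every affected pair is \emph{uniform}, so the split is forced) from a per-vertex $\LocalUncover{}$; the uniformity condition is exactly what lets the uncover be encoded as a lazy update on path clusters and hence cost $\BigOTilde(\log n)$ rather than touching every internal vertex. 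Your sketch has no counterpart to this mechanism, so the delete step would not achieve the claimed bound.

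Finally, the bookkeeping at the end is miscounted: $\Link{}/\Cut{}/\Expose{}$ each cost $\BigOTilde(\log^2 n)$ but are called only $\BigO(m)$ times overall, while the cheap operations ($\Cover{}$, $\UniformUncover{}$, $\FindSize{}$, etc.) are called $\BigO(m\log n)$ times at $\BigOTilde(\log n)$ each. Your ``$\BigO(\log n)$ restructurings per level, each $\BigOTilde(\log n)$'' does not reflect this split.
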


The $\NextCutVertex{}$ operation above is thought of as a useful iterator: one of the applications of fully dynamic biconnectivity is its usage in static graph algorithms: Often, one can translate a specific type of induction proof into an efficient algorithm. Namely, the strategy, in which one deletes an arbitrary edge, notices whether biconnectivity is violated, handles each occurring biconnected component recursively in the affirmative case (motivating the $\NextCutVertex{}$ function), and combines them into a final solution that also takes the deleted edge into account. Note that for such use cases, amortized running times like ours are no less attractive than worst-case ones.

\paragraph{History of fully dynamic biconnectivity} The first sub-linear update time algorithms were presented in FOCS 1992. One by Henzinger~\cite{Rauch92,Henzinger95fully}, presenting a deterministic, amortized update time of $\BigO(m^{2/3})$ for the problem\footnote{We use $n$ and $m$ to denote, respectively, the number of vertices in the graph, and the number of edges at the time of the update.}. Another by Eppstein, Galil, Italiano, and Nissenzweig~\cite{EppsteinGIN92,Eppstein97}, whose general sparsification framework can be used to give an update time for biconnectivity that is linear in the number of vertices rather than in the number of edges. Later, these were improved to deterministic worst-case $\BigOTilde(m^{1/2})$, also by Henzinger~\cite{Rauch92,Henzinger00}. It was soon observed that those two flavours of techniques can be combined to obtain a running time of $\BigOTilde(n^{1/2})$.




The first polylogarithmic algorithm for biconnectivity was presented in FOCS 1995 by Henzinger and King~\cite{HenzingerK95}, obtaining a Las Vegas style randomised algorithm with an amortized expected update time of $\BigO(\log ^4 n)$ and a query time of $\BigO(\log ^2 n)$.  

The first \emph{deterministic} polylogarithmic fully-dynamic algorithm for biconnectivity was presented in STOC'98~\cite{HolmLT98,HolmDeLichtenbergThorup}. The original conference version of that paper (\cite{HolmLT98}) erroneously claimed (in the abstract and theorem statement) that the amortized time per operation was $\BigO(\log ^4 n)$, and in the journal version~\cite{HolmDeLichtenbergThorup} this is updated to the $\BigO(\log^5 n)$ that is on-par with the actual data structures and proofs from the paper. When later Thorup~\cite{Thorup00} at STOC 2000 presented a method for shaving an $\BigO(\log n / \log\log n)$ factor off from the running times of \cite{HolmDeLichtenbergThorup}'s algorithms for connectivity, 2-edge connectivity, and biconnectivity, the miscount of log-factors had carried over, and \cite{Thorup00} paper states that the new technique leads to improved update times of $\BigO(\log ^3 n \log\log n)$ --- this is not the case; it leads to improved update times of $\BigO(\log^4 n \log\log n)$. 

Later, in SODA 2018, Holm, Rotenberg, and Thorup present further improved algorithms for the related problem of two-edge connectivity~\cite{Holm18a}, presenting a data structure with an update time of $\BigO(\log ^2 n \log^2 \log n)$, that is, shaving yet another $\BigO(\log n / \log\log n)$ factor. The authors of \cite{Holm18a} hypothesise that similar techniques yield an immediate improvement for biconnectivity. However, as we will soon clarify in Section~\ref{sec:tree-structure}, those techniques  rather 
improve the running time for only one of the bottlenecks for the update algorithm. Even obtaining $\BigOTilde(\log ^3 n)$ amortized update time for biconnectivity requires non-trivial work. And obtaining $\BigOTilde(\log ^2 n)$ additionally requires new technical innovations and ideas, which we present in later sections. 

\subsection{Technical Overview} \label{ssec:overview}
Our data structure, at its core, shares design principles with the original $\BigO(\log^5 n)$ update time biconnectivity data structure~\cite{HolmDeLichtenbergThorup}: we dynamically maintain a~spanning forest $F$ of the graph $G$ at hand.
For every vertex $v$ of the graph, we maintain an~instance of \emph{neighborhood data structure} $N_v$ that basically maintains the restriction of the biconnectivity relation in $G$ to the set of neighbors of $v$ in $F$.
On top of that, we maintain \emph{top trees}~\cite{TopTreesOriginal} that dynamically provide a~hierarchical decomposition of $F$ into well-structured subtrees (\emph{clusters}), enabling us to deduce the global biconnectivity relation from the information stored in the neighborhood data structures $N_v$.
Each update of $G$ brings $\BigO(\log n)$ changes to $F$ on average, and each such change in turn adjusts $\BigO(\log n)$ clusters of $F$.
In the original biconnectivity data structure, each cluster adjustment requires an~$\BigO(\log^3 n)$-time update of a~neighborhood data structure, followed by an~$\BigO(\log^2 n)$-time maintenance of bookkeeping information stored together with the cluster, resulting in amortized $\BigO(\log^5 n)$ time per graph update.

We structure our work as a~series of reductions to progressively simpler data structure problems.
In~\Cref{sec:graph-structure}, we reduce the original problem of biconnectivity in $G$ to a~tree problem of the maintenance of \emph{tree cover levels} in $F$.
While this step is mostly based on the ideas from~\cite{HolmDeLichtenbergThorup}, the original work had a~lot of moving parts that made reasoning about the correctness and efficiency quite troublesome (hence the incorrect time complexity analysis in the conference version of their paper).
Here, we propose a~new view of this reduction, which makes the required arguments much more transparent; we consider this to be (one of) significant technical contributions of this work.

Next, in~\Cref{sec:tree-structure}, our crucial contribution is to show how to use top trees to efficiently reduce the maintenance of tree cover levels in $F$ to the implementation of \emph{biased} neighborhood data structures, where we assign \emph{weights} to each element of the neighborhood data structure, aiming to process queries related to \emph{heavier} elements much more efficiently.
This way, we ensure that queries to the neighborhood data structures performed by the tree cover level data structure have --- on average --- \emph{constant cost}.
We also show how to use the methods of Thorup~\cite{Thorup00} and Holm, Rotenberg and Thorup~\cite{Holm18a} to optimize the $\BigO(\log^2 n)$ time required to bookkeep information stored for each cluster of the top tree to $\BigO(\log^2 \log n)$.

Then, in~\Cref{sec:nbd}, we reduce the implementation of neighborhood data structure to the problem of \emph{biased disjoint sets}, which we solve in~\Cref{sec:bds}.
Eventually, this allows us to process each update of $N_v$ of \emph{cost} $C$ in amortized $\BigO(C \cdot \log^2 \log n)$ time.

We now follow with a~more in-depth explanation of the techniques used in our work.

\paragraph{Reducing dynamic biconnectivity to the dynamic tree cover level data structure}
The original amortized polylogarithmic data structure for biconnected components~\cite{HolmDeLichtenbergThorup} maintains a~spanning forest $F$ of the graph $G$.
Each non-tree edge of the graph is associated a~level $i \in \{0, 1, \ldots, \lmax\}$, where $\lmax \in \BigO(\log n)$.
This assignment induces a~sequence of graphs $G = G_0 \supseteq G_1 \supseteq \ldots \supseteq G_{\lmax} \supseteq F$, where $G_i$ is the subgraph of $G$ whose edge set contains precisely $E(F)$ and the non-tree edges of level at least $i$. We are going to continue using that idea.

According to the definition of biconnectedness, if two vertices $u$ and $v$ are connected by a bridge, they are not biconnected, however for our needs it will be more convenient to treat them as such. Hence we introduce the notion of \emph{pseudo-biconnectivity}, that is, we say that two vertices $u$ and $v$ are \emph{pseudo-biconnected} if and only if they are either biconnected or connected by a bridge. Checking if two vertices are connected by a bridge is not a challenge, hence these two notions are somewhat equivalent for our end goals. We remark that what \cite{HolmDeLichtenbergThorup} somewhat misleadingly called as biconnectivity relation is exactly the prescribed pseudo-biconnectivity relation.

Consider then two edges $e_1, e_2$.
We say that the pair $e_1, e_2$ is \emph{biconnected at level~$i$} if there exists a biconnected component in $G_i$ containing both $e_1$ and $e_2$.
Then, we define the \emph{cover level} of the pair $e_1, e_2$ as the largest~$i$ such that $e_1, e_2$ is biconnected at level $i$; if $e_1, e_2$ is not biconnected at level $0$, we define the cover level of $e_1, e_2$ to be $-1$.
We are going to focus mostly on cover level of pairs $vx, vy$, which are adjacent tree edges (that is, edges of $F$). For such pairs, the condition of belonging to the same biconnected component of $G_i$ can be equivalently expressed as the existence of a path from $x$ to $y$ in $G_i$ that excludes $v$ and all non-tree edges of level lower than $i$. \wn{I extended the cover level definition from adjacent tree pairs to all pairs}

It is proved in~\cite{HolmDeLichtenbergThorup} that for every vertex $v$, the relation of biconnectedness at any fixed level~$i$ between the pairs of edges incident to $v$ is an~equivalence relation.
Hence the cover levels of edges incident to $v$ can be represented as a~sequence of equivalence relations $\mathcal{L}_0^v, \mathcal{L}_1^v, \ldots, \mathcal{L}_{\lmax}^v$.
The sequence is \emph{descending}, i.e., it satisfies the following property: for any pair of levels $i < j$ and an~equivalence class $X \in \mathcal{L}_j^v$, there is $Y \in \mathcal{L}_i^v$ with $X \subseteq Y$.
For convenience, for any edge $e$ incident to $v$, let $\mathcal{L}_i^v(e)$ be the equivalence class of $\mathcal{L}_i^v$ containing $e$. It is also proved in~\cite{HolmDeLichtenbergThorup} that the information of cover levels of adjacent tree edges is sufficient to determine whether two non-adjacent vertices $p, q$ are pseudo-biconnected. Namely, the highest value of $i$ such that $p$ and $q$ are pseudo-biconnected in $G_i$ is equal to the smallest value of the cover level of two adjacent edges on the unique simple path $p \ldots q$ in $F$. Here, we assume that if $p$ and $q$ are not pseudo-biconnected even in $G_0$, then the said highest value of $i$ equals $-1$ (for convenience, assume that the cover level of a~one- or two-vertex path is $\lmax$). 
We point out that adjacent vertices are always pseudo-biconnected.


This observation motivates abstracting away the notion of pseudo-biconnectedness by introducing the following dynamic tree problem.
Fix an~integer $\lmax$.
Our goal is to maintain a~dynamic forest $F$ whose every vertex $v$ is annotated with a~descending sequence of equivalence relations $\mathcal{L}_0^v, \mathcal{L}_1^v, \ldots, \mathcal{L}^v_{\lmax}$ over the set of tree edges incident to~$v$, where these equivalence relations denote the cover levels of adjacent tree edges. We aim to maintain these relations and answer queries regarding minimum cover level on given paths, subject to any changes stemming from external updates to the graph and internal updates that we will issue. We remark that for the sake of answering whether two vertices are pseudo-biconnected, it is sufficient to simply know whether the cover level is $-1$ or at least $0$, but introducing the levels is the crucial idea behind making this data structure efficient. 

The structure of the forest can be altered and queried via the following operations:

\begin{itemize}
  \item $\Link{v, w}$: Add an~edge $(v, w)$ to the dynamic tree. The new edge is at cover level $-1$ with all adjacent tree edges.

  \item $\Cut{v, w}$: Remove the edge $(v, w)$ from the tree and all equivalence relations $\mathcal{L}^v_i$, $\mathcal{L}^w_i$.
  
  \item $\Connected{v, w}$: Return whether $v$ and $w$ are connected by a~tree path.
\end{itemize}

In the setting of maintaining biconnectivity of a~graph $G$, $\Link{}$ and $\Cut{}$ correspond, respectively, to adding a~bridge to the graph and removing an~edge from the spanning forest $F$ of the graph.

We also implement additional types of updates, using which we will introduce or remove non-tree edges in $G$, and increase or decrease the levels of these edges.
Consider first adding a~non-tree edge $pq$ to the graph; this insertion causes every pair of consecutive edges $e_1, e_2$ on the path $p \ldots q$ at cover level $-1$ to increase its cover level to $0$.

Similarly, whenever the level of $pq$ is increased from, say, $i-1$ to $i$, we increase the cover level of each pair $e_1, e_2 \subseteq p \ldots q$ from $i - 1$ to $i$. (Note that since the level of $pq$ was $i-1$ before the update, the cover level of each such pair $e_1, e_2$ was already equal to or more than $i - 1$.)
We will abstract both types of updates using the following operation $\Cover{}$:

\begin{itemize}
  \item $\Cover{p, q, i}$: Suppose that the cover level of $p \ldots q$ is at least $i - 1$.
  For every pair of consecutive edges $e_1, e_2$ on $p \ldots q$ at cover level exactly $i - 1$, increase the cover level of $e_1, e_2$ to $i$ as follows: let $x$ be the common endpoint of $e_1$ and $e_2$.
  Then in $\mathcal{L}^x_i$ replace $\mathcal{L}^x_i(e_1)$ and $\mathcal{L}^x_i(e_2)$ with $\mathcal{L}^x_i(e_1) \cup \mathcal{L}^x_i(e_2)$.
\end{itemize}

We also aim to specify a~reverse operation, bringing the cover level of each pair of edges $e_1, e_2$ on $p \ldots q$ from $i$ to $i - 1$ under the assumption that the cover level of each such pair is at least $i$ before the update.
This, however, is quite problematic: even if the cover level of $e_1, e_2$ is exactly $i$ (which means that $\mathcal{L}^x_i(e_1) = \mathcal{L}^x_i(e_2)$ and $\mathcal{L}^x_{i+1}(e_1) \cap \mathcal{L}^x_{i+1}(e_2) = \emptyset$), it is not clear at all how to split the equivalence class $\mathcal{L}^x_i(e_1)$ in $\mathcal{L}^x_i$ into two parts.


Naturally, a~split must preserve the following invariant: after the update, we should have $\mathcal{L}^x_i(e_1) \supseteq \mathcal{L}^x_{i+1}(e_1)$ and $\mathcal{L}^x_i(e_2) \supseteq \mathcal{L}^x_{i+1}(e_2)$.
So if it happens that $\mathcal{L}^x_i(e_1) = \mathcal{L}^x_{i+1}(e_1) \cup \mathcal{L}^x_{i+1}(e_2)$, the update actually is determined uniquely: we must split $\mathcal{L}^x_i(e_1)$ into $\mathcal{L}^x_{i+1}(e_1)$ and $\mathcal{L}^x_{i+1}(e_2)$.
We will call the pair $e_1, e_2$ satisfying this condition \emph{uniform}.
With this notion in place, we define the following operation that is inverse of $\Cover{}$ (whenever it is legal to call it):
 
\begin{itemize}
  \item $\UniformUncover{p, q, i}$: Suppose that the cover level of $p \ldots q$ is at least $i$, and that every pair of consecutive edges $e_1, e_2$ on $p \ldots q$ at cover level exactly $i$ is \emph{uniform}.
  Then for each such pair $e_1, e_2$ decrease the cover level of $e_1, e_2$ to $i - 1$ as follows: let $x$ be the common endpoint of $e_1$ and $e_2$. In $\mathcal{L}^x_i$, replace $\mathcal{L}^x_i(e_1)$ with $\mathcal{L}^x_{i+1}(e_1)$ and $\mathcal{L}^x_{i+1}(e_2)$. For convenience, we assume that $p \ldots q$ may consist of zero or one edges, in which cases it does nothing.
\end{itemize}

For technical reasons (needed to ensure the efficiency of the data structure), we may also specify that the dynamic tree cover level data structure is \emph{restricted} in the following sense: at any point of time, one path $a \ldots b$ in the forest may be \emph{exposed} through a~call $\Expose{a, b}$.
Then for $\UniformUncover{p, q, \cdot}$, we additionally require that $p \ldots q$ be a~subpath of $a \ldots b$.

We also define a~variant of the operation that permits uncovering non-uniform pairs of edges $e_1, e_2$ by specifying that the equivalence class $\mathcal{L}^x_i(e_1)$ after the split is as small as possible (i.e., equal to $\mathcal{L}^x_{i+1}(e_1)$).
This will come at the expense of limiting the operation to \emph{local} uncovers, affecting only a~single specified pair of edges:


\begin{itemize}
  \item $\LocalUncover{e_1, e_2, i}$: Let $e_1, e_2$ be adjacent edges with the common endpoint $x$ and at cover level exactly $i$.
  Decrease the cover level of $e_1, e_2$ to $i - 1$ by replacing $\mathcal{L}^x_i(e_1)$ in $\mathcal{L}^x$ with $\mathcal{L}^x_{i+1}(e_1)$ and $\mathcal{L}^x_i(e_1) \setminus \mathcal{L}^x_{i+1}(e_1)$.
\end{itemize}

The data structure allows to query the cover levels in the tree via the following interface:

\begin{itemize}
  \item $\CoverLevel{p, q}$: Return the cover level of the path $p \ldots q$.

  \item $\MinCoveredPair{p, q}$: Assuming that $p$ and $q$ are not connected by a~tree edge, return the first pair of edges $e_1, e_2$ on $p \ldots q$ at cover level $\CoverLevel{p, q}$.
\end{itemize}

We note that for any $i$, each biconnected component of $G_i$ induces a connected subtree of $F$. Adding an edge $vw$ to $G_i$ causes all biconnected components of $G_i$ sharing an edge with $v \ldots w$ to merge into one biconnected component. Viewing a deletion of a non-tree edge as an inverse of the addition, we conclude that deleting a non-tree edge $vw$ may cause a biconnected component of $G_i$ to split into a string of biconnected components that are linearly ordered along $v \ldots w$. We note that an internal vertex $u$ of $v \ldots w$ before the addition of $vw$ may touch only two biconnected components that will be a part of the biconnected component containing $v$ and $w$. Inversely, if we remove $vw$ from $G_i$, then after the removal there will be at most two biconnected components of $G_i - vw$ touching $u$ that were the part of the biconnected component of $G_i$ containing $v$ and $w$ before the removal. In other words, after a removal of an edge, the equivalence class of $\mathcal{L}_i^u$ containing the two edges of $v \ldots w$ incident to $t$ may split into two parts, each containing one such edge.


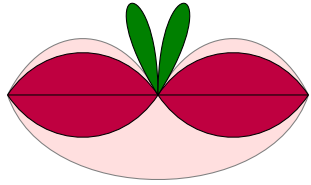
\begin{figure}[h]
\begin{center}
\begin{tikzpicture}
	\draw[gray, fill=pink!50!white] (-2,0) .. controls (-1.5,1) and (-.5,1) .. (0,0);
	\draw[gray,fill=pink!50!white] (0,0) .. controls (.5,1) and (1.5,1) .. (2,0);
	\draw[gray,fill=pink!50!white] (-2,0) .. controls (-1.5,-1.5) and (1.5,-1.5) .. (2,0);

	\draw[fill=purple] (-2,0) .. controls (-1.5,.75) and (-.5,.75) .. (0,0);
	\draw[fill=purple] (2,0) .. controls (1.5,.75) and (.5,.75) .. (0,0);
	\draw[fill=purple] (-2,0) .. controls (-1.5,-.75) and (-.5,-.75) .. (0,0);
	\draw[fill=purple] (2,0) .. controls (1.5,-.75) and (.5,-.75) .. (0,0);
	\draw[fill=green!50!black] (0,0) .. controls (-.95,1.5) and (0, 1.75) .. (0,0);
	\draw[fill=green!50!black] (0,0) .. controls (.95,1.5) and (0, 1.75) .. (0,0);
	
	\draw (-2,0) to (0,0);
	\draw (2,0) to (0,0);

\end{tikzpicture}
\end{center}
\vspace{-1em} 
\caption{ Let $xy$ and $yz$ be consecutive along the tree path $v \ldots w$. Then, deleting the non-tree edge $vw$ may cause the biconnected component containing $xy$ and $yz$ (pink) to split up into new biconnected components (purple). Other biconnected components near $y$ are unaffected (green).}
\end{figure}

The only operation among $\Insert{}, \Delete{}, \AreBiconnected{}$ and $\NextCutVertex{}$ that is nontrivial to model using the tree cover level data structure is $\Delete{v, w}$ and we are going to sketch how it is handled now. For that, we mostly follow ideas from~\cite{HolmDeLichtenbergThorup}, but we adjust them to our needs, yielding a~cleaner abstraction at the same time.

As aforementioned, it is \emph{a priori} not obvious how to remove the influence of a non-tree edge $vw$ on the cover levels (the case of removing a tree edge can be reduced to the case of removing a non-tree edge). 
Let us denote its level as $i$.
The original data structure~\cite{HolmDeLichtenbergThorup} simply removes the edge from the graph, causing the equivalence relations $\mathcal{L}_j^u$ for internal vertices $u$ of $v \ldots w$ and $j \in [0, i]$ to be temporarily invalidated and multiple internal invariants to be broken; then, they propose an~involved scheme that progressively \emph{recovers} these relations and invariants.
Instead, we propose a~much more manageable view of the process: we will gradually \emph{decrease} the level of $vw$, maintaining the invariants controlling the equivalence relations at all points of time, until the level of $vw$ drops to $-1$, at which point the edge can be safely removed from the graph.

Now, we only need to understand what happens when the level of $vw$ is decreased from $i$ to $i-1$.
If the common vertex $u$ of a~pair of adjacent edges $ux, uy$ is not an internal vertex of $v \ldots w$, then the cover level of the pair $uv, uy$ is not influenced by the removal of $vw$. It is also not influenced if the cover level of $uv, uy$ is not equal to $i$. But for the equivalence classes $\mathcal{L}_i^u$ for vertices $u$ that are internal vertices of $v \ldots w$, it is not immediately clear \emph{if} they will be split, and if yes, it is not clear \emph{how} they will be split.
In order to get to know that, we will inspect various other level-$i$ edges that may affect the shape of these equivalence classes.
The number of these edges may even be linear, so in order to make the procedure efficient, we are going to amortize our work by promoting such edges from level $i$ to $i+1$. We will ensure that the biggest achievable level will be $\Oh(\log n)$, so the total number of promotions throughout the whole algorithm will be $\Oh(m \log n)$, effectively bounding the amount of work that needs to be done. In order to bound the highest achievable level, we maintain the invariant that the biconnected components on level $i$ have at most $\ceil{\frac{n}{2^i}}$ vertices.
 
However, this invariant complicates matters: sometimes, it may be not possible to promote an edge, because promoting it would cause that invariant to be broken. In order to deal with that, we carefully craft an~order of browsing the edges of interest ``from left to right'', and determine the biconnected components resulting from the required splits in the order from $v$ to $w$ (since they are linearly order along $v \ldots w$).
When we encounter an edge that we cannot promote without breaking the invariant, we stop the procedure and repeat the symmetric process ``from right to left''. Hence, there will be only two edges that we will handle that will not be accounted for in the amortization argument.
But this leaves us with another issue: the premature stopping of the process may leave some edges of interest unprocessed. The final argument is to note that the two edges that caused both search phases to be stopped prematurely has to be contained within the same unique big resulting biconnected component, hence the searches from both ends ``met'' in the same component that does not need to be split any further, ensuring that all the required splits were actually performed. This is the case because there can be only one resulting biconnected component larger than half of the original biconnected component, and promoting edges within smaller resulting biconnected components will always be legal.  
%

Both the data structure of~\cite{HolmDeLichtenbergThorup} and our sketched algorithm crucially rely on the efficient \emph{counting} of the vertices that are reachable from any edge of a~given path $p \ldots q$ via a~path of a~given cover level $i$ in order to check if it is legal to promote an edge.
We define that a~vertex $y$ is \emph{$i$-reachable} from a~tree path $P$ (at an~edge $e$) if there exists a~tree path $P'$ of cover level at least $i$ starting from $e$, ending at $y$ and intersecting $P$ precisely at $e$; we call such $P'$ the \emph{$i$-reachability witness}. We may additionally specify that $y$ is $i$-reachable from $P$ at $e$ \emph{through $w$} if $w$ is the closer of the two ends of $e$ to $y$.
Abusing the notation slightly, we will also sometimes say that $y$ is $i$-reachable from an~ordered edge $\vec{vw}$ or from an unordered edge $vw$ through $w$ if there exists a~tree path $P$ of cover level at least $i$ with two first vertices $v$, $w$ and the final vertex $y$.
We aim to implement the following method:

\begin{itemize}
  \item $\FindSize{p, q, i}$: Return the number of $i$-reachable vertices from $p \ldots q$.
\end{itemize}

This method is crucial, because $\FindSize{p, q, i}$ determines the size of the biconnected component of $G_i$ containing $p$ and $q$ after hypothetically adding the edge $pq$ to it. Hence, calling it tells us if it is legal to promote the edge $pq$ from level $i-1$ to $i$.

Next, we want to be able to \emph{mark} some vertices of the tree.
That is, every vertex $u$ can contain a~set of user marks; more precisely, for every $i \in \{0, 1, \ldots, \lmax\}$, $u$ can be either $i$-marked or $i$-unmarked by the user of the data structure; initially, each element is $i$-unmarked for all $i$.
Roughly speaking, $u$ will be $i$-marked whenever there exists a~non-tree edge of level $i$ incident to $u$.
The marks can be altered using the following procedures:

\begin{itemize}
  \item $\Mark{u, i}$: Make $u$ $i$-marked.
  
  \item $\Unmark{u, i}$: Make $u$ $i$-unmarked.
\end{itemize}

Finally, we want to \emph{search} for $i$-marked vertices that are $i$-reachable from a~given tree path $P$:

\begin{itemize}
  \item $\FindFirstReach{p, q, i}$: Return $(ab, c, y)$, where $ab$ is an edge of $p \ldots q$, $a$ is closer to $p$ than $b$ and $y$ is an $i$-marked vertex that is $i$-reachable from $p \ldots q$ at $ab$ through $c \in \{a, b\}$. Among all such tuples, choose the one where $\dist_F(a, p)$ is minimum, and in case of ties, minimize $\dist_F(c, p)$. Return $(\perp, \perp, \perp)$ if there is no tuple satisfying the conditions.
\end{itemize}

This searching method is meant to help us identify level-$i$ edges that influence cover levels of some pairs of edges along $p \ldots q$. For an $i$-marked vertex $y$ that is $i$-reachable from $p \ldots q$, we have that it has a neighbor $z$ such that $yz$ is a level-$i$ non-tree edge. Such an edge is interesting to look at since it guarantees that cover levels of adjacent pairs of edges on $p \ldots q \cap y \ldots z$ stays at least~$i$ after the removal of $pq$ from $G_i$. The condition of being $i$-reachable ensures that $yz$ belongs to the same biconnected component of $G_i$ as $pq$ before the removal of $pq$. 
Minimizing $\dist_F(a, p)$ and $\dist_F(c, p)$ will help us identify them in the desired order ``from left to right''. 

For technical reasons, we will sometimes additionally require the first pair of edges of the reachability witness to be at cover level strictly larger than $i$.
Formally, we will say that a~vertex $y$ is \emph{strongly $i$-reachable}
from a~tree path $P$ at $e=vw$ through $w$ (or from $\vec{vw}$, or from $vw$ through $w$) if it is $i$-reachable from $P$ at $e$ through $w$ and, moreover, the reachability witness $P'$ consists of at least two edges, where the first two edges of $P'$ are at cover level at least $i + 1$.
Then we implement the following query:

\begin{itemize}
 \item $\FindStrongReach{p, q, e, b, i}$: Return an $i$-marked vertex $y$ that is strongly $i$-reachable from $p \ldots q$ at $e$ through $b$. Return $\perp$ in case there is no $y$ satisfying the conditions.
\end{itemize}

For a particular vertex $u$ whose class $\mathcal{L}_i^u$ is supposed to split into two parts, this method will help us distinguish between edges $yz$ that are supposed to be a part of the resulting biconnected component on the left of $u$, from these that are supposed to be a part of the resulting biconnected component on the right of $u$.

The following lemma proves that the sketched dynamic biconnectivity data structure can be efficiently reduced to the (restricted) dynamic tree cover level data structure:

\begin{lemma}
  \label{lem:first-reduction}
  The dynamic biconnectivity data structure can be implemented using a~restricted dynamic tree cover level data structure with $\lmax \in \BigO(\log n)$ so that, when initialized with an~edgeless $n$-vertex graph, any sequence of $m$ $\textsc{Insert}$ or $\textsc{Delete}$ updates can be modeled using:
  \begin{itemize}
    \item $\BigO(m)$ calls to $\Link{}$, $\Cut{}$ and $\Expose{}$;
    \item $\BigO(m \log n)$ calls to $\Cover{}$, $\UniformUncover{}$, $\LocalUncover{}$, $\FindSize{}$, $\Mark{}$, $\Unmark{}$, $\FindFirstReach{}$ and $\FindStrongReach{}$.
  \end{itemize}
  Moreover, $\textsc{AreBiconnected}$ and $\NextCutVertex{}$ queries can be implemented with a~constant number of calls to $\Connected{}$, $\CoverLevel{}$, $\MinCoveredPair{}$ and $\FindSize{}$.
  The time complexity required for this reduction is $\Oh(m \cdot \log^2 n)$.
\end{lemma}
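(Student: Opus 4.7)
The plan is to follow the classical Holm--Lichtenberg--Thorup framework and map each biconnectivity operation to a~small number of tree cover level calls. Maintain a~spanning forest $F$ of $G$, assign each non-tree edge a~level in $\{0, 1, \ldots, \lmax\}$ with $\lmax = \lceil \log_2 n \rceil$, and preserve the invariant that every biconnected component of $G_i$ has at most $\lceil n/2^i \rceil$ vertices. This invariant bounds the total number of level promotions of non-tree edges over the entire execution by $\BigO(m \log n)$, and will serve as the amortization hook. For each vertex $u$ and each level $i$, separately store the set of non-tree edges of level $i$ incident to $u$ and issue \Mark{u, i}/\Unmark{u, i} when this set toggles between empty and nonempty; this costs $\BigO(1)$ bookkeeping per creation, deletion, or promotion of a non-tree edge.

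For \Insert{v, w}: if \Connected{v, w} is false, invoke \Link{v, w}. Otherwise add $vw$ as a~non-tree edge at level $0$ and call \Expose{v, w} followed by \Cover{v, w, 0}. For \Delete{vw} when $vw$ is a~tree edge, perform \Cut{} and run the standard replacement search: for $i = 0, 1, \ldots, \lmax$, use \FindFirstReach{} on the smaller side to locate level-$i$ candidate non-tree edges, promote each candidate that does not cross to the other side (paid for by promotions), use the first crossing one as the replacement, and also promote the entire smaller side at level $i$ via \Cover{}. The delicate case is \Delete{vw} for a~non-tree edge of level $i_0$: iteratively lower the level of $vw$ from $i_0$ down to $-1$ and then discard it. To lower from $i$ to $i-1$, call \Expose{v, w} once and alternate two sweeps along $v \ldots w$, one from $v$ towards $w$ and one from $w$ towards $v$; at each step use \FindFirstReach{v, w, i} to locate the next level-$i$ witness $yz$ not yet handled, check via \FindSize{} whether promoting $yz$ to level $i+1$ preserves the size invariant for the resulting component, and if so promote it (issuing a~\Cover{} at level $i+1$ and \Mark/\Unmark bookkeeping), otherwise stop the sweep. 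After both sweeps halt, invoke \UniformUncover{} on the subpath between the two stopping points, and resolve the two boundary pairs with \LocalUncover{}, using \FindStrongReach{} to decide on which side each boundary equivalence class should split.

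Queries are immediate: \AreBiconnected{v, w} tests \Connected{v, w} and then whether $\CoverLevel{v, w} \ge 0$; \NextCutVertex{v, w} calls \MinCoveredPair{v, w} and, if the returned pair has cover level $-1$, returns its shared endpoint, else returns $w$. The count bounds follow: \Link, \Cut, \Expose are each called $\BigO(1)$ times per update, giving $\BigO(m)$; every other tree cover level call is charged either to an insertion, to one of the $\BigO(\log n)$ level-decreases of a~specific deleted non-tree edge, or to a~level promotion of a~non-tree edge, of which there are $\BigO(m \log n)$ in total by the size invariant, summing to $\BigO(m \log n)$. The $\BigO(m \log^2 n)$ time bound for the reduction itself follows because each such call is accompanied by $\BigO(\log n)$ bookkeeping (e.g., locating the per-level bucket of non-tree edges at a~given vertex, or threading through at most $\lmax$ level iterations during a~deletion). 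The main obstacle is the correctness argument for the two-directional sweep: one must show that after both sweeps stop, every pair of consecutive edges on the central subpath is \emph{uniform} at level $i$ in the sense required by \UniformUncover{}. This follows from the structural observation recalled in the excerpt that at most one resulting biconnected component of $G_i - vw$ can exceed half the size of the original, so the witnesses halting both sweeps necessarily lie inside this unique ``big'' surviving component and do not force any further split; consequently, all relevant level-$i$ witnesses on the central subpath have already been promoted, leaving exactly the uniform structure required.
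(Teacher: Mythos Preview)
Your high-level framework is correct, but your description of the level-lowering step (your analogue of \textsc{UncoverPath}) has a genuine gap that breaks correctness. In the paper, the $\UniformUncover{}$ and $\LocalUncover{}$ calls are \emph{interleaved} with the sweep: as the sweep advances from $u$ toward $v$, it issues $\UniformUncover{}$ on each stretch between consecutive ``important'' vertices (projections of retrieved level-$i$ edges) and $\LocalUncover{}$ at each important vertex, \emph{before} proceeding to the next retrieved edge; the tiebreaking between $\FindFirstReach{}$ and $\FindStrongReach{}$ inside $\FindNextEvent{}$ is precisely what makes the $\LocalUncover{}$ split correct at each such vertex. Your proposal instead defers all uncovering until after both sweeps halt and then applies a single $\UniformUncover{}$ to the subpath between the two stopping points. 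But both stopping points lie inside the unique large surviving component $B_j$, so that central subpath is where \emph{no} uncovering should occur; meanwhile the $c-1$ actual boundaries between $B_1,\ldots,B_c$ that do need uncovering lie on the outer portions of $u\ldots v$, which your scheme never touches. Resolving only ``the two boundary pairs'' with $\LocalUncover{}$ cannot suffice, since there may be many small components on either side of $B_j$. Your justification that ``all relevant level-$i$ witnesses on the central subpath have already been promoted'' is also false: the sweeps stop precisely because a witness in $B_j$ could \emph{not} be promoted, so level-$i$ witnesses persist there and the uniformity precondition of $\UniformUncover{}$ need not hold on that subpath.

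Two smaller issues: you place $\Expose{v,w}$ inside each level-decrement, giving $\Theta(\lmax)$ calls to $\Expose{}$ per \textsc{Delete} and contradicting your own $\BigO(m)$ count (the paper calls $\Expose{}$ once before the entire loop over levels); and your replacement search for tree-edge deletion (``promote the entire smaller side at level $i$ via $\Cover{}$'') is $2$-edge-connectivity idiom that does not transfer here --- the paper first locates the highest level admitting a replacement via repeated $\FindSize{}$ calls, and then searches from \emph{both} endpoints rather than only from the smaller side.
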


The proof of Lemma~\ref{lem:first-reduction} is given in Section~\ref{sec:graph-structure}.
Afterwards, to prove Theorem~\ref{thm:main}, we will need to provide a~restricted dynamic tree cover level data structure implementation for $\lmax \in \BigO(\log n)$, where $\Link{}$, $\Cut{}$, $\Expose{}$ are performed in amortized time $\BigOTilde(\log^2 n)$; $\Connected{}$, $\CoverLevel{}$ and $\MinCoveredPair{}$ are implemented in worst-case time $\BigO(\log n)$; and the remaining operations are done in amortized time $\BigOTilde(\log n)$.

\paragraph{Reducing the tree cover level data structure to the neighborhood data structure}
Next we sketch an~implementation of the tree cover level data structure.
At the very high level, the structure holds: (i) an~$n$-vertex dynamic tree $F$ implemented via \emph{top trees} of Alstrup, Holm, de Lichtenberg, and Thorup~\cite{TopTreesOriginal}, and (ii) for every $v \in V(F)$, an~aforementioned descending sequence of equivalence relations $\mathcal{L}^v_0, \ldots, \mathcal{L}^v_{\lmax}$ ($\lmax \in \BigO(\log n)$) over the set of edges incident to $v$, each stored in a~separate instance $N_v$ of a~\emph{neighborhood data structure} that we will introduce in a~moment.

Recall that a~top tree dynamically maintains a~recursive edge-partitioning of $F$ into progressively smaller well-structured subtrees of $F$, called \emph{clusters}.
Here, a~cluster is a~connected subgraph $C$ of $F$ containing at most two vertices (called \emph{boundary vertices} of $C$) incident to the edges outside of $C$; in particular, the entire tree $F$ is a~cluster itself, and so is every single edge of $F$.
A~top tree is then a rooted tree of constant branching and height $\BigO(\log n)$, where: (1) each node is identified with a~cluster of $F$, (2) the root is identified with $F$, (3) each leaf is identified with an~edge of $F$, (4) the children of a~non-leaf cluster $C$ form an~edge-partitioning of $C$.
We will assume that a~non-leaf cluster splits into child clusters in a~very well-structured way: see \Cref{fig:toptree-cases} for all possible ways in which a~cluster $C$ may be edge-partitioned into two or three smaller clusters.

Each node in the top tree then stores summary information on the cluster $C$ identified with the node. In particular, if $C$ has two boundary vertices $v, w$, then the node corresponding to $C$ contains: (a) the cover level of the \emph{cluster path} $\pi(C) \coloneqq v \ldots w$, (b) for every $i \in [0, \lmax]$, the count of vertices of $C$ that are $i$-reachable from $\pi(C)$, and (c) for every $i \in [0, \lmax]$, the information on the $i$-reachability from $\pi(C)$ of $i$-marked vertices in $C$, allowing us to recover some $i$-marked vertex that is $i$-reachable from $\pi(C)$.
Here, the counts are stored in a~space-optimized array of \emph{counters} of length $\lmax + 1$, which we call \emph{counter vectors}.
In the description below, we will denote by $T(\lmax)$ the time needed to perform basic operations on counter vectors, such as coordinate-wise addition of entries of the array; using a~technique of Thorup~\cite{Thorup00}, we will show in \Cref{sec:approx-counting} that $T(\lmax) \in \BigO(\log \lmax)$ in the word RAM model.

We remark that each cluster $C$ needs to hold some additional information to ensure \emph{compositionality}: namely, that the information stored in $C$ can be determined only from information stored in the direct children of $C$ and the neighborhood data structures for the boundary vertices of the children of $C$.
Determining exactly what information should be stored and how it should be composed is a~formidable technical challenge that we skip for the purposes of this overview.
For now, it is enough to know that the solution is inspired by the original $\BigO(\log^5 n)$ work on biconnectivity~\cite{HolmDeLichtenbergThorup}, while also borrowing some technical tricks from the $\BigOTilde(\log^2 n)$ dynamic $2$-edge-connectivity data structure~\cite{Holm18a} (e.g., \emph{approximate counting} and a~clever use of \emph{prefix sums} for counter vectors).

On the other hand, a~\emph{neighborhood data structure} provides an~abstraction for a~descending sequence of equivalence relations $\mathcal{L}_0, \mathcal{L}_1, \ldots, \mathcal{L}_{\lmax}$ over a~ground set $X$.
We formally represent each $\mathcal{L}_i$ as a~partition of $X$, though we will sometimes write $x \sim_i y$ to mean that $x, y$ are in the same part of $\mathcal{L}_i$.
For convenience, we assume that $\mathcal{L}_{-1} = \{X\}$, i.e., all elements of $X$ are in the same part of $\mathcal{L}_{-1}$.
Similarly, $\mathcal{L}_{\lmax + 1} = \{\{x\} \,\mid\, x \in X\}$, so all elements of $X$ are in separate parts of $\mathcal{L}_{\lmax + 1}$.
The \emph{level} of a~pair $x, y$ is the maximum integer $i \geq -1$ such that $x \sim_i y$.
Note that this directly mirrors the previously introduced notion of cover levels: we will instantiate a~neighborhood data structure $N_v$ over the set of edges of $F$ incident to $v$, where any pair of edges is precisely at level given by their cover level in $F$.

The most basic variant of the neighborhood data structure supports the following types of updates and queries:
\begin{itemize}
  \item $\Insert{x}$: Add an~item $x$ to $X$.
  	For every level $i \in \{0, 1, \ldots, \lmax\}$, add $\{x\}$ to $\mathcal{L}_i$.
  \item $\Delete{x}$: Remove $x$ from $X$ and from each equivalence relation $\mathcal{L}_i$.
  \item $\Level{x, y}$: Return the level of the pair $x, y$.
  \item $\Zip{x, y, i}$: Given that $x \sim_{i-1} y$ and $x \not\sim_i y$, unify the parts containing $x$ and $y$ in $\mathcal{L}_i$, i.e., replace $\mathcal{L}_i(x)$ and $\mathcal{L}_i(y)$ with $\mathcal{L}_i(x) \cup \mathcal{L}_i(y)$ in $\mathcal{L}_i$.
  \item $\Unzip{x, y, i}$: Given that $x \sim_i y$ and $x \not\sim_{i+1} y$, separate $x$ from $y$ in $\mathcal{L}_i$ by replacing $\mathcal{L}_i(x)$ with $\mathcal{L}_{i+1}(x)$ and $\mathcal{L}_i(x) \setminus \mathcal{L}_{i+1}(x)$.
\end{itemize}

Naturally, we will call $\Insert{}$ and $\Delete{}$ in $N_v$ when adding or removing edges incident to $v \in V(F)$, and $\Zip{}$ and $\Unzip{}$ when altering the cover levels of pairs of edges incident to $v$.
The $\Level{}$ query shall be used as follows: suppose we have a~cluster $C$ that splits into clusters $A, B, P$ according to case~\CasePath in \Cref{fig:toptree-cases}; in particular, assume that the cluster path $\pi(C)$ is a~concatenation of cluster paths $\pi(A)$, $\pi(B)$.
Let also $v$ be the common endpoint of $\pi(A)$ and $\pi(B)$, and let $e_A, e_B$ be the edges of $\pi(A)$ and $\pi(B)$, respectively, each incident to $v$.
Then the cover level of $\pi(C)$ is the minimum of the following values: the cover level of $\pi(A)$, the cover level of $\pi(B)$, and the cover level of the pair $(e_A, e_B)$, determined by testing $\Level{e_A, e_B}$ in $N_v$.

In order to support the maintenance of counts of $i$-reachable vertices in the clusters of a~top tree, we augment the basic neighborhood data structure with a~\emph{counting extension}: each element $x \in X$ is assigned a~counter vector ${\bf c}^x = (c^x_0, c^x_1, \ldots, c^x_{\lmax})$ containing integers not exceeding $n$, initially populated with zeroes.
The counters can be modified and accessed as follows:
\begin{itemize}
  \item $\UpdateCounters{x, {\bf c}^x}$: Replace the counter vector ${\bf c}^x$ of $x$.
  \item $\SumCounters{x}$: Return the counter vector ${\bf s}^x = (s^x_0, s^x_1, \ldots, s^x_{\lmax})$ where $s^x_i = \sum \{c^y_i \,\mid\, x \sim_i y\}$.
\end{itemize}

Observe that for $v \in V(F)$, the query ${\bf s}^{\vec{vw}} \coloneqq \SumCounters{\vec{vw}}$ in the neighborhood data structure $N_v$ has the following semantics: suppose that $c^{\vec{vw}}_i = 0$ and, for all the remaining edges $\vec{vu}$ incident to $v$, $c^{\vec{vu}}_i$ equals the number of vertices $i$-reachable from $\vec{vu}$ in $F$.
Then $s^{\vec{vw}}_i$ is precisely the number of vertices that are $i$-reachable from $\vec{wv}$, excluding $v$ itself.
In other words, it is the number of vertices $y$ such that: (i) $y$ is in the subtree of $F$ rooted at an~edge $\vec{vu}$, (ii) $y$ is $i$-reachable from $\vec{vu}$, and (iii) $vu \sim_i vw$.
Thus the counting extension of a~neighborhood data structure is used to count the number of vertices in $F$ that are $i$-reachable from a~given edge of $F$.

By the same token, the \emph{marking extension} will facilitate the compositionality of the information on $i$-reachable $i$-marked vertices: each element $x \in X$ is given a~\emph{mark vector} ${\bf b}^x = (b^x_0, b^x_1, \ldots, b^x_{\lmax})$ containing boolean marks, initially false.
We say that $x$ is $i$-marked if $b^x_i$ is true, and $i$-unmarked otherwise.
These marks will be modified and accessed through the following queries:

\begin{itemize}
  \item $\UpdateMarks{x, {\bf b}^x}$: Replace the mark vector ${\bf b}^x$ of $x$.
  \item $\OrMarks{x}$: Return the bit vector ${\bf a}^x$ such that $a^x_i$ is true whenever there exists an~$i$-marked element $y$ such that $x \sim_i y$.
  \item $\FindMarked{x, i}$: Return an~$i$-marked element $y$ such that $x \sim_i y$, or $\bot$ if no such element exists.
\end{itemize}

Note that $\OrMarks{}$ is completely analogous to $\SumCounters{}$ in the counting extension of the neighborhood data structure (ultimately allowing us to answer queries of the form ``does $F$ contain an $i$-marked vertex $i$-reachable from a~path $p \ldots q$'', for all levels $i$ simultaneously), while $\FindMarked{}$ allows us to recover the identifier of any such vertex for some concrete level $i$.

While the neighborhood data structure with counting and marking extensions will already be enough to implement a~tree cover level data structure with amortized poly-logarithmic update and query guarantees, it is not yet enough to reach the desired time complexity guarantees ($\BigOTilde(\log^2 n)$ for $\Link{}$, $\Cut{}$, $\Expose{}$, and $\BigOTilde(\log n)$ for the remaining operations).
Thus we give another two extensions to the neighborhood data structure that will eventually enable us to reach our time complexity target.
First, we have the \emph{selection extension}, where we may decide to \emph{select} a~two-element subset $\Xsel$ of \emph{VIP neighbors} of $X$.
Then we additionally support the following operations:

\begin{itemize}
  \item $\Select{X_{\mathrm{sel}}}$: Redefine $X_{\mathrm{sel}} \subseteq X$ as the set of selected items. 
  \item $\SelectedLevel$: Suppose that $X_{\mathrm{sel}} = \{x, y\}$.
  Return the level of $x, y$.
  \item $\LongZip{i_1, i_2}$: Suppose that $i_1 < i_2$, $X_{\mathrm{sel}} = \{x, y\}$ and the level of $x, y$ is $i_1$.
  Unify the parts containing $x$ and $y$ in $\mathcal{L}_{i_1 + 1}, \ldots, \mathcal{L}_{i_2}$ by replacing, for every $i \in [i_1 + 1,\, i_2]$, $\mathcal{L}_i(x)$ and $\mathcal{L}_i(y)$ with $\mathcal{L}_i(x) \cup \mathcal{L}_i(y)$.
  (This is equivalent to calling $\Zip{x, y, i}$ for each $i = i_1 + 1, \ldots, i_2$ in succession.)
  \item $\LongUnzip{i_2, i_1}$: Suppose that $i_1 < i_2$, $X_{\mathrm{sel}} = \{x, y\}$ and the level of $x, y$ is $i_2$.
  Under the assertion that $\mathcal{L}_{i_1 + 1}(x) = \ldots = \mathcal{L}_{i_2}(x) = \mathcal{L}_{i_2 + 1}(x) \cup \mathcal{L}_{i_2 + 1}(y)$, separate $x$ from $y$ in $\mathcal{L}_{i_1 + 1}, \ldots, \mathcal{L}_{i_2}$, by replacing, for every $i \in [i_1 + 1, i_2]$, $\mathcal{L}_i(x)$ with $\mathcal{L}_{i_2 + 1}(x)$ and $\mathcal{L}_{i_2 + 1}(y)$.
  (This is equivalent to calling $\Unzip{x, y, i}$ for each $i = i_2, i_2 - 1, \ldots, i_1 + 1$ in succession, and equivalent to calling $\Unzip{y, x, i}$ for each $i = i_2, i_2 - 1, \ldots, i_1 + 1$.)
\end{itemize}

The intuition behind the selection extension is that some basic operations on the neighborhood data structure $N_v$ (namely, $\Level{}$, $\Zip{}$, $\Unzip{}$) will be performed much more frequently on a~specific pair of edges incident to $v$ than on other pairs of edges.
For instance, when a~cluster $C$ is constructed as an~edge sum of child clusters $A, B, P$ according to case~\CasePath in \Cref{fig:toptree-cases}, $v$ is the common vertex of $A, B, P$ and $e_A, e_B$ are the edges of $\pi(A)$, $\pi(B)$, respectively, incident to $v$, then we will regularly call $\Level{e_A, e_B}$, $\Zip{e_A, e_B, \cdot}$, and $\Level{e_A, e_B, \cdot}$ in $N_v$.
In this case, we choose to perform a~(computationally expensive) call $\Select{\{e_A, e_B\}}$, which in turn will enable us to determine the cover level of the pair $e_A, e_B$ much more efficiently (via $\SelectedLevel{}$ instead of $\Level{}$), and perform batch (\emph{long}) updates of the cover level of the pair $e_A, e_B$: increase or decrease the cover level of the pair by several levels in a~single step.



We remark that $\LongUnzip{}$ is a~partial inverse of $\LongZip{}$ in the following sense: whenever $i_1 < i_2$ and $\LongUnzip{i_2, i_1}$ is legal to perform in $N_v$, the sequence of updates $\LongUnzip{i_2, i_1};\allowbreak \LongZip{i_1, i_2}$ is a~no-op.
However, it is not a~\emph{full} inverse: in some cases, it may be illegal to bring the cover level of the selected pair of items from $i_2$ down to $i_1$.
The astute reader is encouraged to find how this restriction of $\LongUnzip{}$ is analogous to the uniformity condition of $\UniformUncover{}$.
And observe that $\LongZip{i_1, i_2}$ is \emph{always legal}, similarly to how the interface of $\Cover{}$ in the interface of the tree cover level data structure does not place any uniformity conditions.


The final, crucial extension is \emph{biasing}: each item in $X$ can be assigned a~positive integer weight $w(x) \leq \BigO(n)$, set to $1$ by default.
The weights can be modified through the following update:
\begin{itemize}
  \item $\SetWeight{x, w(x)}$: Replace the weight of $x$ with $w(x)$.
\end{itemize}
As it is the case with many biased data structures, biasing allows us to perform queries on \emph{heavy} elements of $X$ very efficiently.
In $N_v$, the weight of an~edge $vw$ will be usually roughly equal to the size of the component $F[\vec{vw}]$ of $F - vw$ containing $w$.
This way, we ensure quick accesses and updates of the elements of $N_v$ related to the edges $vw$ incident to $v$ for which the subtree $F[\vec{vw}]$ is large.
This efficiency is formalized through the notion of a~\emph{normalized cost} of an~operation in a~neighborhood data structure.
Namely, letting $w(X)$ be the total weight of all elements of $X$:
\begin{itemize}
  \item queries $\Insert{}$, $\Delete{}$, $\Select{}$ and $\SetWeight{}$ have normalized cost $\log n$;
  \item queries $\Zip{x, y, \cdot}$, $\Unzip{x, y, \cdot}$, $\Level{x, y, \cdot}$ have normalized cost $1 + \log \frac{w(X)}{w(x)} + \log \frac{w(X)}{w(y)}$;
  \item queries $\UpdateCounters{x, \cdot}$, $\SumCounters{x}$, $\UpdateMarks{x, \cdot}$ and $\OrMarks{x}$ have normalized cost $1 + \log \frac{w(X)}{w(x)}$;
  \item query $\FindMarked{x, \cdot}$ has normalized cost $1 + \log \frac{w(X)}{w(x)}$ if it returns $\bot$, and $1 + \log \frac{w(X)}{w(x)} + \log \frac{w(X)}{w(y)}$ if it returns an~element $y$;
  \item queries $\LongZip{}$, $\LongUnzip{}$ and $\SelectedLevel{}$ have normalized cost $1$.
\end{itemize}

We will implement the cover level data structure so that each operation affects only nodes of the top tree present on a~constant number of root-to-leaf paths (in the top tree); and for each node examined, we issue a~constant number of calls to the neighborhood data structures.
On our way to the proof of the efficiency of our data structure, we will prove a~powerful structural result about top trees --- the Vertical Path Telescoping Lemma (\Cref{lem:top-tree-telescope}) --- which will allow us to argue that, under certain conditions, the \emph{total} normalized cost of the calls to the neighborhood data structures performed when examining a~root-to-leaf path in the top tree is bounded by $\BigO(\log n)$, even though we sometimes need to perform as many as $\BigO(\log n)$ calls to these structures in total.
This, in turn, will allow us to show that a~huge array of operations in the cover level data structure can be implemented in $\BigOTilde(\log n)$ time, plus $\BigO(\log n)$ calls to neighborhood data structures of total normalized cost $\BigO(\log n)$.

The use of selection and biasing extensions poses, however, an~unexpected challenge.
The usual framework of performing updates and queries in top trees is to \emph{expose} the set of vertices associated with the query~\cite{TopTreesOriginal}; so for example, in order to determine the cover level of a~path $p \ldots q$, we would first call $\Expose{p, q}$.
This rebuilds the top tree slightly by altering a~total of $\BigO(\log n)$ clusters, and ultimately causes the answer to the query to be conveniently placed as part of information associated with the root cluster.
However, we cannot afford to use this technique here directly: rebuilding the top tree on each query turns out to be too computationally expensive due to the normalized cost of $\Select{}$ and $\SetWeight{}$.
Therefore, we design a~technique of \emph{transient expose} in \Cref{ssec:tree-structure-overview}, which essentially constructs a temporary ``read-only view'' of the top tree with a~given set of selected vertices.
This technique offers the best of two worlds: it is both computationally cheap, allowing us to perform it relatively frequently, and it considerably simplifies the implementation of updates and queries such as $\CoverLevel{}$.
With this final technical tool at hand, we can finally give a~statement of an~efficient reduction from the tree cover level data structure to the neighborhood data structure:

\begin{restatable}{lemma}{TopTreeReductionLemma}
  \label{lem:top-tree-reduction}
  Let $\lmax \in \BigO(\log n)$ and $\hat{T} \coloneqq T(\lmax) \cdot \log \lmax$.
  There exists a~restricted dynamic tree cover level data structure with $\lmax$ levels that processes each operation of the form:
  \begin{itemize}
    \item $\Link{}$, $\Cut{}$, $\Expose{}$ in worst-case $\BigO(\log n \cdot \hat{T})$ time, plus queries to the neighborhood data structures of total normalized cost $\BigO(\log^2 n)$;
    \item $\Connected{}$ in worst-case $\BigO(\log n)$ time;
    \item $\Cover{}$, $\UniformUncover{}$, $\LocalUncover{}$, $\CoverLevel{}$, $\MinCoveredPair{}$, $\FindSize{}$, $\Mark{}$, $\Unmark{}$, $\FindFirstReach{}$, $\FindStrongReach{}$ in worst-case $\BigO(\log n \cdot \hat{T})$ time, plus queries to the neighborhood data structures of total normalized cost $\BigO(\log n)$.
  \end{itemize}
\end{restatable}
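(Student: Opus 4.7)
The plan is to instantiate a top tree on $F$ in the style of~\cite{TopTreesOriginal}, storing at each cluster $C$ a compositional summary that encodes: (i) the cover level of the cluster path $\pi(C)$, (ii) counter vectors giving the number of vertices in $C$ that are $i$-reachable from $\pi(C)$ for every level $i$, and (iii) mark vectors capturing the presence of $i$-marked $i$-reachable vertices. At every vertex $v$ we maintain a neighborhood data structure $N_v$ holding the equivalence relations $\mathcal{L}^v_\bullet$ together with counter and mark vectors derived from the child clusters meeting at $v$. The weights fed into $N_v$ are essentially subtree sizes: an edge $vw$ incident to $v$ gets weight roughly $|F[\vec{vw}]|$, which causes biasing to charge most of the work to the \emph{lighter} subtrees and ultimately drives the $\BigO(\log n)$ telescoping bound on normalized cost.

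First I would specify compositional ``merge'' rules for the three cases of \Cref{fig:toptree-cases}: for each case I describe how the cover level, counter vector and mark vector of the parent $C$ are produced from those of its children using a constant number of calls to the neighborhood data structures of the boundary vertices of the children. In particular, case~\CasePath is resolved with a $\Level{}$ (or $\SelectedLevel{}$) call at the shared vertex, while the cases where a boundary vertex is absorbed into the interior of $C$ invoke $\SumCounters{}$ and $\OrMarks{}$ at that vertex. The same recipe, run in reverse, yields the $N_v$-side updates when a pair of siblings is merged or split during a top tree rebuild. Each rebuild in $\Link{}$, $\Cut{}$, or $\Expose{}$ touches $\BigO(\log n)$ clusters, and for each of them we perform $\BigO(1)$ neighborhood calls and $\BigO(\hat{T})$ non-neighborhood work (counter-vector arithmetic and approximate counting in the style of~\cite{Thorup00,Holm18a}). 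Summing along root-to-leaf paths in the top tree and appealing to the Vertical Path Telescoping Lemma (\Cref{lem:top-tree-telescope}) then yields the advertised $\BigO(\log n\cdot\hat{T})$ explicit time and $\BigO(\log^2 n)$ normalized cost, the latter being the only place where the potentially expensive $\Select{}$ and $\SetWeight{}$ of normalized cost $\Theta(\log n)$ are actually used.

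The remaining operations I would route through a \emph{transient expose} view of the top tree, as outlined in \Cref{ssec:tree-structure-overview}: this produces a read-only exposed spine along $p\ldots q$ without invoking $\Select{}$ or $\SetWeight{}$ on $N_v$. With the spine available, $\CoverLevel{}$ reads off the summary of the exposed root, $\MinCoveredPair{}$ and $\FindSize{}$ descend the spine guided by the counter vectors, and $\FindFirstReach{}/\FindStrongReach{}$ walk the spine left-first, pruning with $\OrMarks{}$ and finalizing the witness with a single $\FindMarked{}$ once a responsible off-spine cluster has been located. The operations $\Cover{}$, $\UniformUncover{}$, $\LocalUncover{}$ are implemented by $\BigO(\log n)$ per-internal-vertex $\Zip{}/\Unzip{}$ calls along the spine followed by a bottom-up refresh of the affected cluster summaries; because at each internal vertex of the exposed path the touched pair is the selected VIP pair of $N_v$, the per-vertex adjustment collapses to a single $\LongZip{}$ or $\LongUnzip{}$ of normalized cost $1$, keeping the total normalized cost at $\BigO(\log n)$. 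The $\Mark{}/\Unmark{}$ updates amount to an $\UpdateMarks{}$ at the relevant vertex plus a bottom-up refresh along one root-to-leaf path, again of normalized cost $\BigO(\log n)$; and $\Connected{}$ merely compares top tree roots in $\BigO(\log n)$ time without touching any $N_v$.

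The main obstacle I anticipate is the design of the compositional summary that simultaneously serves $\FindFirstReach{}$ and $\FindStrongReach{}$: a reachability witness can enter a cluster from either boundary and in either orientation, and the correct ``first'' witness depends on which side of $\pi(C)$ is being scanned, so the summary needs to bookkeep four orientation-aware counter/mark entries per level rather than one, and its merge rule must compose correctly under all three cases of \Cref{fig:toptree-cases} while still using only a constant number of $\SumCounters{}/\OrMarks{}$ calls restricted to \emph{boundary} vertices of the children. Here the prefix-sum trick of~\cite{Holm18a} is essential for keeping the counter arithmetic inside one word and thus within the $\hat{T}$ budget, and the enforcement of the strong-reachability condition (requiring the first two edges of the witness at level $\geq i+1$) is the place where the distinction between $\mathcal{L}^x_i$ and $\mathcal{L}^x_{i+1}$ must be read off the summary locally. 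Once this bookkeeping is in place, the rest of the argument is a routine path-by-path accounting that instantiates \Cref{lem:top-tree-telescope} to obtain the claimed normalized-cost bounds, with the $\hat{T}$ factor tracking exactly the cost of counter-vector arithmetic over $\lmax+1$ levels.
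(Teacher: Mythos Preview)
Your high-level architecture matches the paper's: top trees with compositional summaries, $N_v$ biased by subtree sizes, transient expose for the cheap operations, and telescoping for the cost accounting. But there is a genuine gap in how you handle $\Cover{}$ and $\UniformUncover{}$. You write that these are ``$\BigO(\log n)$ per-internal-vertex $\Zip{}/\Unzip{}$ calls along the spine.'' After a transient expose of $p,q$, the spine indeed consists of $\BigO(\log n)$ \emph{clusters}, but the non-transient clusters sitting at the bottom of the spine still own subpaths of $p\ldots q$ whose internal vertices (potentially $\Theta(n)$ of them) all need their cover level changed. You cannot touch them. The paper's fix is a \emph{lazy} scheme: each path cluster $C$ carries two integers $\coverfrom_C,\covertop_C$ encoding a pending batch ``$\Cover{}$ up to $\covertop_C$ then $\UniformUncover{}$ down to $\cover_C$,'' pushed to children only on split. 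At the $\BigO(\log n)$ transient midpoints one does an explicit $\Zip{}$ (for $\Cover{}$) or $\LongUnzip{}$ (for $\UniformUncover{}$); the non-transient clusters just get their lazy tags overwritten in $\BigO(1)$. Without this deferral the claimed time bound is unattainable. Relatedly, your claim that the touched pair is always the selected VIP pair is only valid for $\UniformUncover{}$, where the restricted assumption guarantees $p\ldots q$ lies on the already-$\Expose{}$d path; $\Cover{}$ has no such guarantee, so its transient midpoints need a genuine $\Zip{}$ whose cost is bounded not by the VIP trick but by the Transient Expose Lemma (\Cref{lem:transient-expose-lemma}), which you do not invoke.

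The lazy scheme also explains why ``four orientation-aware counter/mark entries per level'' is not enough for the reachability bookkeeping. The count of $i$-reachable vertices from $\firstedge_{C,v}$ depends on the cover levels along $\pi(C)$, which are themselves under a pending lazy update that has not been propagated inside $C$. The paper therefore stores a two-dimensional object $\diagcntvec^\star_{C,v}$, a counter \emph{matrix} indexed by $(j,i)$ where $j$ is the cover level of the pair with $\firstedge_{C,v}$ and $i$ is the reachability level; this matrix can be re-evaluated under any pending $(\coverfrom_C,\covertop_C)$ with a constant number of matrix splices and row additions. That matrix layer is exactly where the $\hat T = T(\lmax)\cdot\log\lmax$ factor enters (via \Cref{lem:counter-matrices}); a one-dimensional summary would not survive lazy propagation, and the merge in case~\CasePath would require an elementwise matrix addition you are not allowed to perform. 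The remaining items in your sketch (the first phase of $\FindFirstReach{}$ and the container-style refinement via $\FindMarked{}$, the $\FindSize{}$ readout at the root) are close to what the paper does, though note that neither $\MinCoveredPair{}$ nor $\FindSize{}$ needs to descend the spine: both are read off the transiently exposed root plus $\BigO(1)$ calls to $N_p$, $N_q$.
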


Since $\hat{T} \in \BigO(\log^2 \lmax) = \BigO(\log^2 \log n)$ in the word RAM model, it now remains to give an~implementation of a~neighborhood data structure that runs each query of normalized cost $C$ in amortized time $\BigO(C \cdot \mathrm{poly} \log \log n)$.

\paragraph{Reducing the neighborhood data structure to the biased disjoint sets problem}

In the neighborhood data structure, we want to efficiently keep track of the leveled hierarchy of biconnected components of all the edges incident to some vertex. Conceptually, we want to do so via a tree of height $\lmax$, in which we can let nearest common ancestor queries return a certificate of the highest level at which a specified pair of neighboring edges are biconnected. Thus, in this neighborhood tree for the vertex $v$, we want each leaf to correspond to an edge $uv$ incident to $v$. 

Implementing this idea as-is would incur too many $\log n$-factors. One of the crucial ideas to avoid this, is to use a biased variant of heavy path decompositions~\cite{SleatorT81}, as introduced in \cite{BentST80,BentST85}, and as also utilised in \cite{HolmR20}. Then, the edge $uv$ would be a weighted leaf in $v$'s neighborhood tree, whose weight corresponds to the subtree rooted in $u$, which would help allowing the Vertical Path Telescoping Lemma~\ref{lem:top-tree-telescope}.

However, it can happen that the vertex $v$ lies on an exposed path. In this case, we need to take special care of its two incident exposed path edges, ensuring that information about them and their hierarchy of biconnected components is at hand. This requires careful bookkeeping: we have two leaves in the tree that needs to be considered as being in a `superposition' of being not-biconnected on any level at all, to being biconnected all the way down, corresponding to two entire paths of nodes that may, sometimes, depending on an indicator, have to be considered as two separate versions of the same node, sometimes not. 

In particular, what careful bookkeeping we have in mind must be able to accommodate cover level changes of the path, and be susceptible to a change in which two edges are the exposed edges around a vertex. In other words, the data structure needs to be able to `zip' and `unzip' the biconnectivity of the exposed neighbors fast. We also need to efficiently increase the cover level of a pair of neighbors. When doing so, we may have to union and split the sets of light children of a node in a heavy path decomposition efficiently. 
Here, it is imperative that we use a non-trivial data structure for joining and splitting disjoint sets, in order to make these operations on sets of light children without incurring an additional log factor.

Many of techniques draw on inspiration from \cite[Section 3]{HolmR20} ``Biased Dynamic Trees''. Some main differences are the following. 
Firstly, we use a slightly modified definition of heavy edges, to accommodate the exposed edges incident to a vertex; this incurs only an additive constant to the light depths of leaves. 
Secondly, the collection of light children are organised using our new biased disjoint set data structure. 
Finally, we can use a naive balanced binary search tree over the heavy paths, since those have length at most $\lmax \in \BigO(\log n)$. 

%

\begin{lemma}\label{lem:nbhtree}
  \label{lem:neighborhood-ds}
  Let $\lmax \in \BigO(\log n), W:=w(X)\in\BigO(n)$ and $\hat{L} \coloneqq T(\lmax) \cdot (\log \lmax + \log\log W)$.
  There exists a~neighborhood data
  structure with $\lmax$ levels, supporting marking, counting, selection and biasing extensions
  where:
  \begin{itemize}


  \item $\Select{}$, $\Insert{}$ and $\Delete{}$ are
  performed in amortized time $\BigO(\log n\cdot \hat{L})$;

  \item $\Unzip{x,y,\cdot}$ and $\Zip{x, y, \cdot}$ are performed in
  amortized time $\BigO((1+\log \frac{W}{w(x)} + \log \frac{W}{w(y)}) \cdot \hat{L})$;

  \item $\Level{x, y, \cdot}$ is performed in
  worst-case time $\BigO((1+\log \frac{W}{w(x)} + \log \frac{W}{w(y)}) \cdot \hat{L})$;

  \item $\SetWeight{x,\cdot}$, $\UpdateCounters{x, \cdot}$ and
  $\UpdateMarks{x, \cdot}$
  are performed in amortized time
  $\BigO((1+\log \frac{W}{w(x)}) \cdot\hat{L})$;

  \item $\SumCounters{x, \cdot}$ and $\OrMarks{x, \cdot}$
  are performed in worst-case time
  $\BigO((1+\log \frac{W}{w(x)}) \cdot\hat{L})$;

  \item $\FindMarked{x, \cdot}$ returns $\bot$ in worst-case time $\BigO((1+\log \frac{W}{w(x)}) \cdot \hat{L})$, or an~element $y$ in amortized time $\BigO((1+\log \frac{W}{w(x)}  + \log \frac{W}{w(y)})\cdot \hat{L})$.

  \item $\LongZip{}$ and $\LongUnzip{}$ are performed in worst-case
  time $\BigO(1)$;

  \item $\SelectedLevel{}$ is performed in worst-case time
  $\BigO(1)$.

  \end{itemize}
\end{lemma}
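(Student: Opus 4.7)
The plan is to implement the neighborhood data structure via a rooted \emph{neighborhood tree} $T$ of depth $\lmax + 2$, whose leaves are the elements of $X$ and whose internal nodes encode the equivalence classes of the relations $\mathcal{L}_i$: a~node at depth $\lmax + 1 - i$ represents a~class of $\mathcal{L}_i$, and two leaves $x, y$ satisfy $x \sim_i y$ precisely when their lowest common ancestor lies at depth at most $\lmax + 1 - i$. Under this encoding $\Level{x, y}$ becomes an NCA query, $\Zip{x, y, i}$ and $\Unzip{x, y, i}$ become local structural edits at one specific depth, and $\Insert{}$ and $\Delete{}$ add or remove a~leaf together with the chain of singleton ancestors above it.

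To avoid paying $\BigO(\log n)$ per navigation we impose on $T$ a~biased heavy-path decomposition using the weights $w(x)$, in the spirit of \cite{SleatorT81,BentST80,BentST85,HolmR20}: at each internal node the child whose subtree weight exceeds a~constant fraction of the node's weight is declared \emph{heavy}, so every root-to-leaf path traverses $\BigO(\log \tfrac{W}{w(x)})$ light edges. We slightly enlarge the heavy rule to require that the spine leading to each currently selected leaf is heavy; this costs at most an~additive constant per light depth and ensures that both VIPs sit on a~common maximal heavy path. Each heavy path is stored as a~balanced BST keyed by depth; since heavy paths have length at most $\lmax + 2 \in \BigO(\log n)$, such a~BST has height $\BigO(\log \lmax)$. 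At every heavy-path node we keep its collection of light children inside the biased disjoint sets data structure of \Cref{sec:bds}, so that unions and splits of these collections, triggered by $\Zip{}$ and $\Unzip{}$, cost time proportional to $\BigO(\log \tfrac{W}{w(\cdot)} \cdot \hat{L})$. Each node also stores aggregated counter and mark vectors, computed compositionally from the heavy child and from the biased disjoint set of light children; arithmetic on these vectors costs $T(\lmax) \in \BigO(\log \lmax)$ via the approximate-counting machinery from \Cref{sec:approx-counting}.

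With this infrastructure every query and update becomes a~walk along $\BigO(1)$ root-to-leaf paths. The telescoping property of biased heavy-path decompositions guarantees that the total number of light edges encountered on such a~walk from $x$ is $\BigO(\log \tfrac{W}{w(x)})$, and each step visits a~heavy-path BST node in $\BigO(\hat{L})$ time; this yields the claimed normalized-cost bounds for $\Level{}$, $\Zip{}$, $\Unzip{}$, $\SumCounters{}$, $\OrMarks{}$, $\UpdateCounters{}$, and $\UpdateMarks{}$. The $\FindMarked{x, i}$ query first walks from $x$ up to the relevant depth and then descends to a~marked leaf $y$ using the biased disjoint sets to direct the search, and therefore pays an~extra $\log \tfrac{W}{w(y)}$ factor only when it returns a~non-$\bot$ answer. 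Structural rebuilds --- $\Insert{}$, $\Delete{}$, $\SetWeight{}$, $\Select{}$ --- rebalance the biased decomposition; a~potential-function amortization following \cite{BentST85} bounds their amortized cost by $\BigO(\log n \cdot \hat{L})$. Finally, because the spine between the two selected leaves is entirely heavy and occupies a~contiguous range of depths in a~single heavy-path BST, $\LongZip{}$ and $\LongUnzip{}$ are realized by flipping $\BigO(1)$ range annotations on that BST, and $\SelectedLevel{}$ reads off the shallowest annotation between the two leaves --- all in worst-case $\BigO(1)$ time.

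The main obstacle is the interaction between the selection extension and the biased heavy-path decomposition: a~call to $\Select{}$ can reroute two heavy spines through previously light subtrees, and each subsequent $\SetWeight{}$, $\Insert{}$, or $\Delete{}$ may cause a~cascade of heavy/light flips. The amortized $\BigO(\log n \cdot \hat{L})$ bound relies on a~potential function that simultaneously dominates the rebuild cost from VIP reselection and the ordinary biased heavy-path potential of \cite{BentST85}. The second delicate point is making $\LongZip{}$, $\LongUnzip{}$, and ordinary queries coexist: we must store, on every heavy-path BST covering the VIP spine, a~succinct "zipped up to level $i$" annotation together with lazily propagated counter and mark aggregates, so that queries arriving from below the spine still reconstruct the correct aggregate in $\BigO(\hat{L})$ time even though the underlying equivalence classes span many depths implicitly.
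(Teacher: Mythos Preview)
Your high-level plan matches the paper almost exactly: the neighborhood tree with leaves at depth $\lmax+1$, biased heavy-path decomposition with special handling of the two selected leaves, balanced BSTs over heavy paths of height $\BigO(\log\lmax)$, and the biased disjoint sets structure for the light children. For all operations except the selection-related ones, your description is essentially what the paper does.

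The one genuine gap is your mechanism for $\LongZip{}$, $\LongUnzip{}$ and $\SelectedLevel{}$ in worst-case $\BigO(1)$ time. You write that forcing both selected spines to be heavy ``ensures that both VIPs sit on a common maximal heavy path'', and that $\LongZip{}$/$\LongUnzip{}$ then flip $\BigO(1)$ range annotations on that BST. But two distinct leaves cannot lie on the same heavy path: a heavy path is a root-to-leaf chain with one heavy child per node, so at the NCA of $s_1$ and $s_2$ at most one of the two subtrees can be heavy. Your ``lazy range annotation'' picture also does not give worst-case $\BigO(1)$: pushing such an annotation down a BST costs $\Theta(\log\lmax)$, and worse, a $\LongZip{}$ that moves the selected level from $i_1$ to $i_2$ actually merges $i_2-i_1$ pairs of classes in the tree, so there is no single BST interval to annotate.

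The paper avoids this entirely by a representation trick you are missing: it does \emph{not} store $\{\mathcal L_i\}$ directly. It stores a modified family $\{\mathcal L'_i\}$ in which, at every level, the class containing both selected elements is kept split into two pieces (one containing $s_1$, one containing $s_2$), together with a single integer \texttt{SelectedLevel}. The true $\mathcal L_i$ is recovered as ``$\mathcal L'_i$ with the two selected pieces merged for all $i\le\texttt{SelectedLevel}$''. With this representation, $\LongZip{}$ and $\LongUnzip{}$ literally just overwrite one integer, and $\SelectedLevel{}$ reads it --- there is no tree surgery and no lazy propagation. The price is paid at $\Select{}$ time (do the deferred zips for the old pair, then fully unzip the new pair down to its actual NCA), which fits comfortably in the $\BigO(\log n\cdot\hat L)$ budget. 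The two selected spines are then two \emph{separate} heavy paths meeting only at the root, which is consistent with the one-heavy-child rule and also makes the selected leaves sit at constant depth in every auxiliary structure.
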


Note that in the statement above, a~data structure without the biasing extension can be emulated by assigning unit weights to all items in $X$.
Then each method $\Zip{}$, $\Level{}$, $\UpdateCounters{}$, $\SumCounters{}$, $\UpdateMarks{}$, $\FindMarked{}$ will take $\BigO(\log n \cdot \hat{L})$ time.


\newcommand{\MakeSet}{\Call{MakeSet}}
\newcommand{\Coalesce}{\Call{Coalesce}}
\newcommand{\Union}{\Call{Union}}
\newcommand{\Find}{\Call{Find}}
\newcommand{\Replace}{\Call{Replace}}
\newcommand{\RootUnion}{\Call{RootUnion}}
\newcommand{\SingletonUnion}{\Call{SingletonUnion}}

\paragraph{Biased Disjoint Sets}

At the lowest level of the sequence of our reductions lies the biased disjoint sets data structure. 
Given a set $X$ of \emph{items} with positive integer weights, a \emph{perfectly biased binary tree} for $X$ is a binary tree with $X$ as leaves and where the depth of each leaf $x\in X$ is $\BigO\paren*{1 + \log \frac{w(X)}{w(x)}}$.  If instead the depth of each leaf $x\in X$ is $\BigO\paren*{\log \frac{w(X)}{w(x)} + \log\log w(X)}$ we say the tree is \emph{almost biased}.

Our goal is to maintain a dynamic collection of almost biased binary trees whose leaf sets are disjoint, under the following operations:
\begin{itemize}\sloppypar
	
	\item \MakeSet{\textbf{free item} $x$, \textbf{weight} $w$} $\to$ \textbf{new
		root}:
	
	Create a new tree representing the set $X=\set{x}$ with weight
	$w(x)=w$ and return the new root.
	
	Afterwards, $x$ is a \textbf{singleton} and no longer a \textbf{free item}.
	
        \item \RootUnion{\textbf{root} $X$, \textbf{root} $Y$} $\to$ \textbf{new root}:

        Assumes $X\neq Y$. Construct the set $Z = X\cup Y$ and return the
        root of the tree representing it.

        Afterwards the sets $X$ and $Y$ no longer exist.

	\item \Find{\textbf{item} $x$} $\to$ \textbf{existing root}:
	
	Return the root of the tree representing $X\ni x$.
	
	\item \Delete{\textbf{item} $x$} $\to$ \textbf{new root} or $\bot$:
	
	Delete $x$ from the set $X$ containing it and return the root of the
	tree representing the (possibly empty) new set $X\setminus\set{x}$.
	
	Afterwards the set $X$ no longer exists, and $x$ is a \textbf{free item}.
	
  \item \Coalesce{\textbf{item} $x$, \textbf{item} $y$, \textbf{free
			item} $z$} $\to$ \textbf{new root}:
	
	Let $x\in X$ and $y\in Y$ be distinct (but possibly $X=Y$), and let
	$z$ be a new item which will be given the weight
	$w(z):=w(x)+w(y)$. Construct the set $Z = (X\cup
	Y\cup\set{z})\setminus \set{x,y}$ and return the root of the tree
	representing it.
	
	Afterwards the sets $X$ and $Y$ no longer exist, $x$ and $y$ are
	\textbf{free items} and $z$ is no longer a~\textbf{free item}.
	
	
	\item \Union{\textbf{item} $x$, \textbf{item} $y$} $\to$ \textbf{new
		or existing root}:
	
	Let $x\in X$ and $y\in Y$ be distinct (but possibly
	$X=Y$). Construct or find the set $X\cup Y$ and return the root of
	the tree representing it.
	
	Afterwards the sets $X$ and $Y$ no longer exist (unless
	$X=Y$).  

	
	
	
	
\end{itemize}

The idea in our structure, inspired by Binomial Heaps and Fibonacci Heaps, is to maintain for each set $X$ a partition of $X$ into at most $2\log_2 w(X)$ subsets $X_1,\ldots,X_t$, represented by a~perfectly biased binary tree for each $X_i$, and then using a simple (e.g. weight-balanced) tree with the roots of those trees as leaves.  As long as we can ensure $t\in \BigO(\log w(X))$ this \emph{upper tree} can be guaranteed to have height $\BigO(\log\log w(X))$ and for $x\in X_i$ the depth of each node in the perfect \emph{lower tree} for $X_i$ is (by definition) $\BigO\paren*{1 + \log\frac{w(X_i)}{w(x)}}$. Thus the depth of any $x\in X$ in the complete tree is $\BigO\paren*{\log\frac{w(X)}{w(x)}+\log\log w(X)}$ as desired.

In \Cref{sec:bds} we give a detailed description of that data structure and finally prove the following Theorem:

\begin{theorem} \label{thm:bds}
	There exists a biased disjoint sets data structure where:
	\begin{itemize}
			\item $\MakeSet{}$ is performed in the worst-case $\BigO(1)$ time;
			
			\item $\RootUnion{}$ is performed in $\BigO\paren*{ \log\log(w(X)+w(Y)) }$ amortized time;
			
			\item $\Find{}$ is performed in $\BigO\paren*{ \log\frac{w(X)}{w(x)} + \log\log w(X)}$ worst-case time;
			
			
			\item $\Delete{}$ is performed in $\BigO\paren*{ \log\frac{w(X)}{w(x)} + \log\log w(X)}$ amortized time;
			
			\item $\Coalesce{}$ is performed in $\BigO\paren*{ \log\frac{w(X)}{w(x)} +
				\log\frac{w(Y)}{w(y)} + \log\log(w(X)+w(Y))}$ amortized time;

			\item $\Union{}$ is performed in $\BigO\paren*{
				\log\frac{w(X)}{w(x)} + \log\frac{w(Y)}{w(y)}  + \log\log(w(X)+w(Y))}$ amortized time.
			
			
	\end{itemize}
\end{theorem}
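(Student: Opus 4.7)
The plan is to follow the two-level design sketched in the overview: each set $X$ is held as a partition $X_1, \ldots, X_t$ with $t \leq 2\log_2 w(X)$, each $X_i$ is stored in a perfectly biased binary tree (the \emph{lower tree}), and the $t$ lower-tree roots serve as leaves of a weight-balanced \emph{upper tree}. Under this invariant the upper tree has height $\BigO(\log\log w(X))$ and every $x \in X_i$ sits at depth $\BigO(1 + \log\frac{w(X_i)}{w(x)})$ in its lower tree, so the total root-to-leaf depth is $\BigO(\log\frac{w(X)}{w(x)} + \log\log w(X))$. This immediately yields the worst-case bound for $\Find{}$ (walk up parent pointers), and $\MakeSet{}$ takes $\BigO(1)$ by installing a single-leaf lower tree as the sole child of a trivial upper tree.

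For $\RootUnion{X, Y}$ I would concatenate the partitions of $X$ and $Y$ and then run a binomial-heap-style consolidation that repeatedly fuses two parts of comparable weight into a single perfectly biased tree, stopping when no two surviving parts have weights within a prescribed constant factor. The cost of each fusion is charged against a potential that pays $\Theta(1)$ per part currently present; since the starting partition has $\BigO(\log(w(X)+w(Y)))$ parts and the final one has at most $2\log_2(w(X)+w(Y))$, the amortized cost is dominated by the $\BigO(\log\log(w(X)+w(Y)))$ needed to re-leaf the upper weight-balanced tree. $\Coalesce{}$ and $\Union{}$ then reduce to two $\Find{}$ calls plus a $\RootUnion{}$ --- with, in the $\Coalesce{}$ case, a constant-time local splice introducing the new item $z$ of weight $w(x)+w(y)$ at the site vacated by $x$ and $y$ --- giving the quoted bounds directly.

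$\Delete{}$ is the most delicate case: after removing $x$ from its lower tree $T_i$, I would rebuild the path from $x$'s former position up to the root of $T_i$, restoring the perfectly biased invariant by local rotations or re-assemblies whose cost is proportional to the depth of $x$. If this restructuring leaves $X_i$ small enough that the bound $t \leq 2\log_2 w(X)$ could later be violated, I would merge it with a neighbouring part using the same consolidation primitive as in $\RootUnion{}$. The actual cost of these rebuilds is paid by a potential of the form $\Phi = \alpha \cdot t + \sum_i \Phi_{\text{bias}}(X_i)$, where $\Phi_{\text{bias}}(X_i)$ measures how far the lower tree for $X_i$ is from being perfectly biased; a careful choice of $\alpha$ yields the amortized bound $\BigO(\log\frac{w(X)}{w(x)} + \log\log w(X))$.

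The main obstacle is showing that a single potential function $\Phi$ underwrites every operation simultaneously: consolidation during $\RootUnion{}$ must not increase $\Phi_{\text{bias}}$, the rebuild during $\Delete{}$ must release enough potential to pay for itself, and the upper weight-balanced tree must be re-leafed incrementally (at cost $\BigO(\log\log w(X))$ per update, not by a full rebuild). Guaranteeing that every lower tree stays \emph{perfectly} biased --- depth $\BigO(1 + \log\frac{w(X_i)}{w(x)})$ rather than the weaker almost-biased bound --- through arbitrary interleavings of these operations is the heart of the proof, and is where the concrete construction in \Cref{sec:bds} does the real technical work.
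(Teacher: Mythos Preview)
Your high-level two-layer picture is right, and $\MakeSet{}$, $\Find{}$, and the reduction of $\Coalesce{}$/$\Union{}$ to the other primitives match the paper. The gap is in how the lower trees are structured, and this is exactly the point you flag as ``the real technical work'' without supplying it.

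The paper does not treat the lower trees as arbitrary perfectly biased trees that need rebalancing. Instead, every lower-tree node carries an integer \emph{rank}: a leaf for item $x$ has rank $\lfloor\log_2 w(x)\rfloor$, and every internal node of rank $i$ has exactly two children of rank $i-1$. This is the binomial-heap idea taken literally, and it makes both of your hard spots disappear. For consolidation, two rank-$i$ trees become one rank-$(i{+}1)$ tree by adding a single new root --- constant time, and the result is automatically perfectly biased. For $\Delete{x}$, you do not rebalance at all: you locate the lower tree $X_i$ containing $x$, delete $x$ together with all its ancestors in that lower tree, and the $s=\BigO(\log\frac{w(X_i)}{w(x)})$ surviving subtrees (the siblings along the deleted spine) are already valid rank-structured trees. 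You hang them under a fresh upper tree and $\RootUnion{}$ it with the remainder.

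With this rank discipline the potential is simply $t$, the number of lower trees; there is no $\Phi_{\text{bias}}$ term because no operation ever produces a lower tree that violates the rank invariant. Simplification runs only when $t>2\log_2 w(X)$, takes time $\BigO(t)$, and drops $t$ to at most $\log_2 w(X)$, so the potential decrease of at least $t/2$ pays for it. Your proposed local rotations and the attendant $\Phi_{\text{bias}}$ bookkeeping are unnecessary complications that the rank structure avoids entirely.
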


\section{Preliminaries}
In this work we work with undirected simple graphs without self-loops.
For a~graph $G$, we denote by $V(G)$ the set of vertices of $G$ and by $E(G)$ --- the set of edges.
A~\emph{forest} is a~graph without any cycles, and a~\emph{tree} is a~connected forest.
For two vertices $u, v$ in the same connected component of a~forest, we denote by $u \ldots v$ the unique simple path connecting $u$ and $v$.
If $u \neq v$, we define $s^u(v)$ as the vertex adjacent to $v$ on $u \ldots v$.
We will use the fact that standard dynamic tree data structures can determine $s^u(v)$ in time $\BigO(\log n)$, whilst supporting edge insertions and removals within the same complexity bounds~\cite{SleatorT81}.
For three vertices $u, v, w$, we define $\meet(u, v, w)$ --- the \emph{projection} of $w$ onto $u \ldots v$ --- as the unique vertex connected by simple paths to each vertex $u, v, w$.
We will write $P' \subseteq P$ to mean that $P'$ is a~subpath of $P$, e.g., if $e_1, e_2$ are two adjacent edges of a~graph, then $e_1e_2 \subseteq P$ means that the subpath containing two edges $e_1, e_2$ is a~subpath of $P$.
The distance between $u$ and $v$ in a~graph $G$ is denoted $\dist_G(u, v)$; we will drop the subscript when convenient.

While the considered graphs are undirected, it is sometimes more convenient to work with \emph{oriented} edges: for an~oriented edge $e = \vec{uv}$, we say that $u$ is the \emph{tail} of $e$ and $v$ is the \emph{head} of $e$.
In a~tree $T$, we define a~subtree $T[e]$ \emph{rooted at} $e = \vec{uv}$ as the subtree induced by the vertices of $T$ that are closer to $v$ than $u$ in $T$.

\subsection{Counter and bit vectors}
\label{ssec:prelims-counters}

In this work we will extensively use the concept of counters: nonnegative integers that can be added together and compared, but not subtracted from one another.
Various sets of counters will be used by us to monitor the satisfaction of various invariants related to the sizes of biconnected components maintained by our data structure.

If $\ell \in \mathbb{N}$, we can then consider a~\emph{counter vector} ${\bf c} = (c_0, c_1, \ldots, c_\ell)$ comprising $\ell + 1$ counters.
For convenience, let ${\bf 0}$ denote the all-zero counter.
We implement the following kinds of operations on counter vectors:
\begin{itemize}
  \item initialization with constant vectors;
  \item extracting and updating single elements of vectors;
  \item element-wise addition ${\bf c} + {\bf d}$, defined naturally;
  \item splicing: given counter vectors ${\bf c}$, ${\bf d}$ and $k \in \{0, \ldots, \ell+1\}$, define $\vectorsplice{\bf c}{k}{\bf d}$ as the counter vector comprising the length-$k$ prefix of ${\bf c}$ and the length-$(\ell-k+1)$ suffix of ${\bf d}$.
\end{itemize}
We do not support counter subtraction; this will become crucial later on.

Let $T(\ell)$ denote the maximum time it takes to perform a~single vector operation.
Assuming counter vectors are implemented using length-$(\ell+1)$ arrays of integers, we have $T(\ell) = \BigO(\ell)$; however, later in \Cref{sec:approx-counting} we will introduce a~notion of \emph{approximate counting} which will allow us to roll out counter vectors with $T(\ell) = \BigO(\log \ell)$.

Finally we consider \emph{counter matrices} ${\bf M} = ({\bf c}_0, {\bf c}_1, \ldots, {\bf c}_\ell)$ comprising $\ell + 1$ counter vectors, called rows, each of length $\ell + 1$.
We denote $M_{i,j} = (c_i)_j$.
We allow the following types of operations on counter matrices:
\begin{itemize}
  \item initialization with a~constant-zero matrix ${\bf 0}$;
  \item extracting and updating entries of matrices;
  \item addition of a~vector ${\bf v}$ to a~single row in ${\bf M}$: when ${\bf A} = \addvector({\bf M}, {\bf v}, r)$, we have $A_{i,j} = M_{i,j} + v_j \cdot [i = r]$; 
  \item splicing matrices: given two matrices ${\bf M}$, ${\bf N}$ and $k \in \{0, \ldots, \ell + 1\}$, define $\matrixsplice{\bf M}{k}{\bf N}$ as the counter matrix comprising $k$ first rows of ${\bf M}$ and $\ell-k+1$ last rows of ${\bf N}$;
  \item column sum of a~matrix: given a~matrix ${\bf M}$, let $\matrixsum({\bf M})$ be the vector ${\bf v}$ such that $v_j = \sum_{i = 0}^{\ell} M_{i,j}$. In other words, if ${\bf M} = ({\bf c}_0, {\bf c}_1, \ldots, {\bf c}_\ell)$, then $\matrixsum({\bf M}) = \sum_i {\bf c}_i$;
  \item column upper sum of a~matrix: given a~matrix ${\bf M}$, let $\uppersum({\bf M})$ be the vector ${\bf v}$ such that $v_j = \sum_{i = j}^{\ell} M_{i,j}$.
\end{itemize}
We then have:
\begin{lemma}[\cite{Holm18a}]
  \label{lem:counter-matrices}
  All operations on counter matrices can be performed in time $\BigO(T(\ell) \log \ell)$.
\end{lemma}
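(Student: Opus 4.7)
The plan is to represent each counter matrix as a balanced binary search tree (e.g.\ a treap or weight-balanced tree) whose leaves, read left to right, correspond to the $\ell+1$ rows of the matrix. Each leaf stores the counter vector of its row, and each internal node corresponding to the row range $[a,b]$ stores two aggregate counter vectors: the plain column sum $\mathbf{s}^{[a,b]} = \sum_{i=a}^{b} \mathbf{c}_i$, and the ``upper sum restricted to the subtree'' $\mathbf{u}^{[a,b]}$ with entries $u^{[a,b]}_j = \sum_{i=\max(j,a)}^{b} M_{i,j}$. With this choice, the root of the tree stores exactly $\matrixsum(\mathbf{M})$ in $\mathbf{s}$ and $\uppersum(\mathbf{M})$ in $\mathbf{u}$, so both column-summary queries are answered in $\BigO(T(\ell))$ time by reading them off directly.

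The key step is to verify that $\mathbf{s}$ and $\mathbf{u}$ compose from children in $\BigO(T(\ell))$ time. For a node with children $[a,m]$ and $[m+1,b]$ the sum is trivially $\mathbf{s}^{[a,b]} = \mathbf{s}^{[a,m]} + \mathbf{s}^{[m+1,b]}$. For the upper sum, a direct case split on whether $j \leq m$ or $j > m$ gives
\[
  \mathbf{u}^{[a,b]} = \vectorsplice{\mathbf{u}^{[a,m]} + \mathbf{s}^{[m+1,b]}}{m+1}{\mathbf{u}^{[m+1,b]}},
\]
which uses only the permitted vector primitives (one addition and one splice) and therefore costs $\BigO(T(\ell))$. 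Updating a single entry of $\mathbf{M}$ or performing $\addvector(\mathbf{M}, \mathbf{v}, r)$ then reduces to modifying one leaf and recomputing the pair $(\mathbf{s}, \mathbf{u})$ along the $\BigO(\log \ell)$ ancestors on the root-to-leaf path, for a total cost of $\BigO(T(\ell) \log \ell)$.

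It remains to handle the matrix splicing $\matrixsplice{\mathbf{M}}{k}{\mathbf{N}}$, which I would implement via standard split-and-concatenate primitives on the balanced BST: split the tree for $\mathbf{M}$ after the $k$-th leaf to extract the prefix, split the tree for $\mathbf{N}$ analogously to extract the suffix, and concatenate the two. Any textbook balanced-BST supports split and concatenate while touching only $\BigO(\log \ell)$ nodes, and each such node needs a single recomputation of $\mathbf{s}$ and $\mathbf{u}$ at cost $\BigO(T(\ell))$, again yielding the stated bound. The main technical care is in the upper-sum aggregate: one must check that the splicing boundary in the combine formula lands at exactly position $m+1$ and that the boundary convention at leaves (where $u^{[a,a]}_j = M_{a,j}$ for $j \leq a$ and $0$ otherwise) is preserved through every split and merge. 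Once this bookkeeping is verified, every listed matrix operation reduces to $\BigO(\log \ell)$ vector-level primitives and the stated bound follows.
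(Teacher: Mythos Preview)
The paper does not prove this lemma; it simply cites \cite{Holm18a}. Your reconstruction is the standard one and matches what that reference does: keep the rows as leaves of a balanced binary tree, store at each internal node the column sum and the ``upper'' partial sum, and observe that both aggregates recombine from children via one vector addition and one vector splice. Your combine formula for $\mathbf{u}^{[a,b]}$ is correct, and the cost accounting is right.

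Two small technical points you glossed over are worth one sentence each. First, initializing the all-zero matrix by materializing $\ell+1$ leaves would cost $\Theta(\ell)$, which exceeds $\BigO(T(\ell)\log\ell)$ in the word-RAM regime where $T(\ell)=\BigO(\log\ell)$; the fix is to represent all-zero subtrees implicitly (a null pointer with the convention that both aggregates are $\mathbf{0}$), which also makes splicing against $\mathbf{0}$ free. Second, your implementation of $\matrixsplice{\mathbf{M}}{k}{\mathbf{N}}$ via split/concatenate is destructive, yet in the paper the same stored matrix is spliced more than once (e.g.\ in \textsc{CR.GetDiagCnt} and \textsc{CR.OnMerge}); you need either a persistent balanced tree so that split/concatenate leave the originals intact, or an explicit remark that the two halves produced by the split can be re-joined afterwards to restore the input. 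Both are routine, but without one of them the interface does not match how the lemma is used.
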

Note that \Cref{lem:counter-matrices} \emph{does not} enable us to perform efficient element-wise additions of matrices of the form ${\bf M} + {\bf N}$; we shall avoid this kind of additions in the implementation of our data structure.

We also consider \emph{bit vectors} ${\bf b} = (b_0, b_1, \ldots, b_\ell)$ with $\ell + 1$ entries, where ${\bf 0}$ denotes the all-zero bit vector.
Similarly to counter vectors, we also consider the bitwise OR: $({\bf a} \bitor {\bf b})$, splicing, and extracting single elements of bit vectors.
Let $B(\ell)$ denote the time required to perform a~single bit vector operation; note that $B(\ell) \leq T(\ell)$ since bit vectors can always be simulated via counter vectors.
In the setting of combinatorial algorithms we have $B(\ell) = \Oh(\ell)$.
Meanwhile, in the word RAM setting we assume that $B(\ell) = \Oh(1)$ whenever $\ell \in \BigO(\log n)$.
In the same vein, we can define \emph{bit matrices} with $\ell + 1$ bit vectors, replacing all additions in the definition of counter matrices with the corresponding bitwise ORs in a~natural way.
The proof of Lemma~\ref{lem:counter-matrices} applies also to bit matrices, yielding that:
\begin{lemma}[\cite{Holm18a}]
  \label{lem:bit-matrices}
  All operations on bit matrices can be performed in time $\BigO(B(\ell) \log \ell)$.
\end{lemma}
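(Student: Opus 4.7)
The plan is to follow the approach used for \Cref{lem:counter-matrices}, observing that all operations on counter matrices listed in the preamble only use addition through its commutative monoid structure $(\mathbb{N}, +, 0)$, with no subtractions. Since $(\{0,1\}, \bitor, 0)$ is also a commutative monoid, any data structure that implements counter matrices using these abstract operations translates verbatim to bit matrices by replacing element-wise addition with element-wise OR, where a single such step on a length-$(\ell+1)$ vector now costs $B(\ell)$ instead of $T(\ell)$. The resulting time bound is therefore $\BigO(B(\ell) \log \ell)$.

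Concretely, I would represent a bit matrix as a balanced binary search tree whose leaves correspond to the $\ell + 1$ rows in left-to-right order, with each internal node storing (i) the OR of the bit vectors of its subtree (for $\matrixsum$), and (ii) an auxiliary ``upper OR'' vector needed for $\uppersum$, analogously to the auxiliary vector used in the counter case. Single-entry read/update and $\addvector$ each touch one root-to-leaf path, performing $\BigO(\log \ell)$ node recomputations, each of which is a constant number of $B(\ell)$-time vector ORs. Splicing $\matrixsplice{\mathbf M}{k}{\mathbf N}$ is handled by the standard split-and-join algorithm on the two balanced trees, rebalancing along a single path and recomputing aggregates at $\BigO(\log \ell)$ nodes. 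The $\matrixsum$ query simply reads the aggregate stored at the root in $\BigO(B(\ell))$ time.

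The only slightly delicate part, exactly as in the counter case, is $\uppersum$: the value $v_j = \bigvee_{i=j}^{\ell} M_{i,j}$ depends on absolute row indices, which shift when splicing. This is handled by maintaining, at every internal node, both the ``plain'' OR of its subtree and an aggregate that encodes the upper-OR contribution relative to the subtree's own ordering; combining the aggregate of a left child with that of a right child uses the size of the left subtree as an offset, and this combination is again a constant number of vector ORs per node, i.e. $\BigO(B(\ell))$ time. Walking up the tree after any modification recomputes these aggregates along $\BigO(\log \ell)$ nodes, maintaining the invariant that the root holds the answer for the entire matrix.

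The main obstacle I anticipate is bookkeeping the upper-OR aggregate correctly under splicing: one needs to ensure that the combination operation at internal nodes, and in particular the offset dependence on subtree size, works out when joining trees of arbitrary shape. Once this is done — mirroring the argument for counter matrices in~\cite{Holm18a} — all operations reduce to $\BigO(\log \ell)$ primitive vector ORs, and the claimed $\BigO(B(\ell) \log \ell)$ bound follows.
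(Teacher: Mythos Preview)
Your proposal is correct and takes essentially the same approach as the paper: the paper's entire argument is the single sentence ``The proof of Lemma~\ref{lem:counter-matrices} applies also to bit matrices,'' and your observation that the construction only uses the commutative monoid structure $(\mathbb{N},+,0)$ --- hence transfers verbatim to $(\{0,1\},\bitor,0)$ --- is exactly the content of that sentence. Your additional implementation sketch (balanced BST over rows, aggregates for $\matrixsum$ and $\uppersum$, splice via split/join) is a reasonable reconstruction of what the cited result~\cite{Holm18a} does, and your identification of the $\uppersum$ bookkeeping under splicing as the only non-obvious part is accurate.
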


\section{Graph structure} \label{sec:graph-structure}
This section will be devoted to proving the \Cref{lem:first-reduction}, which states that we can reduce all biconnectivity updates and queries to the cover level tree problem. We will build upon the ideas from \cite{HolmDeLichtenbergThorup} and follow some definitions from there, but we will adjust them to new improvements\ms{what are the improvements?} and to provide a cleaner abstraction.

The main reason why biconnectedness proves to be more challenging than 2-edge connectivity seems to be the fact that it is an equivalence relation over edges rather than over vertices. In other words, the edges of $G$ can be partitioned into inclusion-wise maximal subsets of edges forming biconnected components. The key notions for understanding this partitioning will be \emph{covered adjacent pairs} and \emph{transitively covered adjacent pairs}. 

As aforementioned, we will be maintaining a spanning forest $F$ of the graph $G$ and each non-tree edge of the graph will be associated a level $i \in \{0, 1, \ldots, \lmax\}$, where $\lmax \in \Oh(\log n)$. Such an~assignment induces a sequence of graphs $G = G_0 \supseteq G_1 \supseteq \ldots \supseteq G_{\lmax} \supseteq F$, where $G_i$ is the subgraph of $G$ whose edge set contains precisely $E(F)$ and the non-tree edges of level at least $i$.

For $x, y, z\in V(G)$ such that $xy, yz \in E(G)$, we will say that $(xy, yz)$ is an \emph{adjacent pair} (the pairs are treated as unordered). If additionally we have that $xy, yz \in E(F)$, we will say that $(xy, yz)$ is an \emph{adjacent tree pair}. Now, let $uv$ be a non-tree edge at level $i$. Then, $uv$ \emph{covers} all adjacent pairs on a cycle induced by $uv$ in $F$, that is, all adjacent pairs $(xy, yz) \subseteq u \ldots v$, $(vu, us^v(u))$ and $(uv, vs^u(v))$ (we say that $(xy, yz) \in u \ldots v$ if and only if both $xy$ and $yz$ belong to $u \ldots v$).


Then, we define \emph{transitively covered adjacent pair} as follows. All covered adjacent pairs are transitively covered. Additionally, for $x, y, z, t \in V(G)$ and $xy, xz, xt \in E(G)$, if $(yx, xz)$ and $(zx, xt)$ are transitively covered, then $(yx, xt)$ is transitively covered as well. 
In this section, we will use the following properties of transitively covered adjacent pairs shown in~\cite{HolmDeLichtenbergThorup}:

\begin{lemma}[{\cite[Lemma 18]{HolmDeLichtenbergThorup}}]\label{lem:triples}
    The following properties hold:
	\begin{enumerate}
		\item Biconnectivity is a transitive relation over the neighbors of a vertex $u$, and if two neighbors of $u$ are biconnected, $u$ is in the biconnected component containing them.
		\item An adjacent pair $(xy, yz)$ is transitively covered if and only if $x$ and $z$ are biconnected.
		\item A vertex $y$ is an articulation point\ms{cut-vertex?} if and only if there is an adjacent tree pair $(xy, yz)$ which is not transitively covered.
		\item \label{item:triples-path-biconnected} Two vertices $v$ and $w$ are pseudo-biconnected if and only if for all $(xy, yz) \subseteq v \ldots w$, $(xy, yz)$ is transitively covered.
	\end{enumerate}
\end{lemma}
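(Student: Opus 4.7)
The plan is to establish the four properties in order, leveraging each in the subsequent ones, and using Menger's theorem as the workhorse characterization of biconnectivity (two vertices are biconnected iff there is no separating cutvertex, equivalently iff there exist two internally vertex-disjoint paths between them).

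For property (1), I would argue as follows. Suppose $a,b$ are both neighbors of $u$ and biconnected. Then there exist two internally vertex-disjoint $a$--$b$ paths $P_1, P_2$. The two-edge $a$--$u$--$b$ path is one $a$--$b$ path; at least one of $P_1, P_2$, say $P_1$, avoids $u$ internally, so $P_1$ and $a$--$u$--$b$ are internally vertex-disjoint, certifying that $a,u,b$ lie in a common biconnected component. For transitivity, suppose $a,b$ and $b,c$ are biconnected neighbors of $u$. By the previous argument, $u$ is in the biconnected component of $\{a,b\}$ and in the biconnected component of $\{b,c\}$. Since distinct biconnected components intersect in at most one vertex (a cutvertex), but both contain $\{u,b\}$, they must coincide; hence $a,c$ are biconnected.

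For property (2), the forward direction proceeds by induction on the structure of the transitive closure. If $(xy, yz)$ is \emph{covered} by a non-tree edge $uv$, then $x,y,z$ lie on the cycle $C$ formed by $uv$ together with $u\ldots v$ (possibly with $y\in\{u,v\}$), giving two internally disjoint $x$--$z$ paths through $C$. For the inductive step, if $(yx, xz)$ and $(zx, xt)$ are transitively covered, then by the induction hypothesis $y,z$ and $z,t$ are biconnected; all three are neighbors of $x$, so property (1) gives $y,t$ biconnected. Conversely, if $x,z$ are biconnected with a common neighbor $y$, I would argue by induction on the size of the biconnected component: two internally vertex-disjoint $x$--$z$ paths, together with the $x$--$y$--$z$ path, give a theta-subgraph on $x, y, z$; then expressing the relevant cycle decomposition of the biconnected component (via its ear decomposition) lets one build up $(xy, yz)$ from elementary covers and transitivity steps.

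For property (3), the ``only if'' direction uses that if $y$ is a cutvertex, then $F-y$ has at least two components meeting neighbors of $y$ in $F$ (recall that $F$ spans $G$, so tree neighbors exist in each such component). Take tree neighbors $x,z$ in different components of $G-y$; then $x,z$ are separated by $y$, hence not biconnected, so by (2) $(xy,yz)$ is not transitively covered. Conversely, if $(xy, yz)$ is an adjacent tree pair that is not transitively covered, (2) gives that $x,z$ are not biconnected, so some cutvertex separates them; since $xy, yz \in E(G)$, this cutvertex must be $y$.

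For property (4), the forward direction: if $v,w$ are pseudo-biconnected, then all of $v\ldots w$ lies in a common biconnected component (or the path is a single bridge, in which case every tree-adjacent endpoint triple on the path still lies in a biconnected component, handled by the degenerate cases of the statement). In particular, for every $(xy, yz)\subseteq v\ldots w$, the endpoints $x, z$ are biconnected, so by (2), $(xy,yz)$ is transitively covered. For the converse, applying (2) along the path and chaining via property (1) at each internal vertex $y$ of $v\ldots w$ shows that consecutive tree neighbors of $y$ on the path lie in the same biconnected component, and hence $v$ and $w$ do as well (or are connected by a bridge in the degenerate short-path cases). The main obstacle I anticipate is the ``if'' direction of (2): one must produce two disjoint $x$--$z$ paths from the combinatorial witness of transitive covering, which requires a careful inductive path-surgery argument to glue together cycles obtained from individual covers and the $y$-mediated transitivity closures.
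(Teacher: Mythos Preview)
The paper does not prove this lemma; it is cited from \cite[Lemma~18]{HolmDeLichtenbergThorup} without argument, so there is no in-paper proof to compare against. Your sketches for (1), (3), (4), and the forward (``only if'') direction of (2) are correct and essentially complete.

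The genuine gap is the converse of (2). Invoking an ear decomposition is not the right tool: covers are defined relative to the fixed spanning forest $F$ (a pair is covered only by a non-tree edge whose \emph{fundamental cycle in $F$} passes through it), and an ear decomposition of the ambient biconnected component bears no relation to which edges happen to be tree edges. The clean argument is tree-based. First reduce to tree edges at $y$: any non-tree edge $yc$ is directly covered together with the tree edge $yt$, where $t$ is the $F$-neighbour of $y$ in the component of $F-y$ containing $c$ (the fundamental cycle of $yc$ itself witnesses this). It then suffices to show that whenever $a,b$ lie in the same component of $G-y$, the tree edges $yt_{K_a}$ and $yt_{K_b}$ (for $K_a,K_b$ the components of $F-y$ containing $a,b$) are transitively covered. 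A path from $a$ to $b$ in $G-y$ visits a sequence of components $K_a=K_0,\ldots,K_m=K_b$ of $F-y$, consecutive ones joined by a non-tree edge $u_iv_i$ not incident to $y$; the fundamental cycle of $u_iv_i$ passes through $y$ with $yt_{K_i}$ and $yt_{K_{i+1}}$ as its two edges there, so $(yt_{K_i},yt_{K_{i+1}})$ is directly covered. Chaining gives the result.

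A minor point: your closing paragraph swaps the two directions. In the ``if'' direction of (2) you are \emph{given} two disjoint $x$--$z$ paths (biconnectedness) and must \emph{produce} a transitive-covering derivation, not the reverse; the other direction you already handled cleanly by structural induction together with property~(1).
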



Based on \Cref{lem:triples} we note that if the cover level of an adjacent tree pair $(xy, yz)$ equals~$i$, then $i$ is the largest integer such that $(xy, yz)$ is transitively covered in $G_i$. We also say that $(xy, yz)$ is \emph{transitively covered at level $j$} for all $j \le i$.

\Cref{item:triples-path-biconnected} of \cref{lem:triples} motivates the created abstraction of the tree problem (and neighborhood data structure in turn), as maintaining cover levels of adjacent tree edges and corresponding equivalence classes allows us to determine the level at which any two vertices are pseudo-biconnected.

Similarly to some of the previous algorithms \cite{HolmDeLichtenbergThorup, Holm18a} for dynamic connectivity problems, we are going to maintain a key invariant:
\begin{description}
	\item[($\dagger$)] Biconnected components in graph $G_i$ have at most $\ceil{\frac{n}{2^i}}$ vertices.
\end{description}

As such, the maximum level of an edge cannot exceed $\log_2{n}$. Inserting an edge at level $0$ cannot violate this invariant. We will say that it is \emph{legal} to increase the level of a non-tree edge $e$ to $j$ if this does not violate this invariant, that is, if the biconnected component of $e$ in $G_j \cup \{e\}$ has at most $\ceil{\frac{n}{2^j}}$ vertices. Increasing the level of an edge by one will be called \emph{promoting}. Moreover, throughout the lifetime of a non-tree edge, all operations of increasing its level will precede all operations of decreasing its level (which will happen only just before the edge gets deleted), hence only $\Oh(m \log n)$ level changes will be issued in total.

Deleting an edge may cause the cover levels of multiple adjacent pairs to be lowered and many biconnected classes on various levels to split. Let us focus on the case where the deleted edge $uv$ is a non-tree edge on level $i$. It is the case that before its deletion it caused all adjacent pairs from $u \ldots v$ to be covered at level at least $i$, but it might have been not the only reason: there could have been other non-tree edges at level $i$ or higher covering some of the adjacent pairs on $u \ldots v$. For adjacent pairs covered at level higher than $i$, we know that deleting $uv$ does not affect their level, but it is unclear what happens for adjacent pairs on level exactly $i$. In order to determine what happens with such pairs, we will be \emph{looking through} a hypothetically large number of non-tree edges of level $i$ that might influence cover levels of adjacent pairs on $u \ldots v$, and we will need to do so in a carefully chosen order. The key idea to bound the time complexity of this process --- despite looking at a potentially large number of edges --- is to amortize our work by promoting edges that we are looking through to higher levels. However, promoting edges will not be always legal, as we may break the invariant ($\dagger$) as an effect. Nevertheless, we will show that per single edge deletion, we will be able to promote all the \emph{looked-through} edges, except for at most $\Oh(\lmax)$ of them.

We point out that the tree cover level data structure is unaware of the existence of non-tree edges. The responsibility of maintaining these edges and making the appropriate calls to the tree cover level data structure lies on the main biconnectivity data structure that we implement right now. To this end, we will maintain sets $N^i(v)$ that for each $i \in \{0, 1, \ldots, \lmax\}$ and $v \in V(G)$ store vertices $u$ such that $uv$ is a level $i$ non-tree edge of $G$. 
Such sets can be maintained using any balanced binary tree supporting inserting, deleting, checking existence of an element and providing any of its elements (if nonempty) in $\Oh(\log n)$ time per query. We will also maintain the following invariant:

\begin{itemize}
	\item[($\dagger\dagger$)] A vertex $u$ is $i$-marked if and only if $N^i(u)$ is nonempty. 
\end{itemize}
In order to maintain it, we design an auxiliary function $\Call{UpdateMark}{u, i}$ that checks if $N^i(u)$ is currently empty or not, compares that with whether it is currently $i$-marked or not, and calls $\Call{Mark}{u, i}$ or $\Call{Unmark}{u, i}$ if necessary. 

\paragraph{\textsc{Insert}, \textsc{AreBiconnected}, and \textsc{NextCutVertex} calls}
The \textsc{Insert} and \textsc{AreBiconnected} calls are very easily implementable using the tree data structure.

For an $\Insert{u, v}$ call, we first check whether $u$ and $v$ are connected in $F$ by calling $\Call{Connected}{u, v}$. If they are not, then we simply call $\Call{Link}{u, v}$. If they are, we call $\Call{Cover}{u, v, 0}$, insert $u$ to $N^0(v)$, insert $v$ to $N^0(u)$ and call $\Call{UpdateMark}{u, 0}$ and $\UpdateMark{v, 0}$.

For an $\AreBiconnected{u, v}$ call, we first call $\Call{Connected}{u, v}$. If it returns that $u$ and $v$ are not connected by a tree path, we return that they are not even in the same connected component. Otherwise, we call $\Call{CoverLevel}{u, v}$. If it returns $-1$, then we return that $u$ and $v$ are not biconnected. However, if $\Call{CoverLevel}{u, v}$ returns a~non-negative integer, we conclude that $u$ and $v$ are pseudo-biconnected. It remains to check if $u$ and $v$ are connected by a bridge. However, that boils down to checking if $\FindSize{u, v, 0}=2$. If this is the case, then we return that $u$ and $v$ are not biconnected, or that they are biconnected otherwise.

The $\NextCutVertex{v, w}$ call proceeds similarly. If $\CoverLevel{u, v}$ returns $-1$, we return $\Call{MinCoveredPair}{v, w}$ as the articulation point between $v$ and $w$ that is the closest to $v$. Otherwise, we return $w$.

\paragraph{Auxiliary functions}
Before we begin to describe edge deletion function, let us introduce two auxiliary functions: $\PromoteEdge{x, z}$ and $\FindNextEvent{u, v, i}$.


The goal of $\PromoteEdge{x, z, i}$ is to perform the necessary bookkeeping connected to pushing an edge $xz$ from level $i$ to level $i+1$, which is to update sets $N^i$ and $N^{i+1}$ for $x$ and $z$, update their $i$ and $i+1$ marks accordingly and call $\Cover{x, z, i+1}$. Its pseudocode can be found as \Cref{alg:promote-edge} in \Cref{sec:graph-impl}.

In order to explain the role of $\FindNextEvent{u, v, i}$ we need to introduce a few concepts. For two vertices $x, y$ on $u \ldots v$ we will say that $x$ is on \emph{left} (or \emph{right}, respectively) of $y$ if and only if $x$ is closer to $u$ than $y$ (or $y$ is closer to $u$ than $x$, respectively). For a vertex $x$ on $u \ldots v$ we analogously define its left edge $L(x)$ and right edge $R(x)$ as the edges it is incident to on that path, where the left edge is closer to $u$, that is $L(x) \coloneqq (x, s^u(x)), R(x) \coloneqq (x, s^v(x))$ (note that $u$ does not have the left edge and $v$ does not have the right edge).
For a non-tree edge $xy$ we define the \emph{projection of the edge $xy$ onto $u \ldots v$} as either the projection of $x$ onto $u \ldots v$ or the projection of $y$ onto $u \ldots v$ --- whichever is closer to $u$. We may drop the ``onto $u \ldots v$'' if it is clear from the context. We say that an edge $xy$ is $i$-reachable from $u \ldots v$ if and only if it is of level $i$ and $x$ is $i$-reachable from $u \ldots v$ (it can be easily seen that it is equivalent to $y$ being $i$-reachable from $u \ldots v$). An \emph{event} is a tuple $(e, p, f)$, where $e$ is a level-$i$ non-tree edge, $p$ is the projection of $e$ onto $u \ldots v$ (also called the \emph{projection of an event}) and $f$ is either $L(p)$ or $R(p)$. For a tuple $(e, p, f)$ to be called an event, we additionally require that $e$ is $i$-reachable from $u \ldots v$ --- otherwise $xy$ would not influence cover levels of any adjacent pairs from $u \ldots v$ assuming that the cover level of $u \ldots v$ is equal to $i$.

The goal of $\FindNextEvent{u, v, i}$ is to use $\textsc{FindFirstReach}$ and $\textsc{FindStrongReach}$ from the tree data structure to yield us events in the order from left to right of their projections onto $u \ldots v$.
We do so by extracting: a vertex $x$ that is $i$-marked and $i$-reachable from the leftmost possible edge; the projection $p$ of $x$ onto $u \ldots v$; and an arbitrary element $y \in N^i(x)$. However, we specify that $\FindNextEvent{}$ adheres to the following essential tiebreaking rule:
\begin{description}
	\item[($\dagger\dagger\dagger$)] If we retrieve a vertex $x$ (with projection $p$) that is not strongly $i$-reachable from $L(p)$, we require that there are no vertices that are $i$-marked and strongly $i$-reachable from $L(p)$ through~$p$.
\end{description}
In order to distinguish whether $x$ is strongly $i$-reachable from $L(p)$, as a part of the output $\FindNextEvent{u, v, i}=(xy, p, f)$, we either define that $f=L(p)$ in the case $xy$ is strongly $i$-reachable from $L(p)$, or $f=R(p)$ otherwise. This turns out to be a natural choice: we will later prove that $xy$ and $f$ will eventually belong to the same biconnected component of $G_i$. The pseudocode of $\FindNextEvent{}$ is presented as \cref{alg:find-next-event} in \Cref{sec:graph-impl}.

\subsection{Reducing deletions to non-tree edges}
The main challenge in the implementation of biconnectivity data structures lies in edge deletions. Let us consider a call $\Delete{u, v}$. The edge $uv$ may either be a tree edge or a non-tree edge. We will first reduce to the case where it is a non-tree edge.

Assume that $uv$ is a tree edge. Firstly, we want to determine the highest level of a non-tree edge that connects $F_u$ and $F_v$ that are connected components of $F - \{uv\}$ containing $u$ and $v$ respectively. If no such edge exists, then we can simply call $\Call{Cut}{u, v}$ and conclude the update. However, if such an~edge exists, let us denote the highest level of such edge by $i$, and call any such edge at level $i$ a \emph{replacement edge}. Equivalently, $i$ is the highest level such that there exists a vertex $w$ such that either $(wu, uv)$ or $(wv, vu)$ is an adjacent tree pair covered on level $i$. This level can be determined by repeatedly calling $\FindSize{u, v, j}$ for $j= \lmax, \lmax - 1, \ldots, 0$ and stopping at first $j$ such that $\FindSize{u, v, j} > 2$ (if it did not stop at any $j$, then we conclude there is no such edge at any level). Our goal is to identify the replacement edge $e$ and swap $uv$ with $e$, that is, make $e$ a tree edge and make $uv$ a non-tree edge at level $i$. We will firstly show that such a~swap is a safe operation to perform for our data structure:

\begin{lemma} [\cite{HolmDeLichtenbergThorup}] \label{lem:swap-cover-levels}
	Let $G=G_0 \supseteq G_1 \supseteq \ldots \supseteq G_{\lmax}$ and $G'=G_0' \supseteq G_1' \supseteq \ldots \supseteq G_{\lmax}'$ be respective graphs before and after the swap operation. Then, $G_j$ and $G_j'$ have the same biconnected components and for each pair $(xy, yz)$ that is an adjacent tree pair in both, its cover level is the same in $G$ and $G'$.
\end{lemma}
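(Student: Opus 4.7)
The lemma bundles two claims: (i) $G_j$ and $G_j'$ carry the ``same'' biconnected components for every level $j$, and (ii) for each pair $(xy,yz)$ that is an adjacent tree pair in \emph{both} $F$ and $F'$, its cover level is unchanged by the swap. Both claims rest on a careful bookkeeping of which edges appear in $G_j$ versus $G_j'$, and the key split is whether $j\le i$ or $j>i$, where $i$ is the level of the replacement edge $e=pq$.

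Case $j\le i$. Here the swap is purely a relabeling: $uv$ stops being a tree edge and becomes a non-tree edge of level $i\ge j$, while $e$ does the reverse. Since each $G_j$ consists of all current tree edges together with all non-tree edges of level at least $j$, both $uv$ and $e$ lie in $E(G_j)$ and in $E(G_j')$, and all other edges are unaffected. Hence $G_j=G_j'$ as graphs, and the biconnected-component equivalence classes on edges coincide literally. Any adjacent tree pair present in both $F$ and $F'$ must avoid $\{uv,e\}$, so its cover level --- determined entirely by the shared underlying graphs $G_0,\dots,G_{\lmax}$ --- is automatically preserved.

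Case $j>i$. Now the edge sets genuinely differ: $G_j$ contains $uv$ but not $e$, while $G_j'$ contains $e$ but not $uv$; everywhere else the two agree. I would use the maximality of $i$: by choice of the replacement edge, no non-tree edge of level exceeding $i$ connects $F_u$ with $F_v$. Therefore in $G_j$ the only edge between $F_u$ and $F_v$ is $uv$, making $uv$ a bridge of $G_j$; symmetrically $e$ is a bridge of $G_j'$. Since $G_j-uv=G_j'-e$ as edge sets, all non-singleton biconnected components (viewed as equivalence classes of edges under ``lies on a common simple cycle'') of the two graphs coincide, and the two bridges $\{uv\}$ and $\{e\}$ appear only as singleton classes, playing symmetric roles. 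For the cover-level claim, any pair $(xy,yz)$ that is adjacent tree in both $F$ and $F'$ satisfies $\{xy,yz\}\cap\{uv,e\}=\emptyset$, so the class of $xy$ and the class of $yz$ in $G_j$ are exactly the same classes as in $G_j'$; the largest $j$ for which $xy$ and $yz$ share a biconnected class is therefore identical in $G$ and $G'$.

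The main obstacle is conceptual rather than technical: one must interpret ``same biconnected components'' correctly for $j>i$, where the edge sets of $G_j$ and $G_j'$ do differ by a symmetric difference $\{uv,e\}$. The crucial structural observation --- and the only place the maximality of $i$ is really used --- is that both edges in this symmetric difference are bridges separating $F_u$ from $F_v$, so they interchange freely with respect to biconnectivity and cannot participate in the cover relation of any adjacent tree pair that survives the swap. Once this is pinned down, both parts of the lemma reduce to the edge-set bookkeeping above.
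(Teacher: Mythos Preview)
Your proof is correct and follows essentially the same approach as the paper: both split on $j\le i$ versus $j>i$, observe $G_j=G_j'$ in the first case, argue that $uv$ and $e$ are bridges with $G_j\setminus\{uv\}=G_j'\setminus\{e\}$ in the second, and deduce the cover-level claim from the preservation of biconnected components. Your write-up is more detailed (e.g.\ you spell out why the maximality of $i$ forces the bridge property and why adjacent tree pairs in both forests avoid $\{uv,e\}$), but the underlying argument is the same.
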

\begin{proof}
	Let us consider two cases based on whether $j \le i$ or $j>i$.
	\begin{description}
		\item[Case 1:] $j \le i$.
		In this case we have that $G_j = G_j'$, so their biconnected components are clearly the same.

		\item[Case 2:] $j > i$.
		In this case $uv$ is a bridge in $G_j$ and $e$ is a bridge in $G_j'$. Moreover $G_j \setminus \{uv\} = G_j' \setminus \{e\}$. Hence biconnected components in both are the same.
	\end{description}
	As biconnected components are preserved at every level, cover levels of all pairs of edges are preserved too. Hence, if we restrict only to pairs that are adjacent tree pairs in both $G$ and $G'$, we get the desired conclusion.
\end{proof}

The specific realization of swapping a tree edge $uv$ with a non-tree edge $xy$ on level $i$ is as follows. We call $\Cut{u, v}$, then we remove $y$ from $N^i(x)$ and $x$ from $N^i(y)$, call $\Link{x, y}$, add $v$ to $N^i(u)$ and $u$ to $N^i(v)$ and finally call $\Cover{u, v, 0}, \Cover{u, v, 1}, \ldots, \Cover{u, v, i}$. As observed in \Cref{lem:swap-cover-levels}, swapping does not affect cover levels, hence the only adjacent tree pairs in $G'$ with incorrect values of cover levels in our data structure before the final sequence of $\Cover{}$ calls are the ones that are not tree pairs in $G$, that is, the ones involving the edge $xy$. The following sequence of $\Cover{}$ calls correctly updates these and does not affect any other cover levels, so the described sequence of operations correctly executes the desired swapping operation.  


\paragraph{Identifying a replacement edge}
What remains is to identify the replacement edge $e$ at the respective level. For that we follow similar logic as in \cite[Lemma 2.1]{Holm18a}, but instead of $\textsc{FindFirstLabel}$ function we use analogous $\textsc{FindNextEvent}$, and for technical reasons we browse edges from both sides rather than from the smaller one only. We operate in two symmetric phases --- one for $(a, b) \coloneqq (u, v)$ and one for $(a, b) \coloneqq (v, u)$ --- corresponding to searching edges on both sides of $uv$. Consider a single phase (and note that $ab \in E(F)$). We repeatedly call $\FindNextEvent{a, b, i}$ to retrieve a candidate $xy$ for a replacement edge, where $x$ is $i$-reachable from $ab$ and --- as we will see --- it will be reachable from $ab$ through $a$ (as opposed to through $b$). If for a~particular $y$ we have that $s^y(a)=b$, then we know that the edge $xy$ is a~good replacement edge and we stop the search. If $s^y(a) \neq b$, then we know that it is contained within $V(F_a)$ and check if it is legal to promote it. If it is, then we promote it. If it is not, then we stop the current phase.

According to our definition of $i$, a~replacement edge exists and is $i$-reachable from $uv$ both through $u$ and through $v$, so no phase can be stopped by running out of candidates for replacement edges, or even reach the moment when $\FindFirstReach{a, b, i}$ starts returning edges reachable through $b$ rather than from $a$. Hence, a phase can be stopped only by either finding a replacement edge, or by finding an edge that cannot be promoted. Let $B$ be the biconnected component of $G_i$ containing $uv$ and let $B_u = B[F_u]$ and $B_v = B[F_v]$ be subgraphs of $B$ induced by vertices of $F_u$ and $F_v$, respectively. As $\ceil{\frac{n}{2^{i}}} \ge |V(B)| = |V(B_u)| + |V(B_v)|$, we get $\mathrm{min}(|V(B_u)|, |V(B_v)|) \le \ceil{\frac{n}{2^{i+1}}}$. Let us assume that $|V(B_u)| \le |V(B_v)|$, so $|V(B_u)| \le \ceil{\frac{n}{2^{i+1}}}$. We will prove that in this case the phase for which $(a, b) = (u, v)$ cannot stop by not being able to promote an edge, so it has to be stopped by finding a replacement edge. The proof for the case when $|V(B_u)| > |V(B_v)|$ is analogous.
	
	We know that every biconnected component $B'$ of $G_{i+1}$ is either edge-disjoint from $B$ or contained in $B$. As there are no edges of level $i+1$ or higher connecting $F_u$ and $F_v$, any biconnected component $B'$ that is not disjoint from $B$ is either contained in $B_u$, contained in $B_v$, or is the edge $uv$. As such, all hypothetical biconnected components on level $i+1$ formed by promoting any subset of level $i$ edges within $B_u$ will be still contained within $B_u$. Hence, these biconnected components will be of size at most $|V(B_u)| \le \ceil{\frac{n}{2^{i+1}}}$, so it will be legal to promote any subset of level-$i$ edges within $B_u$. Hence, the phase for $(a, b) = (u, v)$ has to be stopped by finding a replacement edge, which concludes the correctness proof of this algorithm.
	
The pseudocode of finding the replacement edge ($\Call{FindReplacement}{}$) and executing the replacement ($\Call{Swap}{}$) can be found in \Cref{alg:swap} in \Cref{sec:graph-impl}.




\subsection{Deleting non-tree edges}
Hence, from now point on, we can assume that $uv$ is a non-tree edge.
Assume that this edge is on level $i$. The issue is that it is currently covering on level $i$ all pairs of adjacent edges on the cycle induced by $uv$ and it is not immediately clear how to remove its contribution from the maintained transitive covering of adjacent pairs. Naturally, if an adjacent pair on that cycle is transitively covered at level higher than $i$, then it will stay that way. But an~adjacent pair transitively covered at level exactly $i$ may be covered at any level between $-1$ and $i$ after the removal of $uv$, depending on whether it is covered by some other non-tree edge.

Before deleting an edge we will remove its contribution to the maintained transitive covering of adjacent pairs step by step. We will progressively decrease its level one step at a time and issue corresponding updates to the tree structure. We will design a procedure $\UncoverPath{u, v, i}$, whose goal is to update the cover levels of all adjacent tree pairs in the tree structure after dropping the level of edge $uv$ from $i$ to $i-1$. The plan is to call $\UncoverPath{u, v, j}$ for $j=i, i-1, \ldots, 0$ and remove the edge $uv$. Since $\UncoverPath{}$ will use the operation $\UniformUncover{}$ of the (restricted) tree cover level data structure, we have to first call $\Expose{u, v}$ in this data structure before we start calling $\textsc{UncoverPath}$. Additionally, we also need to remember about updating $N^j(\cdot)$ sets accordingly. This is expressed through the pseudocode in \cref{alg:delete} in \Cref{sec:graph-impl}.


\paragraph{Uncovering a path}

In order to properly investigate changes in cover levels when dropping a level of a non-tree edge $uv$ from $i$ to $i-1$ we will look through other non-tree edges that influence cover levels of pairs of edges incident to vertices on the $u \ldots v$ path. The main idea is to again amortize the work by promoting the looked-through edges. However, doing so carelessly could break the invariant ($\dagger$). In order to remedy this, we will firstly browse the edges in the direction from $u$ to $v$ until we encounter first moment when pushing an edge would break the invariant ($\dagger$).
We retrieve these edges by utilizing the $\FindNextEvent{u, v, i}$ function, which does so in an already specified order, which is in the order of their projections to $u \ldots v$ with the tiebreaking rule $(\dagger\dagger\dagger)$ applied when needed. We intuitively think of the $u \ldots v$ path as prescribing our timeline of events.
Then, we repeat the same procedure, but this time in the direction from $v$ to $u$. Along the way, based on various cases, we will call $\textsc{LocalUncover}$ and $\textsc{UniformUncover}$ with various arguments in order to cancel the influence brought to the transitive covering by $uv$, and call $\Cover{\cdot, \cdot, i+1}$ to promote some level-$i$ edges to level $i+1$. We will prove that doing so will be sufficient to fully accommodate changes to cover levels stemming from lowering the level of $uv$ from $i$ to $i-1$.

Let $G$ be the graph with $uv$ at level $i$ and $G'$ be the graph after lowering the level of $uv$ to $i-1$. 

\msinline{\footnotesize If time: recall what $\LL^x_j$ means. It's the first time this notation appears in this section}

\begin{lemma} \label{lem:pomegranade}
Let $x \in V(G)$ and $j \in \{0, \ldots, \lmax\}$.
\begin{itemize}
	\item If $x \not\in u \ldots v$ or $x \in \{u, v\}$ or $j \neq i$, then $\LL^x_j$ remains unchanged after lowering the level of $uv$ from $i$ to $i-1$.
	\item Otherwise, it either remains unchanged, or $\LL^x_i(L(x))$ should be split into exactly two subclasses, where one of them contains $L(x)$ and the other contains $R(x)$.
\end{itemize}
\end{lemma}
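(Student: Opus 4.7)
The plan is to split the proof first by the level $j$ and then by the geometry of $x$ relative to the cycle formed by $u \ldots v$ and $uv$. When $j \neq i$ the claim is immediate: lowering the level of $uv$ from $i$ to $i-1$ does not change whether $uv$ lies in $G_j$, so $G_j$ is untouched and every $\LL^x_j$ is preserved. From here on I fix $j = i$ and focus on the unique biconnected component $B$ of $G_i$ containing $uv$, since every other biconnected component of $G_i$ survives intact and contributes the same classes to $\LL^x_i$ as before.

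The structural cornerstone I would invoke is the standard fact that the biconnected components of $B - uv$ are linearly arranged along the tree path $u \ldots v$ as $B_1, \ldots, B_k$, and the cutvertices of $B - uv$ separating consecutive $B_\ell$'s are internal vertices of $u \ldots v$. The working criterion throughout will be: a vertex $w \in V(B)$ is a cutvertex of $B - uv$ iff $uv$ is a bridge of $B - \{w\}$.

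For the first bullet I would handle $x \notin u \ldots v$ and $x \in \{u, v\}$ in one shot. If $x \notin u \ldots v$ then $B - \{x\}$ still contains the entire tree path $u \ldots v$, which provides a $u$--$v$ route in $B - \{x\}$ avoiding $uv$; if $x \in \{u, v\}$ then $uv$ is removed along with $x$ anyway and $B - \{x\}$ is connected by biconnectedness. Either way, $B - uv - \{x\}$ is connected, so $x$ is not a cutvertex of $B - uv$ and hence lies in a single block of $B - uv$. All tree edges of $B$ incident to $x$ then lie in that single block, so no class of $\LL^x_i$ can split.

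For the second bullet, $x$ is an internal vertex of $u \ldots v$. If $x$ is not a cutvertex of $B - uv$, the same argument again gives ``unchanged''. Otherwise $x$ is the separator between two consecutive blocks $B_j$ and $B_{j+1}$, with $L(x) \in B_j$ and $R(x) \in B_{j+1}$, and I must show that every remaining tree edge $e$ of $B$ at $x$ falls into one of these two blocks and that no ``third'' block at $x$ picks any up. This is where I expect the main obstacle: I would let $T$ be the subtree of $F - \{x\}$ that $e$ enters, use biconnectedness of $B$ to extract a non-tree edge $pq$ with $p \in V(T)$ and $q \notin V(T)$, and observe that the closed walk formed by the tree path $x \ldots p$ inside $T$, the edge $pq$, and the tree path $q \ldots x$ (which re-enters $u \ldots v$ before reaching $x$) lives entirely in $B - uv$ and is a cycle; all its edges therefore share a common block, forcing $e$ into the block containing whichever of $L(x)$ or $R(x)$ lies on the return path. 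Finally, no subtree $T$ can simultaneously witness an edge $pq$ on each side of $x$, because that would merge $B_j$ and $B_{j+1}$ through $T$ and contradict $x$ being their separator. Putting these observations together yields exactly the two-way split of $\LL^x_i(L(x))$ claimed, with $L(x)$ in one part and $R(x)$ in the other.
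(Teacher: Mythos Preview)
Your approach is sound but takes a genuinely different route from the paper. The paper's proof is a two-line inverse-operation argument: instead of analyzing what happens when the level of $uv$ is \emph{lowered} from $i$ to $i-1$, it considers the reverse, raising the level from $i-1$ back to $i$. Raising the level of $uv$ adds exactly one newly covered adjacent pair at $x$, namely $(L(x),R(x))$, so the only possible effect on $\LL^x_i$ is to merge the classes of $L(x)$ and $R(x)$ in $G'$. Reading this backwards gives precisely the dichotomy in the lemma, without ever touching the block decomposition of $B-uv$.

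Your direct argument via the linear chain $B_1,\ldots,B_k$ of blocks of $B-uv$ also works and is closer in spirit to how the lemma is used downstream. One caveat: once you have invoked that the blocks are linearly arranged along $u\ldots v$ with $x$ as the separator of $B_j$ and $B_{j+1}$, it is already immediate that $x$ lies in no third block, so every edge of $B-uv$ incident to $x$ lies in $B_j$ or $B_{j+1}$. Your cycle argument is therefore redundant --- and, as written, not quite right: the parenthetical ``which re-enters $u\ldots v$ before reaching $x$'' is not guaranteed by merely picking \emph{some} non-tree edge $pq$ leaving $T$, since $q$ may land in a third subtree of $F-\{x\}$ disjoint from $u\ldots v$. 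The fix is either to drop that paragraph (the linear block structure you already assumed does all the work), or to extract $pq$ along a path from $y$ to $u$ or $v$ inside $(B-uv)-\{x\}$; such a path exists because $B-\{x\}$ is connected and $uv$ is at worst a bridge there, so $y$ stays connected to at least one of $u,v$ after removing $uv$.
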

\begin{proof}
	The first part of the lemma is trivial. For the other part, consider the inverse of the operation just performed, that is, view $G$ as the graph obtained from $G'$ by increasing the level of $uv$ from $i-1$ to $i$. In that case the only pair of edges incident to $x$ whose (non-transitive) covering may change is $L(x)R(x)$. If $L(x)$ and $R(x)$ belong to the same equivalence class on level $i$ in $G'$, then the transitive covering around $x$ does not change. If they do not, then transitive covering around $x$ changes by merging respective equivalence classes of $L(x)$ and $R(x)$ on level $i$, which
	proves the lemma.
\end{proof}

We are finally ready to proceed with describing $\textsc{UncoverPath}$. 

During this procedure, some vertices on $u \ldots v$ that will be projections of some processed events (let us recall that the projection of an event is the projection of one of its two ends --- the one that is closer to $u$) --- we will call these vertices \emph{important}. The vertices that are between two consecutive important vertices, before the projection of the first event, or after the projection of the last event (if the procedure was not stopped) will be called \emph{skipped through}. In the special case of no important vertices, all internal vertices of $u \ldots v$ will be classified as skipped through. In order to properly adjust the cover levels of left and right edges of important vertices we will call $\textsc{LocalUncover}$, while in order to properly adjust cover levels of skipped through vertices we will call $\textsc{UniformUncover}$. We exclude $u$ and $v$ from that classification as $\LL^u_i$ and $\LL^v_i$ are not affected by lowering the level of $uv$. To recall, we proceed in two symmetric phases --- one going from $u$ to $v$ and one going from $v$ to $u$. We assume in the following description that we are in the first phase as the analysis of the second phase is analogous. We also assume that the level of $uv$ has already been lowered to $i-1$, but the state of the data in the corresponding tree cover level data structure maintaining classes $\LL$ is yet to accurately reflect that.

\subparagraph{Skipped through vertices}
Let $p$ be a vertex that was skipped through and let us consider the moment when we are skipping through it, which is either the moment of retrieving the first event whose projection is
on the right of $p$, or after the very last event; and let $G''$ be our graph at that moment.
\newcommand{\KK}{\mathcal{K}}
Let $\LL^p$ denote the equivalence classes for $p$ currently represented in the tree cover level data structure and $\KK^p$ denote the desired final state of $\LL^p$ after lowering the level of $uv$ from $i$ to $i-1$.

We note that no update affecting $\LL_i^p$ has been issued during this phase yet, so in particular we still have that $\LL_i^p(L(p)) = \LL_i^p(R(p))$.
We claim the following:
\begin{lemma} \label{lem:skip-clean}
	If we call $\UniformUncover{\cdot, \cdot, i}$ on a subpath of $u \ldots v$ containing $L(p)R(p)$, then $\LL^p_i$ will get correctly updated in $G''$.
\end{lemma}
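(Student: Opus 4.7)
By Lemma~\ref{lem:pomegranade}, lowering the level of $uv$ from $i$ to $i-1$ affects $\LL^p$ only at level $i$, where either nothing changes or $\LL^p_i(L(p))$ splits into exactly two subclasses, one containing $L(p)$ and one containing $R(p)$. The call $\UniformUncover{\cdot,\cdot,i}$ on any subpath of $u\ldots v$ containing $L(p)R(p)$ performs exactly such a split, replacing $\LL^p_i(L(p))$ with $\LL^p_{i+1}(L(p))$ and $\LL^p_{i+1}(R(p))$, provided the uniformity condition $\LL^p_i(L(p)) = \LL^p_{i+1}(L(p)) \cup \LL^p_{i+1}(R(p))$ holds in the current data structure state. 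It therefore suffices to show (a) this uniformity, and (b) that the resulting split agrees with the desired state $\KK^p_i$ in $G''$.

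The pivotal intermediate claim is that in $G''$, no level-$i$ non-tree edge $f$ covers an adjacent pair at $p$ consisting of two edges in $\LL^p_i(L(p))$. Suppose such $f$ existed; then its fundamental cycle in $F$ traverses $p$ via two edges incident to $p$, so some endpoint $a$ of $f$ lies in the subtree of $F$ split at $L(p)$ (or $R(p)$) on the $u$-side. The sub-segment of this cycle from $a$ to its first intersection with $u\ldots v$ has cover level at least $i$, since every adjacent pair on the cycle is covered by $f$; consequently $a$ is $i$-reachable from $u\ldots v$ at a point strictly closer to $u$ than $p$, and $f$ gives rise to an event whose projection onto $u\ldots v$ is strictly closer to $u$ than $p$. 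By the skipped-through assumption (no event projects to $p$), this event was processed in the current phase before reaching $p$; since the phase has not halted by the moment of skipping through $p$, the edge $f$ must have been promoted to level $i+1$, contradicting its supposed survival at level $i$ in $G''$.

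Claim (b) now follows: in $G''$, every cover of a pair at $p$ involving $L(p)$ or $R(p)$ uses only non-tree edges of level at least $i+1$, so the transitive level-$i$ and level-$(i+1)$ covers at $p$ coincide in $G''$. Combined with the fact that the promotions performed earlier in the current phase preserve $\LL^p_i$ while updating $\LL^p_{i+1}$ to reflect the final covers at level $i+1$, this yields $\KK^p_i(L(p)) = \LL^p_{i+1}(L(p))$ and $\KK^p_i(R(p)) = \LL^p_{i+1}(R(p))$ in the current data structure. Claim (a) is analogous: for any $pz \in \LL^p_i(L(p))$, the transitive cover witnessing $(L(p), pz)$ at level $i$ is either already realized by edges of level at least $i+1$ in $G$ (giving $pz \in \LL^p_{i+1}(L(p))$ both before and after the promotions) or uses some level-$i$ non-tree edge whose cycle crosses $p$; by the pivotal claim any such edge has been promoted, and the associated $\Cover{\cdot,\cdot,i+1}$ update places $pz$ into $\LL^p_{i+1}(L(p)) \cup \LL^p_{i+1}(R(p))$. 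The principal obstacle is making the pivotal claim fully rigorous: one must carefully case-analyse the positions of the endpoints of $f$ within the subtrees of $F$ obtained by cutting edges at $p$ to confirm that $f$'s event always projects strictly before $p$, and then invoke the event-ordering guarantee of $\FindNextEvent$ together with the tiebreaking rule $(\dagger\dagger\dagger)$ to conclude that $f$ was necessarily handled (and promoted) earlier.
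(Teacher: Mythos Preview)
Your overall strategy matches the paper's: establish that in $G''$ no level-$i$ non-tree edge covers a pair of edges at $p$ lying in $\LL^p_i(L(p))$, and then deduce both uniformity and correctness of the split. However, your argument for this pivotal claim has a real gap, and your proposed way of completing it would not work.

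You write that if such an $f$ existed, ``its fundamental cycle in $F$ traverses $p$ via two edges incident to $p$, so some endpoint $a$ of $f$ lies in the subtree of $F$ split at $L(p)$ (or $R(p)$) on the $u$-side,'' and hence the corresponding event projects \emph{strictly} before $p$. Both steps fail. The two edges $e_1,e_2\in\LL^p_i(L(p))$ through which the cycle of $f$ enters and leaves $p$ need not be $L(p)$ or $R(p)$ at all: the class $\LL^p_i(L(p))$ may contain many non-path edges incident to $p$, and $f$ could have both endpoints in subtrees hanging off $p$ via such edges. In that case both endpoints project to $p$ (not strictly before it), and moreover the cycle of $f$ contains no edge of $u\ldots v$, so your sentence ``the sub-segment of this cycle from $a$ to its first intersection with $u\ldots v$ has cover level at least $i$'' does not by itself give $i$-reachability from any edge of $u\ldots v$. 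Your closing remark that one must ``confirm that $f$'s event always projects strictly before $p$'' is therefore aimed at a statement that is simply false.

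The paper's fix is to define the correct subtree: $F_L$ consists of $p$ together with all $x$ for which $(p,s^x(p))\in\LL^p_i(L(p))\setminus\{R(p)\}$. Since $e_1\neq e_2$, at least one of them, say $e_1$, is not $R(p)$, so one endpoint $a$ of $f$ lies in $V(F_L)\setminus\{p\}$. Crucially, $i$-reachability of $a$ from $L(p)$ uses not only that $f$ covers its own cycle, but also that $e_1\sim_i L(p)$: the path $s^u(p),p,\ldots,a$ has first pair $(L(p),e_1)$ at cover level $\ge i$ and the remainder covered by $f$. This places the reachability edge at $L(p)$, whose left endpoint $s^u(p)$ is strictly left of $p$, contradicting the minimality guarantee of $\FindFirstReach{}$ at moment $G''$. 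From there the paper derives $\KK^p_i(L(p))=\KK^p_{i+1}(L(p))=\LL^p_{i+1}(L(p))$ (and likewise for $R(p)$) directly, and invokes Lemma~\ref{lem:pomegranade} to conclude uniformity --- a cleaner route than tracking individual elements $pz\in\LL^p_i(L(p))$ as you attempt in (a).
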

\begin{proof}

Let $F_R, F_L$ be subtrees of $F$ rooted at $p$ such that $F_R$ consists of $p$ and all vertices $x$ such $(p, s^x(p)) = R(p)$ and $F_L$ consists of $p$ and all vertices $x$ such that $(p, s^x(p)) \in \LL^p_i(L(p)) \setminus \{R(p)\}$ (we note that $R(p) \in \LL^p_i(L(p))$).
According to the order of processing events, we know that there is no $i$-marked vertex in $V(F_L) \setminus \{p\}$ that is $i$-reachable from $u \ldots v$. As such, there are no level-$i$ edges covering a pair of edges from $\LL^p_i(L(p))$ since all $i$-marked vertices $x$ that are $i$-reachable from $u \ldots v$ have $s^x(p) = s^v(p)$ (or in other words, $(p, s^x(p)) = R(p))$. Therefore, the cover level of $L(p)R(p)$ cannot be $i$ after the update. If $L(p)R(p)$ is transitively covered at level higher than $i$ in $G''$, then the tree data structure has that information updated already --- either because it was transitively covered at level higher than $i$ already in $G$ (which lowering the level of $uv$ does not affect), or because it got transitively covered at level $i+1$ during one of $\Cover{\cdot, \cdot, i+1}$ calls that were issued when promoting some level-$i$ edges earlier in the current call of $\textsc{UncoverPath}$.
As we know that the cover level of $L(p)R(p)$ cannot be $i$ after the update, but it should be at least $i-1$ (as $uv$ has level $i-1$) --- if $L(p)R(p)$ is not already transitively covered at level higher than $i$ in our tree cover level data structure --- the only remaining option is that it should be lowered to $i-1$.

Based on the first part of \cref{lem:pomegranade} we conclude that
\begin{description}
	\item[\textbf{($*$)}] $\KK^p_{i+1}(L(p)) = \LL^p_{i+1}(L(p))$ and $\KK^p_{i+1}(R(p)) = \LL^p_{i+1}(R(p))$
\end{description} 
We remind that we have that $\KK^p_{i+1}$ is a refinement of $\KK^p_{i}$, where $\KK^p_i$ can be obtained from $\KK^p_{i+1}$ by merging classes containing pairs of elements that are covered by some level-$i$ non-tree edges.
But, as already argued, there are no level-$i$ non-tree edges covering any pair of edges from $\LL^p_i(L(p))$, so in particular there are no level-$i$ edges covering any pair of edges from $\KK^p_{i+1}(L(p))$, since $\KK^p_{i+1}(L(p)) \stackrel{(*)}{=} \LL^p_{i+1}(L(p)) \subseteq \LL^p_{i}(L(p))$ and any pair of edges from $\KK^p_{i+1}(R(p))$, since $\KK^p_{i+1}(R(p)) \stackrel{(*)}{=} \LL^p_{i+1}(R(p)) \subseteq \LL^p_{i}(R(p)) = \LL^p_{i}(L(p))$, so we have that
\begin{description}
	\item[\textbf{($**$)}] $\KK^p_i(L(p)) = \KK^p_{i+1}(L(p)) \stackrel{(*)}{=} \LL^p_{i+1}(L(p))$ and $\KK^p_i(R(p)) = \KK^p_{i+1}(R(p)) \stackrel{(*)}{=} \LL^p_{i+1}(R(p))$
\end{description} 

If the current cover level of $L(p)R(p)$ in the tree data structure is $i$, it means that $\KK^p_i(L(p)) \stackrel{(**)}{=}\LL^p_{i+1}(L(p)) \neq \LL^p_{i+1}(R(p))\stackrel{(**)}{=} \KK^p_i(R(p))$.
As $\KK^p_i(L(p))$ and $\KK^p_i(R(p))$ are different subsets of $\LL^p_i(L(p))$, based on the second part of \cref{lem:pomegranade} we conclude that they constitute a~partition of $\LL^p_i(L(p))$, or in other words --- the pair $(L(p), R(p))$ is uniform and calling $\UniformUncover{\cdot, \cdot, i}$ on a subpath of $u \ldots v$ containing $L(p)R(p)$ will correctly resolve $\LL^p_i$ to $\KK^p_i$.
\end{proof}

Knowing this, what remains is to call $\textsc{UniformUncover}$ with appropriate arguments. Namely, it suffices to call $\UniformUncover{u, p_{\rm first}, i}$, where $p_{\rm first}$ is the leftmost important vertex, then call $\UniformUncover{p_l, p_r, i}$ for each pair $(p_l, p_r)$ of consecutive important vertices and --- if the procedure was not stopped --- call $\UniformUncover{p_{\rm last}, v, i}$, where $p_{\rm last}$ is the rightmost important vertex; or call $\UniformUncover{u, v, i}$ if there were no important vertices at all.

\subparagraph{Important vertices}

Let $p$ be an important vertex. Consider a moment right after we processed all events that were strongly $i$-reachable from $L(p)$.
Such a~moment can be identified during the $\textsc{UncoverPath}$ call as either: the first moment of processing an event whose projection is $p$, but $\textsc{FindNextEvent}$ returned $R(p)$ as the tree edge; the first moment when we process an event with a~different projection; or the moment $\FindNextEvent{}$ ran out of events to process.
 Let us call our graph at that moment as $G''$. We define $\KK^p$ similarly as in the case of skipped through vertices. We claim the following:

\begin{lemma} \label{lem:important-general}
If $\CoverLevel{L(p), R(p)} \neq i$, we do not have to update $\LL^p_i$. Otherwise, if we call $\LocalUncover{L(p), R(p), i}$ on $G''$, then $\LL^p_i$ will get correctly updated.
\end{lemma}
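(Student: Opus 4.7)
My plan is to split the analysis by the value of $\CoverLevel{L(p),R(p)}$ at moment $G''$. As a~preliminary observation, because $p$ is important and the $\LocalUncover$ for $p$ has not yet been issued, all prior updates in the current $\UncoverPath$ call that touched vertex $p$ have been $\Cover{\cdot,\cdot,i+1}$ promotions; hence $\LL^p_i$ at moment $G''$ equals $\LL^{p,G}_i$, and in particular $\CoverLevel{L(p),R(p)} \ge i$, so the hypothesis ``$\CoverLevel{L(p),R(p)} \neq i$'' reduces to ``$\CoverLevel{L(p),R(p)} \geq i+1$''. I will also argue that no update issued later in the current $\UncoverPath$ call (either later in this phase, or in the symmetric second phase) can modify $\LL^p_{i+1}(L(p))$: a~hypothetical later promotion of a~level-$i$ edge $x'y'$ that would alter $\LL^p_{i+1}(L(p))$ would have its fundamental cycle visit $p$ through some edge in $\LL^p_{i+1}(L(p))$, but then already at moment $G''$ the edge $x'y'$ would be strongly $i$-reachable from $L(p)$ through $p$, and so would have been processed by the time $G''$ occurs. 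Consequently $\KK^p_{i+1}(L(p)) = \LL^p_{i+1}(L(p))$.

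In the case $\CoverLevel{L(p),R(p)} \geq i+1$, the edges $L(p)$ and $R(p)$ lie in a~common $\LL^p_{i+1}$-class, hence in a~common $\KK^p_{i+1}$-class. Since $\KK^p_{i+1}$ refines $\KK^p_i$, they share a~$\KK^p_i$-class as well, which by \Cref{lem:pomegranade} forbids splitting $\LL^p_i(L(p))$. So the tree data structure already stores $\KK^p_i$ and no update to $\LL^p_i$ is needed.

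In the case $\CoverLevel{L(p),R(p)} = i$, the call $\LocalUncover{L(p),R(p),i}$ replaces $\LL^p_i(L(p))$ with $\LL^p_{i+1}(L(p))$ and its complement in $\LL^p_i(L(p))$; I must show this matches $\KK^p_i$. The key claim is $\KK^p_i(L(p)) = \LL^p_{i+1}(L(p))$. The inclusion $\supseteq$ is immediate from $\KK^p_{i+1}(L(p)) = \LL^p_{i+1}(L(p))$ (preliminary observation) together with the fact that $\KK^p_{i+1}$ refines $\KK^p_i$. For the reverse inclusion I argue by contradiction: if $\KK^p_i(L(p)) \supsetneq \LL^p_{i+1}(L(p))$, then after all promotions there is a~chain of level-$i$ coverings linking $\LL^p_{i+1}(L(p))$ to another $\LL^p_{i+1}$-class, and in particular some level-$i$ non-tree edge $x'y'$ survives in the final graph and has its fundamental cycle visit $p$ through two edges $e_1 \in \LL^p_{i+1}(L(p))$ and $e_2$ from another $\LL^p_{i+1}$-class. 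Taking $x'$ to be the endpoint of $x'y'$ reached from $p$ via $e_1$, the path $L(p),e_1,\ldots,x'$ witnesses strong $i$-reachability of $x'$ from $L(p)$ through $p$: its first two edges lie in a~common $\LL^p_{i+1}$-class (cover level $\ge i+1$) and its remaining edges follow the level-$i$ cycle of $x'y'$. Because $N^i(x')$ is nonempty, invariant~$(\dagger\dagger)$ gives that $x'$ is $i$-marked, contradicting the tiebreaking rule~$(\dagger\dagger\dagger)$ at moment $G''$. The complementary block of the split is then forced to equal $\KK^p_i(R(p))$ by \Cref{lem:pomegranade}.

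The main obstacle is the reverse inclusion in the second case: it is the single step where we must convert the processing-order guarantee of $\FindNextEvent$ (together with the tiebreaking rule) into a~structural statement about the final partition $\KK^p_i$, and the argument hinges on orienting the reachability witness through $e_1$ so that its first two edges automatically lie in a~common $\LL^p_{i+1}$-class.
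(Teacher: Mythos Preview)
Your proposal is correct and follows essentially the same route as the paper's proof: establish $\KK^p_{i+1}(L(p)) = \LL^p_{i+1}(L(p))$, invoke the tiebreaking rule $(\dagger\dagger\dagger)$ to show no level-$i$ edge leaves this class, and finish with \Cref{lem:pomegranade}.

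One remark: your ``no later promotion can alter $\LL^p_{i+1}(L(p))$'' argument is an unnecessary detour. The paper obtains the full equality $\KK^p_{i+1} = \LL^p_{i+1}$ in one line from the first part of \Cref{lem:pomegranade} (lowering $uv$ from level $i$ to $i-1$ cannot affect level-$(i{+}1)$ classes). Your phrasing ``survives in the final graph'' likewise suggests you are reading $\KK^p$ as the state after the whole $\UncoverPath$ call, whereas the paper means the true equivalence classes in $G''$; this does not affect correctness, but it obscures why the equality is immediate.
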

\begin{proof}
	Based on the first part of \Cref{lem:pomegranade} we have that $\KK_{i+1}^p(L(p)) = \LL_{i+1}^p(L(p))$ and $\KK_{i+1}^p(R(p)) = \LL_{i+1}^p(R(p))$.
	
	If $\CoverLevel{L(p), R(p)} \neq i$, then $\CoverLevel{L(p), R(p)} \ge i+1$ and lowering level of $uv$ from $i$ to $i-1$ will not affect $\LL^p_i$, hence we assume that $\CoverLevel{L(p), R(p)} = i$, what means that $\KK_{i+1}^p(L(p)) = \LL_{i+1}^p(L(p)) \neq  \LL_{i+1}^p(R(p)) = \KK_{i+1}^p(R(p))$ and in particular $\LL_{i+1}^p(L(p)) \subsetneq \LL_{i+1}^p(L(p)) \cup \LL_{i+1}^p(R(p)) \subseteq \LL_{i}^p(L(p)) = \LL_{i}^p(R(p))$.
	
	Let $F_L$ and $F_R$ denote subtrees of $F$ rooted at $p$, where $F_L$ consists of $p$ and vertices $x$ such that $(p, s^x(p)) \in \LL_{i+1}^p(L(p))$ and $F_R$ consists of $p$ and vertices $y$ such that $(p, s^y(p)) \in \LL_{i}^p(L(p)) \setminus \LL_{i+1}^p(L(p))$ (note that $\LL_{i}^p(L(p)) \setminus \LL_{i+1}^p(L(p))$ is non-empty as it contains $R(p)$).
	We defined $F_L$ and $F_R$ exactly so that the vertices of $V(F_L) \setminus \{p\}$ that are $i$-reachable from $L(p)$ through $p$
	are actually strongly $i$-reachable from $L(p)$ through $p$, but vertices of $V(F_R) \setminus \{p\}$ are not strongly $i$-reachable from $L(p)$ through $p$.
	Thanks to the tiebreaking rule $(\dagger\dagger\dagger)$ we know that at that moment there are no $i$-marked vertices in $V(F_L) \setminus \{p\}$ that are $i$-reachable from $L(p)$ through $p$, hence there are also no level-$i$ edges connecting $V(F_L) \setminus \{p\}$ and $V(F_R) \setminus \{p\}$, which means that there is no level-$i$ edge covering a~pair of edges $(e_1, e_2)$ such that $e_1 \in \LL_{i+1}^p(L(p)) = \KK_{i+1}^p(L(p))$ and $e_2 \notin \LL_{i+1}^p(L(p)) = \KK_{i+1}^p(L(p))$, so $\KK_i^p(L(p)) = \KK_{i+1}^p(L(p)) = \LL_{i+1}^p(L(p))$.

As $\KK_i^p(L(p)) = \LL_{i+1}^p(L(p)) \subsetneq \LL_{i}^p(L(p))$, thanks to \cref{lem:pomegranade} we infer that $\LL_i^p(L(p)) = \KK_i^p(L(p)) \cup \KK_i^{p}(R(p))$ and since $\KK_i^p(L(p)) \neq \LL_i^p(L(p))$, we have to be in the second case of that lemma, where $\LL_i^p(L(p)) = \KK_i^p(L(p)) \sqcup \KK_i^{p}(R(p))$, which means that $\KK_i^p(R(p)) = \LL_i^p(L(p)) \setminus \KK_i^p(L(p)) = \LL_i^p(L(p)) \setminus \LL_{i+1}^p(L(p))$. Hence, applying $\LocalUncover{L(p), R(p), i}$ will correctly update $\LL_i^p$.
\end{proof}
\paragraph{$\UncoverPath{}$ correctness analysis} 
Let us sum up the $\UncoverPath{}$ procedure. We proceed in two symmetric phases --- first we go from $u$ to $v$ and then we go from $v$ to $u$. In each phase we retrieve non-tree level-$i$ edges that are $i$-reachable from $u \ldots v$. We do so in an order from left to right of their left projections with a carefully chosen tiebreaking rule. For each such edge we first call $\UniformUncover{}$ and $\LocalUncover{}$ with appropriate arguments. Then, if it is legal to push the considered non-tree edge to level $i+1$, we do so. If it is not, we stop this phase. The full pseudocode of this function is presented as \cref{alg:uncover-path} in \Cref{sec:graph-impl}.
\msinline{\footnotesize Do we ever precisely state the tiebreaking rule outside of the pseudocode? I think this section mentions a~few times that there's some rule, but I cannot actually find it (I may be blind, tho)}
\wninline{Yes, it was stated explicitly, visibly exposed and even given the mark of three daggers. But I admit that I've just changed a sentence stating that we will explain the tiebreaking rule "later" when it was already defined earlier :p}

\begin{lemma} \label{lem:uncover-correctness}
	The procedure $\UncoverPath{u, v, i}$ correctly resolves all cover levels affected by lowering the level of $uv$ edge from $i$ to $i-1$.
\end{lemma}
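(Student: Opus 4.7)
The plan is to verify, step by step, that every change in the equivalence classes caused by dropping the level of $uv$ from $i$ to $i-1$ is performed by $\UncoverPath{u, v, i}$. First I would invoke Lemma~\ref{lem:pomegranade}: the only classes that can change are $\LL^p_i$ for internal vertices $p$ of $u \ldots v$, and for each such $p$ either $\LL^p_i$ stays as is or $\LL^p_i(L(p))$ splits into $\KK^p_i(L(p))$ and $\KK^p_i(R(p))$. Thus it suffices to show that every internal vertex of $u \ldots v$ is either correctly processed by the procedure or needs no update at all.

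Next I would analyze each of the two phases individually. Within a phase, events are browsed in the order dictated by $\FindNextEvent{}$, which produces a sequence of \emph{important} projections and partitions the remaining internal vertices of $u \ldots v$ into \emph{skipped-through} intervals delimited by these projections (or by $u$ and $v$). For each important vertex $p$ the call $\LocalUncover{L(p), R(p), i}$ is issued exactly at the moment required by Lemma~\ref{lem:important-general}, namely immediately after all events strongly $i$-reachable from $L(p)$ through $p$ have been handled, as enforced by tiebreaking rule $(\dagger\dagger\dagger)$. For each skipped-through vertex $p$ the procedure calls $\UniformUncover{\cdot, \cdot, i}$ on a subpath containing $L(p)R(p)$ at a moment at which, by Lemma~\ref{lem:skip-clean}, the pair $(L(p), R(p))$ is uniform and $\LL^p_i$ is refined correctly. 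Applied to both phases, this accounts for every internal vertex of $u \ldots v$ whose enclosing interval is closed off by some phase.

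The only remaining case, and what I expect to be the main obstacle, concerns internal vertices $p$ left in an unclosed interval by both phases, which can happen only when both phases terminate prematurely because they fail to promote an edge. Here I would use invariant~$(\dagger)$: the biconnected component $B$ of $G_i$ containing $uv$ satisfies $|V(B)| \le \ceil{n/2^i}$, and after the level drop $B$ breaks into a linearly ordered sequence $B_1, B_2, \ldots, B_k$ of biconnected components of $G_i' = G_i \setminus \{uv\}$, consecutive ones sharing a single articulation vertex. A short counting argument based on $\sum_j |V(B_j)| = |V(B)| + (k - 1) \le \ceil{n/2^i} + (k-1) \le 2\ceil{n/2^{i+1}} + k - 1$ shows that at most one of the $B_j$'s --- call it $B^*$ --- can have more than $\ceil{n/2^{i+1}}$ vertices; every other resulting component is \emph{small}. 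An edge lying in a small component can always be legally promoted, because its biconnected component in $G_{i+1}' \cup \{e\} \subseteq G_i'$ is contained in the small component and thus also bounded by $\ceil{n/2^{i+1}}$. Hence the halting edge of each phase belongs to $B^*$, so both halting projections --- and therefore every unprocessed internal vertex strictly between them --- lie on the contiguous subpath of $u \ldots v$ induced by $B^*$. For each such $p$ both $L(p)$ and $R(p)$ lie in $B^*$, so $(L(p), R(p))$ remains transitively covered at level $i$ in $G'$, and by Lemma~\ref{lem:pomegranade} $\LL^p_i$ indeed needs no change. The delicate points are verifying that ``legal to promote'' coincides with ``lies in a small resulting component of $G_i'$'' and that each halting edge's projection actually lies in the same unique big component $B^*$; once these geometric facts are pinned down, the correctness of $\UncoverPath{u, v, i}$ follows by combining them with Lemmas~\ref{lem:pomegranade}, \ref{lem:skip-clean} and \ref{lem:important-general}.
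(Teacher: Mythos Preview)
Your plan is essentially the same as the paper's, and the high-level structure is right: invoke Lemmas~\ref{lem:pomegranade}, \ref{lem:skip-clean}, \ref{lem:important-general} for the vertices each phase actually touches, show there is at most one large resulting component $B^*$, argue the halting edge of each phase lies in $B^*$, and conclude that any vertex neither phase touches sits in the interior of $B^*$ and needs no update.

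There is, however, one genuine gap that your list of ``delicate points'' does not cover. You assume that every important vertex $p$ reached by a phase is \emph{correctly} processed, i.e.\ that $\LocalUncover{L(p),R(p),i}$ is issued after all events strongly $i$-reachable from $L(p)$ through $p$ have been handled. This is fine for important vertices strictly left of the halting projection, but it is \emph{not automatic} for the halting projection $p_1$ itself. If $p_1$ happens to be the split point between $B_{j^*-1}$ and $B^*$, you must rule out that the phase halts on an event from $B^*$ \emph{before} all events from $B_{j^*-1}$ with the same projection $p_1$ have been consumed; otherwise the split at $p_1$ is never performed, yet $p_1$ is not an interior vertex of $B^*$ either. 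Concretely, you need the fact that the halting event---being in $B^*$---cannot be strongly $i$-reachable from $L(p_1)$, so the tiebreaking rule $(\dagger\dagger\dagger)$ forces every event from $B_{j^*-1}$ with projection $p_1$ to come first (and those are promotable since $B_{j^*-1}$ is small). The paper isolates precisely this ordering fact as its own claim (Claim~\ref{cl:retrieving-order}: retrieved edges respect the left-to-right order of resulting biconnected components) and gives it a careful proof; this is the most technical step in the whole argument and it is the one your outline skips. Once you add it, the rest of your plan goes through.
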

\begin{proof}
If any of the two phases is completed, then all the cover levels are clearly correctly updated thanks to our previous analysis of \Cref{lem:skip-clean,lem:important-general}, so from that point on, we assume that none of them was completed (actually it cannot be the case that exactly one of them was completed, what will become clear later on).

Let $B_1, B_2, \ldots, B_c$ be the biconnected components of $G_i$ that contain at least one edge of $u \ldots v$, where we assume that the level of $uv$ has already been lowered from $i$ to $i-1$, so that $uv \notin G_i$. The intersection of each of them with $u \ldots v$ is a subpath of $u \ldots v$ and these subpaths form an~edge partitioning of $u \ldots v$, hence they can be naturally linearly ordered from left to right. Without loss of generality assume that this ordering is $B_1, B_2, \ldots, B_c$ (where $B_1$ is the only one that contains~$u$). $B_1, B_2, \ldots, B_c$ cover whole $u \ldots v$, so in particular each pair $(B_j, B_{j+1})$ has a unique common vertex. Before lowering the level of $uv$ from $i$ to $i-1$ all these biconnected components formed one biconnected component $B$ in $G_i$ and splitting it to $B_1, B_2, \ldots, B_c$ is the only change in biconnected components such an~operation causes and all we need to do is to resolve cover levels around common vertices of biconnected components that are consecutive in the order. We also have $\ceil{\frac{n}{2^i}} \ge |V(B)| = |V(B_1) \cup V(B_2) \cup \ldots \cup V(B_c)|$ thanks to the invariant $(\dagger)$.

Components that have at most $\ceil{\frac{n}{2^{i+1}}}$ vertices will be called \emph{small} and other ones will be called \emph{large}. We claim that there is at most one large component among $B_1, \ldots, B_c$. Assume otherwise, that is, there exist $1 \le j_1 < j_2 \le c$ such that $|V(B_{j_1})|, |V(B_{j_2})| > \ceil{\frac{n}{2^{i+1}}}$. Note that $|V(B_{j_1}) \cap V(B_{j_2})| \le 1$ (and it equals~$1$ if and only if $j_1+1=j_2$), hence $|V(B_{j_1}) \cup V(B_{j_2})| \ge |V(B_{j_1})| + |V(B_{j_2})| - 1$. We get that $\ceil{\frac{n}{2^i}} \ge |V(B)| \ge |V(B_{j_1}) \cup V(B_{j_2})| \ge |V(B_{j_1})| + |V(B_{j_2})| - 1 \ge 2(\ceil{\frac{n}{2^{i+1}}} + 1) - 1 > \ceil{\frac{n}{2^i}}$, which is a contradiction. 

All non-tree edges that we will retrieve during the $\UncoverPath{u, v, i}$ call will be contained within one of the $B_1, \ldots, B_c$. If such an edge belongs to a small component, then it is always legal to promote it from level $i$ to level $i+1$. Hence, if all components are small, then none of the two phases will be stopped and we correctly resolve all the cover levels, so we assume that there exists a large component --- let us call it $B_j$. Then, we will not be able to promote all level $i$ edges within it to level $i+1$, hence both phases of $\UncoverPath{u, v, i}$ will be stopped at some point of retrieving level $i$ edges from it.


We need the auxiliary claim:
\begin{claim} \label{cl:retrieving-order}
	Assume that $e_1$ and $e_2$ are two edges retrieved during first phase of $\UncoverPath{u, v, i}$ such that $e_1 \in E(B_x)$, $e_2 \in E(B_y)$ and $e_1$ was retrieved before $e_2$. Then $x \le y$.
\end{claim}
\begin{proof}
	We know that the projection of $e_1$ onto $u \ldots v$ is contained within $V(B_x) \cap u \ldots v$ and the projection of $e_2$ onto $u \ldots v$ is contained within $V(B_y) \cap u \ldots b$. However, sets $V(B_1) \cap u \ldots v, \ldots, V(B_c) \cap u \ldots v$ are ordered such that if $a + 1 < b$ then all vertices of $V(B_a) \cap u \ldots v$ lie strictly on the left of all vertices of $V(B_b) \cap u \ldots v$ and if $a+1=b$, then it is also true with the exception that the rightmost vertex of $V(B_a) \cap u \ldots v$ is the leftmost vertex of $V(B_b) \cap u \ldots v$. Since we retrieve events in the order of their projections from left to right, if the projection of $e_1$ is on strictly on the left of the projection of $e_2$, then our claim follows easily. It also easily follows if they project to the same vertex that is not the common vertex $V(B_a) \cap V(B_{a+1})$ of two consecutive biconnected components.
	
	The only remaining case to exclude is that $e_1 \in E(B_{a+1}), e_2 \in E(B_a)$ and both $e_1$ and $e_2$ project to the unique common vertex of $B_a$ and $B_{a+1}$ for some $1 \le a \le c$ --- denote that vertex by $p$. We have that $L(p) \in E(B_a)$ and $R(p) \in E(B_{a+1})$.
	
	Assume that the edge $e_1 = gh$ was retrieved as an effect of retrieving $g$ as the $i$-marked vertex that was strongly $i$-reachable from $L(p)$ through $p$. Then, the first two edges of its strong reachability witness were $L(p)$ and $(p, s^g(p))$. Hence, the cover level of that pair is at least $i+1$, so they have to belong to the same biconnected component of $G_i$ after lowering the level of $uv$ from $i$ to $i-1$, or in other words $(p, s^g(p)) \in E(B_a)$ as $L(p) \in E(B_a)$. However, all edges on the cycle induced by $e_1$ in $F$ belong to the same biconnected component of $G_i$ too, and $(p, s^g(p))$ is one of them, so~$e_1 \in E(B_{a+1}) \Rightarrow (p, s^g(p)) \in E(B_{a+1})$ --- a contradiction.
	
	Hence, at the moment of retrieving $e_1$, it was not retrieved as a strongly $i$-reachable edge from $L(p)$ through $p$. According to our tiebreaking rule $(\dagger\dagger\dagger)$ it means that $e_2$ was not retrieved as a~strongly $i$-reachable edge from $L(p)$ through $p$. Moreover, as $e_1$ was not retrieved as a strongly $i$-reachable edge from $L(p)$ through $p$, we know that the $\LocalUncover{L(p), R(p), i}$ was issued right after retrieving $e_1$ at the latest (note that as $L(p)$ and $R(p)$ belong to different biconnected components of $G_i$, $\LocalUncover{L(p), R(p), i}$ must have been issued), which is before retrieving~$e_2$. Thanks to the definition of $\LocalUncover{}$, after the call $\LocalUncover{L(p), R(p), i}$ we have that $\LL_i^p(L(p)) = \LL_{i+1}^p(L(p))$, which means that all $i$-reachable from $L(p)$ through $p$ vertices, except for $p$, are actually strongly $i$-reachable from $L(p)$ through $p$. Hence, the property that there are no $i$-marked and strongly $i$-reachable from $L(p)$ through $p$ vertices, which is ensured by the tiebreaking rule $(\dagger\dagger\dagger)$, implies that there are no more $i$-marked and $i$-reachable from $L(p)$ through $p$ vertices, except for $p$.
	
	However, if $e_2=gh$ and $e_2 \in E(B_a)$, then we have that both $g$ and $h$ project to neither something on the left of $p$ (because otherwise this edge would have been retrieved earlier) nor something on the right of $p$ (because otherwise we would have $R(p) \in E(B_a)$), so they both project to $p$, hence they are $i$-reachable from $L(p)$ through $p$. They are both $i$-marked and least one of them is different than $p$ that, what contradicts the fact that there are no more $i$-marked and $i$-reachable from $L(p)$ through $p$ vertices.
\end{proof}

Moreover, thanks to understanding from the \Cref{cl:retrieving-order}, we note that the following claim holds as well:
\begin{claim}
	Assume that $e \in E(B_x)$ was retrieved during first phase. Then, after processing hypothetical $\UniformUncover{}$ and $\LocalUncover{}$ calls that might have been issued as an effect of that, all cover levels around common vertices of $V(B_1) \cap V(B_2), \ldots, V(B_{x-1}) \cap V(B_x)$ have been correctly resolved, resulting in correctly splitting $B_1, B_2, \ldots, B_x$ from each other.
\end{claim}

As an effect, we conclude that the first phase correctly resolves all cover levels around common vertices of $V(B_1) \cap V(B_2), \ldots, V(B_{j-1}) \cap V(B_j)$ and it will be stopped at some point of processing edges from $B_j$. Analogous claims hold for the second phase, which will correctly resolve cover levels around common vertices of $V(B_c) \cap V(B_{c-1}), \ldots, V(B_{j+1}) \cap V(B_j)$. Hence, both phases together will correctly resolve cover levels around common vertices of $V(B_1) \cap V(B_2), \ldots, V(B_{c-1}) \cap V(B_c)$, which proves the correctness of $\UncoverPath{}$ procedure.

\end{proof}

\subsection{Summary}
Having proven the correctness of $\UncoverPath{}$, in order to prove \cref{lem:first-reduction}, we proceed to analyze the number of calls to the tree data structure and the time complexity overhead we need for them. Per single $\Insert{}$ or $\Delete{}$ call, there is only a constant number of calls to $\Link{}$, $\Cut{}$ and $\Expose{}$. For each edge retrieved by $\textsc{FindNextEvent}$ in both $\Swap{}$ and $\UncoverPath{}$, we call $\UniformUncover{}$, $\LocalUncover{}$, $\FindSize{}$, $\Mark{}$, $\Unmark{}$, $\Cover{}$, $\FindFirstReach{}$ and $\FindStrongReach{}$ a constant number of times. For all but at most two such events for $\Swap{}$ and two for $\UncoverPath{}$, we will call $\PromoteEdge{}$ (which in turn calls $\Cover{}$ once). As the total number of calls of $\PromoteEdge{}$ cannot exceed $m \cdot \lmax$, the total number of calls to the aforementioned functions that can be charged to $\PromoteEdge{}$ calls will be $\Oh(m \cdot \lmax)$ as well. The number of such calls stemming from processing events that cannot be charged to $\PromoteEdge{}$ calls will be at most twice the number of $\UncoverPath{}$ and $\Swap{}$ calls, but $\UncoverPath{}$ is called at most $\lmax$ times per a single $\Delete{}$ update and $\Swap{}$ is called at most once per a single $\Delete{}$ update, hence their number will be $\Oh(m \cdot \lmax)$ as well. Additionally, each $\Swap{}$ calls $\FindSize{}$ and $\Cover{}$ at most $\lmax$ outside of its $\Call{FindReplacement}{}$ subroutine, but the total number of such calls is again bounded by $\Oh(m \cdot \lmax)$. In order to perform the necessary navigation and bookkeeping that comes with it, we need to call $s^{\cdot}(\cdot)$ function and update $N^{\cdot}(\cdot)$ sets. The number of such operations can be bounded as $\Oh(m \cdot \lmax)$ in the same way. As they take $\Oh(\log n)$ time each, the required time overhead is $\Oh(m \cdot \lmax \cdot \log n) = \Oh(m \cdot \log^2 n)$. 
 

%
%


\section{Tree structure} \label{sec:tree-structure}
In this section we will implement a~data structure for the dynamic tree cover data structure.
The end result shall be \Cref{lem:top-tree-reduction}, which we restate below for convenience:

\TopTreeReductionLemma*

Together with \Cref{lem:neighborhood-ds}, this also yields the following result:

\begin{lemma}
  \label{lem:tree-structure}
  Let $\lmax \in \BigO(\log n)$ and $\hat{T} \coloneqq T(\lmax) \cdot \log \lmax$.
  There exists a~restricted dynamic tree cover level data structure with $\lmax$ levels that processes each operation of the form:
  \begin{itemize}
    \item $\Link{}$, $\Cut{}$ and $\Select{}$ in amortized $\BigO(\log^2 n \cdot \log^2 \log n)$ time;
    \item $\Connected{}$ in worst-case $\BigO(\log n)$ time,
    \item $\Cover{}$, $\UniformUncover{}$, $\LocalUncover{}$, $\CoverLevel{}$, $\MinCoveredPair{}$, $\FindSize{}$, $\Mark{}$, $\FindFirstReach{}$ and $\FindStrongReach{}$ in amortized $\BigO(\log n \cdot \log^2 \log n)$ time.
  \end{itemize}
\end{lemma}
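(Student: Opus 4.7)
The proof is essentially a calculation: we compose the reduction from \Cref{lem:top-tree-reduction} with the neighborhood data structure provided by \Cref{lem:neighborhood-ds}, and substitute the numerical values implied by the setting $\lmax \in \BigO(\log n)$.

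First, I would instantiate the parameters. Using the approximate counter vectors from \Cref{sec:approx-counting}, we get $T(\lmax) = \BigO(\log \lmax) = \BigO(\log \log n)$ in the word RAM model, and therefore $\hat{T} = T(\lmax) \cdot \log \lmax = \BigO(\log^2 \log n)$. For the neighborhood data structures $N_v$ used inside the tree cover level structure, the weights assigned to each edge $vw \in E(F)$ incident to $v$ reflect the size of the subtree $F[\vec{vw}]$; these sum to less than $n$, so we may take $W = w(X) \in \BigO(n)$ in \Cref{lem:neighborhood-ds}. This gives $\hat{L} = T(\lmax) \cdot (\log \lmax + \log \log W) = \BigO(\log \log n \cdot \log \log n) = \BigO(\log^2 \log n)$.

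Next, I would sum the two contributions to the running time identified in \Cref{lem:top-tree-reduction}. Each tree cover level operation has a cost of the form (direct work in the top tree) $+$ (normalized cost of the induced neighborhood queries, amortized via \Cref{lem:neighborhood-ds}). By \Cref{lem:neighborhood-ds}, a batch of neighborhood calls with total normalized cost $C$ runs in amortized time $\BigO(C \cdot \hat{L}) = \BigO(C \cdot \log^2 \log n)$. Plugging in:
\begin{itemize}
  \item For $\Link{}$, $\Cut{}$, $\Expose{}$: direct work $\BigO(\log n \cdot \hat{T}) = \BigO(\log n \cdot \log^2 \log n)$, plus neighborhood work $\BigO(\log^2 n \cdot \hat{L}) = \BigO(\log^2 n \cdot \log^2 \log n)$. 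The sum is $\BigO(\log^2 n \cdot \log^2 \log n)$ amortized.
  \item For $\Connected{}$: worst-case $\BigO(\log n)$ directly from \Cref{lem:top-tree-reduction}.
  \item For the remaining operations ($\Cover{}$, $\UniformUncover{}$, $\LocalUncover{}$, $\CoverLevel{}$, $\MinCoveredPair{}$, $\FindSize{}$, $\Mark{}$, $\Unmark{}$, $\FindFirstReach{}$, $\FindStrongReach{}$): direct work $\BigO(\log n \cdot \hat{T})$, plus neighborhood work $\BigO(\log n \cdot \hat{L})$, yielding $\BigO(\log n \cdot \log^2 \log n)$ amortized.
\end{itemize}

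There is no real obstacle: the two preceding lemmas were designed so that their parameters line up cleanly for $\lmax \in \BigO(\log n)$ and $W \in \BigO(n)$. The only thing that requires a sentence of justification is the bound $W \in \BigO(n)$, which one should confirm by tracing through how the biasing weights in each $N_v$ are maintained --- the weight of $vw$ in $N_v$ never exceeds the size of the component of $F - vw$ containing $w$, so $\sum_w w(vw) \leq n$ at all times. Once this is in place, the lemma follows by substitution.
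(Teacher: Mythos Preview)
Your proposal is correct and matches the paper's approach exactly: the paper derives \Cref{lem:tree-structure} simply by composing \Cref{lem:top-tree-reduction} with \Cref{lem:neighborhood-ds} and substituting $\lmax \in \BigO(\log n)$, $W \in \BigO(n)$. One small imprecision worth tightening: for interface edges the weight in $N_v$ is not the subtree size but $1 + \sum_{\vec{vx}\in\cluster_v}\Weight_\Tc(\vec{vx})$, so the clean justification for $W\in\BigO(n)$ is the paper's bound $\Weight_\Tc(v)\leq 2 + 3\sum_{\vec{vx}\in\cluster_v}|E(T[\vec{vx}])|$ rather than your claimed inequality $\sum_w w(vw)\leq n$.
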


The structure of the proof is as follows.
We first formally introduce the top trees data structure of Alstrup, Holm, de Lichtenberg, and Thorup~\cite{TopTreesOriginal} and present various structural properties of these trees that will be used in our tree data structure (\Cref{ssec:tree-structure-toptrees}).
Then, we present the anatomy of the data structure and design a~framework along which updates and queries will be implemented (\Cref{ssec:tree-structure-overview}).
We follow by showing how to maintain cover level information in our data structure in order to facilitate queries of the form $\CoverLevel{}$, $\MinCoveredPair{}$, $\Cover{}$, $\UniformUncover{}$ and $\LocalUncover{}$ (\Cref{ssec:tree-structure-coverlevels}).
In the following parts, we will implement the queries counting $i$-reachable vertices ($\FindSize{}$, \Cref{ssec:tree-structure-counting}) and finding $i$-marked $i$-reachable vertices ($\FindFirstReach{}$, $\FindStrongReach{}$, \Cref{ssec:tree-structure-findfirst,ssec:implicit-refine}).
We conclude the proof in \Cref{ssec:tree-structure-final}.

\subsection{Top trees}
\label{ssec:tree-structure-toptrees}
A~\emph{top tree} is a~data structure representing and maintaining dynamic information about dynamic trees.\ms{cite more?}
We proceed to introduce this data structure formally; the treatment will be slightly non-standard and it will essentially follow the exposition by Holm and Rotenberg~\cite{DynamicPlanarEmbeddings}.

Let $T$ be a~tree with a~designated set $\bnd T$ of one or two vertices, named \emph{external boundary vertices} of $T$.
If $S$ is a~subtree of $T$, we define the \emph{boundary} of $S$, denoted $\bnd S$, as the set comprising those vertices of $S$ that either are external boundary vertices of $T$, or are incident to some edge outside of $S$.
The boundary of $S$ is always nonempty.
If $|\bnd S| \leq 2$, we define that $S$ is a~\emph{cluster}: a~\emph{point cluster} if $|\bnd S| = 1$, and a~\emph{path cluster} if $|\bnd S| = 2$.
A~\emph{top tree} $\mathcal{T}$ is a~rooted tree expressing how the original tree $T$ is decomposed recursively into single edges of $T$. Each node of $\mathcal{T}$ represents a~cluster of $T$ (without worry about confusion, we will identify nodes of a~top tree with the clusters represented by these nodes).
The root of $\mathcal{T}$ represents the original tree $T$, every leaf node is a~single-edge cluster of $T$, and for a~nonleaf cluster $C$, children clusters $C_1, \ldots, C_k$ form an~edge partitioning of $C$.

For our convenience and in order to simplify various case studies in this section, we will use a~standard trick where we add to each vertex $v$ of $T$ a~dummy edge with one endpoint at $v$ and the other endpoint at a~fresh leaf of $T$.

Given a~cluster $C$, the \emph{cluster path} $\pi(C)$ is the shortest path in $C$ connecting all boundary vertices of $C$; so for path clusters $\pi(C)$ contains at least one edge, and for point clusters $\pi(C)$ contains just one vertex.
For a~path cluster $C$ and $v \in \bnd C$, let also $\firstedge_{C,v}$ be the unique edge on $\pi(C)$ incident to $v$. We assume that $\firstedge_{C,v}$ is oriented away from $v$.

In this work, we will require that path clusters are \emph{slim}: if $\bnd S = \{v, w\}$, then both $v$ and $w$ are leaves of $S$.
This creates some issues when the underlying tree $T$ has two external boundary vertices; in this case the entire tree $T$ cannot be described by a~slim path clusters.
We work around this issue by proclaiming that in the case of $\bnd T = \{p, q\}$, the root of the top tree of $T$ consists of \emph{three} clusters: the slim path cluster $R_{pq}$ with boundary $\{p, q\}$, and two point clusters $R_p, R_q$ with boundaries $\{p\}$ and $\{q\}$, respectively.

The structure of $\mathcal{T}$ can be altered through the updates of the form $\Link{v, w}$ (add an~edge $(v, w)$), $\Cut{v, w}$ (remove the edge $(v, w)$), and $\Expose{v}$ or $\Expose{v, w}$ (set $\bnd T$ to $\{v\}$ or $\{v, w\}$, respectively).
Each update is applied to the forest of top trees representing $T$ through a~sequence of local updates: $\Create{}$ (create a~leaf top tree node), $\Destroy{}$ (destroy the leaf top tree node), $\Merge{}$ (create a~root nonleaf top tree node representing a~cluster of $T$ by merging given root top tree nodes into a~single cluster), and $\Split{}$ (delete a~root nonleaf top tree node, replacing it with its children).
It turns out that $\mathcal{T}$ can be efficiently maintained in this regime:

\begin{theorem}[\cite{DynamicPlanarEmbeddings}]
    \label{thm:top-trees}
    We can represent a~fully dynamic $n$-vertex forest under $\Link{}$, $\Cut{}$ and $\Expose{}$ using a~forest of top trees of height $\BigO(\log n)$.
    Each update is applied to the forest in worst-case time $\BigO(\log n)$ and is carried through a~sequence of $\BigO(1)$ $\Create{}$ and $\Destroy{}$ modifications and $\BigO(\log n)$ $\Merge{}$ and $\Split{}$ modifications.
    Moreover, the required sequence of modifications can be determined in worst-case time $\BigO(\log n)$.
\end{theorem}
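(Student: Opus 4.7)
My plan is to realize the forest of top trees via a~standard rake-and-compress construction, in which a~top tree is built in rounds: each round identifies a~maximal set of pairwise independent local operations --- raking a~single-edge leaf cluster into its sibling, or compressing a~length-two path through a~degree-two vertex --- and merges the participating clusters into a~single, larger one via $\Merge{}$. Since each round eliminates a~constant fraction of the remaining boundary vertices, the resulting top tree has height $\BigO(\log n)$. The slimness convention on path clusters, together with the three-cluster representation of the root when $|\bnd T| = 2$, can be enforced during the final round of the construction without affecting the height bound.

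To support updates, I would decorate each top tree node with enough local bookkeeping (size of the represented cluster, the ranks used by the rake-and-compress scheme) to detect when a~local rebalance is required, following the invariants used in~\cite{DynamicPlanarEmbeddings}. A~$\Link{v, w}$ update would be carried out by first locating the top trees containing $v$ and $w$, walking from each of their roots downwards along the unique root-to-leaf path that culminates in the clusters with $v$ (resp.~$w$) on the boundary, issuing $\Split{}$ on each visited node so as to ``peel'' the two top trees open along these two spines. Then I would insert the new edge with a~single $\Create{}$ and rebuild the combined top tree bottom-up by reapplying rake-and-compress along the exposed spines, issuing $\BigO(\log n)$ $\Merge{}$ calls and locally re-balancing whenever an~invariant is violated. $\Cut{v, w}$ is dual: split along the spine ending at the leaf cluster $(v, w)$, remove it with $\Destroy{}$, and rebuild the two resulting top trees. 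An~$\Expose{}$ update only needs to reshape clusters along the spines from the old and new external boundary vertices to the root, again a~logarithmic number of $\Split{}$ and $\Merge{}$ calls.

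Counting $\Create{}$ and $\Destroy{}$ calls is immediate: only the freshly inserted or removed edge triggers such a~call, so at most $\BigO(1)$ of them occur per update. For $\Split{}$ and $\Merge{}$, the bound follows from the fact that only clusters on the $\BigO(\log n)$-length spines are touched, multiplied by the $\BigO(1)$ rebalancing operations charged per spine node. The schedule of modifications can itself be computed in $\BigO(\log n)$ time by walking the spines top-down and applying the rank-based criteria locally, without inspecting the rest of the tree.

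The main technical obstacle is turning these bounds from amortized into worst-case, which is exactly the subtlety in the statement we are trying to prove. A~naive rake-and-compress scheme balances only in amortized time, since a~cascade of imbalances can propagate up the tree in the worst case. To circumvent this I would invoke the worst-case balancing framework of Holm and Rotenberg~\cite{DynamicPlanarEmbeddings}, which maintains the layered rake-and-compress decomposition in height-balanced auxiliary structures and spreads any global re-layering over $\BigO(\log n)$ increments per update, ensuring that each individual $\Link{}$, $\Cut{}$, or $\Expose{}$ call pays only $\BigO(\log n)$ worst-case time while issuing the claimed number of local modifications.
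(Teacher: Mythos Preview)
The paper does not prove this theorem at all: it is stated with the citation \cite{DynamicPlanarEmbeddings} and used as a black box from prior work, with no accompanying proof. So there is nothing to compare your proposal against.

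That said, your sketch is a reasonable high-level summary of how top trees are built and maintained, and you correctly identify that the worst-case (as opposed to amortized) guarantee is the nontrivial part and defer to \cite{DynamicPlanarEmbeddings} for it. Since the theorem is explicitly attributed to that reference, a one-line pointer such as ``See \cite{DynamicPlanarEmbeddings}'' would be the appropriate treatment here; a full re-derivation is neither expected nor present in the paper.
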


\paragraph{Additional information in top trees}
The top tree data structure can be amended to maintain additional information for each node of the top tree.
First, for every cluster $C$ we can store \emph{summary} information about $C$ (some aggregate information about $C$ or $\pi(C)$, e.g., the size of the subtree, the maximum value stored at a~vertex of the cluster path, etc.).
We usually require that such summary about $C$ should be effectively computable during $\Merge{}$ from the summaries stored in the children of $C$ in the top tree.
Similarly, along $C$ we can keep an~information on a~\emph{lazy update}: an~update that has been applied to the cluster $C$ (e.g., recoloring of all edges in the subtree, an~increase in the value stored at each vertex of the cluster path, etc.), but not yet propagated to the descendants of $C$ in the top tree. Lazy updates are propagated from a~node of the top tree to its children on each $\Split{}$; this requires that the lazy updates should be composable (i.e., any sequence of lazy updates can be squashed to a~single lazy update) and that summaries of clusters can be maintained efficiently under lazy updates.

\paragraph{Structural properties of top trees}
We now list some properties of structural decompositions provided by top trees that we will implicitly use later on.

%

\begin{figure}[h]
  \centering
  \begin{tikzpicture}[x=0.5cm,y=0.5cm,scale=0.65]
	\begin{scope}[
		every path/.style={
		},
		every node/.style={
			font=\tiny,
			text=black,
			inner sep=1pt,
		},
		every label/.style={
			label distance=2mm,
		},
		vertex set/.style={
			dashed,
		},
		vertex/.style={
			draw,
			circle,
			fill=white,
			minimum size=2mm,
			inner sep=0pt,
			outer sep=0pt,
		},
		boundary vertex/.style={vertex,fill=black},
		edge/.style={blue,thick},
		undirected edge/.style={edge},
		directed edge/.style={edge,->,>=stealth'},
		]
		
		\begin{scope}[shift={(30,0)}]
			\path[use as bounding box] (-7,4) rectangle (7,-2);
			\node at (-6,3) {\tiny \CasePath};
			
			\node[boundary vertex,label={above:\tiny $v$}] (a) at (-6,0) {};
			\node[vertex,label={above:\tiny $x$}] (c) at (0,0) {};
			\node[boundary vertex,label={above:\tiny $w$}] (b) at (6,0) {};
			
			\draw (a.north) to[bend left=35] (c.north) to[bend left=35] (b.north);
			\draw (b.south) to[bend left=35] (c.south) to[bend left=35] (a.south);
			\node at (-3,0) {\tiny $A$};
			\node at (3,0) {\tiny $B$};
			\draw
			(c.south)
			.. controls ($(c)+(-3,-4)$) and ($(c)+(3,-4)$)
			.. (c.south);
            \node at (0,-2) {\tiny $P$};

			\draw[color=blue,thick] (a.north) to[bend left=55] (b.north);
			\draw[color=blue,thick] (b.south) to[bend left=65] (a.south);
			\node[blue] at (3.5,3) {\tiny $C$};
		\end{scope}
		
		\begin{scope}[shift={(15,3)}]
			\path[use as bounding box] (-4,2) rectangle (5,-12);
			
			\node at (-3,0) {\tiny \CasePointToPathPoint};
			
			\draw
			(0,0) node[boundary vertex,label={right:\tiny $v$}] (a) {}
			to[bend left=35]
			(0,-6) node[vertex,label={left:\tiny $x$}] (c) {}
			to[bend left=35]
			(a);
			
			\draw
			(c)
			.. controls ($(c)+(-4,-6)$) and ($(c)+(4,-6)$)
			.. (c);
			
			\node at (0,-3) {\tiny $A$};
			\node at (0,-9) {\tiny $B$};

			\draw[blue]
			(a)
			.. controls ($(c)+(-10,-9)$) and ($(c)+(10,-9)$)
			.. (a);

			\node[blue] at (2.7,-10.7) {\tiny $C$};
		\end{scope}
		
		\begin{scope}[shift={(0,3)}]
			\path[use as bounding box] (-7,2) rectangle (4,-5);
			
			\node at (-6,0) {\tiny \CasePointToPointPoint};
			
			\draw
			(0,0) node[boundary vertex,label={above:\tiny $v$}] (c) {}
			.. controls ($(c)+(-7,-4)$) and ($(c)+(0,-8)$)
			.. (c);
			\draw
			(c) {}
			.. controls ($(c)+(0,-8)$) and ($(c)+(7,-4)$)
			.. (c);
			
			\node at (-2,-3) {\tiny $A$};
			\node at (2,-3) {\tiny $B$};

			\draw[blue]
			(c) {}
			.. controls ($(c)+(-15,-8)$) and ($(c)+(15,-8)$)
			.. (c);
			\node[blue] at (4.6,-5.7) {\tiny $C$};

		\end{scope}
				
	\end{scope}
\end{tikzpicture}
  \caption{All possible ways in which a~cluster $C$ can split into child clusters.}
  \label{fig:toptree-cases}
\end{figure}
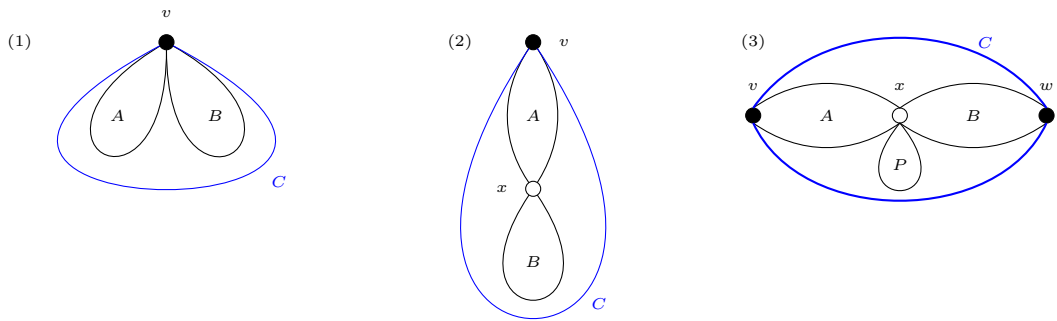

\subparagraph{Children of nonleaf clusters}
We define that a~nonleaf cluster $C$ can split into children clusters in three possible ways, listed below:
\begin{itemize}
    \item If $C$ is a~point cluster with $\bnd C = \{v\}$, then $C$ has precisely two children $A$, $B$.
    It can happen that $\bnd A = \bnd B = \{v\}$ (case \CasePointToPointPoint in \Cref{fig:toptree-cases}).
    Naturally, this case can only hold if $v$ is a nonleaf vertex of $C$.

    \item It can also happen that $\bnd C = \{v\}$ and precisely one of $A, B$ is a~(slim) path cluster (without loss of generality $A$); see case \CasePointToPathPoint.
    Then it must be the case that $\bnd A = \{v, x\}$ and $\bnd B = \{x\}$ for some vertex $x \in V(T)$.
    Since $A$ is a~slim path cluster, this case can only hold if $v$ is a leaf of $C$.
    It also follows that $v$ and $x$ are leaves of $A$, so all edges incident to $x$ in the underlying tree $T$ --- except from $\firstedge_{A,x}$ --- belong to $B$.

    \item Finally, if $C$ is a~(slim) path cluster with $\bnd C = \{v, w\}$, then $C$ has three children: two path clusters $A, B$ and a~point cluster $P$, where $\bnd A = \{v, x\}$, $\bnd B = \{x, w\}$ and $\bnd P = \{x\}$; see case \CasePath.
    Here, both $v$ and $w$ are leaves of $C$; both $v$ and $x$ are leaves of $A$; and both $x$ and $w$ are leaves of $B$.
    Hence all edges incident to $x$ in $T$ --- except from $\firstedge_{A,x}$ and $\firstedge_{B,x}$ --- belong to $P$.
\end{itemize}

In cases \CasePointToPathPoint and \CasePath, we call $x$ the \emph{midpoint} of $C$.

We will often implicitly rely on the following observation about point clusters $C$ (with $\bnd C = \{v\}$) splitting into two point clusters according to case \CasePointToPointPoint: suppose $e_1 = \vec{vv_1}, \ldots, e_k = \vec{vv_k}$ are the edges of $C$ incident to $v$ ($k \geq 2$).
Then, for every $i \in \{1, \ldots, k\}$, the subtree $T_i \coloneqq T[e_i] \cup \{e_i\}$ --- i.e., the subtree of $T$ containing the edge $e_i$ and all vertices closer to $v_i$ than $v$ --- is a~point cluster of $\Tc$ that splits into child clusters following case \CasePointToPathPoint.
Moreover, $T_i$ is a~strict descendant of $C$ in $\Tc$, and moreover, every cluster $C' \neq T_i$ on the vertical path between $C$ and $T_i$ in $\Tc$ is actually a~point cluster with $\bnd C' = \{v\}$ splitting into child clusters according to case \CasePointToPointPoint.


\subparagraph{Internal vertices of clusters}
Let $x \in V(T)$. We say that $x$ is an~internal vertex in a~cluster $C \in \mathcal{T}$ if $x \in C \setminus \bnd C$.
If $x \in \bnd T$, then $x$ is not internal to any cluster of $\Tc$.
Otherwise, the set of clusters of $\mathcal{T}$ containing $x$ as an~internal vertex forms a~vertical path in $\mathcal{T}$ containing the root cluster; in particular, there is the unique deepest cluster $I^\Tc_x$ --- call it the \emph{enclosing cluster} for $x$ with respect to $\Tc$ --- that contains $x$ internally.
We will drop $\Tc$ from the superscript if $\Tc$ is clear from context.
For $x \in \bnd T$, we make a~convention that $I^\Tc_x$ is the root of $\mathcal{T}$.
We can assume that the top tree data structure holds a~mapping from every $x$ to $I^\Tc_x$.
Note that for $x \notin \bnd T$, the enclosing cluster $I^\Tc_x$ splits into two or three clusters, each containing $x$ as a~boundary vertex, according to one of the cases \CasePointToPathPoint or \CasePath above, and in both cases, $x$ is the midpoint of $I^\Tc_x$.

\subparagraph{Cluster and interface edges}
Let $x \in V(T)$. We define that an~oriented edge $e = \vec{xy}$ is a~cluster edge with respect to $\Tc$ if $e$ belongs to some point cluster $C \in \Tc$ with $\bnd C = \{x\}$; otherwise, $e$ is an~interface edge.
Equivalently, $e$ is a~cluster edge with respect to $\Tc$ if and only if $\Tc$ contains a~point cluster $C = T[e]$.

For $x \in T$, let $\interface^\Tc_x$ denote the set of (oriented) interface edges whose tail is $x$, and $\cluster^\Tc_x$ be the set of oriented cluster edges with tail $x$.
The following characterization of $\interface^\Tc_x$ is immediate.
\begin{lemma}
  \label{lem:interface-firstedge}
  For $x \in T$, the set $\interface^\Tc_x$ is equal to:
  \begin{itemize}
    \item $\emptyset$ if $\bnd T = \{x\}$;
    \item $\{\firstedge_{T,x}\}$ if $\bnd T = \{x, x'\}$;
    \item $\{\firstedge_{A,x}\}$ if $x \notin \bnd T$, and $I^\Tc_x$ splits into clusters $A$, $B$ according to case \CasePointToPathPoint in \Cref{fig:toptree-cases};
    \item $\{\firstedge_{A,x}, \firstedge_{B,x}\}$ if $x \notin \bnd T$, and $I^\Tc_x$ splits into $A$, $B$, $P$ according to case \CasePath.
  \end{itemize}
\end{lemma}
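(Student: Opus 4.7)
The plan is to proceed by case analysis matching the four bullets of the lemma, based on two auxiliary observations. First, every edge of $T$ incident to $x$ must lie within $I^\Tc_x$ (with the convention that $I^\Tc_x$ is the root when $x \in \bnd T$): otherwise $x$ would have an incident edge outside $I^\Tc_x$, contradicting the fact that $x$ is internal to $I^\Tc_x$. Second, the structural fact recalled just above the lemma provides the cluster-edge direction: if $C \in \Tc$ is a point cluster with $\bnd C = \{x\}$, then for every edge $e$ at $x$ in $C$, the subtree $T[e] \cup \{e\}$ is itself a descendant point cluster in $\Tc$ with boundary $\{x\}$, hence $e$ is a cluster edge.

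Using these two observations, the cluster-edge side of the lemma is immediate. If $\bnd T = \{x\}$, the root of $\Tc$ is itself a point cluster with boundary $\{x\}$ containing every edge at $x$, so every edge at $x$ is a cluster edge and $\interface^\Tc_x = \emptyset$. If $\bnd T = \{x, x'\}$, the root splits into $R_{xx'}, R_x, R_{x'}$; the edges at $x$ partition between $R_{xx'}$ and $R_x$, with $\firstedge_{T,x}$ being the unique edge at $x$ in $R_{xx'}$ (since $x$ is a leaf of this slim path cluster), and the remaining edges at $x$ lying in $R_x$, a point cluster with boundary $\{x\}$, and hence cluster edges. For the two remaining cases, $I^\Tc_x$ splits with $x$ as the midpoint, so every edge at $x$ partitions between the children of $I^\Tc_x$: a slim path-cluster child contributes exactly one edge at $x$, namely $\firstedge_{A,x}$ (and additionally $\firstedge_{B,x}$ in case \CasePath), while the remaining edges at $x$ live in a point-cluster child with boundary $\{x\}$ (i.e., $B$ in case \CasePointToPathPoint, or $P$ in case \CasePath), and are cluster edges.

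The main obstacle, common to the last three cases, is showing that the candidate edges $\firstedge_{T,x}$, $\firstedge_{A,x}$, $\firstedge_{B,x}$ are genuinely interface edges. I will carry out the argument for $e \coloneqq \firstedge_{A,x}$; the other two are entirely analogous. The clusters of $\Tc$ containing $e$ form a vertical path in $\Tc$, running from the leaf cluster $\{e\}$ up through $A$ and then through $I^\Tc_x$ and its ancestors. Any hypothetical point cluster with boundary $\{x\}$ containing $e$ must lie on this vertical path. Above $A$, every cluster $C$ in the vertical path has $x$ internal (by definition of $I^\Tc_x$), so $x \notin \bnd C$. At $A$ and below, the argument is that $e$ remains on the cluster path of every subcluster of $A$ containing $e$: when a slim path cluster with boundary $\{v,x\}$ splits according to case \CasePath into $A', B', P'$ (with $B'$ having boundary $\{y', x\}$), the edge $e$ lies on $\pi(B')$ and equals $\firstedge_{B', x}$, so by induction on the descent $e$ is always on the cluster path of its enclosing subcluster of $A$. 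Such subclusters are therefore all path clusters with $x$ as one of their two boundary vertices, never point clusters with boundary $\{x\}$. Consequently no cluster in the vertical path of $e$ is a point cluster with boundary $\{x\}$, and $e$ is an interface edge.
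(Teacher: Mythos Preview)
Your proof is correct. The paper itself declares the lemma ``immediate'' and gives no argument, so there is no paper proof to compare against; your case analysis is precisely the natural way to unpack that claim. The two auxiliary observations you isolate --- that all edges at $x$ lie in $I^\Tc_x$, and that any edge at $x$ inside a point cluster with boundary $\{x\}$ is a cluster edge (this being the paper's observation just above the lemma, extended trivially to the single-edge case) --- cleanly handle the cluster-edge direction. Your inductive descent inside the slim path cluster $A$ to show that $\firstedge_{A,x}$ never lands in a point cluster with boundary $\{x\}$ is the right way to complete the interface-edge direction, and it goes through because non-leaf slim path clusters always split according to case~\CasePath with $x$ remaining a boundary vertex of the child containing the edge.
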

Again, we will write $\interface_x$ and $\cluster_x$ if $\Tc$ is known from context.

\subsection{Overview of the cover level data structure}
\label{ssec:tree-structure-overview}
We now overview how top trees (\Cref{thm:top-trees}) and neighborhood data structures (\Cref{lem:neighborhood-ds}) cooperate in the implementation of the cover level data structure.

Recall that the cover level data structure holds a~dynamic tree $T$.
For every node $v$ of $T$, we maintain an~instance $N_v$ of neighborhood data structure with all supported extensions, with $\lmax$ levels and the ground set equal to the set of edges  incident to $v$.
The instance $N_v$ naturally represents the sequence of equivalence relations $\mathcal{L}^v_0, \mathcal{L}^v_1, \ldots, \mathcal{L}^v_{\lmax}$.

Naturally, $T$ is represented by a~top tree data structure $\mathcal{T}$.
We preserve the following invariants for $N_v$:
\begin{itemize}
    \item \textbf{Selected items.} In $N_v$, the selected items are precisely the interface edges with tail $v$ in the top tree if $|\interface_v| = 2$. Otherwise, no items are selected.

    \item \textbf{Weights.} In $N_v$, the weight of an~item $\vec{vw}$, denoted $\Weight_{\Tc}(\vec{vw})$, depends on the type of the edge with respect to $\Tc$, and we define it as follows:
  
    \[
        \Weight_{\Tc}(\vec{vw}) = \begin{cases}
            |E(T[\vec{vw}])| & \text{if $\vec{vw}$ is a~cluster edge}, \\
            1 + \sum_{\vec{vx} \in \cluster_v} \Weight_{\Tc}(\vec{vx}) & \text{if $\vec{vw}$ is an~interface edge}.
        \end{cases}
    \]

    For convenience, we denote by $\Weight_{\Tc}(v) \coloneqq \sum_{\vec{vx} \in \delta_{\mathrm{out}}(v)}$ the total weight of all items in $N_v$.
    Since $|\interface_v| \leq 2$, we have $\Weight_{\mathcal{T}}(v) \leq 2 + 3 \cdot \sum_{\vec{vx} \in \cluster_v} |E(T[\vec{vw}])|$; in particular, $\Weight_{\Tc}(v) \in \BigO(n)$.
    By the same token, whenever $\vec{vw} \in \interface_v$, we have $\Weight_{\Tc}(\vec{vw}) > \frac13 \Weight_{\Tc}(v)$.
\end{itemize}

The sets of interface edges and the weights only change under the updates of the tree changing the structure of $\mathcal{T}$ --- $\Insert{}$, $\Cut{}$ and $\Expose{}$.
The types and weights of edges can be maintained by the top tree data structure within the same time complexity guarantees as in Theorem~\ref{thm:top-trees}.
In fact, by \Cref{lem:interface-firstedge} it is enough to recompute the type of the edges incident to a~vertex $v$ only when an~enclosing cluster for $v$ is created during \textsc{Merge} or \textsc{Create}.
Hence every \textsc{Merge} and every \textsc{Create} changes the type and the subtree size for only a~constant number of edges.
We refer to the pseudocode in \Cref{ssec:tree-impl-bookkeeping} for details.

%
%

\paragraph{Cluster costs}
Let $\Tc$, $\Uc$ be two top trees defined over the same underlying tree $T$.
Whenever $C$ is a~cluster of $\Uc$, we define the \emph{cost} of $C$ with respect to $\Tc$, denoted $\ncost_\Tc(C)$, as follows:
\begin{itemize}
  \item $\ncost_\Tc(C) = 1$ if $C$ is a~leaf of $\Tc$, or $C$ splits according to case \CasePointToPointPoint in \Cref{fig:toptree-cases};
  \item $\ncost_\Tc(C) = 1 + \log \frac{\Weight_\Tc(v)}{\Weight_\Tc(\firstedge_{A,v})} + \log \frac{\Weight_\Tc(x)}{\Weight_\Tc(\firstedge_{A,x})} $ if $C$ splits according to case \CasePointToPathPoint;
  \item $\ncost_\Tc(C) = 1 + \log \frac{\Weight_\Tc(x)}{\Weight_\Tc(\firstedge_{A,x})} + \log \frac{\Weight_\Tc(x)}{\Weight_\Tc(\firstedge_{B,x})} $ if $C$ splits according to case \CasePath.
\end{itemize}

The intuition behind this cost function is as follows: suppose that the weights of items in the neighborhood data structures are defined with respect to $\Tc$, but the tree cover level data structure holds the top tree $\Uc$. (Usually, we will have $\Tc = \Uc$, but, as we will see later, this may temporarily not be the case while the updates or queries are underway.)
Then, the cost of a~cluster $C$ will be proportional to the total normalized cost of operations in neighborhood data structures performed when: $C$ is split or merged; a~lazy update is propagated from $C$ to its children; or suitable information about cluster $C$ is computed from the corresponding information stored in its children.

Our data structure will crucially rely on the fact that when $\Tc = \Uc$, the total cost of clusters along a~vertical path of a~top tree is logarithmic in the size of the tree:

\begin{lemma}[Vertical Path Telescoping Lemma]
    \label{lem:top-tree-telescope}
    Let $\mathcal{C}$ be a~family of clusters of $\Tc$, all lying on a~single vertical path of $\Tc$.
    Then $\sum_{C \in \mathcal{C}} \ncost_\Tc(C) \in \BigO(\log n)$.
\end{lemma}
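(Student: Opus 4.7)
The plan is to bound $\sum_{C\in\mathcal{C}}\ncost_\Tc(C)$ by splitting the summands into three pieces. Without loss of generality I extend $\mathcal{C}$ to all clusters along some root-to-leaf path in $\Tc$; by \Cref{thm:top-trees} this path has length $\BigO(\log n)$, so the $+1$ summands contribute $\BigO(\log n)$ in total.

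Next I dispose of the log-terms $\log\frac{\Weight_\Tc(x)}{\Weight_\Tc(\firstedge_{A,x})}$, and the analogous $\firstedge_{B,x}$-term in case~\CasePath{}, involving the midpoint $x$ of a splitting cluster $C$. Since $x$ is internal to $C$ but lies on the boundary of every child of $C$, the enclosing cluster $I^\Tc_x$ equals $C$; \Cref{lem:interface-firstedge} then identifies $\firstedge_{A,x}$ (and $\firstedge_{B,x}$ in case~\CasePath{}) as exactly the interface edges at $x$ in $\Tc$. The invariant $\Weight_\Tc(\vec{xy}) > \tfrac{1}{3}\Weight_\Tc(x)$ for interface edges then bounds each such log-term by $\log 3$, and across the $\BigO(\log n)$ clusters they sum to $\BigO(\log n)$.

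The nontrivial contribution is the boundary term $\log\frac{\Weight_\Tc(v)}{\Weight_\Tc(\firstedge_{A,v})}$ at a case~\CasePointToPathPoint{} cluster $C$ with $\bnd C=\{v\}$. I first prove the structural claim that $\firstedge_{A,v}$ is always a \emph{cluster} edge in $\Tc$: trace the maximal chain of case~\CasePointToPointPoint{} ancestors of $C$ (all point clusters with boundary $\{v\}$) up to its topmost element $C_{\top}$; the parent of $C_{\top}$ must be $I^\Tc_v$, which splits as either case~\CasePointToPathPoint{} or case~\CasePath{} with $v$ as its midpoint, so that in that split the path child(ren) carry the interface edge(s) at $v$ while $C_{\top}$ is the non-path child collecting all the cluster edges at $v$. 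Since $C\subseteq C_{\top}$ and the unique edge of $C$ at $v$ is $\firstedge_{A,v}$, this edge is a cluster edge. A direct calculation using $V(T[\firstedge_{A,v}])=V(C)\setminus\{v\}$ then gives $\Weight_\Tc(\firstedge_{A,v})=|E(C)|-1$, and expanding $\Weight_\Tc(v)$ over the cluster and interface edges at $v$ yields $\Weight_\Tc(v)\le 3|E(C_{\top})|+\BigO(1)$. Hence the boundary term at $C$ is at most $\BigO(1)+\log\frac{|E(C_{\top})|}{|E(C)|}$.

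Finally I telescope. A given vertex $v$ can serve as the sole boundary of a case~\CasePointToPathPoint{} cluster on the vertical path at most once: descending into the path child of such a cluster only produces path clusters containing $v$ on the boundary (via case~\CasePath{} splits), and descending into the point child leaves $v$'s region of $T$ altogether. Thus the case~\CasePointToPointPoint{} chains $[C_{\top},C]$ for distinct $v$'s form pairwise disjoint intervals $[C_{j_1},C_{i_1}],\ldots,[C_{j_L},C_{i_L}]$ linearly ordered along the path, with $|E(C_{j_{\ell+1}})|\le|E(C_{i_\ell})|$ by descent monotonicity; so $\sum_\ell\log\frac{|E(C_{j_\ell})|}{|E(C_{i_\ell})|}$ telescopes to at most $\log n$, and the $\BigO(L)=\BigO(\log n)$ additive constants are absorbed. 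The hardest part will be the structural claim that $\firstedge_{A,v}$ is always a cluster edge, which requires carefully unwinding the interaction between $C$, the case~\CasePointToPointPoint{} chain above it, and the split at $I^\Tc_v$.
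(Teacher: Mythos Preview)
Your proof is correct, but it takes a genuinely different route from the paper's argument. The paper defines a single potential $\phi(C)$ --- equal to $\log\Weight_\Tc(v)$ for point clusters with $\bnd C=\{v\}$ and $\log|E(C)|$ for path clusters --- and shows inductively that $\sum_{i\le h}\ncost_\Tc(C_i)\le 6h+\phi(C_h)$ along any leaf-to-root path, handling all three split cases in one uniform case analysis. Your argument instead separates the cost into midpoint terms (killed immediately by the interface-edge invariant $\Weight_\Tc(e)>\tfrac13\Weight_\Tc(v)$) and boundary terms, and then telescopes the latter over the maximal point-cluster intervals $[C_\top,C]$. This is more explicit about \emph{why} the boundary terms telescope --- it isolates exactly the stretch of the vertical path over which the relevant weight ratio accumulates --- at the price of needing the extra structural lemma that each vertex appears as the sole boundary of at most one case-\CasePointToPathPoint{} cluster on the path. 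The paper's potential-function proof hides this structure but is shorter and treats path and point clusters symmetrically.

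Two minor remarks on presentation. First, the claim you flag as ``hardest'' --- that $\firstedge_{A,v}$ is a cluster edge --- is in fact immediate: $C$ itself is a point cluster with $\bnd C=\{v\}$ containing that edge, which is exactly the definition. The real work you do with the $C_\top$ chain is needed not for this claim but for the bound $\Weight_\Tc(v)\le 3|E(C_\top)|+\BigO(1)$; you might reframe the paragraph accordingly. Second, your telescoping step relies on the intervals being linearly ordered with $|E(C_\top^{(\ell+1)})|\le|E(C^{(\ell)})|$; this holds because the intervals have pairwise distinct boundary vertices (by your uniqueness claim) and hence are disjoint on the vertical path, so the top of each interval is a strict descendant of the bottom of the previous one.
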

\begin{proof}
    Let $C_1, C_2, \ldots, C_k$ be a~leaf-to-root path in $\Tc$ encompassing $\mathcal{C}$; assume $C_1$ is a~leaf of $\Tc$ and $C_k$ is the root of $\Tc$.
    We will prove by induction that, for each $h \in \{1, \ldots, k\}$, we have
    \begin{equation}
        \label{eq:ncost-potential}
        \sum_{i=1}^h \ncost_\Tc(C_i) \leq 6h + \phi(C_h),
    \end{equation}
    where $\phi(C)$ is defined for a~cluster $C$ as follows:
    \[
        \phi(C) = \begin{cases}
            0 & \text{if }C\text{ is a~leaf of }\Tc, \\
            \log \Weight_\Tc(v) & \text{if }C\text{ is a~point cluster and }\bnd C = \{v\}, \\
            \log |E(C)| & \text{if }C\text{ is a~path cluster.}
        \end{cases}
    \]
    Since $k \in \BigO(\log n)$ and $\Weight_\Tc(x) \in \BigO(n)$ for all $x \in V(T)$, the lemma will follow from \eqref{eq:ncost-potential}.
    
    The claim holds for $h = 1$ as $\ncost_\Tc(C_1) = 1$, hence assume $h \geq 2$.
    Then $C_h$ is a non-leaf cluster and $C_{h-1}$ is a~child of $C_h$.
    We consider cases, depending on how $C_h$ splits into children in $\Tc$.
    Note that in each case, it is enough to prove that $\ncost_\Tc(C_h) \leq 6 + \phi(C_h) - \phi(C_{h-1})$.
    \begin{itemize}
        \item If $C_h$ is a~point cluster splitting according to case \CasePointToPointPoint in \Cref{fig:toptree-cases}, then $C_{h-1}$ is a~point cluster with the same boundary as $C_h$, and so $\phi(C_h) = \phi(C_{h-1})$.
        Then \eqref{eq:ncost-potential} follows from $\ncost_\Tc(C_h) = 1$.

        \item Suppose $C_h$ splits into $A$ and $B$ according to case \CasePointToPathPoint and $\bnd C_h = \{v\}$, $\bnd A = \{v, x\}$, $\bnd B = \{x\}$.
        Since $A$ is a~slim path cluster, we have $C = T[\firstedge_{A,v}]$, $\interface_x = \{\firstedge_{A,x}\}$ (by \Cref{lem:interface-firstedge}) and $B = \bigcup_{\vec{xy} \in \cluster_x} T[\vec{xy}]$.

        Then $\phi(C_h) = \log \Weight_\Tc(v)$ and $\ncost_\Tc(C_h) = 1 + \alpha_v + \alpha_x$, where $\alpha_v = \log \frac{\Weight_\Tc(v)}{\Weight_\Tc(\firstedge_{A,v})}$ and  $\alpha_x = \log \frac{\Weight_\Tc(x)}{\Weight_\Tc(\firstedge_{A,x})}$. 
        Moreover, it follows from $\interface_x = \{\firstedge_{A,x}\}$ that $\Weight_\Tc(\firstedge_{A,x}) > \frac13 \Weight_\Tc(x)$, hence $\alpha_x < \log 3$.
        Additionally, we find that $\Weight_\Tc(\firstedge_{A,x}) = 1 + \sum_{\vec{xy} \in \cluster_x} \Weight_\Tc(\vec{xy}) = 1 + |E(B)|$.

        \smallskip

        If $C_{h-1} = A$, then $\phi(C_{h-1}) = \log |E(A)| \leq \log |E(C_h)| = |E(T[\firstedge_{A,v}])| = \Weight_\Tc(\firstedge_{A,v})$.
        Thus $\alpha_v \leq \phi(C_h) - \phi(C_{h-1})$, from which it follows that $\ncost_\Tc(C_h) \leq 1 + \log 3 + \phi(C_h) - \phi(C_{h-1})$.

        \smallskip

        If $C_{h-1} = B$, then $\phi(C_{h-1}) = \log \Weight_\Tc(x) < \log(3 \Weight_\Tc(\firstedge_{A,x})) = \log 3 + \log(1 + |E(B)|) \leq \log 3 + \log |E(C_h)| = \log 3 + \Weight_\Tc(\firstedge_{A,v})$.
        So $\alpha_v \leq \log 3 + \phi(C_h) - \phi(C_{h-1})$ and therefore $\ncost_\Tc(C_h) \leq 1 + 2 \log 3 + \phi(C_h) - \phi(C_{h-1})$.
        
        \item Finally, suppose $C_h$ is a~(slim) path cluster splitting into (slim) path clusters $A, B$ and a~point cluster $P$ according to case \CasePath.
        Let $\bnd A = \{v, x\}$, $\bnd B = \{x, w\}$, $\bnd P = \{x\}$, so that $\interface_x = \{\firstedge_{A,x}, \firstedge_{B,x}\}$ by \Cref{lem:interface-firstedge}.
        
        Then $\phi(C_h) = \log |E(C_h)|$ and $\ncost_\Tc(C_h) = 1 + \alpha_x + \beta_x$, where $\alpha_x = \log\frac{\Weight_\Tc(x)}{\Weight_\Tc(\firstedge_{A,x})}$ and $\beta_x = \log\frac{\Weight_\Tc(x)}{\Weight_\Tc(\firstedge_{B,x})}$.
        We have $\Weight_\Tc(\firstedge_{A,x}) > \frac13 \Weight_\Tc(x)$ and so $\alpha_x < \log 3$; analogously $\beta_x < \log 3$.
        Hence $\ncost_\Tc(C_h) < 1 + 2 \log 3$.
        Additionally, $\Weight_\Tc(\firstedge_{A,x}) = \Weight_\Tc(\firstedge_{B,x}) = 1 + \sum_{\vec{xy} \in \cluster_x} \Weight_\Tc(\vec{xy}) = 1 + |E(P)|$.
        
        \smallskip
        
        If $C_{h-1} = A$, then $\phi(C_{h-1}) = \log |E(A)| \leq \log |E(C_h)| = \phi(C_h)$ and thus $\ncost_\Tc(C_h) < 1 + 2 \log 3 + \phi(C_h) - \phi(C_{h-1})$.
        The same argument applies when $C_{h-1} = B$.
        
        \smallskip
        
        If $C_{h-1} = P$, then $\phi(C_{h-1}) = \log \Weight_\Tc(x) < \log(3 \Weight_\Tc(\firstedge_{A,x}) = \log 3 + \log(1 + |E(P)|) < \log 3 + \log |E(C_h)| = \log 3 + \phi(C_h)$.
        Hence $\ncost_\Tc(C_h) < 1 + 3 \log 3 + \phi(C_h) - \phi(C_{h-1})$.
    \end{itemize}
    We have $\ncost_\Tc(C_h) \leq 6 + \phi(C_h) - \phi(C_{h-1})$, so \eqref{eq:ncost-potential} holds by induction.
\end{proof}

\paragraph{Lifetime of a~query}
We now sketch how we implement operations --- updates and queries --- in the cover level data structure.
We do not yet specify what information is stored in the nodes of top tree or how this information is maintained under the updates; we delay this exposition to the following sections.

First, consider \emph{structural} updates altering the structure of the underlying data structure --- $\Link{}$, $\Cut{}$ and $\Expose{}$.
Each of these is simply implemented by forwarding the corresponding call to the top tree $\mathcal{T}$.
On every node split, any lazy updates stored in the node is forwarded to the children of this node 
and similarly, when a~new node is created by a~merge, the information stored in the node is computed based on data stored in the children of the node.
As mentioned, we also recompute the types and weights of edges of $T$, remembering to update this information in the neighborhood data structures.
Each structural update will split and merge $\BigO(\log n)$ nodes of the top tree and recompute the types and weights of $\BigO(\log n)$ edges of $T$.
Since each of these steps take $\BigOTilde(\log n)$ time each, we will find that a~structural update can be performed in time $\BigOTilde(\log^2 n)$.

Now consider all the remaining operations.
A~typical framework for processing such operations is as follows: suppose an~operation pertains to a~vertex $u$ of $T$ or a~path $v \ldots w$ in $T$.
We begin by exposing the vertex $u$ (resp., the path $v \ldots w$), causing the set $\bnd T$ of external boundary vertices of $T$ to become equal to $\{u\}$ (resp., $\{v, w\}$).
Then the operation can be performed essentially \emph{for free}: a~query can be resolved by examining the information stored in the root clusters of the top tree, and an~update is performed by modifying the lazy update state in these clusters.

Sadly, this scheme will not be efficient enough for our purposes: our tree cover level data structure will be ultimately consumed by \Cref{lem:first-reduction}, which in turn provides a~biconnectivity data structure that processes any sequence of $m$ updates via $\BigO(m)$ calls to $\Link{}$, $\Cut{}$, and $\Expose{}$, but as many as $\BigO(m \log n)$ calls to the remaining operations of our tree data structure.
Thus, the implementation of these remaining operations should not invoke $\Expose{}$: as mentioned before, $\Expose{}$ is a~time-expensive structural update that may require as much as $\BigOTilde(\log^2 n)$ time to complete.
This inefficiency essentially comes from the fact that the types and weights of edges of $T$ need to be recomputed and propagated to the neighborhood data structures.
Fortunately, we can work around this problem: we only \emph{transiently expose} the vertex $u$ (resp., the path $v \ldots w$) \emph{without} replacing the weights of edges or the sets of selected edges in the neighborhood data structures.
This produces a~transient version $\Uc$ of $\Tc$, where the types and weights of the edges (and thus also the costs of the clusters) are defined in the neighborhood data structures with respect to the original snapshot $\Tc$ of the top tree.
In particular, the costs of operations on a~cluster $C$ of $\Uc$ in these structures will be proportional to the cost $C$ with respect to $\Tc$ (i.e., $\ncost_\Tc(C)$) instead of $\Uc$.
Then, after a~requested operation on the cover level data structure is performed, $\Uc$ is reverted back to $\Tc$; we call this process \emph{transient unexpose}.

The following lemma formalizes this description and argues that the transient expose can be carried out so that further operations on the transiently exposed top tree $\Uc$ remain efficient:

\begin{lemma}[Transient Expose Lemma]
    \label{lem:transient-expose-lemma}
    A~top trees data structure of \Cref{thm:top-trees} can be extended by implementing $\TransientExpose{u}$ (resp., $\TransientExpose{v, w}$), temporarily altering the top tree $\Tc$ to $\Uc$ so that the set of external boundary vertices of $T$ in $\Uc$ becomes $\{u\}$ (resp., $\{v, w\}$), but the types and weights of edges in $\Uc$ are inherited from $\Tc$. 
    Moreover:
    \begin{enumerate}
        \item \label{item:tel-expose-is-quick} The update is carried through a~sequence of $\BigO(\log n)$ $\Merge{}$ and $\Split{}$ modifications, which can be identified in total time $\BigO(\log n)$.
        \item \label{item:tel-original-prefix-cost} If $\mathcal{C}_\Tc \coloneqq \Tc \setminus \Uc$ is the set of clusters split during the transient expose, then $\sum_{C \in \mathcal{C}_\Tc} \ncost_\Tc(C) \in \BigO(\log n)$.
        \item \label{item:tel-new-prefix-cost} If $\mathcal{C}_\Uc \coloneqq \Uc \setminus \Tc$ is the set of clusters merged during the transient expose, then $\sum_{C \in \mathcal{C}_\Uc} \ncost_\Tc(C) \in \BigO(\log n)$.
        \item \label{item:tel-new-vertical-cost} $\Uc$ still satisfies the Vertical Path Telescoping Lemma, i.e., if $\mathcal{C}$ is a~family of clusters on a~single vertical path of $\Uc$, then $\sum_{C \in \mathcal{C}} \ncost_\Tc(C) \in \BigO(\log n)$.
    \end{enumerate}
\end{lemma}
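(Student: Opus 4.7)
The implementation of $\TransientExpose{}$ mimics the standard expose of \Cref{thm:top-trees}, but crucially does not update the types, weights, or selected sets recorded in the neighborhood data structures $N_v$. The structural rearrangement is the same as in the standard algorithm: we split the clusters on the at-most-two root-to-leaf vertical paths in $\Tc$ leading to $I^\Tc_u$ (or to $I^\Tc_v$ and $I^\Tc_w$), and then reassemble the resulting pieces into a new top tree $\Uc$ whose root has the prescribed external boundary. Item~\ref{item:tel-expose-is-quick} is then immediate from \Cref{thm:top-trees}: the required sequence of $\BigO(\log n)$ $\Merge{}$ and $\Split{}$ modifications can be identified in $\BigO(\log n)$ time.

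For item~\ref{item:tel-original-prefix-cost}, observe that $\mathcal{C}_\Tc$ consists of clusters lying on the at-most-two vertical paths in $\Tc$ identified above. Applying \Cref{lem:top-tree-telescope} to each such path yields $\sum_{C \in \mathcal{C}_\Tc} \ncost_\Tc(C) \in \BigO(\log n)$.

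The substantive content is items~\ref{item:tel-new-prefix-cost} and~\ref{item:tel-new-vertical-cost}. The difficulty is that at an internal vertex $x$ on the exposed path the new interface edges of $\Uc$ at $x$ are the two path edges $e^v_x, e^w_x$; yet their inherited $\Tc$-weights may be as small as subtree sizes of $T$ (if these edges were cluster edges in $\Tc$), violating the inequality $\Weight_\Tc(\text{interface edge}) > \tfrac13 \Weight_\Tc(x)$ relied upon in the proof of \Cref{lem:top-tree-telescope}. The plan is to build the new spine of $\Uc$ via a \emph{$\Tc$-weighted} recursive midpoint rule: when constructing a path cluster of $\Uc$ whose cluster path is a subpath $P$ of the exposed path, pick the midpoint $x \in P$ so that the two halves of $P$ at $x$ carry approximately equal total $\Tc$-weight of their constituent path edges. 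With this choice, a potential argument analogous to that of \Cref{lem:top-tree-telescope}---using a modified potential defined via the accumulated $\Tc$-weight on the cluster path, in place of $\log |E(C)|$---shows that $\ncost_\Tc(C)$ telescopes along every vertical path of $\Uc$ to $\BigO(\log n)$, yielding item~\ref{item:tel-new-vertical-cost}.

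Two technical obstacles remain. First, we must implement the weighted midpoint rule within the overall $\BigO(\log n)$ time budget; this is achieved via an auxiliary balanced structure over the exposed path keyed on cumulative $\Tc$-weights. Second, a vertical path of $\Uc$ may cross from clusters of the new spine into clusters inherited from $\Tc$; on the inherited portion the original \Cref{lem:top-tree-telescope} applies (these clusters still reflect $\Tc$'s interface structure), while on the spine portion the modified potential applies, and the two contributions are glued at the transition vertex using the matching $\phi$-value there. Finally, item~\ref{item:tel-new-prefix-cost} follows from item~\ref{item:tel-new-vertical-cost} applied to the at-most-two root-to-leaf vertical paths in $\Uc$ that contain all of $\mathcal{C}_\Uc$.
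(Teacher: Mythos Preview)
Your treatment of items~\ref{item:tel-expose-is-quick} and~\ref{item:tel-original-prefix-cost} matches the paper. The gap is in your derivation of item~\ref{item:tel-new-prefix-cost}: you claim that $\mathcal{C}_\Uc$ lies on ``at-most-two root-to-leaf vertical paths in $\Uc$'', but this is false for the weighted-midpoint merge scheme you propose. After the splits, $\Tc' = \Tc \cap \Uc$ contains $\Theta(\log n)$ root clusters, and these are precisely the external children of the transient forest $\mathcal{C}_\Uc$; a balanced merge tree built over $\Theta(\log n)$ leaves cannot be covered by $O(1)$ vertical paths. Covering it by $\Theta(\log n)$ paths and invoking item~\ref{item:tel-new-vertical-cost} on each yields only $O(\log^2 n)$. (You appear to be conflating this with the split side, where $\mathcal{C}_\Tc$ genuinely lies on two vertical paths of $\Tc$ because it consists of ancestors of two fixed leaves.)

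The paper sidesteps the whole apparatus---weighted midpoints, modified potentials, the auxiliary prefix-sum structure---by reversing the dependency: it proves item~\ref{item:tel-new-prefix-cost} first, and for \emph{arbitrary} merge orders. Define, for any edge-partition $\mathcal{A}$ of $T$ into clusters,
\[
  \zeta(\mathcal{A}) \;=\; \sum_{\substack{C \in \mathcal{A}\\ \bnd C = \{v,w\}}} \left( \log\tfrac{\Weight_\Tc(v)}{\Weight_\Tc(\firstedge_{C,v})} + \log\tfrac{\Weight_\Tc(w)}{\Weight_\Tc(\firstedge_{C,w})} \right),
\]
with point clusters contributing zero. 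A direct check of the three cases in \Cref{fig:toptree-cases} shows that splitting any cluster $C$ changes $\zeta$ by \emph{exactly} $\ncost_\Tc(C) - 1$. Hence $\sum_{C \in \mathcal{C}_\Uc}(\ncost_\Tc(C) - 1) = \zeta(\Tc') - \zeta(\text{roots of }\Uc)$, which also equals $\zeta(\text{roots of }\Tc) + \sum_{C \in \mathcal{C}_\Tc}(\ncost_\Tc(C) - 1) - \zeta(\text{roots of }\Uc)$; every term on the right is $O(\log n)$ by item~\ref{item:tel-original-prefix-cost} and the trivial bound on $\zeta$ of an $O(1)$-element partition. The difficulty you identified (that $\Uc$-interface edges may have small $\Tc$-weight) is real, but it is absorbed by this exact bookkeeping rather than by controlling the merge geometry. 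Item~\ref{item:tel-new-vertical-cost} then follows for free: any vertical path in $\Uc$ decomposes into a transient prefix (bounded wholesale by item~\ref{item:tel-new-prefix-cost}) and a suffix of clusters already in $\Tc$ (bounded by \Cref{lem:top-tree-telescope}).
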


We will call $\mathcal{C}_\Tc$ the set of \emph{hidden} clusters, and $\mathcal{C}_\Uc$ the set of \emph{transient} clusters.

Then an~operation in the tree cover level data structure will usually run according to the following scheme:
\begin{enumerate}
    \item Identify the sequence of $\BigO(\log n)$ splits (clusters $\mathcal{C}_\Tc$) and merges (clusters $\mathcal{C}_\Uc$) in the top tree $\Tc$ that yields a~transiently exposed top tree $\Uc$. This sequence can be identified in time $\BigO(\log n)$ by \Cref{item:tel-expose-is-quick}.
    \item Perform the identified splits of clusters from $\mathcal{C}_\Tc$. By \Cref{item:tel-original-prefix-cost}, the total cost of split clusters is $\BigO(\log n)$, which will allow us to argue that all lazy updates stored in the split clusters can be propagated to the children in time $\BigOTilde(\log n)$.
    \item Perform the identified merges of clusters from $\mathcal{C}_\Uc$ without updating the weights of the edges or the sets of selected edges in the neighborhood data structures. By \Cref{item:tel-new-prefix-cost}, the total cost of merged clusters (with respect to the weights induced by the original tree $\Tc$) is $\BigO(\log n)$, so we will be able to compute the information stored in $\mathcal{C}_\Uc$ in time $\BigOTilde(\log n)$.
    \item Run the operation in the transiently exposed top tree $\Uc$. In most cases, the operation can be performed only by touching information stored in the root clusters of $\Uc$ in time $\BigOTilde(\log n)$. However, in some cases, we will additionally need to analyze a~vertical path of $\Uc$. \Cref{item:tel-new-vertical-cost} will allow us to argue that this can, too, be done in time $\BigOTilde(\log n)$.
    \item Perform a~\emph{transient unexpose}, i.e., revert the transient expose: split the transient clusters of the top tree and restore (merge) the hidden clusters. Again, by \Cref{item:tel-original-prefix-cost} and \Cref{item:tel-new-prefix-cost}, this can be achieved in time $\BigOTilde(\log n)$.
\end{enumerate}

The proof of \Cref{lem:transient-expose-lemma} follows from a~careful analysis of $\Expose{}$ and an~algebraic analysis similar to that of \Cref{lem:top-tree-telescope}.

\begin{proof}[Proof of \Cref{lem:transient-expose-lemma}]
    Recall that $\Tc$ represents the tree $T$ with additional dummy edges: to each vertex $v \in V(T)$ we add a~dummy edge, say $e_v$, which has one endpoint at $v$ and the other at a~fresh leaf.
    We define $\mathcal{C}_\Tc$ as the set of strict ancestors of $e_u$ (if a~lone vertex $u$ is to be exposed) or $e_v$ and $e_w$ (if $v$ and $w$ are being exposed), and we split the bags in $\mathcal{C}_\Tc$ in the top-down order.
    Let $\Tc' \subseteq \Tc$ be the family of root clusters produced by this sequence of splits.

    Naturally, $\mathcal{C}_\Tc$ can be determined in $\BigO(\log n)$ time, and $\sum_{C \in \mathcal{C}_\Tc} \ncost_\Tc(C) \in \BigO(\log n)$ by \Cref{lem:top-tree-telescope} (all clusters of $\mathcal{C}_\Tc$ lie on two vertical paths in $\Tc$).
    Hence $\mathcal{C}_\Tc$ satisfies \Cref{item:tel-original-prefix-cost}.
    
    Now, given a~family of root clusters $\Tc'$, we construct a~top tree $\Uc$ with (transient) boundary vertices $u$ (resp.\ $v, w$) by arbitrarily merging clusters according to cases in \Cref{fig:toptree-cases}.
    This can be done via a~straightforward (and standard) case distinction; note here that we do not attempt to optimize the height or any other potential functions of the resulting transient top tree $\Uc$.
    Let $\mathcal{C}_\Uc$ be the set of merges produced by this procedure.
    
    We now prove that $\mathcal{C}_\Uc$ satisfies \Cref{item:tel-expose-is-quick,item:tel-new-prefix-cost}.
    First, every merge decreases the number of root top tree clusters, so $|\mathcal{C}_\Uc| < |\Tc'|$; and it decreases this number by at most $2$, hence $|\Tc'| \leq \BigO(1) + 2|\mathcal{C}_\Tc|$.
    Therefore $|\mathcal{C}_\Uc| \leq \BigO(1) + 2|\mathcal{C}_\Tc| \in \BigO(\log n)$.
    So the number of merges performed is $\BigO(\log n)$, and it is straightforward to find all of them in $\BigO(\log n)$ time.
    
    In order to bound $\sum_{C \in \mathcal{C}_\Uc} \ncost_\Tc(C)$, we introduce the following \emph{potential value} for families of root clusters.
    If $\mathcal{A}$ is a~family of clusters edge-partitioning $T$, we define $\zeta(\mathcal{A}) \coloneqq \sum_{C \in \mathcal{A}} \zeta(C)$, where
    \[ \zeta(C) = \begin{cases}
        \log\frac{\Weight_\Tc(v)}{\Weight_\Tc(\firstedge_{C,v})} + \log\frac{\Weight_\Tc(w)}{\Weight_\Tc(\firstedge_{C,w})} & \text{if }C\text{ is a~path cluster, }\bnd C = \{v, w\}, \\
        0 & \text{if }C\text{ is a~point cluster.}
    \end{cases} \]
    Obviously, $\zeta(\Tc), \zeta(\Uc) \in \BigO(\log n)$.
    Moreover:
    \begin{claim}
        Let $C \in \mathcal{A}$ and suppose $\mathcal{A}'$ is produced from $\mathcal{A}$ by splitting $C$ into child clusters.
        Then $\zeta(\mathcal{A}') = \zeta(\mathcal{A}) + (\ncost_\Tc(C) - 1)$.
    \end{claim}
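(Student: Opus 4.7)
The plan is to prove the claim by a direct case analysis on how $C$ splits into child clusters, following the three cases of \Cref{fig:toptree-cases}. Since $\mathcal{A}'$ differs from $\mathcal{A}$ only in that $C$ is replaced by its children, it suffices to show in each case that the sum of $\zeta$ over the new children equals $\zeta(C) + \ncost_\Tc(C) - 1$.

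In case \CasePointToPointPoint, $C$ is a point cluster splitting into two point clusters $A, B$. All three clusters are point clusters, so every contribution to $\zeta$ is $0$, and $\ncost_\Tc(C) = 1$, matching trivially. In case \CasePointToPathPoint, $C$ is a point cluster (contributing $0$ to $\zeta$) splitting into a path cluster $A$ with $\bnd A = \{v, x\}$ and a point cluster $B$ (also contributing $0$). Hence $\zeta(A) + \zeta(B) - \zeta(C) = \zeta(A) = \log\frac{\Weight_\Tc(v)}{\Weight_\Tc(\firstedge_{A,v})} + \log\frac{\Weight_\Tc(x)}{\Weight_\Tc(\firstedge_{A,x})}$, which equals $\ncost_\Tc(C) - 1$ by definition.

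The case \CasePath is the slightly more delicate one. Here $C$ is a path cluster with $\bnd C = \{v, w\}$ splitting into path clusters $A, B$ with $\bnd A = \{v, x\}$, $\bnd B = \{x, w\}$, and a point cluster $P$ with $\bnd P = \{x\}$ (so $\zeta(P) = 0$). The key observation is that the edge of $\pi(C)$ incident to $v$ is the same as the edge of $\pi(A)$ incident to $v$, i.e.\ $\firstedge_{C,v} = \firstedge_{A,v}$, and similarly $\firstedge_{C,w} = \firstedge_{B,w}$. Plugging these equalities into the definition of $\zeta$, the $v$- and $w$-terms in $\zeta(A) + \zeta(B)$ exactly cancel $\zeta(C)$, leaving $\zeta(A) + \zeta(B) - \zeta(C) = \log\frac{\Weight_\Tc(x)}{\Weight_\Tc(\firstedge_{A,x})} + \log\frac{\Weight_\Tc(x)}{\Weight_\Tc(\firstedge_{B,x})}$, which matches $\ncost_\Tc(C) - 1$.

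The argument is essentially routine bookkeeping, and the only subtle point, worth highlighting, is the firstedge identities $\firstedge_{C,v} = \firstedge_{A,v}$ and $\firstedge_{C,w} = \firstedge_{B,w}$ used in the \CasePath case; these follow directly from the slimness convention and the fact that the cluster path of $C$ is the concatenation of the cluster paths of $A$ and $B$. With the claim in hand, one obtains $\sum_{C \in \mathcal{C}_\Uc} \ncost_\Tc(C) \in \BigO(\log n)$ by telescoping: the total change in $\zeta$ between $\Tc'$ and $\Uc$ equals the sum of $(\ncost_\Tc(C) - 1)$ over $C \in \mathcal{C}_\Uc$ (read backwards as a sequence of splits), and both $\zeta(\Tc')$ and $\zeta(\Uc)$ are $\BigO(\log n)$ by the already-noted bound $\zeta(\Tc), \zeta(\Uc) \in \BigO(\log n)$ together with $|\mathcal{C}_\Uc| \in \BigO(\log n)$.
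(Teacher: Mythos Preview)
Your proof is correct and follows exactly the approach the paper intends: the paper's own proof reads simply ``Straightforward case analysis,'' and you have carried out precisely that case analysis across the three splitting patterns of \Cref{fig:toptree-cases}. Your final paragraph even anticipates how the claim is used downstream, which matches the surrounding argument in the proof of the Transient Expose Lemma.
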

    \begin{proof}[Proof of the claim]
        Straightforward case analysis.
    \end{proof}
    Applying the claim above multiple times, we get
    \begin{align*}
        \zeta(\Tc') &= \zeta(\Tc) + \sum_{C \in \mathcal{C}_\Tc} (\ncost_\Tc(C) - 1) \stackrel{(\ref{item:tel-original-prefix-cost})}{\leq} \BigO(\log n), \\
        \zeta(\Tc') &= \zeta(\Uc) + \sum_{C \in \mathcal{C}_\Uc} (\ncost_\Tc(C) - 1) \geq \sum_{C \in \mathcal{C}_\Uc} \ncost_\Tc(C) - |\mathcal{C}_\Uc|.
    \end{align*}
    Hence, $\sum_{C \in \mathcal{C}_\Uc} \in \BigO(\log n)$, fulfilling \Cref{item:tel-new-prefix-cost}.
    
    Finally, for \Cref{item:tel-new-vertical-cost}, observe that every vertical path in $\Uc$ can be split in two parts: the subpath closer to the root of $\Uc$ comprising transient clusters (whose total cost with respect to $\Tc$ is bounded by $\BigO(\log n)$ by \Cref{item:tel-new-prefix-cost}), and the remaining subpath containing clusters present in the original top tree $\Tc$ and forming a~vertical path in $\Tc$ (so its total cost is bounded by $\BigO(\log n)$ by \Cref{lem:top-tree-telescope}).
    Hence the total cost, with respect to $\Tc$, of clusters on any vertical path in $\Uc$ is bounded by $\BigO(\log n)$.
\end{proof}

\subsection{Cover level queries}
\label{ssec:tree-structure-coverlevels}
We now give an~implementation of the basic variant of the dynamic tree cover level data structure, allowing updates through $\Link{}$, $\Cut{}$, $\Connected{}$, $\Cover{}$, $\UniformUncover{}$ and $\LocalUncover{}$ and queries through $\CoverLevel{}$ and $\MinCoveredPair{}$, as well as laying foundations for the remaining types of updates and queries.
Formally, we prove the following:

\begin{lemma}
  \label{lem:ds-tree-cover-levels}
  There exists a~partial implementation \textsc{CL} of the restricted tree cover level data structure supporting $\Link{}$, $\Cut{}$, $\Select{}$, $\Connected{}$, $\Cover{}$, $\UniformUncover{}$, $\LocalUncover{}$, $\CoverLevel{}$ and $\MinCoveredPair{}$ in which:
  \begin{itemize}
    \item Each transient expose and unexpose takes worst-case $\BigO(\log n)$ time, plus calls to $\Level{}$, $\LongZip{}$ and $\LongUnzip{}$ in the neighborhood data structures of worst-case total normalized cost $\BigO(\log n)$.
    \item $\Link{}$, $\Cut{}$, $\Expose{}$ take worst-case $\BigO(\log n)$ time, plus calls to $\Level{}$, $\LongZip{}$, $\LongUnzip{}$, $\Select{}$ and $\SetWeight{}$ in the neighborhood data structures of worst-case total normalized cost $\BigO(\log^2 n)$.
    \item $\Connected{}$ takes worst-case $\BigO(\log n)$ time.
    \item $\Cover{}$, $\UniformUncover{}$, $\LocalUncover{}$, $\CoverLevel{}$ and $\MinCoveredPair{}$ take worst-case $\BigO(\log n)$ time, plus one transient expose and queries to the neighborhood data structures of the form $\Zip{}$, $\Unzip{}$, $\Level{}$ of worst-case total normalized cost $\BigO(\log n)$.
  \end{itemize}
\end{lemma}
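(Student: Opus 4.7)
}
The plan is to augment the top tree $\Tc$ with cluster-level cover information plus a lazy update slot, and then to delegate each operation to either a standard top tree maintenance pass or a transient expose. Concretely, for every path cluster $C$ with $\pi(C) = v \ldots w$ we store the cover level of the cluster path, denoted $\ell(C)$; point clusters carry no cover-level summary. The lazy update slot associated with a cluster may hold a~pending \emph{long} cover-level update of the form ``raise every level-$< j$ pair on $\pi(C)$ to $j$'' (for $\Cover{}$), ``drop every pair at exactly level $j$ on $\pi(C)$ to $j-1$, assumed uniform'' (for $\UniformUncover{}$), or a~single pending $\LocalUncover{}$ keyed to a~specific midpoint; these lazy updates compose in the obvious way. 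Each cluster also stores, for the three split cases of \Cref{fig:toptree-cases}, whatever pointers into $N_x$ at its midpoint $x$ are needed so that recomposition and propagation can be performed with $\BigO(1)$ calls in $N_x$.

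The compositional recipe during $\Merge{}$ is straightforward: in cases \CasePointToPointPoint and \CasePointToPathPoint no cover-level summary changes because the new cluster is either a~point cluster or a~path cluster whose cluster path equals that of its path-cluster child $A$; in case \CasePath we set $\ell(C) = \min\bigl(\ell(A),\ell(B),\Level{e_A,e_B}\bigr)$ in $N_x$, using $\SelectedLevel{}$ (normalized cost $\BigO(1)$) when $\{e_A,e_B\}$ is the selected pair and a~plain $\Level{}$ call (normalized cost $\BigO(\ncost_\Tc(C))$) otherwise. A~$\Split{}$ reverses this: any lazy cover update held by $C$ is pushed into its children by calling $\LongZip{}$/$\LongUnzip{}$ once on the selected pair in $N_x$ (cost $\BigO(1)$) plus at most one $\Zip{}$/$\Unzip{}$/$\LocalUncover{}$ call (cost $\BigO(\ncost_\Tc(C))$) in the non-selected subcase, and the summaries of the children are then refreshed. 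By \Cref{lem:top-tree-telescope} applied to $\Tc$ and by \Cref{lem:transient-expose-lemma}(\ref{item:tel-original-prefix-cost})-(\ref{item:tel-new-prefix-cost}) applied to $\Uc$, the sums $\sum_{C}\ncost_\Tc(C)$ over the $\BigO(\log n)$ clusters touched during a~transient expose or unexpose telescope to $\BigO(\log n)$, giving the claimed normalized cost bounds for transient expose/unexpose.

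The operations are then wired as follows. $\Link{}$, $\Cut{}$, and $\Expose{}$ are forwarded to the top tree of \Cref{thm:top-trees}, performing $\BigO(\log n)$ real $\Merge{}$/$\Split{}$ operations together with $\BigO(\log n)$ edge type/weight changes; each type change costs one $\Select{}$ and each weight change one $\SetWeight{}$ in the relevant $N_v$, each of normalized cost $\BigO(\log n)$, summing to $\BigO(\log^2 n)$; the summary-recomposition work along the same $\BigO(\log n)$ clusters contributes only $\BigO(\log n)$ normalized cost by \Cref{lem:top-tree-telescope}. $\Connected{}$ is answered by the top tree in $\BigO(\log n)$ time. For $\Cover{}$, $\UniformUncover{}$ and $\LocalUncover{}$, we issue $\TransientExpose{p,q}$ (resp.\ $\TransientExpose{e_1\cup e_2}$ for $\LocalUncover{}$), write the corresponding lazy update into the root path cluster $R_{pq}$ of $\Uc$ (together with a~one-time $\LongZip{}$/$\LongUnzip{}$ in $N_{v}$ for the relevant midpoint when needed), and then $\TransientUnexpose{}$; correctness follows because the semantics of long zips and uncovers match the definitions of these operations on subpaths. $\CoverLevel{p,q}$ reads $\ell(R_{pq})$ after a~transient expose. $\MinCoveredPair{p,q}$ performs a~transient expose and then descends a~single root-to-leaf path of $\Uc$, at each node choosing the child path cluster of minimum $\ell$ (breaking ties toward $p$) and, in the case of a~case \CasePath{} split where the minimum equals $\Level{e_A,e_B}$, returning the pair $(e_A,e_B)$ at the midpoint; by \Cref{lem:transient-expose-lemma}(\ref{item:tel-new-vertical-cost}) the total normalized cost along this vertical descent in $\Uc$ is $\BigO(\log n)$.

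The main technical obstacle is the interplay between lazy updates and the selection mechanism: a~long lazy update stored at a~path cluster $C$ implicitly refers to the midpoint pair $(e_A,e_B)$ of $C$, which is not necessarily the selected pair in $N_x$ at arbitrary moments (for instance, during a~transient (un)expose, the set of interface edges at a~vertex $v$ changes in $\Uc$ even though $\Select{}$ is deliberately not invoked, so selection in $N_v$ reflects $\Tc$, not $\Uc$). We handle this by requiring, as an~invariant, that whenever a~non-trivial lazy update is present at a~cluster $C$ with midpoint $x$, the selected pair in $N_x$ equals the midpoint pair of $C$; this is achieved by flushing the lazy update eagerly whenever a~$\Merge{}$/$\Split{}$ would change the midpoint pair, and by ensuring during transient expose that any cluster whose midpoint pair in $\Uc$ differs from the selected pair in $N_x$ (with respect to $\Tc$) carries an~empty lazy slot --- this holds automatically because such clusters are freshly merged transient clusters of $\Uc$, for which the summary is computed using a~single $\Level{}$ call whose normalized cost is absorbed by $\ncost_\Tc(C)$. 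With this invariant in place, every lazy propagation can be executed with $\BigO(1)$ selected-pair calls, so the per-operation bounds stated in the lemma follow by combining the $\Merge{}$/$\Split{}$ accounting above with \Cref{lem:top-tree-telescope} and \Cref{lem:transient-expose-lemma}.
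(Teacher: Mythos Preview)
Your overall plan matches the paper's: maintain $\cover_C$ per path cluster, encode pending $\Cover{}$/$\UniformUncover{}$ calls in a lazy slot, and resolve every non-structural operation through a transient expose, with the selection invariant ensuring that lazy-update propagation at the midpoint of a \emph{non-transient} cluster can always use $\SelectedLevel{}$/$\LongZip{}$/$\LongUnzip{}$. Where you diverge is only in implementation detail. For $\MinCoveredPair{}$ you descend a vertical path in $\Uc$; the paper instead precomputes and stores $\argcover_{C,v}$ during each merge, so after the transient expose the answer is simply read off the root. For $\Cover{}$/$\UniformUncover{}$ you write the lazy update into the transient root and let unexpose flush it, whereas the paper recurses explicitly through the transient clusters and writes lazy updates only into the non-transient children it reaches --- this is precisely why the paper can (and does) enforce ``transient clusters never hold lazy updates'' as a hard rule. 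Your version is equivalent but conflicts with your own stated invariant; what you actually need is that propagation at a \emph{transient} midpoint uses $\Level{}$/$\Zip{}$ (cost $\BigO(\ncost_\Tc(C))$, summed via \Cref{lem:transient-expose-lemma}) for $\Cover{}$, and for $\UniformUncover{}$ exploits the restricted assumption (\Cref{lem:restricted-interface}) so that even transient midpoints on $p\ldots q$ have their pair selected and $\LongUnzip{}$ applies --- it is not ``one-time'' but once per transient midpoint.

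Also, drop $\LocalUncover{}$ from the lazy slot entirely: it does not compose cleanly with the path-level lazy updates, and none of that machinery is needed. The paper simply transiently exposes the single vertex $x$ (so no cluster has $x$ internal to its cluster path), calls $N_x.\Unzip{e_1,e_2,i}$ once, and unexposes. Finally, a small slip: in case \CasePointToPathPoint the parent $C$ is a point cluster, not ``a path cluster whose cluster path equals that of $A$'', though your conclusion (no $\cover$-summary needed) is still correct.
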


\Cref{lem:ds-tree-cover-levels} satisfies the time bounds prescribed by \Cref{lem:top-tree-reduction} for $\Link{}$, $\Cut{}$, $\Expose{}$, $\Cover{}$, $\UniformUncover{}$, $\LocalUncover{}$, $\CoverLevel{}$ and $\MinCoveredPair{}$.
The remaining operations ($\FindSize{}$, $\Mark{}$, $\Unmark{}$, $\FindFirstReach{}$, $\FindStrongReach{}$) are not implemented here; the implementations of these operations will be given in the following sections, assuming access to the basic data structure \textsc{CL} from \Cref{lem:ds-tree-cover-levels}.

An~observant reader may notice that the time complexity bounds on the operations claimed by \Cref{lem:tree-structure}, excluding the calls to the neighborhood data structures, are slightly better than those stated in \Cref{lem:top-tree-reduction} (precisely by a~factor of $\hat{T} = \BigO(T(\lmax) \cdot \log \lmax)$).
This phenomenon has a~simple explanation: the extensions of the base data structure to be presented in the following sections will slightly degrade the efficiency of cluster merges.
Namely, on each cluster merge --- whether it is a~part of a~transient (un)expose or a~structural update of a~top tree --- we will need to perform some additional bookkeeping, which will increase the base time complexity of each transient expose, transient unexpose, link, cut, and expose from $\BigO(\log n)$ to $\BigO(\log n \cdot \hat{T})$.


The remaining part of \Cref{ssec:tree-structure-coverlevels} is devoted to the proof of \Cref{lem:ds-tree-cover-levels}.
We refer the reader to the implementation of the data structure given in \Cref{ssec:tree-impl-cover}.

\subsubsection{Contents of the cover level data structure}

Given two adjacent edges $e_1, e_2$ with a~common endpoint $x$, let $c(e_1e_2)$ denote the cover level of the pair $e_1, e_2$, i.e., the largest integer $i$ such that $e_1$ and $e_2$ are in the same equivalence class of $\mathcal{L}^x_i$.
(If $e_1, e_2$ are in different equivalence classes of $\mathcal{L}^x_0$, we define that $c(e_1e_2) = -1$.)
Then for each path cluster $C$, we store the following value $\cover_C$:
\[
  \cover_C \coloneqq \min \left( \{c(e_1e_2) \,\mid\, e_1e_2 \subseteq \pi(C)\} \,\cup\, \{\lmax\} \right).
\]
Moreover, if $\bnd C = \{u, v\}$, let $\argcover_{C,u}$ be the pair of edges $e_1e_2 \subseteq \pi(C)$ closest to $u$ such that $\cover_C = c(e_1e_2)$, and $\argcover_{C,v}$ be the analogous pair of edges closest to $v$.
Here, we place $\argcover_{C, u} = \argcover_{C, v} = \bot$ if $\pi(C)$ has fewer than $2$ edges.

If $C$ is a~nonleaf path cluster, then $C$ is a~merge of two path clusters $A, B$ and a~point cluster $P$ (see case \CasePath).
Hence $\cover_C$ can be computed efficiently given $A$ and $B$: assuming that $\bnd C = \{v, w\}$, $\bnd A = \{v, x\}$, $\bnd B = \{x, w\}$, we have
\[
  \cover_C = \min \{ \cover_{A},\, \cover_{B},\, c(\firstedge_{A,x}\firstedge_{B,x}) \};
\]
and $\argcover_{C,\, \cdot}$ can be computed analogously.
Note that $c(\firstedge_{A,x}\firstedge_{B,x})$ can be computed by calling $N_x.\Level{\firstedge_{A,x}, \firstedge_{B,x}}$.

Observe now that the cover level queries can be resolved by transiently exposing the path $p \ldots q$ in $\mathcal{T}$.
This temporarily produces a~top tree $\mathcal{T}'$ containing a~root path cluster $C$ with $\bnd C = \{p, q\}$.
Then $\CoverLevel{p, q} = \cover_C$ and $\MinCoveredPair{p, q} = \argcover_{C, p}$.
After computing the answer to a~query we transiently unexpose the queried path.

\paragraph{Cover and uniform uncover}
In order to facilitate updates through $\Cover{}$ and $\UniformUncover{}$, for each path cluster $C$ with $\bnd C = \{v, w\}$ we additionally store two integers $\coverfrom_C, \covertop_C \in \{-1, 0, \ldots, \lmax\}$ representing pending lazy updates, with the following semantics:

\medskip

\emph{
  Assume that the cover level of the path $v \ldots w$ before performing the pending lazy updates was exactly $\coverfrom_C$.
  Then $\covertop_C \geq \max\{\coverfrom_C, \cover_C\}$ and the cluster path has pending lazy updates
  \[ \Cover{v, w, i} \quad\text{for } i = \coverfrom_C + 1, \ldots, \covertop_C \]
  followed by
  \[ \UniformUncover{v, w, i} \quad\text{for } i = \covertop_C, \ldots, \cover_C + 1. \]
  The lazy updates have been already applied to the cluster $C$ (so the cover level of $\pi(C)$ is $\cover_C$), but not to the descendants $C$.
}

\medskip

Note that the sequence of lazy updates above causes each pair $e_1e_2 \subseteq v \ldots w$ at cover level at most $\covertop_C$ to have cover level precisely $\cover_C$. On the other hand, the pairs of edges at cover level strictly above $\covertop_C$ preserve their level.

Lazy updates are propagated from cluster $C$ of the top tree to its children whenever $C$ is split (either permanently or transiently).
The propagation is facilitated by the following observation: let $v, w$ be two different vertices of the tree and let $i \in \{0, \ldots, \lmax\}$ be such that $c(v\ldots w) \geq i$ (so that the call to $\UniformUncover{v, w, i}$ is legal).
Then the sequence of updates $\UniformUncover{v, w, i}, \Cover{v, w, i}$ leaves the cover levels of the pairs of all edges unchanged.

So suppose that a~path cluster $C$ is split.
This requires us to apply to each path child $D$ of $C$ the lazy updates pending in $C$; note that $D$ may have some pending updates itself, so we have to combine the sequence of lazy updates in $D$ with the sequence of lazy updates pending in $C$.
Note that we have $\cover_D \geq \coverfrom_C$, since $\cover_D$ and $\coverfrom_C$ denote the cover levels of $\pi(D)$ and $\pi(C)$, respectively, before applying the updates pending in $C$.
Then:
\begin{itemize}
  \item If $\covertop_C < \cover_D$, then no lazy updates pending in $C$ influence the cover levels of any pair of edges on the cluster path of $D$, so the values $\cover_D$, $\coverfrom_D$, $\covertop_D$ are left unchanged.
  \item If $\covertop_C \geq \cover_D$, then the lazy updates pending in $C$, when applied after the lazy updates pending in $D$, will cause all pairs of consecutive edges of $\pi(D)$ at cover level at most $\max\{\covertop_C, \covertop_D\}$ to be at cover level $\cover_C$.
  Hence we can set $\cover_D \coloneqq \cover_C$, $\covertop_D \coloneqq \max\{\covertop_C, \covertop_D\}$.
  This correctly propagates lazy updates from $C$ to $D$.
\end{itemize}

Also, if $C$ has two path clusters $D_1, D_2$ as children (case \CasePath in \Cref{fig:toptree-cases}, with $\bnd C = \{v, w\}$, $\bnd D_1 = \{v, x\}$, $\bnd D_2 = \{x, w\}$), we need to propagate the lazy updates to the pair of edges on $\pi(C)$ incident to $x$.
To this end, let $e_1 = \firstedge_{D_1,x}$ and $e_2 = \firstedge_{D_2,x}$ be the edges of $\pi(C)$ incident to $x$ that are part of cluster paths $\pi(D_1), \pi(D_2)$, respectively.
Note that by \Cref{lem:interface-firstedge}, we are guaranteed that $e_1, e_2$ are the two selected edges in the neighborhood data structure $N_x$; so we first determine $c(e_1e_2)$ by calling $N_x.\SelectedLevel$.
Then, if $c(e_1e_2) \leq \covertop_C$, we adjust $c(e_1e_2)$ to $\cover_C$ by calling either $\LongZip{c(e_1e_2), \cover_C}$ if $c(e_1e_2) < \cover_C$, or $\LongUnzip{c(e_1e_2), \cover_C}$ if $c(e_1e_2) > \cover_C$.


\subparagraph{Remark}
\Cref{lem:interface-firstedge} may not hold if $C$ is a~transient cluster created temporarily during a~transient expose.
Due to that, we specify that transient nodes cannot hold lazy updates, i.e., we require that $\cover_C = \coverfrom_C = \covertop_C$ for all such clusters $C$.

\bigskip

We move to the implementation of $\Cover{p, q, i}$.
First suppose that the set of external boundary vertices of $T$ happens to be $\{p, q\}$.
Let $R$ be the root path cluster of the top tree, with $\bnd R = \{p, q\}$.
The preconditions of $\Cover{}$ guarantee that $\cover_R \geq i - 1$.
Hence if $\cover_R = i - 1$, we set $\cover_R \coloneqq i$ and $\covertop_R \coloneqq \max\{\covertop_R, i\}$.

In the general case, we begin by transiently exposing the path $p \ldots q$.
This temporarily rebuilds a~prefix of the top tree, producing in particular a~transient root path cluster $R$ with $\bnd R = \{p, q\}$.
We then cover the path recursively: let $C$ be a~path cluster with $\pi(C) \subseteq p \ldots q$; initially, $C = R$.
If $C$ comes from the original top tree $\Tc$ (i.e., it is not transient), we repeat our procedure above: if $\cover_C = i - 1$, we set $\cover_C \coloneqq i$ and $\covertop_C \coloneqq \max\{\covertop_C, i\}$.
On the other hand, if $C$ is transient, we cannot apply lazy updates to $C$.
Since $C$ is a~path cluster, it has two path clusters $A, B$ and a~point cluster $P$ as children; assume that $\bnd C = \{v, w\}$, $\bnd A = \{v, x\}$, $\bnd B = \{x, w\}$, $\bnd P = \{x\}$.
We recursively cover the cluster paths of $A$ and $B$; it only remains to update, if necessary, the cover level of the pair of edges $e_1 \coloneqq \firstedge_{A,x}$, $e_2 \coloneqq \firstedge_{B,x}$.
So if $N_x.\Level{e_1, e_2} = i - 1$, we run $N_x.\Zip{e_1, e_2}$.
Finally, after the recursive scheme finishes, we transiently unexpose the path $p \ldots q$.

\smallskip

$\UniformUncover{p, q, i}$ is similar: if the set of external boundary vertices is $\{p, q\}$, then we are guaranteed that $\cover_R \geq i$.
If $\cover_R = i$, we set $\cover_R \coloneqq i - 1$.
In the general case, we will exploit the fact that the cover level data structure is \emph{restricted}, meaning that $p \ldots q$ is a~subpath of the currently exposed path $a \ldots b$.
We will use it through the following observation:

\begin{lemma}
  \label{lem:restricted-interface}
  Let $vx$, $xw$ be two consecutive edges of the path $a \ldots b$.
  Then $\interface_x = \{\vec{xv}, \vec{xw}\}$.
\end{lemma}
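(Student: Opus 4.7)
The plan is to read off $\interface_x$ directly from \Cref{lem:interface-firstedge}, after pinning down the structure of the enclosing cluster $I^\Tc_x$. Since the data structure is restricted and currently exposes the path $a \ldots b$, we have $\bnd T = \{a, b\}$. By the top tree convention, the root then consists of $R_a$, $R_b$, and the slim path cluster $R_{ab}$ whose cluster path $\pi(R_{ab})$ is exactly $a \ldots b$. Because $vx$ and $xw$ are consecutive edges of $a \ldots b$, the vertex $x$ is a proper internal vertex of $a \ldots b$, so $x \in R_{ab} \setminus \bnd R_{ab}$ and in particular $x \notin \bnd T$.

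Next I would descend from $R_{ab}$ toward $I_x$, maintaining the invariant that the current cluster $C$ is a slim path cluster whose cluster path $\pi(C)$ is a subpath of $a \ldots b$ containing both edges $vx$ and $xw$. The invariant holds at $C = R_{ab}$. As long as $x$ is internal to $C$, the cluster $C$ is a non-leaf path cluster and therefore splits according to case \CasePath of \Cref{fig:toptree-cases} into path children $A, B$ and a point child $P$, with some midpoint $y \in \pi(C)$. If $y = x$ we have found $I_x = C$ and stop; otherwise $x$ is internal to whichever of $A$ or $B$ contains $x$ on its cluster path, and we continue the descent into that path child. Crucially, the invariant is preserved, since the cluster paths of $A$ and $B$ together cover $\pi(C)$ and meet only at $y \neq x$, so the child we descend into still contains both $vx$ and $xw$ on its cluster path. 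In particular, the descent never leaves the world of path clusters, so the case \CasePointToPathPoint alternative of \Cref{lem:interface-firstedge} never applies at $I_x$.

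It follows that $I_x$ splits according to case \CasePath with $x$ as its midpoint, into path children $A, B$ (and a point child $P$) such that $\pi(A) \cup \pi(B) \subseteq a \ldots b$ and $\pi(A) \cap \pi(B) = \{x\}$. The final bullet of \Cref{lem:interface-firstedge} then yields $\interface_x = \{\firstedge_{A,x}, \firstedge_{B,x}\}$. Since the only edges of $a \ldots b$ incident to $x$ are $vx$ and $xw$, the two first-edges must be $\vec{xv}$ and $\vec{xw}$ (in some order), giving $\interface_x = \{\vec{xv}, \vec{xw\}}$ as claimed.

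The main obstacle I expect is precisely the descent invariant: making sure that none of the clusters on the path from $R_{ab}$ to $I_x$ is a point cluster, so that \Cref{lem:interface-firstedge} is applied in its two-element form and not its single-element form. This is where the slimness of path clusters and the fact that case \CasePath is the only split type for nonleaf path clusters do the real work; everything else is a quick consequence of \Cref{lem:interface-firstedge}.
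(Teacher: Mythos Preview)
Your argument is correct and follows essentially the same route as the paper: start at the root path cluster $R_{ab}$ with boundary $\{a,b\}$, descend recursively through case~\CasePath splits until the midpoint equals $x$, and then read off $\interface_x$ from \Cref{lem:interface-firstedge}. Your write-up makes the descent invariant explicit where the paper leaves it implicit, but the approach is the same. (One cosmetic nit: you have a stray brace in ``$\vec{xw\}}$''.)
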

\begin{proof}
  Let $R_{\text{orig}}$ be the root path cluster of the top tree, with $\bnd R_{\text{orig}} = \{a, b\}$.
  Split $R_{\text{orig}}$ recursively into progressively smaller path clusters (via case \CasePath of \Cref{fig:toptree-cases}) until producing a~path cluster $C = I_x$.
  We necessarily have $\pi(C) \subseteq a \ldots b$ and hence, by \Cref{lem:interface-firstedge}, we also have $\interface_x = \{\vec{xv}, \vec{xw}\}$.
\end{proof}

We again begin by transiently exposing the path $p \ldots q$, and follow by describing a~recursive scheme for uncovering a~cluster path of a~path cluster $C$ with $\pi(C) \subseteq p \ldots q$.
If $C$ comes from the original top tree $\Tc$, we test whether $\cover_C = i$, and if so, we set $\cover_C \coloneqq i - 1$.
Otherwise, let $A, B, P$ be the children of $C$ in the top tree, with $x = \bnd A \cap \bnd B$.
We recursively uncover the cluster paths of $A$ and $B$, and it remains to uncover, if necessary, the pair of edges $e_1 \coloneqq \firstedge_{A,x}$, $e_2 \coloneqq \firstedge_{B,x}$.
Observe now that since $e_1e_2 \subseteq p \ldots q \subseteq a \ldots b$, we have $\interface_x = \{e_1, e_2\}$ by \Cref{lem:restricted-interface}; hence, if $N_x.\SelectedLevel{\xspace} = i$, we apply $N_x.\LongUnzip{i, i - 1}$. 

\paragraph{Local uncover}
We are left to implement $\LocalUncover{e_1, e_2, i}$.
Let $e_1 = px, e_2 = xq$ and recall that the pair $e_1e_2$ is at cover level precisely $i$.
Observe now that the update at hand may change the values $\cover_C$, $\argcover_C$ only for those clusters $C$ for which $x$ is an~internal vertex of the cluster path $\pi(C)$.
Therefore, we transiently expose the vertex $x$.
This way, the resulting transient top tree has no clusters with $x$ as an~internal vertex.
This allows us to update the cover levels by simply calling $N_x.\Unzip{e_1, e_2, i}$.
Finally, we perform a~transient unexpose.


\subsubsection{Complexity analysis of the data structure}
We now analyze the time complexity guarantees of all elements of our data structure.

\paragraph{Transient exposes and unexposes}
Observe that each split of a~hidden cluster under a~transient expose requires the recomputation of several values and pointers to edges and at most one call to $\LongZip{}$ or $\LongUnzip{}$ in the corresponding neighborhood data structure.
A~split of a~transient cluster during a~transient unexpose requires no additional bookkeeping since no lazy updates are ever stored in such clusters.
The total number of hidden and transient clusters is $\BigO(\log n)$ by \Cref{lem:transient-expose-lemma}(\ref{item:tel-expose-is-quick}), so all splits can be done in total time $\BigO(\log n)$.

Now consider a~merge of a~cluster $C$ (either a~transient cluster during a~transient expose or a~hidden cluster during a~transient unexpose).
This recomputes several values and pointers to edges and, if $C$ is a~path cluster with path children $A, B$ and midpoint $x$, issues at most one call to $N_x.\Level{\firstedge_{A,x}, \firstedge_{B,x}}$.
These operations are performed, by \Cref{lem:neighborhood-ds} and the definition of a~cluster cost, in total time $\BigO(\ncost_\Tc(C))$.
Hence by \Cref{lem:transient-expose-lemma}(\ref{item:tel-original-prefix-cost}) and \Cref{lem:transient-expose-lemma}(\ref{item:tel-new-prefix-cost}), all merges can be performed in total time $\BigO(\log n)$, satisfying the requirements of \Cref{lem:ds-tree-cover-levels}.

\paragraph{Structural updates (link, cut and expose)}
Each $\Link{}$, $\Cut{}$, and $\Expose{}$ is processed by the top trees data structure in worst-case time $\BigO(\log n)$ and produces a~sequence of $\BigO(\log n)$ splits and merges.
However, each split or merge additionally updates the set of interface edges incident to a~constant number nodes of the tree, as well as the weights of a~constant number of nodes.
Hence, on top of the computations above, we also need to additionally perform $\Select{}$ and $\SetWeight{}$ in neighborhood data structures a~constant number of times per split or merge.
Each call to the neighborhood data structure has normalized cost at most $\BigO(\log n)$, so all in all, each structural update takes time $\BigO(\log n)$, plus calls to the neighborhood data structures of total normalized cost $\BigO(\log^2 n)$.

\paragraph{Connectivity query}
Each $\Connected{}$ query is resolved entirely by top trees; this takes worst-case time $\BigO(\log n)$.

\paragraph{Cover level queries}
Both $\CoverLevel{}$ and $\MinCoveredPair{}$ resolve to a~single transient expose and reading appropriate information from the resulting transient root clusters.
This obviously satisfies the bounds of \Cref{lem:ds-tree-cover-levels}.

\paragraph{Path cover level updates}
For $\Cover{}$, we perform a~single transient expose, but now we additionally issue calls of the form $\Level{}$ and $\Zip{}$ in the neighborhood data structure.
Precisely, if $C$ is a~transient path cluster with path cluster children $A$, $B$ and midpoint $x$, then we call $N_x.\Level{\firstedge_{A,x}, \firstedge_{B,x}}$ and $N_x.\Zip{\firstedge_{A,x}, \firstedge_{B,x}}$.
This calls have total normalized cost $\BigO(\ncost_\Tc(C))$; hence all calls to the neighborhood data structures across the invocation of $\Cover{}$ have total normalized cost $\BigO(\log n)$ by \Cref{lem:transient-expose-lemma}(\ref{item:tel-new-prefix-cost}).

For $\UniformUncover{}$, the argument is even simpler: past a~transient expose, we only issue $\BigO(\log n)$ calls of the form $\SelectedLevel{}$ and $\LongUnzip{}$ to the neighborhood data structures, hence these calls have total normalized cost $\BigO(\log n)$.

\paragraph{Local uncover}
Finally, $\LocalUncover{}$ requires a~single transient expose.
This again is consistent with \Cref{lem:ds-tree-cover-levels}.


\subsection[Counting i-reachable vertices]{Counting $i$-reachable vertices}
\label{ssec:tree-structure-counting}
In this section we will implement $\FindSize{}$ in our tree cover level data structure.
We will first describe additional information we will store for each cluster and show how this information is maintained, and then we will present how to use this information to implement $\FindSize{}$.
Namely, we will prove the following statement:

\begin{lemma}
  \label{lem:ds-tree-find-size}
  The data structure \textsc{CL} from \Cref{lem:ds-tree-cover-levels} can be extended to a~restricted tree cover level data structure \textsc{CR} additionally supporting $\FindSize{}$ so that:
  \begin{itemize}
    \item Each transient expose and unexpose takes additionally worst-case $\BigO(\log n \cdot T(\lmax) \log \lmax)$ time, plus calls to $\UpdateCounters{}$ and $\SumCounters{}$ in the neighborhood data structures of worst-case total normalized cost $\BigO(\log n)$.
    \item $\FindSize{}$ takes worst-case $\BigO(T(\lmax) \log n)$ time, plus a~single transient expose and unexpose, and calls to $\Level{}$ and $\SumCounters{}$ in the neighborhood data structures of worst-case total normalized cost $\BigO(\log n)$.
  \end{itemize}
\end{lemma}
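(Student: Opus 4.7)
The plan is to augment each cluster $C$ of the top tree with an~auxiliary counter vector $\totalcntvec_C$, and in parallel to maintain, for every cluster edge $\vec{vu}$ of the current top tree $\Tc$, the counter $c^{\vec{vu}}$ in the neighborhood data structure $N_v$ so that $c^{\vec{vu}}_i$ equals the number of vertices in $T[\vec{vu}]$ that are $i$-reachable from $\vec{vu}$ through $u$; interface edges at $v$ will be kept at the all-zero counter, so that a~query $N_x.\SumCounters{\firstedge_{A,x}}$ will --- as described in the overview --- sum precisely over the cluster-edge ``hanging'' branches at $x$ that are $\sim_i$-equivalent to $\firstedge_{A,x}$ at the queried level. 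For a~path cluster $C$ with $\bnd C = \{v, w\}$, $(\totalcntvec_C)_i$ will count the vertices in $V(C) \setminus \{v, w\}$ that are $i$-reachable from $\pi(C)$ via a~witness in~$C$; for a~point cluster $C$ with $\bnd C = \{v\}$, $(\totalcntvec_C)_i$ will be the aggregate $\sum_{\vec{vu} \in E(C) \cap \cluster_v} c^{\vec{vu}}_i$.

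I will derive the composition rules through a~case analysis on the top-tree splits of \Cref{fig:toptree-cases}. The central case is \CasePath, where $C$ has path children $A, B$ and a~point child $P$ with midpoint $x$; then
\[
  (\totalcntvec_C)_i = (\totalcntvec_A)_i + (\totalcntvec_B)_i + 1 + \mathrm{hang}_x(i),
\]
where the ``$+1$'' is for $x$ itself and $\mathrm{hang}_x$ counts $i$-reachable vertices in $V(P) \setminus \{x\}$. Since $\firstedge_{A,x}$ and $\firstedge_{B,x}$ are $\sim_i$-equivalent exactly when $i \leq \ell \coloneqq N_x.\SelectedLevel{}$, I will assemble $\mathrm{hang}_x$ by splicing $N_x.\SumCounters{\firstedge_{A,x}}$ (on the indices $i \leq \ell$) with the coordinate-wise sum $N_x.\SumCounters{\firstedge_{A,x}} + N_x.\SumCounters{\firstedge_{B,x}}$ (on the indices $i > \ell$). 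Cases \CasePointToPathPoint and \CasePointToPointPoint are handled analogously and are simpler. Whenever a~merge creates a~cluster $C = T[\vec{vu}]$ corresponding to a~cluster edge, or a~split destroys one, I synchronize $c^{\vec{vu}}$ in $N_v$ via $\UpdateCounters{}$; the converse operation restores the previous value. Computing $c^{\vec{vu}}$ for a~newly-formed cluster edge is more delicate, as the requested quantity is reachability from a~single-direction starting edge rather than from the entire cluster path; I will handle this by maintaining, on each path cluster, a~small counter matrix that encodes reachability counts along every prefix of the cluster path, in the spirit of the prefix-sum approach of~\cite{Holm18a}. The matrix primitives of \Cref{lem:counter-matrices} then let me execute each merge or split in $\BigO(T(\lmax) \log \lmax)$ bookkeeping time plus $\BigO(1)$ calls to $\SelectedLevel{}$, $\SumCounters{}$ and $\UpdateCounters{}$ of normalized cost $\BigO(\ncost_\Tc(C))$.

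For $\FindSize{p, q, i}$, I will call $\TransientExpose{p, q}$, producing a~transient root path cluster $R$ with $\bnd R = \{p, q\}$, and report the answer as $(\totalcntvec_R)_i + 2$ (the $+2$ accounting for the boundary vertices $p, q$ themselves), concluding with $\TransientUnexpose{}$. The complexity analysis mirrors the template of \Cref{lem:ds-tree-cover-levels}: summing the $\BigO(T(\lmax) \log \lmax)$ per-cluster bookkeeping cost over the $\BigO(\log n)$ merges and splits of the transient (un)expose yields the claimed $\BigO(\log n \cdot T(\lmax) \log \lmax)$ time, while the total normalized cost of the $\SumCounters{}$ and $\UpdateCounters{}$ calls telescopes to $\BigO(\log n)$ by \Cref{lem:transient-expose-lemma}. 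The principal technical obstacle is the compositional design at the midpoint in case \CasePath --- reconciling the cover-level-dependent relation between $\firstedge_{A,x}$ and $\firstedge_{B,x}$ with the hang-count, and designing sufficient auxiliary data on each path cluster so that $c^{\vec{vu}}$ for every cluster edge can be recomputed from a~constant number of primitives within the stated budget; both are reducible to vector splicing and the matrix operations of \Cref{lem:counter-matrices}.
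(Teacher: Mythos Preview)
Your plan is essentially the paper's approach: maintain $\clustercntvec$ on cluster edges inside $N_v$ (zeroed on interface edges), keep $\totalcntvec_C$ on path clusters, and keep an auxiliary counter matrix on each path cluster so that directional reachability from $\firstedge_{C,v}$ (and hence $\clustercntvec$) can be refreshed at each merge of type \CasePointToPathPoint. The midpoint composition in case \CasePath is also correct in spirit.

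There are, however, two concrete errors. First, your final step for $\FindSize{p,q,i}$ is wrong. In the convention you are citing (\Cref{fig:toptree-cases}, \Cref{lem:transient-expose-lemma}), path clusters are \emph{slim}, so after $\TransientExpose{p,q}$ the root consists of \emph{three} clusters $R_p, R_q, R_{pq}$, with $p,q$ leaves of $R_{pq}$. Hence $(\totalcntvec_{R_{pq}})_i + 2$ counts only vertices in $V(R_{pq})$; you miss every vertex $y$ with $\meet(p,q,y)\in\{p,q\}$, i.e.\ all of $R_p$ and $R_q$. The correct answer adds $\bigl(N_p.\SumCounters{\firstedge_{R_{pq},p}}\bigr)_i + \bigl(N_q.\SumCounters{\firstedge_{R_{pq},q}}\bigr)_i$; these two extra calls are exactly why the lemma's second bullet lists $\SumCounters{}$ among the operations performed by $\FindSize{}$ itself.

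Second, you cannot obtain $\ell = c(\firstedge_{A,x},\firstedge_{B,x})$ via $N_x.\SelectedLevel{}$. The whole point of a \emph{transient} expose is that the selection invariant is \emph{not} refreshed: for a transient cluster with midpoint $x$, the selected pair in $N_x$ is still $\interface^{\Tc}_x$, which need not be $\{\firstedge_{A,x},\firstedge_{B,x}\}$. You must call $N_x.\Level{\firstedge_{A,x},\firstedge_{B,x}}$, whose normalized cost is still $\BigO(\ncost_\Tc(C))$ --- and this is why the lemma lists $\Level{}$. A smaller omission: your ``prefix'' matrix must stay consistent under the lazy $\Cover{}/\UniformUncover{}$ updates inherited from \textsc{CL}. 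The paper handles this by storing the matrix in pre-lazy form $\diagcntvec^\star_{C,v}$, indexed by the cover level $j$ with $\firstedge_{C,v}$ (not by prefix length), and reconstructing the post-lazy matrix on the fly via a splice and a $\matrixsum$; make sure your parameterization admits a comparably cheap lazy transformation.
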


The pseudocode implementation of the counting extension can be found in \Cref{ssec:tree-impl-count}.

In the following description we assume that at the point of the query, the tree $T$ is represented by a~top tree $\Tc$, with the neighborhood data structures holding the types of edges and weights defined with respect to $\Tc$.
The first step of the query $\FindSize{p, q, i}$ is, however, to transiently expose the pair $\{p, q\}$ in $T$.
This causes $\Tc$ to be temporarily replaced with another top tree $\Uc$, also representing $T$, though the neighborhood data structures still hold the weights defined with respect to $\Tc$.

\paragraph{Counters for neighborhood data structures}
For every cluster edge $\vec{vw} \in \cluster^\Uc_v$ we store a~counter vector $\clustercntvec_{\vec{vw}}$, where $\clustercnt_{\vec{vw}, i}$ for $i \in \{0, \ldots, \lmax\}$ denotes the number of vertices in $T[\vec{vw}]$ that are $i$-reachable from $vw$.
This counter vector is stored together with the element $\vec{vw}$ of the neighborhood data structure $N_v$.
Meanwhile, for every interface edge $\vec{vw}$ with respect to $\Uc$, we store the all-zero counter vector ${\bf 0}$ for the element $\vec{vw}$ of $N_v$.
Whenever any counter vector $\clustercntvec_{\vec{vw}}$ is updated, it is stored in $N_v$ through the call $N_v.\UpdateCounters{vw, \clustercntvec_{\vec{vw}}}$.

We will access the counter vectors in $N_v$ using $\SumCounters{}$ through the following lemma.

\begin{lemma}
  \label{lem:excl-sum-counters}
  Let $v \in V(T)$ be a~vertex of $T$ currently represented by a~top tree $\Uc$.
  For any $A_+, A_- \subseteq \interface^\Uc_v$, we can determine a~counter vector ${\bf c}$ such that, for every $i \in \{0, \ldots, \lmax\}$, $c_i$ is the number of vertices in $\bigcup_{e \in \cluster^\Uc_v} T[e]$ that are $i$-reachable from at least one edge of $A_+$, but not $i$-reachable from any edge of $A_-$.
  This query takes worst-case time $\BigO(T(\lmax))$, plus queries to the neighborhood data structures of worst-case total normalized cost
  \begin{equation}
    \label{eq:reach-query-time}
    \BigO\left(\sum_{e \in A_+ \cup A_-} \left(1 + \log \frac{\Weight_\Tc(v)}{\Weight_\Tc(e)} \right) \right).
  \end{equation}
\end{lemma}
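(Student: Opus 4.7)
The plan is to leverage the small size of $\interface^\Uc_v$ --- since $|\interface^\Uc_v| \leq 2$, we have $|A_+ \cup A_-| \leq 2$ --- together with the semantics of $\SumCounters{}$ and the choice of counter vectors on cluster and interface edges. Recall that cluster edges $\vec{vw} \in \cluster^\Uc_v$ carry $\clustercntvec_{\vec{vw}}$, while interface edges carry ${\bf 0}$. Thus, for any $e \in \interface^\Uc_v$, if we let ${\bf s}^e \coloneqq N_v.\SumCounters{e}$, then $s^e_i = \sum_{\vec{vw} \in \cluster^\Uc_v,\, \vec{vw} \sim_i e} \clustercnt_{\vec{vw}, i}$. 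Since a~vertex $y \in T[\vec{vw}]$ is $i$-reachable from $e$ in $T$ precisely when $\vec{vw} \sim_i e$ and $y$ is $i$-reachable from $\vec{vw}$, the value $s^e_i$ equals exactly the number of vertices of $\bigcup_{e' \in \cluster^\Uc_v} T[e']$ that are $i$-reachable from $e$.

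The next step is to combine these vectors to form the desired vector ${\bf c}$. I would do an explicit case analysis on $|A_+|$ and $|A_-|$, discarding the trivial case $A_+ = \emptyset$ (return ${\bf 0}$) and assuming WLOG $A_+ \cap A_- = \emptyset$. If $A_+ \cup A_- = \{e\}$ is a~singleton, the answer is ${\bf s}^e$ if $e \in A_+$ and ${\bf 0}$ otherwise. If $A_+ \cup A_- = \{e_1, e_2\}$, I would first obtain the level $j$ of the pair, either through $N_v.\Level{e_1, e_2}$ or, when $\{e_1, e_2\} = \interface^\Uc_v$, through the cheaper $N_v.\SelectedLevel{}$ (this is the case covered by the $\SelectedLevel{}$ invariant). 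Observe that for $i \leq j$ the two equivalence classes coincide, so $s^{e_1}_i = s^{e_2}_i$; while for $i > j$ the classes are disjoint, and any vertex $i$-reachable from $e_1$ is not $i$-reachable from $e_2$ and vice versa. Hence the required vector ${\bf c}$ can be assembled via \emph{splicing} at position $j+1$ of suitably chosen vectors:
\begin{itemize}
  \item If $A_+ = \{e_1, e_2\}$: ${\bf c} = \vectorsplice{{\bf s}^{e_1}}{j+1}{{\bf s}^{e_1} + {\bf s}^{e_2}}$.
  \item If $A_+ = \{e_1\}$, $A_- = \{e_2\}$: ${\bf c} = \vectorsplice{{\bf 0}}{j+1}{{\bf s}^{e_1}}$.
\end{itemize}
No subtraction is ever needed, which is crucial given that counter vectors do not support it.

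Finally, for the complexity analysis, the procedure performs $\BigO(1)$ vector operations (splicing and element-wise addition) at cost $\BigO(T(\lmax))$ in total, $\BigO(1)$ level queries of normalized cost $\BigO\bigl(\sum_{e \in A_+ \cup A_-} (1 + \log \Weight_\Tc(v)/\Weight_\Tc(e))\bigr)$, and at most $|A_+ \cup A_-|$ calls to $\SumCounters{}$, each of normalized cost $1 + \log \Weight_\Tc(v)/\Weight_\Tc(e)$. Summing matches the claimed bound~\eqref{eq:reach-query-time}. The main subtlety to verify carefully is the interpretation of $\SumCounters{e}$ for an~interface edge $e$: the sum ranges over \emph{all} $y \sim_i e$ (including interface siblings), but the zero-counter convention on interface edges guarantees that only cluster edges contribute, and hence the combinatorial interpretation holds.
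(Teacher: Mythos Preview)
Your proof is correct and uses essentially the same mechanism as the paper: compute $N_v.\SumCounters{e}$ for the relevant interface edges, determine the pairwise cover level, and assemble the answer by \emph{splicing} at that level so that no counter subtraction is needed. The paper phrases this as an induction on $|A_+|$ (yielding ${\bf c} = {\bf c}_{\neq e} + \vectorsplice{{\bf 0}}{p+1}{{\bf c}_e}$ where $p$ is the maximum level of $e$ with the remaining edges of $A_+\cup A_-$), explicitly remarking that the bounded size $|A_+\cup A_-|\le 2$ would also permit a direct case analysis like yours; the two presentations unfold to the same computations.

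One small caution: your suggestion to use $N_v.\SelectedLevel{}$ when $\{e_1,e_2\}=\interface^\Uc_v$ is not quite safe. The selection invariant in $N_v$ is maintained with respect to the \emph{original} top tree $\Tc$, not the transiently exposed $\Uc$, so $\interface^\Uc_v$ need not equal the selected pair. Since you already allow falling back to $N_v.\Level{e_1,e_2}$, and its normalized cost is exactly what~\eqref{eq:reach-query-time} budgets, this does not affect correctness or the stated bound; just drop the $\SelectedLevel{}$ shortcut.
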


We leave a~few technical remarks before proving \Cref{lem:excl-sum-counters}.
First, it might be enticing to reduce \Cref{lem:excl-sum-counters} to the case where $A_- = \emptyset$ by observing that $c_i = x_i - y_i$, where $x_i$ (resp., $y_i$) is the number of vertices in $\bigcup_{e \in \cluster^\Uc_v} T[e]$ that are $i$-reachable from at least one edge of $A_+ \cup A_-$ (resp., $A_-$).
However, this reduction cannot be applied here since counters in counter vectors cannot be subtracted.
The technical contribution of \Cref{lem:excl-sum-counters} is to present how ${\bf c}$ can be determined using only the operations supported by counter vectors.
Second, even though the sets $A_+$, $A_-$ have total bounded size ($|A_+ \cup A_-| \leq |\interface^\Uc_v| \leq 2$), we find it more illuminating to give a~proof that proceeds inductively on the size of $A_+$.

\begin{proof}
  Induction on the size of $A_+$.
  If $A_+ = \emptyset$, then the answer is ${\bf 0}$.
  Now assume that $|A_+| \geq 1$ and pick $e \in A_+$.
  Let ${\bf c}_{\neq e}$ be the recursively computed counter vector for the set $A_+ \setminus \{e\}$, and let ${\bf c}_e = N_v.\SumCounters{e}$.
  Observe that a~vertex $x \in T[f]$ for a~cluster edge $f \in \cluster^\Uc_v$ is $i$-reachable from $e$ but not from $(A_+ \cup A_-) \setminus \{e\}$ if all of the following conditions hold:
  \begin{itemize}
    \item $x$ is $i$-reachable from $f$;
    \item the cover level of the pair $e, f$ is at least $i$;
    \item the cover level of all pairs $e', e$ for $e' \in (A_+ \cup A_-) \setminus \{e\}$ is strictly less than $i$. (Equivalently, assuming the previous condition, the cover level of all pairs $e', f$ is strictly less than $i$.)
  \end{itemize}
  The number of vertices $i$-reachable from $f$ in $T[f]$ is simply $\clustercnt_{f,i}$, and so the number of vertices that satisfy the first two conditions above is $c_{e,i}$.
  Then, letting $p$ to be the maximum cover level of $e$ with all the remaining edges of $A_+ \cup A_-$ (which can be determined by $|A_+ \cup A_-| - 1$ calls $N_v.\Level{e, \cdot}$), we find that the number of vertices satisfying all three conditions is $c_{e,i}$ if $i > p$, or $0$ otherwise.
  
  Therefore, the final counter vector ${\bf c}$ is precisely ${\bf c} = {\bf c}_{\neq e} + [{\bf 0} : (p+1) : {\bf c}_e]$.
  Since $|A_+|, |A_-| \in \BigO(1)$, it can be verified that the total query time is bounded from above by $\BigO(T(\lmax))$, plus queries to the neighborhood data structures of total normalized cost \eqref{eq:reach-query-time}.
\end{proof}
Extending the notation slightly, we say that $N_v.\SumCounters{A_+, A_-}$ denotes the operation described in Lemma~\ref{lem:excl-sum-counters}.
We will frequently have $A_- = \emptyset$ and in this case we will use the shorthand notation $N_v.\SumCounters{A_+}$.

\paragraph{Counters for path clusters}
Let $C$ be a~(slim) path cluster with $\bnd C = \{v, w\}$.
We say that a~vertex $y \in C$ such that $\meet(v, w, y) \notin \{v, w\}$ is \emph{$(\geq j, i)$-reachable from $\firstedge_{C,v}$} for $i, j \in \{0, \ldots, \lmax\}$ if there exists an~edge $e \in \pi(C)$ incident to $\meet(v, w, y)$ such that $c(\firstedge_{C,v}, e) \geq j$ and $y$ is $i$-reachable from $e$.
(Note there are two edges on $\pi(C)$ incident to $\meet(v, w, y)$; one separates $y$ from $v$, and the other separates $y$ from $w$.)
We also say that $y \in C$ is \emph{$(=j, i)$-reachable from $\firstedge_{C,v}$} if it is $(\geq j, i)$-reachable but not $(\geq j+1, i)$-reachable from $\firstedge_{C,v}$.
Then we define the following counter vectors.
\begin{itemize}
  \item $\totalcntvec_C$, where $\totalcnt_{C, i}$ for $i \in \{0, \ldots, \lmax\}$ is the number of vertices $y \in C$ such that $y \notin \{v, w\}$ and $y$ is $i$-reachable from $v \ldots w$;
  \item $\totalcntvec_{C,v}$, where $\totalcnt_{C, v, i}$ is the number of vertices $y \in C$ such that $y \notin \{v, w\}$ and $y$ is $i$-reachable from $\firstedge_{C, v}$; and $\totalcntvec_{C,w}$, defined analogously;
  \item $\diagcntvec_{C,v,j}$ for $j \in \{0, \ldots, \lmax\}$, where $\diagcnt_{C,v,j,i}$ is the number of vertices $y \in C$ such that $y \notin \{v, w\}$ and $y$ is $(=j, i)$-reachable from $\firstedge_{C,v}$.
  We also define $\diagcntvec_{C,w,j}$ analogously.
  Clearly, we can represent $\diagcntvec_{C,v}$ and $\diagcntvec_{C,w}$ as counter matrices.
\end{itemize}

For technical reasons, we only store $\totalcntvec_C$ explicitly and do not store $\totalcntvec_{C, \cdot}$ or $\diagcntvec_{C, v}$.
Instead, we maintain the counter matrices $\diagcntvec^\star_{C,v}$ and $\diagcntvec^\star_{C,w}$ that represent the corresponding values of $\diagcntvec$ in $C$ \emph{before} the application of pending lazy updates, if any, to $C$.
Then the corresponding counter matrices $\diagcntvec_{C,v}$, $\diagcntvec_{C,w}$ representing the state of $C$ \emph{after} applying these updates can be determined ``on the fly'' from $\diagcntvec^\star_{C,v}$ as:
\[
  \diagcnt_{C,v,j,i} =
  \begin{cases}
    0 & \text{if }j \leq \covertop_C\text{ and }j \neq \cover_C, \\
    \sum_{j' = 0}^{\covertop_C} \diagcnt^\star_{C,v,j',i} & \text{if }j = \cover_C, \\
    \diagcnt^\star_{C,v,j,i} & \text{if }j > \covertop_C.
  \end{cases}
\]

Also we have that $\totalcnt_{C,v,i} = \sum_{j = i}^{\lmax} \diagcnt_{C,v,j,i}$.
Observe thus that both $\diagcntvec$ and $\totalcntvec$ can be computed from $\diagcntvec^\star$ in a~constant number of counter vector and matrix operations, which takes time $\BigO(T(\lmax) \log \lmax)$ by \Cref{lem:counter-matrices}.

\paragraph{Computing counters}
The values of all counters described above are computed for a~cluster during a~merge.
For leaf clusters, we can see that all counter vectors are identically zero.
We now distinguish cases on how a~cluster $C$ splits into child clusters $A$, $B$ (and possibly a~point cluster $P$).
\begin{itemize}
  \item If $\bnd A = \bnd B = \bnd C = \{v\}$ (case \CasePointToPointPoint in \Cref{fig:toptree-cases}), no counters need to be computed or updated.
  \item If $\bnd A = \{v, x\}$, $\bnd B = \{x\}$, $\bnd C = \{v\}$ (case \CasePointToPathPoint), then $C$ is the deepest cluster in the top tree that entirely contains $T[\firstedge_{A,v}]$; hence the vector $\clustercntvec_{\firstedge_{A,v}}$ should be updated.
  Recall that $\interface^\Uc_x = \{\firstedge_{A,x}\}$.
  Let ${\bf c}^x = N_x.\SumCounters{\{\firstedge_{A,x}\}}$, so that $c^x_i$ is the number of vertices $y \in \bigcup_{\vec{xx'} \neq \firstedge_{A,x}} T[\vec{xx'}]$ that are $i$-reachable from $\firstedge_{A,x}$.
  Then we can compute $\clustercnt_{\firstedge_{A,v}, i}$ for $i \in \{0, \ldots, \lmax\}$ as the sum of the following values:
    \begin{itemize}
      \item $\totalcnt_{A, v, i}$ (the number of vertices $y$ with $\meet(v, x, y) \notin \{v, x\}$ that are $i$-reachable from $\firstedge_{A, v}$);
      \item $1 + c^x_i$ if $\cover_A \geq i$ (the number of vertices $y$ with $\meet(v, x, y) = x$ that are $i$-reachable from $\firstedge_{A, v}$).
      The $+1$ summand comes from the vertex $y = x$.
    \end{itemize}
    Hence $\clustercntvec_{\firstedge_{A,v}}$ can be computed from $\totalcntvec_{A,v}$ and ${\bf c}^x$ in a~constant number of counter vector operations.
  \item If $\bnd A = \{v, x\}$, $\bnd B = \{x, w\}$, $\bnd P = \{x\}$ and $\bnd C = \{v, w\}$ (case \CasePath), then we first calculate counter vectors:
    \begin{itemize}
      \item ${\bf c}^{AB} = N_x.\SumCounters{\{\firstedge_{A,x}, \firstedge_{B,x}\}}$;
      \item ${\bf c}^A = N_x.\SumCounters{\{\firstedge_{A,x}\}}$ (and ${\bf c}^B$ analogously);
      \item ${\bf c}^{B \setminus A} = N_x.\SumCounters{\{\firstedge_{B,x}\}, \{\firstedge_{A,x}\}}$ (and ${\bf c}^{A \setminus B}$ analogously).
    \end{itemize}
  Noting that $\interface^\Uc_x = \{\firstedge_{A,x}$, $\firstedge_{B,x}\}$, we can compute $\totalcnt_{C,i}$ for $i \in \{0, \ldots, \lmax\}$ as the sum of:
  \begin{itemize}
    \item $\totalcnt_{A, i}$,
    \item $\totalcnt_{B, i}$,
    \item $1 + c^{AB}_i$.
  \end{itemize}
  (These values represent the number of vertices $y$ that are $i$-reachable from $v \ldots w$ and such that $\meet(v, w, y)$ is, respectively: strictly between $v$ and $x$, strictly between $x$ and $w$, and equal to $x$.)
  Hence $\totalcntvec_C$ is computed from $\totalcntvec_A$, $\totalcntvec_B$, and ${\bf c}^{AB}$ within a~constant number of counter vector operations.
  
  Then, letting $\ell_{AB}$ be the cover level of the pair of edges $\firstedge_{A,x}$, $\firstedge_{B,x}$ (i.e., $\ell_{AB} = N_x.\Level{\firstedge_{A,x}, \firstedge_{B,x}}$), 
  we compute $\diagcnt^\star_{C, v, j, i}$ as the sum of the following values:
  \begin{itemize}
    \item $\diagcnt_{A, v, j, i}$,
    \item $\diagcnt_{B, x, j, i}$ if $\min\{\cover_A, \ell_{AB}\} > j$,
    \item $\sum_{j'=j}^{\lmax} \diagcnt_{B, x, j', i}$ if $\min\{\cover_A, \ell_{AB}\} = j$,
    \item $1 + c^A_i$ if $\cover_A = j$ and $\ell_{AB} < j$,
    \item $1 + c^{AB}_i$ if $\cover_A = j$ and $\ell_{AB} \geq j$,
    \item $c^{B \setminus A}_i$ if $\cover_A > j$ and $\ell_{AB} = j$.
  \end{itemize}
  (Again, the values $\diagcnt_{A, v, \cdot, \cdot}$; the values $\diagcnt_{B, v, \cdot, \cdot}$; and the values $(1 +) c^\star_i$ represent the number of vertices $y$ that are $i$-reachable from $\firstedge_{A,v}$ and such that $\meet(v, w, y)$ is, respectively: strictly between $v$ and $x$; strictly between $x$ and $w$; and equal to $x$.)

  For efficiency, we also observe that $\diagcnt_{A,v,j,i} = 0$ for $j < \cover_A$. This allows us to express the sum of $\diagcntvec_{A, v}$ and the part of $\diagcntvec_{B, x}$ for $j < \min\{\cover_A, \ell_{AB}\}$ as the result of a matrix splicing operation $\matrixsplice{\diagcntvec_{B, x}}{\min\{\cover_A, \ell_{AB}\}}{\diagcntvec_{A, v}}$ and avoid an~expensive element-wise addition of matrices.
  This enables us to compute $\diagcntvec_{C,v}$ in a~constant number of counter vector and matrix operations; see \Cref{ssec:tree-impl-count} for technical details.
  Then $\diagcntvec^\star_{C,w}$ is computed analogously.
\end{itemize}

\paragraph{Time complexity of a~merge}
By Lemma~\ref{lem:counter-matrices}, for any path cluster $D$ and $v \in \bnd D$, both $\diagcntvec_{D,v}$ and $\totalcntvec_{D,v}$ can be computed on the fly from $\diagcntvec^\star_{D,v}$ in $\BigO(T(\lmax) \log \lmax)$ time.
After this computation, we can assume access to these values in child clusters $A$, $B$ of $C$.

In each case of the merge, we call $N_x.\SumCounters{}$ a~constant number of times, each time with some set of interface edges for $x$ with respect to $\Uc$.
Hence by Lemma~\ref{lem:excl-sum-counters}, all these calls take total time $\BigO(T(\lmax))$, plus calls to the neighborhood data structures of total normalized cost $\BigO\left(\sum_{e \in \interface^\Uc_x} \log \frac{\Weight_\Tc(x)}{\Weight_\Tc(e)} \right)$.

If $\bnd A = \{v, x\}$, $\bnd B = \{x\}$, $\bnd C = \{v\}$ (case \CasePointToPathPoint), the counter vector $\clustercntvec_e$ is computed from vectors $\totalcntvec_{A,v}$ and ${\bf c}^x$ in a~constant number of vector operations in time $\BigO(T(\lmax))$.
Moreover, we call $N_x.\SumCounters{\{\firstedge_{A,x}\}}$ and $N_v.\UpdateCounters{\firstedge_{A,v}, \cdot}$.
By \Cref{lem:excl-sum-counters,lem:neighborhood-ds}, these take total time $\BigO(T(\lmax))$, plus calls to the neighborhood data structures of total normalized cost $\BigO(\ncost_\Tc(C))$.

Finally, for the case $\bnd A = \{v, x\}$, $\bnd B = \{x, w\}$, $\bnd C = \{v, w\}$ (case \CasePath), the counter vector $\totalcntvec_C$ is computed in a~constant number of vector operations from $\totalcntvec_{A}$, $\totalcntvec_{B}$ and ${\bf c}^{AB}$. And $\diagcntvec^\star_{C,v}$ is determined in a~constant number of matrix and vector operations from $\diagcntvec_{A,v}$, $\diagcntvec_{B,x}$, ${\bf c}^A$, ${\bf c}^B$ and ${\bf c}^{AB}$.
This again can be done in time $\BigO(T(\lmax) \cdot \log \lmax)$.
Additionally, we call $N_x.\SumCounters{}$ a~constant number of times, which takes time $\BigO(T(\lmax))$, plus calls to the neighborhood data structures of total normalized cost $\BigO(\ncost_\Tc(C))$.

Summing up, a~single merge of $C$ takes time $\BigO(T(\lmax) \cdot \log \lmax)$, plus calls to the neighborhood data structures of total normalized cost $\BigO(\ncost_\Tc(C))$.
Therefore, by the Transient Expose Lemma (\Cref{lem:transient-expose-lemma}), each transient expose and unexpose takes additionally time $\BigO(\log n \cdot T(\lmax) \log \lmax)$, plus calls to the neighborhood data structures of total normalized cost $\BigO(\log n)$.


\paragraph{Resolving $\FindSize{p, q, i}$}
Armed with the set of counters defined above, it is straightforward to determine the answer to the size query: ``\emph{count the number of vertices of $T$ that are $i$-reachable from any edge on the path $p \ldots q$}''.
Namely, upon a~transient expose of the path $p \ldots q$, we hold a~transient top tree $\Uc$ representing the tree $T$.
Recall that $\Uc$ consists, in fact, of three root clusters: a~slim path cluster $C_{pq}$ with $\bnd C_{pq} = \{p, q\}$ and two point clusters $C_p, C_q$ with $\bnd C_p = \{p\}$, $\bnd C_q = \{q\}$.
We now find the result of $\FindSize{p, q, i}$ to be the sum of the following values:

\begin{itemize}
    \item $\totalcnt_{C_{p,q}, i}$: the number of vertices $y$ with $\meet(y,p,q) \notin \{p, q\}$ that are $i$-reachable from $p \ldots q$;
    \item $1 + \left(N_p.\SumCounters{\{\firstedge_{C_{p,q},p}\}}\right)_i$: the number of vertices $y$ with $\meet(y,p,q) = p$ that are reachable from $p \ldots q$, including $y = p$;
    \item $1 + \left(N_q.\SumCounters{\{\firstedge_{C_{p,q},q}\}}\right)_i$: the analogous count of vertices $y$ satisfying $\meet(y, p, q) = q$.
\end{itemize}

Afterwards, we restore the original set of exposed vertices.
It is easy to see that the whole process requires one transient expose and transient unexpose and additionally time $\BigO(\log n)$, plus calls to the neighborhood data structures of total normalized cost $\BigO(\log n)$.
This satisfies the requirements of \Cref{lem:ds-tree-find-size}.

\subsection[Implementing FindFirstReach and FindStrongReach]{Implementing $\FindFirstReach{}$ and $\FindStrongReach{}$}
\label{ssec:tree-structure-findfirst}
We now move to the implementation of \FindFirstReach{} and \FindStrongReach{}.
Recall that in $\FindFirstReach{p, q, i}$, we have to return a~tuple $(ab, c, y)$, where $ab \in E(p \ldots q)$ with $a$ closer than $b$ to $p$, $c \in \{a, b\}$ and $y$ is an~$i$-marked vertex $i$-reachable from $p \ldots q$ at $ab$ through $c$. Ties are broken by first minimizing the distance between $p$ and $a$ in $T$, and then minimizing the distance between $p$ and $c$.
On the other hand, in $\FindStrongReach{p, q, e, b, i}$, we are required to find an~$i$-marked vertex $y$ that is \emph{strongly} $i$-reachable from $p \ldots q$ at $e$ through $b$.
(In both cases, $\bot$ is to be returned when no such $i$-marked vertex exists.)

As previously, we assume that $T$ is represented by a~top tree $\Tc$, and the weights of the edges in the neighborhood data structures are defined with respect to $\Tc$.
Again, at the start of each query, we transiently expose $\{p, q\}$, causing $\Tc$ to be temporarily replaced with another top tree $\Uc$, but preserving the weights of the edges in the neighborhood data structures as defined with respect to~$\Tc$.

Both types of queries will be resolved in two phases.
First, we will determine whether the answer (a~sought $i$-marked vertex $y$) exists.
In the positive case, we will additionally determine the values of $a, b, c$ (in case of \FindFirstReach{}), as well as a~\emph{container} of $y$.
A~container will be a~nonempty set $Y$ of vertices with the property that every $y \in Y$ is a~correct answer to a~query at hand.
A~container will be of one of the following types:

\begin{itemize}
  \item an~\emph{explicit container} of the form $Y = \{y\}$;
  \item an~implicitly-defined \emph{cluster edge container}, defined by a~cluster edge $\vec{vw}$ with respect to $\Uc$, so that $p, q \notin T[\vec{vw}]$ and $Y$ is the set of $i$-marked vertices $y \in T[\vec{vw}]$ that are $i$-reachable from $vw$;
  \item an~implicitly-defined \emph{path cluster container}, defined by a~tuple $(C, v, w)$, where $C$ is a~path cluster of $\Uc$ with $\bnd C = \{v, w\}$ and $E(C) \cap E(p \ldots q) = \emptyset$, so that $Y$ is the set of $i$-marked vertices $y \in C$ such that $\meet(v, w, y) \notin \{v, w\}$ and $y$ is $i$-reachable from $\firstedge_{C, v}$.
\end{itemize}

In the first case, we can immediately resolve the query by returning $y$.
Otherwise, a~second phase is necessary in which we will recursively \emph{refine} the given container to produce an~example vertex $y \in Y$.
We delay the implementation of this refinement scheme to \Cref{ssec:implicit-refine}; there, we will prove the following statement.

\begin{lemma}
  \label{lem:container-refinement}
  Given an~implicitly-defined container $Y$, we can determine a~vertex $y \in Y$ in worst-case time $\BigO(\log n)$, plus queries to the neighborhood data structures of the form $\Level{}$, $\SelectedLevel{}$, $\OrMarks{}$, $\FindMarked{}$ of worst-case total normalized cost $\BigO(\log n)$.
\end{lemma}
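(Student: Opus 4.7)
My plan is to maintain bit-vector analogues of the counter bookkeeping from \Cref{ssec:tree-structure-counting}: for each cluster edge $\vec{vw}$ a~vector $\bmarksvec_{\vec{vw}}$ whose bit $i$ is true iff some $i$-marked vertex of $T[\vec{vw}]$ is $i$-reachable from $vw$, and for every path cluster $C$ with $\bnd C = \{v, w\}$ the vectors $\totalmarksvec_C, \totalmarksvec_{C,v}, \totalmarksvec_{C,w}$ together with bit-matrices $\diagmarksvec_{C,v}, \diagmarksvec_{C,w}$ (and their pre-lazy-update companions $\diagmarksvec^\star_{C,v}, \diagmarksvec^\star_{C,w}$) defined exactly as in \Cref{ssec:tree-structure-counting} but with bitwise OR in place of addition. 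These are computed at each merge via the same case analysis, using $\OrMarks{}$ at the midpoint in place of $\SumCounters{}$. By \Cref{lem:bit-matrices}, this adds only a~constant-factor overhead to merges; in particular the transient (un)expose guarantees analogous to \Cref{lem:ds-tree-find-size} continue to hold for marks.

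Given an implicit container, I would refine it by recursive descent along a~single root-to-leaf vertical path of $\Uc$, interleaving two regimes. For a~\emph{path-cluster container} $(C, v, w)$ I would read $\diagmarks_{C,v,\cdot,i}$ to pick a~diagonal index $j^\star$ witnessing $Y \neq \emptyset$, then consult the compositional decomposition of $\diagmarks^\star_{C,v,j^\star,i}$ --- mirroring the counter recurrence in \Cref{ssec:tree-structure-counting} --- to decide whether the witness lies in the path child $A$ (in which case I recurse into path-cluster container $(A, v, x)$), equals the midpoint $x$ (returned explicitly), or lies on the $x$-side of the path child $B$ or of the point child $P$; in the last case I issue $\FindMarked{}$ at $N_x$ with argument $\firstedge_{A,x}$ or $\firstedge_{B,x}$ (both selected items of $N_x$) to extract a~specific cluster edge $\vec{xx'} \in \cluster_x$ carrying a~witness, and recurse into the cluster-edge container~$\vec{xx'}$. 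For a~\emph{cluster-edge container} $\vec{vw}$, the point cluster $T[\vec{vw}]$ splits, possibly after a~chain of case~\CasePointToPointPoint splits at~$v$ transparently resolved by a~single $\FindMarked{}$ at $N_v$, via case~\CasePointToPathPoint into a~path child $A$ and a~point child $B$ with common vertex $x$; the analogous compositional formula for $\bmarks_{\vec{vw},i}$ tells me whether to recurse into path-cluster container $(A, v, x)$, return $x$ itself, or descend into $B$ through another $\FindMarked{}$ at $N_x$ that jumps straight to a~leaf cluster edge of $\cluster_x$ carrying a~witness.

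Each recursive step either terminates with an explicit vertex, descends to a~direct child of the current cluster in $\Uc$, or --- via a~$\FindMarked{}$ call --- jumps directly to a~strictly deeper cluster in $\Uc$; hence the clusters visited during refinement lie on a~single vertical path of~$\Uc$. At each visited cluster $C'$ the work consists of $\BigO(1)$ bit-vector accesses (each $\BigO(B(\lmax)) = \BigO(1)$ time in the word RAM model) together with $\BigO(1)$ calls of $\Level{}$, $\SelectedLevel{}$, $\OrMarks{}$, $\FindMarked{}$ to the neighborhood data structure at the midpoint of $C'$, each of normalized cost $\BigO(\ncost_\Tc(C'))$. Summing along the vertical path and invoking \Cref{lem:transient-expose-lemma}(\ref{item:tel-new-vertical-cost}) bounds the total running time by $\BigO(\log n)$ and the total normalized cost of neighborhood-data-structure queries by $\BigO(\log n)$, as required.

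The main obstacle I anticipate is verifying that, throughout the recursion, each recursive sub-container faithfully certifies the $i$-reachability restriction carried down from its parent: for example, when the case analysis for $(C,v,w)$ places the witness on the $B$-side of the midpoint $x$, one must argue that the cluster edge $\vec{xx'}$ retrieved through $\FindMarked{}$ at $N_x$ really carries a~vertex $i$-reachable from $\firstedge_{C,v}$ and not merely $i$-reachable from some edge of $\cluster_x$. The encoding in $\diagmarksvec^\star$ --- and in particular its interplay between the level index $j$, the path cover-level $\cover_A$, and the selected level $\ell_{AB} = N_x.\Level{\firstedge_{A,x}, \firstedge_{B,x}}$ --- is designed exactly to encode this conjunction, but the verification is a~tedious repetition of the counter case analysis from \Cref{ssec:tree-structure-counting} with marks in place of counts.
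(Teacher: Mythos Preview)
Your high-level strategy --- maintaining bit-vector analogues of the counters and refining by recursive descent along a vertical path of $\Uc$ --- matches the paper. Two genuine gaps remain.

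\textbf{The second path child is never entered.} In your path-cluster case you enumerate three outcomes: recurse into $A$, return $x$, or call $\FindMarked{}$ at $N_x$. But $\FindMarked{}$ at $N_x$ ranges only over edges with tail $x$, and interface edges carry the all-zero mark vector; so it can only return a cluster edge $\vec{xx'} \in \cluster_x$, pointing into the point child $P$. If the witness $y$ lies in the path child $B$ (i.e.\ $\meet(v,w,y)$ is strictly between $x$ and $w$), no call at $N_x$ will locate it. The paper adds a fourth branch: when $\cover_A \geq i$, $N_x.\Level{\firstedge_{A,x},\firstedge_{B,x}} \geq i$ and $\totalmarks_{B,x,i}$ holds, recurse into the path-cluster container $(B,x,w)$. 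Your compositional formula for $\diagmarks^\star_{C,v,j^\star,i}$ does contain the $B$-side terms, so the omission is in the case split, not in the bookkeeping.

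\textbf{Recovery cost of $\FindMarked{}$ is not accounted for.} You assert that each neighborhood call at the midpoint of $C'$ has normalized cost $\BigO(\ncost_\Tc(C'))$. That bounds only the base cost: when $\FindMarked{}$ returns an edge $\vec{xy}$, its normalized cost carries an extra $\log\frac{\Weight_\Tc(x)}{\Weight_\Tc(\vec{xy})}$, which need not be bounded by $\ncost_\Tc(C')$. The paper charges this recovery cost to the \emph{next} visited cluster $C'' = T[\vec{xy}]$: since $C''$ splits via case~\CasePointToPathPoint with $\firstedge_{C'',x} = \vec{xy}$, one has $\ncost_\Tc(C'') \geq 1 + \log\frac{\Weight_\Tc(x)}{\Weight_\Tc(\vec{xy})}$, and then \Cref{lem:transient-expose-lemma}(\ref{item:tel-new-vertical-cost}) sums these along the vertical path.

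Two smaller points. Tracking a diagonal index $j^\star$ via $\diagmarks$ is unnecessary: the paper keeps only the simpler invariant ``$i$-marked and $i$-reachable from $\firstedge_{C,v}$'' and reads the branch directly off $\totalmarks_{A,v,i}$, $\bmarks_{x,i}$, $\totalmarks_{B,x,i}$. And no ``chain of case~\CasePointToPointPoint splits'' arises for a cluster-edge container $\vec{vw}$: in $T[\vec{vw}]$ the vertex $v$ is a leaf, so that cluster splits via case~\CasePointToPathPoint immediately; the paper locates it in $\BigO(1)$ time via the stored pointer $\smallestpoint_{\vec{vw}}$, with no $\FindMarked{}$ at $N_v$ involved.
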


Assuming \Cref{lem:container-refinement}, we will now prove the following statement.
Recall from \Cref{ssec:prelims-counters} that $B(\lmax)$ denotes the time required to perform an~operation on a~bit vector and that $B(\lmax) \leq T(\lmax)$.

\begin{lemma}
  \label{lem:tree-cover-findfirst}
  The data structure \textsc{CL} from \Cref{lem:ds-tree-cover-levels} can be extended to a~restricted tree cover level data structure \textsc{FR} additionally supporting $\Mark{}$, $\Unmark{}$, $\FindFirstReach{}$ and $\FindStrongReach{}$ so that:
  \begin{itemize}
    \item Each transient expose and unexpose takes additionally worst-case $\BigO(\log n \cdot B(\lmax) \log \lmax)$ time, plus calls to $\UpdateMarks{}$ and $\OrMarks{}$ in the neighborhood data structures of worst-case total normalized cost $\BigO(\log n)$.
    \item $\Mark{}$ and $\Unmark{}$ takes worst-case constant time, plus one transient expose and unexpose.
    \item $\FindFirstReach{}$ and $\FindStrongReach{}$ takes worst-case $\BigO(\log n)$ time, plus one transient expose and unexpose, and calls to $\Level{}$, $\OrMarks{}$ and $\FindMarked{}$ in the neighborhood data structures of worst-case total normalized cost $\BigO(\log n)$.
  \end{itemize}
\end{lemma}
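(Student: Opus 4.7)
The plan is to mirror the development of the counting extension from \Cref{ssec:tree-structure-counting}, replacing counter vectors with bit vectors and counter matrices with bit matrices, and exploiting \Cref{lem:bit-matrices} in place of \Cref{lem:counter-matrices}. Specifically, for each cluster edge $\vec{vw}$ with respect to~$\Uc$, I will store a~bit vector $\bmarksvec_{\vec{vw}}$ whose $i$-th entry indicates whether $T[\vec{vw}]$ contains an~$i$-marked vertex that is $i$-reachable from $vw$. This bit vector is stored inside $N_v$ via the marking extension, so that $N_v.\OrMarks{A_+}$ (together with an~analogue of \Cref{lem:excl-sum-counters} for bit vectors, whose proof is identical except that all summations become bitwise ORs) gives us, for any $A_+ \subseteq \interface^\Uc_v$, a~bit vector indicating the existence of $i$-marked $i$-reachable vertices through some edge of $A_+$. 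For each path cluster $C$ with $\bnd C = \{v, w\}$ I will maintain bit vectors $\totalmarksvec_C$, $\totalmarksvec_{C,v}$, $\totalmarksvec_{C,w}$ and bit matrices $\diagmarksvec^\star_{C,v}$, $\diagmarksvec^\star_{C,w}$, with semantics identical to $\totalcntvec$, $\diagcntvec^\star$ but recording existence of $i$-marked $i$-reachable vertices instead of their count. The composition rules during a~merge become the exact Boolean analogues of those in \Cref{ssec:tree-structure-counting}, so each merge costs $\BigO(B(\lmax) \log \lmax)$ time plus neighborhood queries of total normalized cost $\BigO(\ncost_\Tc(C))$. Using the Transient Expose Lemma, this yields the stated per-(un)expose overhead of $\BigO(\log n \cdot B(\lmax) \log \lmax)$ time plus $\BigO(\log n)$ normalized cost on neighborhood structures.

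For $\Mark{u, i}$ and $\Unmark{u, i}$, I will flip the $i$-th bit of a~vertex-local mark vector stored at $u$ inside a~transient expose of $\{u\}$. While $u$ is exposed, it is not internal to any cluster, so no cluster-level bit vector needs to be updated; on the transient unexpose, the merging procedure will correctly fold $u$'s mark into $\bmarksvec_{\vec{\cdot u}}$ (when $u$ becomes the head of a~cluster edge) and $\totalmarksvec_{\cdot}$ (when $u$ becomes an~internal midpoint), via the same compositional formulas as for counting, with the extra summand $\{u\}$ contributing exactly $\bmarksvec_u$ ORed into the relevant positions. The cost is dominated by the transient expose/unexpose, matching the claim.

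For $\FindFirstReach{p, q, i}$ and $\FindStrongReach{p, q, e, b, i}$, I first do a~transient expose of $\{p, q\}$, obtaining root clusters $C_{pq}, C_p, C_q$. The existence test is resolved by ORing the three quantities $\totalmarks_{C_{pq},i}$, $1 + (N_p.\OrMarks{\{\firstedge_{C_{pq},p}\}})_i$ conditioned on $p$ being $i$-marked, and the symmetric term at~$q$. If no marked reachable vertex exists, we return $\bot$; otherwise we recursively descend to locate a~container of type explicit, cluster edge, or path cluster. The descent walks a~single root-to-leaf path: at each path cluster $C$ with children $A, B, P$ (case \CasePath) we consult $\totalmarksvec_A$, then $\diagmarksvec^\star_{B,x}$ combined with the cover level between $\firstedge_{A,x}$ and $\firstedge_{B,x}$, then $\totalmarksvec_B$, in this order, picking the leftmost child yielding a~positive answer, and similarly for the $\argcover$-style tiebreaking on $c$. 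When this descent bottoms out either in a~path cluster with no internal splitting we can exploit, or in a~single cluster edge, we invoke \Cref{lem:container-refinement}. For $\FindStrongReach{}$, the descent begins by locating the cluster whose cluster path is $e$ and whose boundary is oriented toward~$b$, then descends into the subtree opposite $b$, ensuring that the first two edges of the witness have cover level at least $i+1$ by consulting $\diagmarksvec^\star_{\cdot}$ with the threshold $i+1$ instead of $i$. In both queries the descent visits $\BigO(\log n)$ clusters on a~single vertical path of~$\Uc$; at each visited cluster we issue $\BigO(1)$ calls of the form $\Level{}$ or $\OrMarks{}$, each of normalized cost $\BigO(\ncost_\Tc(C))$, so the Vertical Path Telescoping Lemma (\Cref{lem:top-tree-telescope}, and its extension via \Cref{lem:transient-expose-lemma}(\ref{item:tel-new-vertical-cost})) bounds the total normalized cost of the descent by $\BigO(\log n)$. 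The subsequent container refinement step adds another $\BigO(\log n)$ normalized cost via \Cref{lem:container-refinement}, and the final transient unexpose closes the operation within the stated bounds.

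The main obstacle I expect is the precise definition of the descent rule for $\FindFirstReach{}$: the tiebreaking order first minimizes $\dist_F(a, p)$ and then $\dist_F(c, p)$, while the bit-vector compositions for $\diagmarksvec^\star_{C,v}$ aggregate over all reachability levels simultaneously. Care is needed to ensure that, at each step of the descent into a~path cluster, the bit-vector information read off stores enough directional information to always identify the unique leftmost valid edge $ab$ and the unique $c \in \{a, b\}$ closer to~$p$; in particular, the check of whether a~marked vertex hangs off the $v$-side or the $w$-side of the midpoint $x$ must be derivable from $\totalmarksvec_{A,v}$, $\totalmarksvec_{B,x}$ together with $(N_x.\OrMarks{\cdot})$, without requiring more refined per-level information than the one already maintained. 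A~second subtlety is that, unlike counters, bit vectors naturally support both the "at least one marked reachable vertex'' question and the stronger level-threshold question needed by $\FindStrongReach{}$; this is handled uniformly by expressing both conditions as slices of the matrices $\diagmarksvec^\star_{C,\cdot}$ starting at level $i$ or $i+1$ respectively, so no additional auxiliary structure is required.
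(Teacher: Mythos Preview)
Your overall plan of mirroring the counting extension with bit vectors and bit matrices is exactly what the paper does, and your treatment of $\Mark{}$/$\Unmark{}$ and of the transient-expose overhead is correct and matches the paper.

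There is, however, a genuine gap in your handling of $\FindStrongReach{}$. You propose to ``express both conditions as slices of the matrices $\diagmarksvec^\star_{C,\cdot}$ starting at level $i$ or $i+1$ respectively'' and claim ``no additional auxiliary structure is required''. This does not work: strong $i$-reachability is \emph{not} the same as $(i+1)$-reachability, nor any uniform threshold condition that $\diagmarksvec^\star$ can encode. The definition requires cover level $\geq i+1$ only on the \emph{first} pair of edges $(e,\,b\,s^y(b))$ of the witness, while the remainder of the witness needs only cover level $\geq i$. Concretely, after transiently exposing $\{p,q\}$, what you must locate is a cluster edge $\vec{bx}\in\cluster^\Uc_b$ such that the pair $(e,bx)$ has cover level at least $i+1$ in $N_b$ \emph{and} $\ismarked_{\vec{bx},i}$ (not $\ismarked_{\vec{bx},i+1}$) is true. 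The matrices $\diagmarksvec^\star_{C,\cdot}$ live on path clusters and encode reachability from $\firstedge_{C,\cdot}$; they are the wrong object for this mixed-level query at the vertex $b$. The paper resolves this by maintaining a \emph{second, shifted} copy $N^+_b$ of each neighborhood data structure, in which the mark vector of each edge $\vec{bx}$ is $\ismarkedvec^+_{\vec{bx}}$ with $\ismarked^+_{\vec{bx},j+1}=\ismarked_{\vec{bx},j}$. Then a single call $N^+_b.\FindMarked{e,\,i+1}$ returns exactly an edge $\vec{bx}$ with $bx\sim_{i+1} e$ and $\ismarked_{\vec{bx},i}$ true, after which a cluster-edge-container refinement (\Cref{lem:container-refinement}) finishes the job. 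This extra structure is the missing idea.

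A smaller imprecision: in your $\FindFirstReach{}$ descent you enumerate the children of a path cluster as ``$\totalmarksvec_A$, then $\diagmarksvec^\star_{B,x}$ \dots, then $\totalmarksvec_B$'', but you never explain how you detect that the sought vertex hangs off the midpoint $x$ inside the point child $P$, nor how you then produce a cluster-edge container. The paper does this by, at each step, additionally testing $\bmarks_{x,i}$ and calling $N_x.\FindMarked{\firstedge_{A,x},i}$ and $N_x.\FindMarked{\firstedge_{B,x},i}$ (in that order, for the tiebreak on $c$); the descent short-circuits the first time any $\FindMarked{}$ returns an edge, so only one call incurs recovery cost and the telescoping bound still applies. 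Your outline omits these $\FindMarked{}$ calls during the descent and pushes everything to \Cref{lem:container-refinement}, but that lemma \emph{takes} a container as input --- you still need one $\FindMarked{}$ at the midpoint to manufacture it.
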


The implementation of \FindFirstReach{} and \FindStrongReach{} phase will follow in the most part the blueprint of \FindSize{}, so in this description we will omit some proofs and implementation details that are immediate adaptations of their counterparts in \Cref{ssec:tree-structure-counting}.
We refer the reader to \Cref{ssec:tree-impl-find} for the pseudocode of the operations described in this section.

From now on we assume that the neighborhood data structures implement the marking extension.

\paragraph{Bit vectors for neighborhood data structures}
For every cluster edge $\vec{vw}$ with respect to $\Uc$ we store a~bit vector $\ismarkedvec_{\vec{vw}}$.
For $i \in \{0, \ldots, \lmax\}$, the value $\ismarked_{\vec{vw},i}$ specifies whether any $i$-marked vertex in $T[\vec{vw}]$ is $i$-reachable from $vw$.
We assume that $\ismarkedvec_{\vec{vw}} = {\bf 0}$ for interface edges $\vec{vw}$.
The vector $\ismarkedvec_{\vec{vw}}$ is the mark vector ${\bf b}^{vw}$ stored in the neighborhood data structure $N_v$, hence whenever $\ismarkedvec_{\vec{vw}}$ is updated, it is stored in $N_v$ via the call $N_v.\UpdateMarks{vw, \ismarkedvec_{\vec{vw}}}$.

In order to facilitate the implementation of \FindStrongReach{}, we need to store an~additional \emph{shifted} vector in the neighborhood data structure $N_v$.
Let $\ismarkedvec^+_{\vec{vw}}$ denote the shifted version of $\ismarkedvec_{\vec{vw}}$, where $\ismarked^+_{\vec{vw}, i+1} = \ismarked_{\vec{vw}, i}$ for $i \in \{0, \ldots, \lmax - 1\}$ and $\ismarked^+_{\vec{vw}, 0} = 0$.
Then we have a~copy $N^+_v$ of the neighborhood data structure $N_v$ which stores, for each edge $\vec{vw}$, $\ismarkedvec^+_{\vec{vw}}$ as the mark vector ${\bf b}^{+,vw}$.

We now adapt \Cref{lem:excl-sum-counters} to the setting of bit vectors, producing the following two statements:

\begin{lemma}
  \label{lem:excl-or-marks}
  Let $v \in V(T)$.
  Then, for any $A_+, A_- \subseteq \interface^\Uc_v$ we can determine a~bit vector ${\bf b}$ such that, for every $i \in \{0, \ldots, \lmax\}$, $b_i$ denotes whether there exists an~$i$-marked vertex in $\bigcup_{e \in \cluster^\Uc_v} T[e]$ that is $i$-reachable from any edge in $A_+$, but not $i$-reachable from any edge in $A_-$.
  The query takes worst-case time $\BigO(B(\lmax))$, plus queries to the neighborhood data structures of worst-case total normalized cost $\BigO\left(\sum_{e \in A_+ \cup A_-} \left(1 + \log \frac{\Weight_\Tc(v)}{\Weight_\Tc(e)} \right) \right)$.
\end{lemma}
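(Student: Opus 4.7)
The plan is to mimic the proof of \Cref{lem:excl-sum-counters}, replacing counter addition with bitwise OR, and to verify that the argument still goes through because the two sets of vertices we are combining at each step are disjoint (so OR correctly reflects existence, not just bounded count).

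I will induct on $|A_+|$. The base case $A_+ = \emptyset$ returns $\mathbf{0}$, since without any edge in $A_+$ no vertex can be $i$-reachable from $A_+$. For the inductive step, pick an~arbitrary $e \in A_+$ and recursively compute $\mathbf{b}_{\neq e}$ for the pair $(A_+ \setminus \{e\},\, A_-)$. Then separately query $\mathbf{b}_e \coloneqq N_v.\OrMarks{e}$; by the semantics of $\OrMarks{}$ in the neighborhood data structure $N_v$, for each level $i$ the bit $b_{e,i}$ indicates the existence of an~$i$-marked edge $f$ incident to $v$ with $f \sim_i e$. Combined with the definition of $\ismarkedvec_f$ (and the convention $\ismarkedvec_f = \mathbf{0}$ when $f$ is an~interface edge), $b_{e,i}$ is equivalent to the existence of an~$i$-marked vertex in $\bigcup_{f \in \cluster_v^{\Uc}} T[f]$ that is $i$-reachable from $e$.

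Next, let $p$ be the maximum cover level of $e$ with the remaining edges in $(A_+ \cup A_-) \setminus \{e\}$; since $|A_+ \cup A_-| \le 2$, $p$ is obtained by at most one call $N_v.\Level{e, \cdot}$, taking the value $-1$ if $(A_+ \cup A_-) \setminus \{e\} = \emptyset$. Partition the set of vertices of interest for the query into (i)~those $i$-reachable from $e$ but from no edge of $(A_+ \cup A_-)\setminus\{e\}$, and (ii)~those $i$-reachable from some edge in $A_+ \setminus \{e\}$ but from no edge of $A_-$. These two classes are disjoint, because membership in (i)~requires \emph{not} being $i$-reachable from any $e' \in A_+ \setminus \{e\}$, whereas membership in (ii)~requires the opposite. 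Consequently, the ``exists'' predicate over their union is the bitwise OR of the two individual predicates, which is the crucial fact that lets us avoid any analogue of subtraction.

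For class~(i), being $i$-reachable from $e$ forces the vertex to also be $i$-reachable from every edge $e' \in (A_+\cup A_-)\setminus\{e\}$ whenever $i \le p$, by the definition of $p$ and transitivity of $\sim_i$; so the class can only be nonempty when $i > p$, and in that range its indicator agrees with $\mathbf{b}_e$. Hence the bit vector for class~(i) is exactly $\vectorsplice{\mathbf{0}}{p+1}{\mathbf{b}_e}$, while the bit vector for class~(ii) is precisely the recursively computed $\mathbf{b}_{\neq e}$. Thus
\[
  \mathbf{b} \;=\; \mathbf{b}_{\neq e} \bitor \vectorsplice{\mathbf{0}}{p+1}{\mathbf{b}_e}.
\]

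The potentially delicate point is the disjointness of the two classes, since an~OR-based decomposition requires the contribution of each branch to be cleanly attributable; once this is noted the rest is bookkeeping. For the cost: the recursion has depth at most $|A_+| \le 2$, each level performs $\BigO(1)$ bit vector operations costing $\BigO(B(\lmax))$ in total, one call $N_v.\OrMarks{e}$ of normalized cost $1 + \log\frac{\Weight_\Tc(v)}{\Weight_\Tc(e)}$, and at most one call $N_v.\Level{e, e'}$ of normalized cost $1 + \log\frac{\Weight_\Tc(v)}{\Weight_\Tc(e)} + \log\frac{\Weight_\Tc(v)}{\Weight_\Tc(e')}$ for some $e' \in (A_+ \cup A_-) \setminus \{e\}$. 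Summing across the recursion, the total normalized cost is $\BigO\bigl(\sum_{e \in A_+ \cup A_-} \bigl(1 + \log \frac{\Weight_\Tc(v)}{\Weight_\Tc(e)}\bigr)\bigr)$, as required.
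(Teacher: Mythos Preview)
Your proof is correct and follows the same overall recipe as the paper (truncate $N_v.\OrMarks{e}$ by a cover-level threshold, then combine), but with one small difference worth noting. You faithfully port the proof of \Cref{lem:excl-sum-counters}, computing $p$ as the maximum cover level of $e$ against \emph{all} remaining edges of $A_+ \cup A_-$, and you argue disjointness of classes~(i) and~(ii) to justify the OR. The paper instead computes $p_e$ against $A_-$ only and simply OR's the truncated vectors $[\mathbf{0} : p_e+1 : \mathbf{c}_e]$ over all $e \in A_+$; this works because OR is idempotent, so overlapping contributions from different $e$'s are harmless. Your disjointness argument is true but not actually needed: ``exists in $S_1 \cup S_2$'' is the OR of ``exists in $S_1$'' and ``exists in $S_2$'' regardless of whether $S_1, S_2$ intersect. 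The payoff of the paper's route is a slightly shorter, non-recursive argument; the payoff of yours is that it transfers verbatim from the counting lemma without any new observation about idempotency.
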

\begin{proof}
  In the absence of the set $A_-$, the result of the query would simply be a~bitwise-or of bit vectors of the form ${\bf c}_e \coloneqq N_v.\OrMarks{e}$ for $e \in A_+$.
  However, if $A_-$ is nonempty, these bit vectors must be zero-truncated similarly to the proof of \Cref{lem:excl-sum-counters}: for each $e \in A_+$, let $p_e$ be the maximum cover level of $e$ with the edges of $A_-$, and ${\bf d}_e \coloneqq [{\bf 0} : (p_e + 1) : {\bf c}_e]$.
  Then, adapting the argument from the proof of \Cref{lem:excl-sum-counters}, we find that $d_{e,i}$ denotes whether there exists a~vertex in $\bigcup_{e \in \cluster^\Uc_v} T[e]$ that is $i$-reachable from $e$, but not $i$-reachable from any edge in $A_-$.
  Hence the result of the query is actually the bitwise-or of all vectors ${\bf d}_e$.
  It is straightforward to verify the time and normalized cost bounds of the implementation of the query.
\end{proof}

\begin{lemma}
  \label{lem:excl-find-marked}
  Let $v \in V(T)$.
  Then, for any $A \subseteq \interface^\Uc_v$ and $i \in \{0, \ldots, \lmax\}$, we can determine a~cluster edge $\vec{vx} \in \cluster^\Uc_v$ (respectively, $\vec{vx^+} \in \cluster^\Uc_v$) such that there exists an~$i$-marked vertex $y \in T[\vec{vx}]$ (resp., $y \in T[\vec{vx^+}]$) that is $i$-reachable (resp., strongly $i$-reachable) from some edge of $A$; we place $\vec{vx} = \bot$ (resp., $\vec{vx^+} = \bot$) if such an~edge does not exist.
  The query takes worst-case time $\BigO(1)$, plus queries to the neighborhood data structures of worst-case total normalized cost bounded by the sum of:
  \begin{itemize}
    \item $\BigO\left(\sum_{e \in A} \left(1 + \log \frac{\Weight_\Tc(v)}{\Weight_\Tc(e)} \right) \right)$; and
    \item $\BigO\left(\log \frac{\Weight_\Tc(v)}{\Weight_\Tc(\vec{vx})} \right)$ (resp., $\BigO\left(\log \frac{\Weight_\Tc(v)}{\Weight_\Tc(\vec{vx^+})} \right)$) if the query returns an~edge $\vec{vx}$ (resp., $\vec{vx^+}$).
  \end{itemize}
\end{lemma}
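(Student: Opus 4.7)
The plan is to reduce each query to a~constant-length loop over $A$ (at most two elements, since $A \subseteq \interface^\Uc_v$) performing exactly one $\FindMarked{}$ call per iteration. For the non-strong variant, I would call $N_v.\FindMarked{e, i}$ for each $e \in A$ in turn, returning the first non-$\bot$ result as $\vec{vx}$, or $\bot$ if every call fails. For the strong variant, I would instead query the shifted neighborhood $N^+_v.\FindMarked{e, i+1}$, short-circuiting to $\bot$ when $i = \lmax$.

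For correctness, I would unpack the invariants on the stored mark vectors. If $N_v.\FindMarked{e, i}$ returns $\vec{vx}$, then the semantics of the marking extension yield $vx \sim_i e$ and $\ismarked_{\vec{vx}, i} = 1$. Since interface edges carry a~zero mark vector by convention, $\vec{vx}$ is forced to be a~cluster edge; the mark equation supplies an~$i$-marked $y \in T[\vec{vx}]$ together with a~path of cover level at least $i$ inside $T[\vec{vx}] \cup \{vx\}$ reaching $y$ from $v$ through $x$, while $vx \sim_i e$ translates to $c(e, vx) \geq i$. Prepending $e$ to this path produces a~reachability witness showing that $y$ is $i$-reachable from $e$ (through $v$), so $\vec{vx}$ is a~valid answer. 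Conversely, if some valid cluster edge exists, then the two conditions above are met for some $e \in A$, forcing the corresponding $\FindMarked{}$ call to return a~non-$\bot$ answer.

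The key conceptual move, and the only spot I expect to require care, is the strong variant: I would rely on the fact that $N^+_v$ shares the equivalence relations of $N_v$ but has marks shifted up by one level, so $\ismarked^+_{\vec{vx^+}, i+1} = \ismarked_{\vec{vx^+}, i}$ still certifies an~$i$-marked vertex $y \in T[\vec{vx^+}]$ that is $i$-reachable from $vx^+$, while $vx^+ \sim_{i+1} e$ in $N^+_v$ enforces $c(e, vx^+) \geq i + 1$; this is exactly the extra requirement of strong $i$-reachability that the first pair of edges on the witness be at cover level at least $i+1$. For the time bound, the normalized cost guarantees of \Cref{lem:neighborhood-ds} give $\BigO(1 + \log\tfrac{\Weight_\Tc(v)}{\Weight_\Tc(e)})$ per $\bot$-returning call and an~additional $\BigO(\log\tfrac{\Weight_\Tc(v)}{\Weight_\Tc(\vec{vx})})$ summand for the (at most one) successful call; summing over $A$ recovers the two summands of the lemma exactly, and the outer loop adds only $\BigO(1)$ bookkeeping.
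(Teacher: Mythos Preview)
Your proposal is correct and follows exactly the approach the paper takes: iterate over $e \in A$, call $N_v.\FindMarked{e,i}$ (respectively $N^+_v.\FindMarked{e,i+1}$), and return the first non-$\bot$ result. The paper's own proof is considerably terser and omits the correctness argument you spell out, so your unpacking of why the returned edge must be a cluster edge and why the shifted structure $N^+_v$ captures precisely the strong-reachability constraint is added value rather than a deviation.
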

\begin{proof}
  Naturally, $\vec{vx}$ is found by iterating all $e \in A$ and querying $N_v.\FindMarked{e, i}$.
  If at least one of these operations returns an~edge as a~result, we forward it as the result of our query.
  Similarly, $\vec{vx^+}$ is determined by performing queries of the form $N^+_v.\FindMarked{e, i + 1}$.
  The entire process takes time $\BigO(1)$, plus queries to the neighborhood data structures of required total normalized cost.
\end{proof}


We extend the interface of the neighborhood data structure $N_v$ so that $N_v.\OrMarks{A_+, A_-}$ returns the vector ${\bf b}$ described in the statement of \Cref{lem:excl-or-marks}.
Then $N_v.\FindMarked{A, i}$ denotes the operation of \Cref{lem:excl-find-marked} returning the cluster edge $\vec{vx}$, and $N^+_v.\FindStrongMarked{A, i}$ returns the cluster edge $\vec{vx^+}$.
Moreover, we define that $\BigO\left(\sum_{e \in A} \left(1 + \log \frac{\Weight_\Tc(v)}{\Weight_\Tc(e)} \right) \right)$ is the \emph{base cost} of a~call to $\FindMarked{}$ (resp., $\FindStrongMarked{}$), and the optional summand $\BigO\left(\log \frac{\Weight_\Tc(v)}{\Weight_\Tc(\vec{vx})} \right)$ (resp., $\BigO\left(\log \frac{\Weight_\Tc(v)}{\Weight_\Tc(\vec{vx^+})} \right)$) is the \emph{recovery cost} of the call.

\paragraph{Bit vectors maintained by the data structure}
For each vertex $v$ of the tree we maintain a~bit vector $\bmarksvec_v$ such that $\bmarks_{v,i}$ denotes whether the vertex $v$ is $i$-marked.

Also, for each (slim) path cluster $C$ with $\bnd C = \{v, w\}$, we store the following bit vectors.

\begin{itemize}
  \item $\totalmarksvec_C$, where $\totalmarks_{C, i}$ for $i \in \{0, \ldots, \lmax\}$ is true whenever there exists an~$i$-marked vertex $y \in C \setminus \{v, w\}$ that is $i$-reachable from $v \ldots w$;
  \item $\totalmarksvec_{C,v}$, where $\totalmarks_{C, v, i}$ is true whenever there exists $y \in C \setminus \{v, w\}$ that is $i$-reachable from $\firstedge_{C, v}$; and $\totalmarksvec_{C,w}$, defined analogously;
  \item $\diagmarksvec_{C,v,j}$ and $\diagmarksvec_{C,w,j}$ for $j \in \{0, \ldots, \lmax\}$ which are the bit vector analogs of $\diagcntvec_{C,v,j}$ and $\diagcntvec_{C,w,j}$.
\end{itemize}

These vectors can be computed analogously to their counter counterparts during both transient and permanent merges, and it can be easily verified that the time required to compute all these vectors during a~single merge of $C$ is bounded by $\BigO(B(\lmax) \cdot \log \lmax)$, plus calls to the neighborhood data structures of total normalized cost $\BigO(\ncost_\Tc(C))$.
Hence by the Transient Expose Lemma (\Cref{lem:transient-expose-lemma}), each transient expose and unexpose takes additionally time $\BigO(\log n \cdot B(\lmax) \log \lmax)$, plus calls to the neighborhood data structures of total normalized cost $\BigO(\log n)$ --- as required.

\paragraph{Updates via $\Mark{}$ and $\Unmark{}$}
Suppose a~vertex $u$ is to be $i$-marked or $i$-unmarked.
We begin by transiently exposing the vertex $u$, temporarily producing a~top tree $\Uc$.
Observe that after the expose, $u$ is not part of any subtree $T[\vec{vw}]$ for a~cluster edge $\vec{vw}$ with respect to $\Uc$; and moreover, the marks on $u$ do not influence any bit vectors or bit matrices stored together with path clusters of $\Uc$.
Hence marking or unmarking $u$ resolves to updating contents of $\bmarksvec_u$.
Afterwards, the vertex $u$ is transiently unexposed.
Obviously, this all takes constant time, plus a~single transient expose and unexpose.

\paragraph{Implementing \FindFirstReach{p, q, i}}
Following the blueprint laid down by $\FindSize{}$, we first perform a~transient expose on $\{p, q\}$, producing a~transient top tree $\Uc$ representing $T$.
$\Uc$ has three root clusters that partition $E(T)$: a~(slim) path cluster $C_{pq}$ with $\bnd C_{pq} = \{p, q\}$ and two point clusters $C_p$, $C_q$ such that $\bnd C_p = \{p\}$, $\bnd C_q = \{q\}$.

We first test if the sought $y$ is a~vertex of $C_p$, or equivalently, we test if $a = c = p$.
Observe that this is true precisely when either $p$ is $i$-marked (in which case we return an~explicit container $\{p\}$), or $f \coloneqq N_v.\FindMarked{\{\firstedge_{C_{pq}, p}\}, i} \neq \bot$ (in which case we return an~implicit cluster edge container defined by $f$).
In both positive cases, we have $ab = \firstedge_{C_{pq}, p}$ and $c = p$.

Otherwise, we check if a~sought $y$ belongs to $C_{pq}$, but is neither $p$ nor $q$ (equivalently, $c \neq q$ holds).
This is true if and only if $\totalmarks_{C,i}$ holds.
In this case, we will determine a~container for $y$ via the following recursive auxiliary function:

\smallskip

\emph{$\FindInPath{C, v, w}$: Let $C$ be a~(slim) path cluster such that $\pi(C) = v \ldots w \subseteq p \ldots q$.
Find a~tuple $(ab, c, Y)$ such that $ab \in \pi(C)$, $c \in \{a, b\} \setminus \bnd C$ and $Y$ is a~container for an $i$-marked vertex $y$ that is $i$-reachable from $\pi(C)$ at $ab$ through $c$.
In case of ties, minimize $\dist(a, v)$ first and then minimize $\dist(c, v)$.
We assume that the sought $y$ exists in $C$.}

\smallskip

We implement $\FindInPath{}$ as follows.
Since $C$ is a~(slim) path cluster (and not a~leaf of $\Uc$ since otherwise the sought $y$ cannot exist), it splits into two slim path clusters $A$, $B$ and a~point cluster $P$ according to case \CasePath of \Cref{fig:toptree-cases}.
Suppose that $\bnd A = \{v, x\}$, $\bnd B = \{x, w\}$ and $\bnd P = \{x\}$.
We perform the following steps in order.

\begin{enumerate}
    \item Test if the sought $y$ is in $A \setminus \bnd A$; this is true if and only if $\totalmarks_{A,i}$ holds.
    In this case, the answer to the query is $\FindInPath{A, v, x}$.
    
    \item Test if the sought $y$ is in $P$. This holds either if $x$ is $i$-marked (in which case we return $(ab, c, Y) = (\firstedge_{A,x}, x, \{x\})$), if $f_A \coloneqq N_x.\FindMarked{\firstedge_{A,x}} \neq \bot$ (and then we return $(\firstedge_{A,x}, x, f_A)$, where $f_A$ defines an~implicit cluster edge container), or if $f_B \coloneqq N_x.\FindMarked{\firstedge_{B,x}} \neq \bot$ (and then we return $(\firstedge_{B,x}, x, f_B)$).
    
    \item Test if the sought $y$ is in $B \setminus \bnd B$; this holds if and only if $\totalmarks_{B,i}$ holds.
    Then, the answer to the query is $\FindInPath{B, x, w}$.
    Note that this test must succeed due to the assumption that the sought $y$ belongs to $C$.
\end{enumerate}

Finally, if $y$ belongs to neither $C_p$ nor $C_{pq} \setminus \{p, q\}$, we verify if $y$ can be found in $C_q$.
This is done analogously to the case of $C_p$: either $q$ is $i$-marked (and we return $(\firstedge_{C_{pq},q}, q, \{q\})$), or $f \coloneqq N_v.\FindMarked{\firstedge_{C_{pq},q}, i} \neq \bot$ (and then we return $(\firstedge_{C_{pq},q}, q, f)$).

\paragraph{Time analysis of \FindFirstReach{}}
A~single transient expose is performed.
Aside from this, the time complexity of the query is dominated by recursive calls to $\FindInPath{}$.
Observe that $\FindInPath{}$ is invoked with slim path clusters $C$ lying on a~vertical path in $\Uc$.
Each recursive call performs calls to $N_x.\FindMarked{}$ a~constant number of times, with arguments from $\interface^\Uc_x$, where $x$ is the midpoint of $C$.
By \Cref{lem:excl-find-marked}, such calls have base cost $\BigO\left(\sum_{e \in A} \left(1 + \log \frac{\Weight_\Tc(v)}{\Weight_\Tc(e)} \right) \right)$, or equivalently $\BigO(\ncost_\Tc(C))$.
Moreover, the recursion is short-circuited as soon as any invocation of $\FindMarked{}$ returns an~edge.
In other words, across all recursive calls in total, at most one call to $\FindMarked{}$ incurs additional recovery cost (and this cost is obviously bounded by $\BigO(\log n)$).
Hence by the \Cref{item:tel-new-vertical-cost} of the Transient Expose Lemma (\Cref{lem:transient-expose-lemma}), the total normalized cost of calls to the neighborhood data structures is bounded by $\BigO(\log n)$, as required.
As announced, the second stage (refinement of implicitly-defined containers) is deferred to \Cref{ssec:implicit-refine}; but assuming it can be performed efficiently (\Cref{lem:container-refinement}), we conclude that the time requirements of \Cref{lem:tree-cover-findfirst} hold in case of $\FindFirstReach{}$.

\paragraph{Implementing $\FindStrongReach{p, q, e, b, i}$}
Firstly, we transiently expose the vertices $p, q$, yielding a~temporary top tree $\Uc$.
After the expose, the set of edges with tail $b$ that do not lie on the path $p \ldots q$ is precisely $\cluster^\Uc_b$.
Hence it is enough to figure out if there exists a~cluster edge $\vec{bx^+}$ with respect to $\Uc$ with tail $b$ such that there is some $i$-marked vertex $y \in T[\vec{bx^+}]$ that is strongly $i$-reachable from $e$.
This can be then done by calling $N_b^+.\FindStrongMarked{\{e\}, i}$, which has normalized cost at most $\BigO(\log n)$ by \Cref{lem:excl-find-marked}.
If the return value is $f \neq \bot$, we return $y$ in the form of an~implicit cluster edge container defined by $f$.
After the query, we transiently unexpose $p, q$.

Plainly, assuming \Cref{lem:container-refinement}, the entire process takes time $\BigO(\log n)$, plus a~single transient expose and unexpose.

\subsection{Refining implicit containers}
\label{ssec:implicit-refine}
If \FindFirstReach{} or \FindStrongReach{} returns an~implicit container $Y$ in lieu of $y$, we need to perform the second stage of the query: refine $Y$ to produce an~explicit vertex $y \in Y$ to return as the result of the query.
This refinement should be efficient so as to meet the time requirements of \Cref{lem:container-refinement}.
We shall implement the refinement through the following methods:

\begin{itemize}
  \item $\RefineClusterEdgeContainer{\vec{vw}, i}$: Refine a~cluster edge container defined by a~cluster edge $\vec{vw} \in \cluster^\Uc_v$;
  \item $\RefinePathClusterContainer{C, v, w, i}$: Refine a~path cluster container defined by $(C, v, w)$, where $C$ is a~slim path cluster in $\Uc$.
\end{itemize}

The implementation of these methods will be mutually recursive.
They will altogether implement a~scheme of recursive descent on the (transiently exposed) top tree $\Uc$, successively progressing through the clusters of the top tree lying on a~single root-to-leaf path in the top tree.
This will allow us to argue that each of these helper functions is efficient.

\smallskip

For $\RefineClusterEdgeContainer{\vec{vw}, i}$, let $C$ be the (unique) point cluster of the $\Uc$ such that $\bnd C = \{v\}$ and $vw$ is the unique edge of $C$ incident to $v$.
Then, $C$ splits into a~path cluster $A$ and a~point cluster $B$ according to case \CasePointToPathPoint in \Cref{fig:toptree-cases}, so that $\bnd A = \{v, x\}$ and $\bnd B = \{x\}$.
The result to the query is now found via the following case study.

\begin{itemize}
  \item If $\totalmarks_{A, v, i}$ is true, then a~sought vertex $y$ exists in $A$ and we can recursively return $\RefinePathClusterContainer{A, v, x, i}$.
  \item Otherwise, the path from $v$ to sought $y$ must pass through $x$. Therefore, it must be the case that $\cover_A \geq i$ (otherwise no such vertex $u$ can exist). Then, if $x$ is itself $i$-marked, we can immediately return $x$ as the result.
  \item Otherwise, there must exist some cluster edge $f \neq \firstedge_{A,x}$ with tail $x$ such that the sought $y$ is in $T[f]$.
  Since $\interface^\Uc_x = \{\firstedge_{A,x}\}$, such $f$ will be found by letting $f \coloneqq N_x.\FindMarked{\{\firstedge_{A,x}\}, i}$.
  Then, we return $\RefineClusterEdgeContainer{f, i}$ recursively.
\end{itemize}

\smallskip
For $\RefinePathClusterContainer{C, v, w, i}$, recall that $C$ splits into two path clusters $A$, $B$ and a~point cluster $P$ (case \CasePath in \Cref{fig:toptree-cases}).
Letting $\bnd A = \{v, x\}$, $\bnd B = \{x, w\}$, $\bnd P = \{x\}$, we have $\interface^\Uc_x = \{\firstedge_{A,x}, \firstedge_{B,x}\}$.
We perform another case study:

\begin{itemize}
  \item If $\totalmarks_{A, v, i}$ holds, some sought vertex $y$ exists in $A$; and moreover, this vertex satisfies $\meet(v, c, u) \notin \{v, c\}$. Hence we can return $\RefinePathClusterContainer{A, v, c, i}$.
  \item Otherwise, the path between $v$ and the sought $y$ must contain $x$, so it must be the case that $\cover_A \geq i$. Again, if $x$ itself is $i$-marked, we immediately return $x$ as the result.
  \item Otherwise, a~sought $y$ exists in $P$ if and only if $T$ contains a~cluster edge $f \in \cluster^\Uc_x$ such that $y$ is in $T[f]$; so the cover level of the pair $f, \firstedge_{A,x}$ must be at least $i$, and $y$ must be $i$-reachable from $f$.
  Such $f$ is found --- if it exists --- by letting $f \coloneqq N_x.\FindMarked{\{\firstedge_{A,x}\}, i}$.
  If $f \neq \bot$, we return $\RefineClusterEdgeContainer{f, i}$.
  \item Otherwise, a~sought $y$ must exist in $B$; in particular, the cover level of the pair $\firstedge_{A,x}, \firstedge_{B,x}$ must be at least $i$, and $\totalmarks_{B, x, i}$ must hold.
  This means that we can recursively return $\RefinePathClusterContainer{B, x, w, i}$.
\end{itemize}

The implementation of these functions can be found in \Cref{ssec:tree-impl-find}.

\paragraph{Time analysis}
We first observe that the recursion descends along clusters on a~single root-to-leaf path in $\Uc$.
This is straightforward: if $C, C'$ are two clusters of $\Uc$ pertaining to two consecutive recursive calls (either the slim path cluster itself in case of a~path cluster container, or the point cluster $T[\vec{vw}]$ in case of a~cluster edge container defined by a~cluster edge), then $C' \subsetneq C$ by a~simple case analysis.
Each of the two recursive functions performs a~constant number of inspections of bits in bit vectors, and a~constant number of calls of the form $N_x.\Level{}$ and $N_x.\FindMarked{}$ (where $x$ is the midpoint of the cluster $C$ pertaining to the recursive call).
Thus:
\begin{itemize}
  \item Excluding the calls to the neighborhood data structures, each recursive call takes time $\BigO(B(\lmax))$ per visited cluster.
  \item When a~cluster $C$ is examined, the calls to $N_x.\Level{}$ have normalized cost $\BigO(\ncost_\Tc(C))$, and the calls to $N_x.\FindMarked{}$ have base cost $\BigO(\ncost_\Tc(C))$.
  Hence by \Cref{item:tel-new-vertical-cost} of the Transient Expose Lemma (\Cref{lem:transient-expose-lemma}), the sum of these costs is $\BigO(\log n)$.
  \item It remains to bound the total recovery cost of calls to $N_x.\FindMarked{}$.
  Suppose we call $N_x.\FindMarked{}$ when considering a~cluster $C$, and the call returns a~cluster edge $\vec{xy}$.
  This call incurs additional recovery cost bounded by $\BigO(\log \frac{\Weight_\Tc(x)}{\Weight_\Tc(\vec{xy})})$, and immediately afterwards, we recursively return $\RefineClusterEdgeContainer{\vec{xy}, \cdot}$.
  So let $C'$ be the (unique) point cluster of $\Uc$ with $\bnd C' = \{x\}$, where $xy$ is the unique edge of $C'$ incident to $x$.
  Then $C'$ splits into child clusters according to case \CasePointToPathPoint and so, by definition, $\ncost_\Tc(C') \geq 1 + \log\frac{\Weight_\Tc(x)}{\Weight_\Tc(\firstedge_{C', x})} > \log \frac{\Weight_\Tc(x)}{\Weight_\Tc(\vec{xy})}$.
  Thus the recovery cost of the call is upper-bounded by $\ncost_\Tc(C')$.
  Therefore, since all clusters $C'$ lie on a~single vertical path in $\Uc$, \Cref{item:tel-new-vertical-cost} of the Transient Expose Lemma applies and the sum of recovery cost of calls to $N_x.\FindMarked{}$ across all recursive calls is also bounded by $\BigO(\log n)$.
\end{itemize}

Therefore the entire recursion takes time $\BigO(\log n \cdot B(\lmax))$, plus calls to the neighborhood data structures of total normalized cost $\BigO(\log n)$.
Consequently, our implementation meets the time complexity bounds postulated by \Cref{lem:container-refinement}.

\subsection{Proof of \Cref{lem:top-tree-reduction}}
\label{ssec:tree-structure-final}
It remains to conclude that \Cref{lem:top-tree-reduction} follows from \Cref{lem:ds-tree-cover-levels,lem:ds-tree-find-size,lem:tree-cover-findfirst}.
Indeed, consider the data structure of \Cref{lem:ds-tree-cover-levels} extended by \Cref{lem:ds-tree-find-size,lem:tree-cover-findfirst}.
\Cref{fig:ds-tree-time-analysis} lists, for every supported operation, the time complexity of the query, excluding transient (un)exposes and calls to the neighborhood data structures. 
Note that all supported operations except $\Connected{}$, $\Link{}$, $\Cut{}$, and $\Expose{}$, perform one transient expose and one transient unexpose as a~subroutine.
Thus \Cref{lem:top-tree-reduction} holds.
Moreover, \Cref{lem:top-tree-reduction,lem:neighborhood-ds} imply \Cref{lem:tree-structure}.

 \begin{figure}
   \begin{center}
   \footnotesize
   \begin{tabular}{c|c|c|c|c}
     \textbf{Operation} &
     \textbf{T} &
     \textbf{NC} &
     \textbf{Subqueries} &
     \textbf{Reference} \\ \hline
     {\footnotesize transient expose} & $\BigO(\log n \cdot T(\lmax) \log \lmax)$ & $\BigO(\log n)$ & \begin{tabular}{@{}c@{}} $\Level{}$ \\ $\LongZip{}$ \\ $\LongUnzip{}$ \\ $\UpdateCounters{}$ \\ $\SumCounters{}$ \\ $\UpdateMarks{}$ \\ $\OrMarks{}$ \end{tabular} & \Cref{lem:ds-tree-cover-levels,lem:ds-tree-find-size,lem:tree-cover-findfirst} \\ \hline
     $\Connected{}$ & $\BigO(\log n)$ & --- & --- & \Cref{lem:ds-tree-cover-levels} \\ \hline
     \begin{tabular}{@{}c@{}} $\Link{}$ \\ $\Cut{}$ \\ $\Select{}$ \end{tabular} & $\BigO(\log n \cdot T(\lmax) \log \lmax)$ & $\BigO(\log^2 n)$ & \begin{tabular}{@{}c@{}} $\Level{}$ \\ $\LongZip{}$ \\ $\LongUnzip{}$ \\ $\Select{}$ \\ $\SetWeight{}$ \end{tabular} & \Cref{lem:ds-tree-cover-levels} \\ \hline
     \begin{tabular}{@{}c@{}} $\Cover{}$ \\ $\UniformUncover{}$ \\ $\LocalUncover{}$ \\ $\CoverLevel{}$ \\ $\MinCoveredPair{}$ \end{tabular} & $\BigO(\log n)$ & $\BigO(\log n)$ & \begin{tabular}{@{}c@{}} $\Zip{}$ \\ $\Unzip{}$ \\ $\Level{}$ \\ transient expose \end{tabular} & \Cref{lem:ds-tree-cover-levels} \\ \hline
     $\FindSize{}$ & $\BigO(\log n)$ & $\BigO(\log n)$ & \begin{tabular}{@{}c@{}} $\SumCounters{}$ \\ transient expose \end{tabular} & \Cref{lem:ds-tree-find-size} \\ \hline
     \begin{tabular}{@{}c@{}} $\Mark{}$ \\ $\Unmark{}$ \end{tabular} & $\BigO(\log n)$ & --- & transient expose & \Cref{lem:tree-cover-findfirst} \\ \hline
     \begin{tabular}{@{}c@{}} $\FindFirstReach{}$ \\ $\FindStrongReach{}$ \end{tabular} & $\BigO(\log n)$ & $\BigO(\log n)$ & \begin{tabular}{@{}c@{}} $\Level{}$ \\ $\OrMarks{}$ \\ $\FindMarked{}$ \\ transient expose \end{tabular} & \Cref{lem:tree-cover-findfirst}
   \end{tabular}
   \end{center}

   \caption{Time complexity bounds of the combined data structure from \Cref{lem:ds-tree-cover-levels,lem:ds-tree-find-size,lem:tree-cover-findfirst}.
   Here, {\bf T} denotes the time complexity of the query, excluding calls to transient exposes and calls to the neighborhood data structures performed as subroutines.
   Next, {\bf NC} is the total normalized cost of the calls to the neighborhood data structures performed by the query, excluding the calls stemming from a~transient expose; and the {\bf Subqueries} column immediately right of it lists the types queries to these neighborhood data structures performed by the considered operations, and determines whether a~transient expose is performed as a~subroutine.}
   \label{fig:ds-tree-time-analysis}
 \end{figure}

\section{Neighborhood structure} \label{sec:nbd}
This section is dedicated to the proof of Lemma~\ref{lem:nbhtree}. Our proof is phrased as, first, a description of a data structure, followed, then, by the pseudocode for using and updating the data structure upon queries and updates. The pseudocode is annotated with running times of each line, most of which are straightforward to see from the description of the data structure. There are a few exceptions, where the running times need a brief argument; these are isolated in \cref{cla:parent,cla:nca,cla:lvla,cla:assign-dchildren}.

In general, we will analyse everything as if we had single values in the leaves. However, in the end, the leaves will hold counter and bit vectors of length $\lmax$, and thus, we will spend an extra factor of $T(\lmax)$ time on each operation, where $T(\lmax)$ is the time needed to perform operations on an $\lmax$-sized vector.

\paragraph{Description of data structure}
Let $\Xsel\subseteq X$ be the set of at most $2$ selected
edges. Then instead of maintaining the collection of nested sets
$\set{\mathcal L_i}_{i\in[\lmax]}$ explicitly, we maintain a slightly
modified collection of nested sets $\set{\mathcal L_i'}_{i\in[\lmax]}$
where
\begin{align*}
  \mathcal L_i(v)&=
  \begin{cases}
    \mathcal L_i'(v) &\text{if $i=-1$ or $\abs{\mathcal L_i(v)\cap \Xsel}\leq 1$}\\
    \mathcal \cup_{x\in \Xsel} \mathcal L_i'(x) &\text{otherwise}
  \end{cases}
\end{align*}
In other words, for $i\geq 0$ every set in $\mathcal L_i$ that contains both selected edges is the disjoint union of $2$ sets in $\mathcal L'_i$.

We represent the neighborhood data structure as a rooted tree, where
the leaves are the elements of the set $X$, and all leaves have depth
$\lmax+1$.  Each internal node at depth $i+1$ corresponds to a set in
$\mathcal L_i'$. In particular the root corresponds to the set $X$
where $\mathcal L_{-1}'=\set{X}$.  Together with a variable
representing $\SelectedLevel{}$ this is sufficient to reconstruct
$\set{\mathcal L_i}_{i\in[\lmax]}$. The main benefit of representing
the pair $(\mathcal L',\SelectedLevel{})$ instead of $\mathcal L$
directly, is that it makes $\LongZip$ and $\LongUnzip$ run in worst-case
constant time since their preconditions ensure that only
$\SelectedLevel{}$ needs to be changed.

Let $\set{s_1,s_2}=\Xsel$ be the selected edges. We call a
node \emph{$\Xsel$-marked} if it is an ancestor of $s_1$ or $s_2$,
and \emph{$\Xsel$-unmarked} otherwise; we will explicitly store in
each node which (if any) of $s_1,s_2$ it is ancestor to.  Note that this is unrelated to the ``marking'' extension that we will cover later in \Cref{sec:nt-extensions}. Also observe that we have introduced an arbitrary ordering between $s_1$ and $s_2$, which we use for disambiguation below.

\paragraph{Solid-path decomposition}
To get the remaining operations to have the desired cost, we use a
version of \emph{heavy-light decomposition} similar to~\cite{SleatorT81,HolmR20}.  Define the weight of the
subtree $T_v$ rooted at a node $v$ representing some set $S\in\mathcal
L_i'$ as $w(T_v)=\sum_{s\in S}w(s)$.  Call an edge $(c,p)$ between a
child $c$ and a parent $p$ \emph{heavy} if either 1) $c$ is
$\Xsel$-marked and an ancestor of $s_1$; or 2) $c$ is $\Xsel$-marked and
$p$ is not an ancestor of $s_1$; or 3) $p$ is not $\Xsel$-marked and the
total weight $w(T_p)<2 w(T_c)$. If $(c,p)$ is heavy we will say that $c$ is a \emph{heavy child}. Edges that are not heavy are
called \emph{light} and if $(c, p)$ is a light edge, then we say that $c$ is a \emph{light child}. Since each node has at most one heavy
child, the heavy edges form a system of \emph{heavy paths}.

Suppose we mark each edge in the tree as either \emph{solid} or \emph{dashed}, in such a way that for every node at most one of its child edges is solid. Then the solid edges form a system of \emph{solid paths}. We will maintain the invariant that between operations on the neighborhood structure, an edge is solid if and only if it is heavy. However, this invariant may temporarily be broken during operations.

We can represent each solid path using a balanced binary search tree
(ignoring the weights). Since the height of the original tree is
$\lmax$, the height of this binary search tree is $\BigO(\log\lmax)$,
and we can safely assume that concatenating or splitting solid paths
only touches a prefix of this rooted tree of size $\BigO(\log\lmax)$. These trees will help us in querying about various information, for example the highest vertex of a solid path (to traverse the tree efficiently), weights of subtrees (to efficiently determine heavy/light children and maintain the heavy paths structure) or the deepest vertex of a solid path with some prescribed property.\jhinline{E.g. the data needed to do \textproc{FindMarked} correctly.} 

Each node can have many dashed children, and we store the set of these children using the \emph{almost biased binary tree} implemented by
the \emph{biased disjoint sets} structure of Section~\ref{sec:bds} (Theorem~\ref{thm:bds}). The root of this set for vertex $v$ will be called $v.\textrm{dchildren}$. After each external query to the neighborhood data structure, the set represented by $v.\textrm{dchildren}$ will correspond exactly to the set of light children of $v$, however in the intermediate states of computations, heavy children may at times be a part of this set too. That includes children that are ancestors of selected edges and it is easy to modify the structure to handle the at most two selected edges separately, giving them constant depth.
Otherwise, for a dashed child $c$ of a node $p$ the depth of $c$ in this structure
is $\BigO(\log\frac{w(T_p)}{w(T_c)}+\log\log w(T_p))$. By Theorem~\ref{thm:bds}, this depth is both the worst-case bound on the time required to follow the parent pointers from the leaf of an~almost biased binary tree corresponding to a dashed child (\Find{}), and the amortized cost for deleting a dashed child from the biased disjoint sets structure (\Delete{}). Note that if $c$ is a heavy child of $p$, but is not an ancestor of a selected edge, then $\log\frac{w(T_p)}{w(T_c)}$ is a constant and its depth is $\BigO(\log\log w(T_p))$.

The \emph{light depth} of a node $v$ is the number of light edges on
the path from $v$ to the root $r$, and is easily seen to be at most
$\floor[\big]{\log_2\frac{w(T_r)}{w(T_v)}}+2$.
Thus, the path from $v$ to $r$ intersects at most $\floor[\big]{\log_2\frac{w(T_r)}{w(T_v)}}+3$ heavy paths.

We will ensure that the similarly-defined \emph{dashed depth} is always $\BigO\paren[\big]{1+\log\frac{w(T_r)}{w(T_v)}}$, thus the  path from $v$ to $r$ intersects at most $\BigO\paren[\big]{1+\log\frac{w(T_r)}{w(T_v)}}$ solid paths, and the sum of the depths in the dashed-child structures for $v$ telescopes to
$\BigO((1+\log\frac{w(T_r)}{w(T_v)})\cdot\log\log w(T_r))$.

The total depth of a~node $v$ -- understood as the sum of the depths of the parts representing a subtree containing $v$ in all dashed-child and solid-path structures -- is therefore $\BigO((1+\log\frac{w(T_r)}{w(T_v)})\cdot(\log\log w(T_r)+\log\lmax))$.

The tree structure will be accessed and updated via several operations:
\begin{itemize}
  \item $\Call{Parent}{x}$ returns the parent of $x$ in the tree;

  \item $\Call{UnselectHeavy}{x}$ assumes the solid child (if any) is heavy and makes it dashed. Similarly $\Call{SelectHeavy}{x}$ assumes that $x$ has no solid child edge and makes heavy child of $x$ (if any) solid. Additionally, they pass all incurred updates to the biased disjoint sets data structure and update heavy path trees.  Note that these are similar to, and implemented in terms of, the \textproc{slice} and \textproc{splice} operations from~\cite{SleatorT81,HolmR20}.

  \item $\Call{Expose}{x}$ and $\Call{Conceal}{r}$ modifies the system
    of solid paths.  We follow the terminology from~\cite{HolmR20} and
    say that a tree where edges are solid if and only if they are
    heavy is called \emph{proper}.  A tree where there is a solid path
    with the root of the tree as its topmost vertex and $x$ as its
    bottommost vertex, and where every node not on this path has its
    solid child set to its heavy child (if any), is called
    \emph{$x$-exposed} or just \emph{exposed}.  $\Call{Expose}{x}$
    takes a proper tree containing node $x$ and makes it $x$-exposed
    and returns the root $r$.
    $\Call{Conceal}{r}$ takes an exposed tree with given root $r$ and
    makes it proper.

    It follows from~\cite{HolmR20} combined with our data structures
    for dashed children and solid paths that $\Call{Expose}{x}$ and
    $\Call{Conceal}{r}$ both take $\BigO( (1+\log\frac{W}{w(T_x)})\cdot
    (\log\log W+\log\lmax))$ time.

  \item $\Call{Link-Exposed}{x, p}$, where $x$ is the root of a proper
    tree and $p$ is a node in a (different) $p$-exposed tree, adds $x$
    as a~child of $p$, leaving the resulting tree
    $p$-exposed. Conversely $\Call{Cut-Exposed}{x}$, where $x$ has
    parent $p$ and the tree is $p$-exposed, unlinks $x$ from its
    parent $p$ in the tree, leaving the tree containing $p$ as
    $p$-exposed and the new tree with root $x$ proper. Since these
    operations only update the dashed children of a single node with
    no solid child, and on the auxillary information on the exposed
    root path they can be implemented to take
    $\BigO(1+\log\frac{\max\set{w(T_p),w(T_p')}}{w(T_x)}+\log\log W+\log\lmax)$ time.

  \item $\Call{Link}{x, p}$ adds $x$ as a~child of $p$ and $\Call{Cut}{x}$ unlinks $x$ from its parent in the tree.
    $\Call{Link}{x,p}$ can be trivially implemented as $r\gets\Call{Expose}{p}$, $\Call{Link-Exposed}{x,p}$ and $\Call{Conceal}{r}$ in $\BigO( (1+\log\frac{W}{w(T_x)} + \log\frac{W}{w(T_p)})\cdot (\log\log W + \log\lmax) )$ time.  Similarly $\Call{Cut}{x}$ can be implemented as $p\gets\Call{parent}{x}$, $r\gets\Call{Expose}{p}$, $\Call{Cut-Exposed}{x}$, $\Call{Conceal}{r}$, also in 
    $\BigO( (1+\log\frac{W}{w(T_x)}+\log\frac{W}{w(T'_p)})\cdot (\log\log W + \log\lmax) )$ time.

    Note that if $p$ is the root of its tree, or is $\Xsel$-marked,
    both operations take just $\BigO(\log\frac{W}{w(T_x)} + \log\log W
    + \log\lmax)$ time, and if furthermore $p$ has no other children
    it takes just $\BigO(\log\lmax)$ time.


\wninline{

  Just to make sure, I'll remark that in my opinion the final complexities of $\Call{Link}{x}$ and $\Call{Cut}{x}$ will end up being $\widetilde{O}(\log n)$ --- no biased logs here. Or if one would really want, you could express it as $\log \frac{W}{w(T_{p(x)})}$ in the version of the structure that does not have $x$ (i.e. before $\Call{Link}{x}$ or after $\Call{Cut}{x}$), but that's not very useful I think.}
\jhinline{I have added what I think is the correct times above. They are essentially as you say.  I thought we had a problem in \Call{ZipInto}{}, but as you commented elsewhere we have the parents $\Xsel$-marked so that is actually cheap.}


\end{itemize}

\paragraph{Annotated pseudocode} 
In the pseudocode below, we explain how we obtain the running times stated in Lemma~\ref{lem:nbhtree}, by annotating each line of the pseudocode with its running time. Most of these are relatively straightforward to see, but there are a few which require a short argument, and those, we now outline as claims.

\begin{claim}\label{cla:parent}
	When $\Call{Parent}{x}$ returns $y$, it takes time $\BigO(\log \frac{w(y)}{w(x)} + \log\log W)$. 
\end{claim}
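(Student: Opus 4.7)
The plan is to case-split on the type of the edge between $x$ and its parent $y$ in the neighborhood tree, which is either solid (heavy) or dashed (light). We maintain at each node $x$ a pointer to its position in the relevant auxiliary structure, so we can determine the case and start the traversal in $\BigO(1)$ time. In each case, the cost of $\Call{Parent}{x}$ is a single traversal in one of the two auxiliary structures, which we bound by $\BigO\paren*{\log\tfrac{w(T_y)}{w(T_x)} + \log\log W}$.

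Dashed case: if the edge $(x,y)$ is dashed, then $x$ appears as a leaf in the almost biased binary tree representing $y.\textrm{dchildren}$, whose underlying biased disjoint set has elements $\{c : c \text{ is a dashed child of } y\}$, each weighted by $w(T_c)$. The implementation of $\Call{Parent}{x}$ then reduces to a single $\Find{x}$ on this biased disjoint set, which returns $y.\textrm{dchildren}$ and thus $y$. By \Cref{thm:bds}, this takes worst-case time $\BigO\paren*{\log\tfrac{W_y}{w(T_x)} + \log\log W_y}$, where $W_y = \sum_{c\text{ dashed child of }y} w(T_c) \leq w(T_y) \leq W$, giving the claimed bound.

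Solid case: if $(x,y)$ is solid, then $x$ and $y$ lie on the same solid path, stored as a balanced BST with at most $\lmax+1$ leaves and thus of height $\BigO(\log\lmax)$. Starting from $x$'s stored pointer in this BST, we walk upward to its in-order predecessor, which is exactly $y$, in $\BigO(\log\lmax)$ time. Since $\lmax \in \BigO(\log n) \subseteq \BigO(\log W)$, this is $\BigO(\log\log W)$, which is absorbed into the claimed bound because $\log\tfrac{w(T_y)}{w(T_x)} \geq 0$.

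The only mildly delicate point is that the solid case must work even when $w(T_y)/w(T_x)$ is arbitrarily large, which can happen in the $\Xsel$-marked heavy sub-cases (cases~1 and~2 of the heaviness definition); the representation via the solid-path BST sidesteps this cleanly, since its cost is independent of the weights along the path and depends only on $\log\lmax$. The unmarked heavy sub-case (case~3), where $w(T_y) < 2w(T_x)$, is not really special here, only reassurance that the additive $\log\log W$ term is indeed enough to dominate whenever the weight ratio is $\BigO(1)$.
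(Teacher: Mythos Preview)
Your dashed/light case is correct and matches the paper. The solid/heavy case is where you diverge, and there is a gap. The paper simply keeps, for each node, a direct pointer to its parent in the neighborhood tree; the solid-path BST operations (concatenate/split) only reorganize the auxiliary tree over the path and never change parent relations in the neighborhood tree, so this pointer stays valid and $\Call{Parent}{x}$ is $\BigO(1)$ when $x$ is a heavy child.

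Your BST navigation gives $\BigO(\log\lmax)$ instead, and the step ``$\lmax \in \BigO(\log n) \subseteq \BigO(\log W)$'' is where it breaks: here $W = w(X)$ is the total weight of the \emph{particular} neighborhood structure $N_v$ at hand, which can be much smaller than $n$. Indeed, \Cref{lem:nbhtree} carries $\log\lmax$ and $\log\log W$ as separate additive terms in $\hat L$ precisely because neither dominates the other in general. So as written, your solid case does not establish the bound stated in the claim. The fix is the paper's one-liner: store the direct parent pointer. (Your argument does give $\BigO(\log\tfrac{w(y)}{w(x)} + \log\log W + \log\lmax)$, which is still absorbed by $\hat L$ downstream, so the final theorem would survive; but the claim as stated needs the $\BigO(1)$ route.)
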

\begin{proof}
  If $x$ is a heavy child, it has a direct pointer to its
  parent $y$, and $\Call{Parent}{x}$ takes $\BigO(1)$ time. Otherwise
  $x$ is a light child of $y$, and stored as an element in a biased
  disjoint sets tree whose root has a direct pointer to $y$. So in
  this case $y$ is the pointer stored in the node returned by
  $\Call{Find}{x}$, and the running time follows directly from the
  running time for \textproc{Find} in Theorem~\ref{thm:bds}.
\end{proof}

\begin{claim}\label{cla:nca}
	The query to the nearest common ancestor of $x$ and $y$, which we write as $x\bot y$, takes time $\BigO((1+f(x)+f(y))\cdot (\log\log W + \log\lmax) )$ 
	where $f(z)$ is $1$ if $z\in \Xsel$, and $f(z) = \log\frac{W}{w(z)}$ otherwise.
\end{claim}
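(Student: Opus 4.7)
The plan is to compute $x \bot y$ by ascending two cursors, initially at $x$ and $y$, through the heavy-light decomposition one solid path at a time, stopping as soon as both cursors lie on the same solid path and then locating the higher of the two on that path via one BST query. Concretely, I would maintain cursors $u=x$, $v=y$, and in each iteration advance whichever cursor is deeper (using the subtree weight at the top of its solid path as a tiebreaker, readable in $\BigO(1)$ from the solid-path BST). Advancing a cursor means walking from its position to the topmost vertex of its solid path in $\BigO(\log\lmax)$ time via that path's balanced BST, and then invoking $\Call{Parent}{\cdot}$ to cross the single dashed edge above. As soon as $u$ and $v$ lie on the same solid path, one final BST query in $\BigO(\log\lmax)$ time identifies the higher of the two, which is $x \bot y$.

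To bound the running time, I would analyze the two ascents separately and add. Consider the ascent from $x$ with $x\notin \Xsel$: the number of dashed edges traversed before the ascent meets the other one is at most the light depth of $x$, namely $\BigO(1+\log \frac{W}{w(T_x)})$. Each iteration pays $\BigO(\log\lmax)$ for the solid-path BST and, by \cref{cla:parent}, $\BigO(\log \frac{w(T_p)}{w(T_c)}+\log\log W)$ for the subsequent $\Call{Parent}{\cdot}$ call from child $c$ to parent $p$. The weight-ratio terms telescope along the ascent to $\BigO(\log\frac{W}{w(T_x)})$, while the $\log\log W + \log \lmax$ summands accumulate once per dashed edge, for a total of $\BigO((1+\log \frac{W}{w(T_x)})\cdot(\log\log W + \log\lmax))$. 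If instead $x \in \Xsel$, then by the special constant-depth treatment of selected vertices in every dashed-child structure they appear in, the ascent from $x$ traverses $\BigO(1)$ dashed edges before joining a selected root path, yielding $\BigO(\log\log W + \log\lmax)$, consistent with $f(x) = 1$. The same analysis applies symmetrically to the ascent from $y$.

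The main obstacle will be showing that the ``advance the deeper cursor'' discipline does not cause either cursor to climb strictly past the NCA, which would inflate the bound by a cross-term between $x$ and $y$. This follows from the standard solid-path invariant: while $u$ and $v$ lie on distinct solid paths, the cursor whose solid-path top has the strictly smaller subtree weight is strictly below the other in the underlying tree, so advancing it remains safe; once the weights (or solid-path identities) tie, both cursors are on the NCA's solid path and the final BST query finishes. With this invariant established, summing the two individual ascent costs gives the claimed $\BigO((1+f(x)+f(y))\cdot(\log\log W + \log\lmax))$ bound.
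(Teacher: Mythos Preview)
Your approach is correct but takes a genuinely different route from the paper. The paper argues asymmetrically: it first calls $\Call{Expose}{y}$, turning the root-to-$y$ path into a single solid path at cost $\BigO((1+f(y))(\log\log W+\log\lmax))$, and then climbs from $x$ through its $\BigO(1+f(x))$ solid paths and dashed edges until it hits that path. Your symmetric dual-cursor climb avoids $\Call{Expose}{}$ and $\Call{Conceal}{}$ altogether; since those trigger amortized BDS updates, your read-only traversal yields the bound directly in the worst case, which is what the neighborhood-structure lemma actually promises for $\Call{Level}{}$.

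One imprecision to tighten: your stated invariant ``the cursor whose solid-path top has strictly smaller subtree weight is strictly below the other'' is not literally true. When neither cursor has yet reached the NCA's solid path, the two tops lie in different child subtrees of the NCA and are incomparable. The correct invariant is that the smaller-weight top is always a strict descendant of the \emph{NCA} (so advancing it is safe), and if one cursor is already on the NCA's solid path, its top is an ancestor of the NCA and hence has strictly larger subtree weight than the other top, so that cursor is never the one advanced. With this correction your overshoot argument goes through. (Since every node here carries an explicit level in $\{-1,\dots,\lmax+1\}$, you could equally compare top-of-path depths instead of subtree weights; either criterion works.)
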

\begin{proof}
We use $\Expose{y}$ to make $y$ the end of a heavy root path in $\BigO((1+f(y))\cdot(\log\log W+\log\lmax))$ amortized time. Then traverse up the $\BigO(1+f(x))$ light edges (via $\Call{Parent}{}$) and heavy paths (by traversing the binary search trees storing the paths) until hitting the path containing $y$. The total cost of this traversal is $\BigO((1+f(x))\cdot(\log\log W+\log\lmax))$.
\end{proof}

\begin{claim}\label{cla:lvla}
	Level ancestor of a selected node takes time $\BigO(\log \lmax)$. For any other node $x$, level ancestor takes time
	$\BigO((1+\log \frac{W}{w(x)})\cdot (\log \lmax + \log\log W))$.
\end{claim}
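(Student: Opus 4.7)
The plan is to split the argument into the two cases distinguished by the claim, and in each case decompose the cost of traversing the root-to-$x$ path into three ingredients: (i) navigating within a single solid (heavy) path via the balanced BST attached to it, (ii) crossing a light edge via $\Call{Parent}{}$ (which internally invokes $\Call{Find}{}$ in a biased disjoint sets tree), and (iii) the bookkeeping on the path of currently maintained dashed-children structures.

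\textbf{Selected node.} First I would observe that if $x$ is $\Xsel$-marked, then by the definition of heavy edges the ancestors of $x$ are contained in at most two heavy paths (the heavy path of ancestors of $s_1$, and, if $x$ is an ancestor of $s_2$ only, the heavy path of ancestors of $s_2$ strictly below the nearest common ancestor of $s_1, s_2$). Since every heavy path has length at most $\lmax+1$, the BST representing it has height $\BigO(\log \lmax)$. Moreover, by construction the at most two selected children at any node are singled out with constant depth in the dashed-children structure. Thus a level ancestor query on $x$ reduces to a constant number of BST searches (to locate the correct depth inside the containing heavy path and, if necessary, to jump to the second one at the NCA) and constant-time parent lookups, yielding the $\BigO(\log \lmax)$ bound.

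\textbf{Non-selected node.} For a general $x$, I would use the light-depth bound recalled in the description: the path from $x$ to the root contains at most $\lfloor \log_2 \tfrac{W}{w(T_x)}\rfloor + 2$ light edges and therefore intersects $k = \BigO(1 + \log\tfrac{W}{w(x)})$ heavy paths. Inside each intersected heavy path the BST search costs $\BigO(\log \lmax)$, contributing a total of $\BigO(k \cdot \log \lmax)$. Crossing a light edge from a child $c$ to its parent $p$ via $\Call{Parent}{c}$ costs $\BigO\!\bigl(\log \tfrac{w(T_p)}{w(T_c)} + \log\log W\bigr)$ (by \Cref{cla:parent}, which in turn comes from \Cref{thm:bds}). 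The $\log\log W$ terms accumulate to $\BigO(k \log\log W)$, while the weight-ratio terms telescope: summed over all light edges on the root-to-$x$ path they give $\BigO(\log \tfrac{W}{w(x)})$. Adding the three contributions yields the desired bound
\[
  \BigO\!\left(\left(1+\log\tfrac{W}{w(x)}\right)\cdot(\log \lmax + \log\log W)\right).
\]

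\textbf{Main obstacle.} The only slightly delicate point I anticipate is checking that the traversal really is a clean alternation of BST-searches in solid paths and single light-edge hops, i.e.\ that no implicit $\Call{Expose}{}$ or structural modification is triggered by the level-ancestor procedure; this is what makes the bounds worst-case rather than amortized. I would verify this by walking through the $\Call{Parent}{}$ and BST navigation routines and confirming that the level-ancestor procedure only reads pointers (leaf-to-root in a biased disjoint-set tree, and a top-down BST search on each heavy path) without altering the solid/dashed partition.
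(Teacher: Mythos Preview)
Your argument is correct, but it differs from the paper's. The paper's proof is a two-liner: call $\Call{Expose}{x}$ to make the entire root-to-$x$ path a single solid path, and then do one BST search of cost $\BigO(\log\lmax)$ in that solid-path tree to locate the depth-$i$ node. The cost is then just the cost of $\Call{Expose}{x}$ (which the paper quotes as $\BigO((1+\log\tfrac{W}{w(T_x)})(\log\log W+\log\lmax))$, and $\BigO(\log\lmax)$ when $x$ is selected) plus $\BigO(\log\lmax)$ for the search.

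Your route instead leaves the solid/dashed partition untouched and walks up, alternating BST navigation inside each heavy path with $\Call{Parent}{}$ hops across light edges, telescoping the biased-depth terms. This is exactly the kind of read-only traversal the paper alludes to later when it remarks that the uses of $\Call{Expose}{}$ in the query pseudocode ``can be avoided, resulting in worst-case running times of the same form.'' The advantage of your version is that it is worst-case out of the box, since it never touches the biased disjoint-sets structure except via $\Call{Find}{}$; the paper's version is simpler to state but inherits the amortization from $\Call{Expose}{}$ (which performs $\Call{Delete}{}$/$\Call{RootUnion}{}$ in the BDS). Your handling of the selected case is also fine: the root-to-$s_1$ path is a single heavy path by rule~1, and the root-to-$s_2$ path is a single heavy path below the root plus one light edge into the root, with that light child stored at constant depth in the dashed-children structure; so the $\BigO(\log\lmax)$ bound follows from a constant number of BST searches and a constant-time $\Call{Parent}{}$, as you say.

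One small cleanup: your ingredient~(iii), ``bookkeeping on the path of currently maintained dashed-children structures,'' is not needed---a read-only level-ancestor query updates nothing---so you can drop it from the decomposition.
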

\begin{proof}
  Use $\Expose{x}$, and find the $i$th element in the resulting heavy path. 
  If $x$ is a selected node, $\Expose{x}$ takes $\BigO(\log \lmax)$ time. Otherwise, it may take $\BigO((1+\log \frac{W}{w(x)})\cdot (\log \lmax + \log\log W))$ time. Finding the $i$th node takes $\BigO(\log\lmax)$ time. 
\end{proof}

\begin{claim}\label{cla:assign-dchildren}
  Replacing the set of light children of a node $x$ with a new specified set of light children of the same weight takes $\BigO(1)$
  time. 
  If their weights differ, and $x$ is an ancestor of a selected node, it takes time $\BigO(\log\lmax)$.
  
  In the special case where $x$ is an isolated node, it again takes $\BigO(1)$ time.
  
\end{claim}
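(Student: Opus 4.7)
My plan would be to split the operation into two parts: redirecting the pointer $x.\mathrm{dchildren}$ to the prescribed new root of the biased disjoint sets data structure representing the new set of light children, and, if necessary, propagating the resulting change in $w(T_x)$ up the chain of ancestors of $x$. The pointer redirection is a single assignment and costs $\BigO(1)$. When the total weight of the new light children equals the old total, $w(T_x)$ is unchanged and no ancestor weight needs to be modified; similarly, when $x$ is isolated, there are no ancestors to update. In both of these subcases the operation terminates in $\BigO(1)$ time, as claimed.

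For the remaining case I would first argue that no structural change in the heavy-path decomposition is triggered by the weight change. Every ancestor $y$ of $x$ is $\Xsel$-marked, since $x$ itself is an ancestor of a selected node; hence clauses~(1) and~(2) of the definition of heavy edges determine the heavy child of $y$ and the heavy/light status of every edge on the ancestor chain purely from the selection markers, independently of stored weights. The weight-based clause~(3) is never invoked along the ancestor chain, so only the numerical fields $w(T_y)$ need to be incremented by the common value $\Delta$ equal to the change in the total weight of the light children. In particular, no re-solidification or re-dashing of edges incident to any ancestor of $x$ is needed, and no biased disjoint set structure higher up must be rebuilt.

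Finally, I would perform those increments in batch via the solid-path BSTs. The strict ancestors of $x$ lie on at most two solid paths: the heavy path containing $x$, extended upward, and --- in the case where $x$ sits on the sub-chain leading to $s_2$ --- the root-to-$s_1$ heavy path above the nearest common ancestor of $s_1$ and $s_2$. On each such solid path the ancestors of $x$ to be updated form a contiguous prefix. Augmenting each solid-path BST with a lazy range-add tag on the stored subtree weights then implements the prefix increment by $\Delta$ in $\BigO(\log \lmax)$ time per BST, and since only $\BigO(1)$ BSTs are affected the total cost matches the claimed $\BigO(\log \lmax)$ bound.

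The main subtlety, which I would flag as the step deserving the most care, is confirming that the lazy-range-add augmentation coexists with the other annotations already stored on the solid-path BST (heavy-path traversal pointers, marking information required for \textproc{FindMarked}-style queries, etc.). Because subtree weights interact only additively with themselves, do not feed into the heavy/light classification on the ancestor chain here, and are independent of the boolean marking annotations, this compatibility is routine once the augmentation is set up in the standard way for balanced BSTs.
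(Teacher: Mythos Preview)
Your argument is correct and arrives at the same bound, but the route differs from the paper's in a way worth noting.

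For the two easy cases (same weight, isolated node) your reasoning matches the paper's. For the interesting case, the paper's proof is terser: since $x$ is $\Xsel$-marked, the root-to-$x$ path contains at most one light edge, and the (at most one) node on that path that sits as a light child has constant depth in its parent's BDS by the special handling of selected-edge ancestors. Hence the propagation touches at most two heavy-path BSTs ($\BigO(\log\lmax)$ each) and at most one BDS entry ($\BigO(1)$), and that is all.

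Your proof makes one point more explicit than the paper does --- that clauses~(1) and~(2) alone fix the heavy/light status along the ancestor chain, so no re-solidification is needed --- and that is a nice observation. Where you diverge is in the mechanism: you store $w(T_v)$ directly at each BST leaf and reach for a lazy range-add to increment a prefix. This works, but it is more machinery than necessary and creates the very ``coexistence'' subtlety you flag. The paper's implicit representation stores at each BST leaf $v$ only the \emph{local} contribution (the total weight of $v$'s dashed children); then $w(T_v)$ is a suffix sum, and changing $x$'s dashed children is a single point update per BST. No lazy tags, no interaction with the other annotations.

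One small omission: your sentence ``no biased disjoint set structure higher up must be rebuilt'' skates past the fact that when $x$ lies on the $s_2$ strand, the $s_2$-side child of the root is a light child whose stored weight (and any aggregate depending on it) must still be touched in the root's BDS. The paper addresses this with ``constant depth in the at most one light structure to which it belongs.'' It is $\BigO(1)$ and does not affect the bound, but it is a step, not a non-step.
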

\begin{proof}
	If there are no changes to the weights, this is local operation, changing only a constant number of pointers.
	
	If $x$ is an ancestor of a selected node, it has constant light depth, and it has constant depth in the at most one light structure to which it belongs. Thus, the cost is incurred by the at most two heavy paths, each yielding at most an additive $\BigO(\log \lmax)$ term. 

	Finally, if $x$ is an isolated node, there is nothing else to update, and it takes constant time.
\end{proof}

\begin{claim}\label{cla:unselect-select-heavy}
	The procedures $\Call{UnselectHeavy}{x}$ and $\Call{SelectHeavy}{x}$ work in the time complexity $\BigO(\log \log W + \log \lmax)$.
\end{claim}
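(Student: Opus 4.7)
The plan is to show that each of the two procedures reduces to a constant number of elementary updates on exactly two substructures attached at $x$: the biased disjoint set of dashed children of $x$, and the balanced binary search tree representing the solid path through $x$. Both procedures add or remove exactly one element, namely the (unique) heavy child $c$ of $x$, moving it between these two substructures.

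First I would analyze the biased disjoint sets update. Since the edge $(c,x)$ is heavy, by the definition of heaviness we have $w(T_x) < 2\,w(T_c)$, and hence $\log\frac{w(T_x)}{w(T_c)} = \BigO(1)$. Therefore, by Theorem~\ref{thm:bds}, both an insertion (performed as a \MakeSet{} followed by a \RootUnion{}, or equivalently as a reverse \Delete{}) and a deletion of $c$ in the dashed-children structure of $x$ cost amortized
\[
\BigO\!\left(\log\tfrac{w(T_x)}{w(T_c)} + \log\log w(T_x)\right) = \BigO(\log\log W).
\]
Next I would analyze the solid-path update. The tree has height $\lmax + 1$, so every solid path contains at most $\lmax + 1$ nodes and its balanced BST has height $\BigO(\log\lmax)$. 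Both $\Call{UnselectHeavy}{x}$ and $\Call{SelectHeavy}{x}$ amount to splitting or concatenating this BST at the position corresponding to $x$, which as noted in the description touches only a prefix of size $\BigO(\log\lmax)$ and thus takes worst-case time $\BigO(\log\lmax)$.

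Finally, for $\Call{SelectHeavy}{x}$ I would assume that each node maintains a direct pointer to its (unique, if any) heavy child, so that identifying $c$ costs $\BigO(1)$; this pointer is updated lazily and its maintenance cost is already absorbed into the operations that change subtree weights. Summing the two contributions above yields the bound $\BigO(\log\log W + \log\lmax)$ claimed in the statement.

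The main obstacle I anticipate is bookkeeping around the heavy-child pointer and confirming that neither procedure disturbs other invariants of the structure (in particular the dashed-depth bound and the $\Xsel$-marking information). Both concerns are local: the procedures only modify the solid/dashed status of a single edge, so all invariants except the proper-tree invariant remain intact, and the latter is exactly the invariant that these procedures are temporarily permitted to break.
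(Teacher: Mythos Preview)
Your argument has a gap in the biased disjoint sets part. You assert that ``by the definition of heaviness we have $w(T_x) < 2\,w(T_c)$'', but that is not the definition used here: an edge $(c,x)$ is declared heavy also when $c$ is $\Xsel$-marked (an ancestor of a selected node), regardless of weight. In that case $\log\frac{w(T_x)}{w(T_c)}$ need not be $\BigO(1)$, so your application of Theorem~\ref{thm:bds} as stated does not go through.

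The paper's proof covers exactly this missing case: it observes that a heavy child $y$ of $x$ is either an ancestor of a selected node or satisfies $2w(T_y)>w(T_x)$. In the former case, the paper has already stipulated that the dashed-children structure ``handle[s] the at most two selected edges separately, giving them constant depth'', so inserting or deleting such a $y$ still costs $\BigO(\log\log W)$. With that case added, your decomposition into a BDS update plus a solid-path BST update is exactly the paper's reasoning, and your analysis of the $\BigO(\log\lmax)$ solid-path cost is fine.
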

\begin{proof}
	If $y$ is a heavy child of $x$ then either $y$ is an ancestor of a selected node or $2w(T_y) > w(T_x)$. In any case, as discussed earlier, the depth of it in the respective biased disjoint sets data structures is $\BigO(\log \log W)$, hence these procedures work in the time complexity $\BigO(\log \log W + \log \lmax)$.
\end{proof}

Note, finally, that the analysis annotated in the pseudocode counts the unitary operations in the data structure (i.e., it assumes that the leaves hold single values instead of bit vectors and counter vectors). To obtain the final analysis of the actual running time, one in fact has to take into account that we not only perform operations on one number, but on a vector of size $\lmax$. In other words, all the stated running times should be multiplied by a factor $T(\lmax)$, denoting the time it takes to perform operations on a vector of size $\lmax$. 



\begin{algorithm}
  \begin{algorithmic}
    \caption{Insert} \label{alg:nt-insert}
    \Function{Insert}{$x$}
    \State $\Call{InsertAt}{x,\lmax}$
    \Comment{$\BigO(\log\frac{W}{w(x)} + \lmax \cdot (\log\log W + \cdot \log\lmax))$}
    \EndFunction
    \State
    \Function{InsertAt}{$x, \ell$}
    \For{$i=\ell$ \textbf{downto} $0$}
      \State Create a new node $p$ at level $i$
      \Comment{$\BigO(1)$}
      \State $\Call{Link}{x,p}$
      \Comment{$\BigO(\log\lmax)$}
      \State $x\gets p$
      \Comment{$\BigO(1)$}
    \EndFor
    \State $\Call{Link}{x,\mathrm{root}}$
    \Comment{$\BigO(\log\frac{W}{w(x)} + \log\log W + \log\lmax)$}
    \EndFunction
  \end{algorithmic}
\end{algorithm}

\begin{algorithm}
  \begin{algorithmic}
    \caption{Delete} \label{alg:nt-delete}
    \Function{Delete}{$x$}
        \State $p \gets \Call{Parent}{x}$
        \Comment{$\BigO(\log\frac{w(p)}{w(x)} + \log\log W)$}
        \While {$p$ is not the root and $p$ has only one child}
            \State $x \gets p$
            \Comment{$\BigO(1)$}
            \State $p \gets \Call{Parent}{x}$
            \Comment{$\BigO(\log\frac{w(p)}{w(x)} + \log\log W)$}
        \EndWhile
        \State $\Call{Cut}{x}$
        \Comment{$\BigO((1+\log\frac{W}{w(x)}+\log\frac{W}{w'(p)})\cdot(\log\log W+\log\lmax))$}
        \LComment{Now destroy the tree rooted at $x$.}
    \EndFunction
  \end{algorithmic}
\end{algorithm}

\begin{algorithm}
  \begin{algorithmic}
    \caption{Level} \label{alg:nt-level}
    \Function{Level}{$x, y$}
        \Comment{$\bot$ denotes nca below}
        \State $L \gets $ selected level
        \State \Return
    $\max\{x\bot y, \min\{L, \max \{x\bot s_1, x\bot s_2\},
    \max \{y\bot s_1, y\bot s_2\}\}\}$
	\LComment{Note, $x\bot y$ dominates the running time: \hfill $\BigO((1+\log\frac{W}{w(x)}+\log\frac{W}{w(y)})\cdot (\log\log W + \log\lmax) )$}

    \EndFunction
  \end{algorithmic}
\end{algorithm}

\begin{algorithm}
  \begin{algorithmic}
    \caption{Zip} \label{alg:nt-zip}
    \Function{Zip} {$x, y, i$}
    \State $cx\gets$ ancestor of $x$ at level $i$
    \Comment{$\BigO((1+\log \frac{W}{w(x)})\cdot (\log \lmax + \log\log W))$}
    \State $cy\gets$ ancestor of $y$ at level $i$
    \Comment{$\BigO((1+\log \frac{W}{w(y)})\cdot (\log \lmax + \log\log W))$}
    \LComment{Since $x \nsim_i y$, we have $\mathcal{L}_i(x) \neq \mathcal{L}_i(y)$ and $cx \neq cy$.
    Also $x \sim_{i-1} y$ and therefore $\mathcal{L}_{i-1}(x) = \mathcal{L}_{i-1}(y)$.
    Hence either $\mathcal{L}'_{i-1}(x) = \mathcal{L}'_{i-1}(y)$ and $\Call{Parent}{cx} = \Call{Parent}{cy}$, or $\mathcal{L}_{i-1}(x)$ is a~disjoint union of $\mathcal{L}'_{i-1}(x)$ and $\mathcal{L}'_{i-1}(y)$, and $\Call{Parent}{cx} \neq \Call{Parent}{cy}$.}
    \If{$cx$ is not ancestor of a selected node}
      \State \Call{ZipInto}{$cx, cy$}
      \Comment{$\BigO((1 + \log \frac{W}{w(cx)} + \log \frac{W}{w(cy)}) \cdot (\log\log W + \log \lmax) )$}
    \ElsIf{$cy$ is not ancestor of a selected node}
      \State \Call{ZipInto}{$cy, cx$}
          \Comment{$\BigO((1 + \log \frac{W}{w(cx)} + \log \frac{W}{w(cy)}) \cdot (\log\log W + \log \lmax) )$}
    \Else{} \Comment{Both $cx$ and $cy$ are ancestors of the selected nodes}
      \State Increase selected level from $i-1$ to $i$
      \Comment{$\BigO(1)$}
    \EndIf
    \EndFunction
    \State
    \Function{ZipInto} {$cx, cy$}
    \State $px\gets\Call{parent}{cx}$
    \Comment{$\BigO(\log \frac{w(px)}{w(cx)} + \log\log W)$}
    \State $py\gets\Call{parent}{cy}$
    \Comment{$\BigO(\log \frac{w(py)}{w(cy)} + \log\log W)$}
    \If{$px\neq py$}
        \LComment{$p_x$ and $p_y$ are both $\Xsel$-marked, so the cut and link here are cheap.}
        \State $\Call{Cut}{cx}$
        \Comment{$\BigO(\log \frac{W}{w(cx)} + \log\log W + \log\lmax )$}
        \State $\Call{Link}{cx,py}$
        \Comment{$\BigO(\log \frac{W}{w(cx)} + \log\log W + \log\lmax )$}
    \EndIf
    \State $\Call{UnselectHeavy}{cx}$
    \Comment{$\BigO(\log\log W + \log\lmax)$}
    \State $\Call{UnselectHeavy}{cy}$
    \Comment{$\BigO(\log\log W + \log\lmax)$}
    \State $\Call{UnselectHeavy}{py}$    
    \Comment{$\BigO(\log\log W + \log\lmax)$}
    \State Create a new node $c$ with weight $w(cx)+w(cy)$
    \Comment{$\BigO(1)$}
    \State $c.\textrm{dchildren}\gets\Call{RootUnion}{cx.\textrm{dchildren},cy.\textrm{dchildren}}$
    \Comment{$\BigO(\log\log W)$}
    \State $py.\textrm{dchildren}\gets \Call{Coalesce}{cx,cy,c}$
    \Comment{$\BigO(\log \frac{w(py)}{w(cx)} +\log \frac{w(py)}{w(cy)}+ \log\log W)$}
    \State $\Call{SelectHeavy}{c}$
    \Comment{$\BigO(\log\log W + \log\lmax)$}
    \State $\Call{SelectHeavy}{py}$
    \Comment{$\BigO(\log\log W + \log\lmax)$}
    \EndFunction
  \end{algorithmic}
\end{algorithm}

\begin{algorithm}
	\begin{algorithmic}
		\caption{LongZip} \label{alg:nt-longzip}
		\Function{LongZip}{$i_1, i_2$}
		\State $\SelectedLevel{}\gets i_2$
		\EndFunction
	\end{algorithmic}
\end{algorithm}

\begin{algorithm}
  \begin{algorithmic}
    \caption{Unzip} \label{alg:nt-unzip}
    \Function{Unzip}{$x, y, i$}
    \State $cx \gets $ ancestor of $x$ at level $i + 1$ 
    \Comment{$\BigO((1+\log \frac{W}{w(x)})\cdot (\log \lmax + \log\log W))$}
    \State $cy \gets $ ancestor of $y$ at level $i + 1$
    \Comment{$\BigO((1+\log \frac{W}{w(y)})\cdot (\log \lmax + \log\log W))$}
    \State $px\gets\Call{parent}{cx}$
    \Comment{$\BigO(\log \frac{w(px)}{w(cx)} + \log\log W)$}
    \State $py\gets\Call{parent}{cy}$
    \Comment{$\BigO(\log \frac{w(py)}{w(cy)} + \log\log W)$}
    \State $gx\gets\Call{parent}{px}$
    \Comment{$\BigO(\log \frac{w(gx)}{w(px)} + \log\log W)$}
    \State $gy\gets\Call{parent}{py}$
    \Comment{$\BigO(\log \frac{w(gy)}{w(py)} + \log\log W)$}
    \If{$cx$ is ancestor of some selected node}

        \If{$\SelectedLevel{} \geq i + 1$} 
        \Comment{The edge $cx,px$ is a double edge}
            \For{$j \in \{1,2\}$}
                \State $p_j\gets $ ancestor of $s_j$ at level $i$
                \Comment{$\BigO(\log \lmax)$}
            \EndFor
            \State Create a new node $p'$
            \State $p'.\mathrm{dchildren}\gets\Call{RootUnion}{p_1.\mathrm{dchildren},p_2.\mathrm{dchildren}}$
            \Comment{$\BigO(\log\log W)$}
            \State $p_1.\mathrm{dchildren}\gets \emptyset$
            \Comment{$\BigO(\log\lmax)$}
            \State $p_2.\mathrm{dchildren}\gets \emptyset$
            \Comment{$\BigO(\log\lmax)$}
            \State $\Call{Link}{p',gy}$
            \Comment{$\BigO(\log\frac{w(gy)}{w(p')} + \log\log W + \log\lmax)$ (since $gy$ is $\Xsel$-marked)}
            
            \State \Return
        \EndIf
        \If{$\SelectedLevel{} =i$}
            \Comment{A double edge becomes two single edges}
            \State $py.\mathrm{dchildren}\gets\Call{RootUnion}{px.\mathrm{dchildren}, py.\mathrm{dchildren}}$
            \Comment{$\BigO(\log \log W + \log\lmax)$}
            \State $px.\mathrm{dchildren}\gets\emptyset$
            \Comment{$\BigO(\log\lmax)$}
            \State decrease \SelectedLevel{} by $1$
            \State \Return
        \EndIf
    \EndIf
    \LComment{Here $cx$ is not an~ancestor of any selected node.}

    \jhinline{New version...}
    \State $r\gets \Call{Expose}{px}$
    \Comment{$\BigO( (1+\log\frac{W}{w(cx)})\cdot(\log\log W + \log\lmax) )$}
    \State \Call{Cut-Exposed}{$cx$}
    \Comment{$\BigO( \log\frac{w(px)}{w(cx)} + \log\log W + \log\lmax)$}
    \State \Call{UnselectHeavy}{$gx$}
    \Comment{$\BigO(\log\log W + \log\lmax)$}
    \State \Call{SelectHeavy}{$px$}
    \Comment{$\BigO(\log\log W + \log\lmax)$}
    \LComment{Now the tree containing $gx$ is $gx$-exposed and rooted at $r$.}
    \State Create a new node $p'$ with weight $w(cx)$
    \Comment{$\BigO(1)$}
    \State $\Call{Link}{cx,p'}$ 
    \Comment{$\BigO(\log \lmax)$}
    \State \Call{Link-Exposed}{$p',gx$}
    \Comment{$\BigO( \log\frac{w(gx)}{w(cx)} + \log\log W + \log\lmax)$}
    \State \Call{Conceal}{$r$}
    \Comment{$\BigO( (1+\log\frac{W}{w(gx)})\cdot(\log\log W + \log\lmax) )$}
    \EndFunction
  \end{algorithmic}
\end{algorithm}

\begin{algorithm}
	\begin{algorithmic}
		\caption{LongUnzip} \label{alg:nt-longunzip}
		\Function{LongUnzip}{$i_2, i_1$}
		\State $\SelectedLevel{}\gets i_1$
		\EndFunction
	\end{algorithmic}
\end{algorithm}

\begin{algorithm}
  \begin{algorithmic}
    \caption{Select} \label{alg:nt-select}
    \Function{Select}{$X_{\mathrm{sel}}'=\set{x,y}$}
    \LComment{Do the actual zips corresponding to \SelectedLevel{}}
    \State $s \gets \Call{SelectedLevel}{}$ \Comment{Save the old selected level}
    \State $\{x_o,y_o\} \gets \Xsel$
    \State $\Xsel \gets \emptyset$
    \State $cx_s \gets $ ancestor of $x_o$ at level $s$
    \Comment{$\BigO((1+\log \frac{W}{w(x_o)})\cdot (\log \lmax + \log\log W))$}
    \State $cy_s \gets $ ancestor of $y_o$ at level $s$
    \Comment{$\BigO((1+\log \frac{W}{w(y_o)})\cdot (\log \lmax + \log\log W))$}
    \For{$i=s-1$ \textbf{downto} $0$}
      \State $cx_i\gets \Call{Parent}{cx_{i+1}}$
      \Comment{$\BigO(\log \frac{w(cx_i)}{w(cx_{i+1})} + \log\log W)$}
      \State $cy_i\gets \Call{Parent}{cy_{i+1}}$
      \Comment{$\BigO(\log \frac{w(cy_i)}{w(cy_{i+1})} + \log\log W)$}
    \EndFor
    \For{$i=0$ \textbf{to} $s$}
      \State $\Call{Cut}{cx_i}$
      \Comment{$\BigO(\log \frac{w(cx_{i-1})}{w(cx_{i})} + \log\log W + \log \lmax)$}
      \State $\Call{Cut}{cy_i}$
      \Comment{$\BigO(\log \frac{w(cy_{i-1})}{w(cy_{i})} + \log\log W + \log \lmax)$}
    \EndFor
    \For{$i=0$ \textbf{to} $s$}
      \State $\Call{UnselectHeavy}{cx_i}$
      \Comment{$\BigO(\log\log W + \log\lmax)$}
      \State $\Call{UnselectHeavy}{cy_i}$
      \Comment{$\BigO(\log\log W + \log\lmax)$}
      \State Create a new node $m_i$
      \Comment{$\BigO(1)$}
      \State $m_i.\mathrm{dchildren}\gets \Call{RootUnion}{cx_i.\mathrm{dchildren},cy_i.\mathrm{dchildren}}$
      \Comment{$\BigO(\log\log W)$}
      \State $\Call{SelectHeavy}{m_i}$
      \Comment{$\BigO(\log\log W + \log\lmax)$}
    \EndFor
    \For{$i=s$ \textbf{downto} $1$}
      \State $\Call{Link}{m_i,m_{i-1}}$
      \Comment{$\BigO(\log\frac{w(m_{i-1})}{w(m_i)} + \log\log W + \log\lmax)$}
    \EndFor
    \State $\Call{Link}{m_0,\mathrm{root}}$
    \Comment{$\BigO( (1+\log\frac{W}{m_0}) \cdot (\log\log W + \log\lmax))$}

    \LComment{Completely unzip the shared path for the new selection.}
    \State $s' \gets \Level{x,y}$
    \Comment{$\BigO( (1+\log\frac{W}{w(x)}+\log\frac{W}{w(y)})\cdot (\log\log W + \log \lmax) )$}
    \State $cx\gets$ ancestor of $x$ at level $s'+1$
    \Comment{$\BigO( (1+\log\frac{W}{w(x)})\cdot (\log\log W + \log \lmax) )$}
    \State $px \gets \Call{Parent}{cx}$ 
    \Comment{$\BigO(\log \frac{w(px)}{w(cx)} + \log\log W)$}
    \State $\Call{Cut}{cx}$
    \Comment{$\BigO((1+\log \frac{W}{w(cx)} + \log\frac{W}{w'(px)})\cdot (\log \log W + \log \lmax))$}
    \State $\Call{InsertAt}{cx,s' + 1}$
    \Comment{$\BigO( \log\frac{W}{w(cx)} + \lmax \cdot (\log\log W +  \log \lmax ))$}

    \State $\SelectedLevel{} \gets s'$
    \State $X_{\mathrm{sel}} \gets X_{\mathrm{sel}}'$
    \EndFunction
  \end{algorithmic}
\end{algorithm}

\begin{algorithm}
  \begin{algorithmic}
    \caption{SelectedLevel} \label{alg:nt-selectedlevel}
    \Function{SelectedLevel}{{}}
    \State \Return \SelectedLevel{}
    \EndFunction
  \end{algorithmic}
\end{algorithm}

\begin{algorithm}
  \begin{algorithmic}
    \caption{SetWeight} \label{alg:nt-setweight}

    \Function{SetWeight}{$x$, $w$}
   \State $r\gets \Call{Expose}{x}$
   \Comment{$\BigO( (1+\log\frac{W}{w(x)})\cdot(\log\log W+\log\lmax))$}
   \State Update the weight of $x$
   \Comment {$\BigO(1)$}
   \State Update the heavy path tree containing $x$ with the new weight
   \Comment{$\BigO(\log\lmax)$}
   \State \Call{Conceal}{r}
   \Comment{$\BigO( (1+\log\frac{W'}{w(x')})\cdot(\log\log W'+\log\lmax))$}
    \EndFunction
  \end{algorithmic}
\end{algorithm}

\clearpage
\subsection{Extensions}
\label{sec:nt-extensions}

To handle the counting and marking extensions we need to augment the
nodes in our neighborhood tree, our heavy path trees, and our biased
disjoint set trees with some additional information.

For the heavy path trees, their leaves are nodes of the neighborhood tree  and the counting (respectively, marking) information in an internal node is simply the sum (respectively, or) of the interval of leaves represented by the node.

For the biased disjoint sets counting (respectively, marking) is, again, simply a sum (respectively, or), of the values for the subset represented by the light set. 
Furthermore, we also need to point to the ``heaviest'' element in the light set, and for this heaviest light child, store which selected nodes (if any) are below this child. \wninline{The above is not particularly clear to me in the aspect of selected nodes.}

For a neighborhood tree nodes $x$, counting (respectively, marking) information is computed from the information in the root of its biased disjoint set.
\wninline{I commented out the mention ``by clearing (i.e., setting to zero, respectively, false), the counters (respectively, marks) whose level is higher than that of $x$.'' because I think it was wrong}

The pseudocode below for $\Call{SumCounters}{}$ and $\Call{OrMarks}{}$ are, for the sake of simplicity, written to take amortized time, by using expose. The only reason for this not being worst-case are the temporary changes to the light structures caused by the expose. It is a cumbersome but straightforward exercise to see that these calls to expose can be avoided, resulting in worst-case running times of the same form.
\wninline{As argued in the previous subsection, I think that in the end Expose and Conceal should be removed}
\jhinline{I disagree. Expose and Conceal greatly simplifies the code if used correctly. BTW, the only reason this becomes amortized when using Expose is that the updates in the BDS structure are amortized. }
\wninline{Per my comment in previous subsection, I agree that Expose and Conceal should stay.}

\begin{algorithm}
  \begin{algorithmic}
    \caption{UpdateMarks} \label{alg:nt-updatemarks}
    \Function{UpdateMarks}{$x, \mathbf{b}^x$}
      \State $r\gets \Call{Expose}{x}$
      \Comment{$\BigO((1+\log\frac{W}{w(x)})\cdot(\log\log W + \log\lmax))$}
      \State $x.\totalmarksvec \gets \mathbf{b}^x$
      \Comment{$\BigO(1)$}
      \For{each ancestor $v$ of $x$ in its heavy path tree, bottom-up order}
        \Comment{$\BigO(\log\lmax)$ times}
        \State $v.\totalmarksvec\gets v.\mathrm{left}.\totalmarksvec \bitor v.\mathrm{right}.\totalmarksvec$
        \Comment{$\BigO(1)$}
      \EndFor
      \State $\Call{Conceal}{r}$
      \Comment{$\BigO((1+\log\frac{W}{w(x)})\cdot(\log\log W + \log\lmax))$}
    \EndFunction
  \end{algorithmic}
\end{algorithm}

\begin{algorithm}
	\begin{algorithmic}
		\caption{UpdateCounters} \label{alg:nt-updatecounters}
		\Function{UpdateCounters}{$x, \mathbf{c}^x$}
		\State $r\gets \Call{Expose}{x}$
		\Comment{$\BigO((1+\log\frac{W}{w(x)})\cdot(\log\log W+\log\lmax))$}
		\State $x.\totalcntvec \gets \mathbf{c}^x$
		\Comment{$\BigO(1)$}
		\For{each ancestor $v$ of $x$ in its heavy path tree, bottom-up order}
		\Comment{$\BigO(\log\lmax)$ times}
		\State $v.\totalcntvec\gets v.\mathrm{left}.\totalcntvec + v.\mathrm{right}.\totalcntvec$
		\Comment{$\BigO(1)$}
		\EndFor
		\State $\Call{Conceal}{r}$
		\Comment{$\BigO((1+\log\frac{W}{w(x)})\cdot(\log\log W+\log\lmax))$}
		\EndFunction
	\end{algorithmic}
\end{algorithm}

\begin{algorithm}
	\begin{algorithmic}
		\caption{FindMarked} \label{alg:nt-findmarked}
		\Function{FindMarked}{$x, i$}
		\wninline{This pseudocode should be in my opinion majorly rewritten as per my e-mail from 6th March. In particular, as written $v.\totalmarksvec$ is a single vector associated with the vertex $v$ and we cannot read off what we need from that.
			
		I think we may even opt out of the pseudocode for this one and write it with words as we don't really have definitions/subroutines to express that succinctly in a pseudocode.}
		\State $v\gets $ ancestor to $x$ at level $i$
		\Comment{$\BigO((1+\log\frac{W}{w(x)})\cdot(\log\log W + \log\lmax))$}
		\If{$v.\totalmarksvec$ has a $0$ at position $i$}
		\Return $\bot$
		\Comment{$\BigO(1)$}
		\EndIf
		\State Traverse the tree downwards from $v$ until the marked leaf $y$ is found and return it
		\State\Comment{$\BigO((1+\log\frac{W}{w(y)})\cdot(\log\log W + \log\lmax))$}
		\EndFunction
	\end{algorithmic}
\end{algorithm}

\begin{algorithm}
  \begin{algorithmic}
    \caption{OrMarks} \label{alg:nt-ormarks}
    \Function{OrMarks}{$x$}
      \State $r\gets \Call{Expose}{x}$
      \Comment{$\BigO((1+\log\frac{W}{w(x)})\cdot(\log\log W + \log\lmax))$}
      \State $\mathbf{marks}\gets r.\totalmarksvec$
      \Comment{$\BigO(1)$}
      \State $\Call{Conceal}{r}$
      \Comment{$\BigO((1+\log\frac{W}{w(x)})\cdot(\log\log W + \log\lmax))$}
      \If{$\Xsel=\emptyset$}
        \Return $\mathbf{marks}$
        \Comment{$\BigO(1)$}
      \EndIf
      \State $\set{s_1,s_2}\gets \Xsel$
      \Comment{$\BigO(1)$}
      \State $a_1\gets$ $x \perp s_1$
      \Comment{$\BigO((1+\log\frac{W}{w(x)})\cdot(\log\log W + \log\lmax))$}
      \State $a_2\gets$ $x \perp s_2$
      \Comment{$\BigO((1+\log\frac{W}{w(x)})\cdot(\log\log W + \log\lmax))$}
      \LComment{Note: $s_1 \perp s_2 = \mathrm{root}$, so at least one of $a_1$, $a_2$ is the root.}
      \If{$a_1$ is not the root}
        \LComment{Need to add data from the $s_2$ strand, starting at \SelectedLevel{}.}
        \State $c \gets$ ancestor to $s_2$ at level $0$
        \Comment{$\BigO(\log\lmax)$}
      \ElsIf{$a_2$ is not the root}
        \LComment{Need to add data from the $s_1$ strand, starting at \SelectedLevel{}.}
        \State $c \gets$ ancestor to $s_1$ at level $0$
        \Comment{$\BigO(\log\lmax)$}
      \Else
        \LComment{The selected nodes have no effect on $x$.}
        \State \Return $\mathbf{marks}$
        \Comment{$\BigO(1)$}
      \EndIf
      \State Remove the root from its heavy path, but without updating $r.\mathrm{dchildren}$
      \Comment{$\BigO(\log\lmax)$}
      \State $\mathbf{marks}'\gets c.\totalmarksvec$ with all marks at index $>\SelectedLevel{}$ zeroed out
      \Comment{$\BigO(1)$}
      \State Add the root back to its heavy path
      \Comment{$\BigO(\log\lmax)$}
      \State \Return $(\mathbf{marks} \bitor \mathbf{marks}')$
      \Comment{$\BigO(1)$}
    \EndFunction
  \end{algorithmic}
\end{algorithm}

\begin{algorithm}
  \begin{algorithmic}
    \caption{SumCounters} \label{alg:nt-sumcounters}
    \Function{SumCounters}{$x$}
      \State $r\gets \Call{Expose}{x}$
      \Comment{$\BigO((1+\log\frac{W}{w(x)})\cdot(\log\log W + \log\lmax))$}
      \State $\mathbf{cnt}\gets r.\totalcntvec$
      \Comment{$\BigO(1)$}
      \State $\Call{Conceal}{r}$
      \Comment{$\BigO((1+\log\frac{W}{w(x)})\cdot(\log\log W + \log\lmax))$}
      \If{$\Xsel=\emptyset$}
        \Return $\mathbf{cnt}$
        \Comment{$\BigO(1)$}
      \EndIf
      \State $\set{s_1,s_2}\gets \Xsel$
      \Comment{$\BigO(1)$}
      \State $a_1\gets$ $x \perp s_1$
      \Comment{$\BigO(\log\lmax)$}
      \State $a_2\gets$ $x \perp s_2$
      \Comment{$\BigO(\log\lmax)$}
      \If{$a_1$ is not the root}
        \LComment{Need to add data from the $s_2$ strand, starting at \SelectedLevel{}.}
        \State $c \gets$ ancestor to $s_2$ at level $0$
        \Comment{$\BigO(\log\lmax)$}
      \ElsIf{$a_2$ is not the root}
        \LComment{Need to add data from the $s_1$ strand, starting at \SelectedLevel{}.}
        \State $c \gets$ ancestor to $s_1$ at level $0$
        \Comment{$\BigO(\log\lmax)$}
      \Else
        \LComment{The selected nodes have no effect on $x$.}
        \State \Return $\mathbf{cnt}$
        \Comment{$\BigO(1)$}
      \EndIf
      \State Remove the root from its heavy path, but without updating $r.\mathrm{dchildren}$
      \Comment{$\BigO(\log\lmax)$}
      \State $\mathbf{cnt}'\gets c.\totalcntvec$ with all counters at index $>\SelectedLevel{}$ zeroed out
      \Comment{$\BigO(1)$}
      \State Add the root back to its heavy path
      \Comment{$\BigO(\log\lmax)$}
      \State \Return $\mathbf{cnt} + \mathbf{cnt}'$
      \Comment{$\BigO(1)$}
    \EndFunction
  \end{algorithmic}
\end{algorithm}

\clearpage

\section{Biased Disjoint Sets (BDS) structure}\label{sec:bds}

In this section, we prove \Cref{thm:bds}, that is, we provide an efficient implementation of the Biased Disjoint Sets data structure, therefore completing the stack of reductions required for proving \Cref{thm:main}.

Our structure is inspired by Binomial Heaps and Fibonacci Heaps~\cite{Vuillemin78,FredmanT87}: we maintain, for each set $X$ currently in the collection, a partition of $X$ into at most $t \leq 2\log_2 w(X)$ subsets $X_1,\ldots,X_t$, represented by a perfectly biased binary tree for each $X_i$ (which we call \emph{lower trees}).
These lower trees are then combined into a~single tree using a simple balanced dynamic binary tree (e.g.\ AVL or red-black tree), which we call the \emph{upper tree}; we require the depth of an~$\ell$-leaf upper tree to be $\BigO(\log \ell)$, and we require that two upper trees, with $\ell_a$ and $\ell_b$ leaves each, can be joined into one balanced upper tree in worst-case time $\BigO(\log (\ell_a + \ell_b))$.
Each leaf of the upper tree is identified with the root of a~lower tree.
Since the upper tree is balanced, it is guaranteed to have height $\BigO(\log t) = \BigO(\log\log w(X))$.
We call the resulting tree structure an~\emph{encoding} of $X$.
We associate the potential value $t$ with this encoding.

Each node of a~lower tree is assigned an~integer \emph{rank}.
The rank of a~leaf representing a~single element $x$ is $\left\lfloor \log_2 w(x) \right\rfloor$.
In turn, a~non-leaf node of rank $i$ has two children of rank $i - 1$.
This way, each lower tree is perfectly biased, and in particular, the height of a~lower tree representing a~subset $X_i$ of $X$ is at most $\log_2 w(X_i)$.
Also, for $x \in X_i$, the depth of $x$ in the lower tree is (by definition) $\BigO\paren*{\log\frac{w(X_i)}{w(x)}}$.
Thus, the depth of any $x\in X$ in the encoding of $X$ is $\BigO\paren*{\log\frac{w(X)}{w(x)}+\log\log w(X)}$, as required by the definition of almost perfectly biased tree.

For an~encoding of a~set $X$, we maintain the bound on the potential value $t$ by partitioning according to rank.
Whenever an~encoding of $X$ holds more than $2 \log_2 w(X)$ lower trees, we perform a~\emph{simplification} of the encoding by combining lower trees: whenever we have two perfectly biased trees of rank $i$, we can combine them into one perfectly biased tree of rank $i+1$.
This process is applied whenever the encoding holds multiple lower trees of the same rank; therefore, when nothing more can be combined, we will hold at most $\log_2 w(X)$ different lower trees.
Finally, the new encoding of $X$ is formed by joining the resulting lower trees into an~encoding via a~balanced upper tree.
Observe that this simplification process can be performed in time $\BigO(t)$ and decreases the potential value of the encoding by at least $t - \log_2 w(X) \geq \frac12 t$.
Therefore, in the amortized setting, simplification of an~encoding can be performed ``for free'': the drop in potential caused by this merging pays for the work done (because all additional work is paid for by the drop in potential).

Consider now the operations.
We implement $\Find{x}$ by following the tree encoding of the set $X \ni x$ from the leaf corresponding to $x$ to the root of the encoding of $X$.
This trivially can be done in worst-case time $\BigO\paren*{\log\frac{w(X)}{w(x)}+\log\log w(X)}$.

For $\RootUnion{X, Y}$, the direct cost is just the cost of joining the two upper trees, which is $\BigO(\log\log(w(X)+w(Y)))$. However, this may lead to the resulting encoding having too many lower trees. When this happens, we perform the simplification of the encoding as described above.
This way, the amortized cost of \RootUnion{} is the same as its actual cost. 

The most complicated operation is $\Delete{x}$. For $x\in X_i$, we delete the lower tree representing $X_i$ from the upper tree. Then, we split this lower tree into $s=\BigO\paren*{\log\frac{w(X_i)}{w(x)}}$ smaller perfectly biased binary trees $X'_1,\ldots,X'_s$ by deleting all $s$ ancestors of $x$. Then, we make a new upper tree with $X'_1,\ldots,X'_s$ as leaves. Finally, we perform a \RootUnion{} on the two encodings: the original encoding with $X_i$ removed and the new encoding of $X'_1, \ldots, X'_s$. Note that both the direct cost and the amortized cost of the entire operation is $\BigO\paren*{s + \log\log w(X)}$.

It remains to observe that both $\Coalesce{x, y}$ and $\Union{x, y}$ can be implemented in terms of a~constant number of invocations to \MakeSet{}, \RootUnion{}, \Find{}, and \Delete{}.
It can then be easily verified that both the direct and the amortized cost of both operations is then $\BigO\paren*{\log\frac{w(X)}{w(x)} + \log\frac{w(Y)}{w(y)}  + \log\log(w(X)+w(Y))}$.

\section{Approximate counting} \label{sec:approx-counting}
In this section we analyze the function $T(\lmax)$ representing the maximum cost of a~single operation on a~counter vector.
We will use the method of \emph{approximate counting} introduced in a~work of Thorup~\cite{Thorup00} and subsequently used in several results on dynamic graph connectivity \cite{Holm18a,HuangHKPT23} to show that for the purposes of the biconnectivity data structure, we can choose $T(\lmax) \in \BigO(\log \lmax)$ in the word RAM model.
Since $\lmax \in \BigO(\log n)$, we will conclude that operations on counter vectors can be performed in worst-case time $\BigO(\log \log n)$.
Note that the result of this section is completely analogous to its counterpart in the work on the dynamic maintenance of $2$-edge-connected components of a~graph~\cite{Holm18a}.

First, observe that in \Cref{sec:graph-structure}, one can replace the invariant ($\dagger$) with the following weaker statement, for some fixed $C \in (1, 2)$:

\begin{description}
	\item[($\dagger$')] For every $i \in \mathbb{N}$, biconnected components in graph $G_i$ have at most $\ceil{\frac{n}{C^i}}$ vertices.
\end{description}

This increases the maximum possible cover level $\lmax$ of a~non-tree edge slightly, from $\log_2 n$ to $\BigO(\log n)$.
On the other hand, this allows us to give a~looser definition of \emph{small} and \emph{large} biconnected components of the graph $G_i$ in the proof of \Cref{lem:uncover-correctness}: every biconnected component of $G_i$ is declared \emph{small} or \emph{large}, but it must be declared as \emph{large} if it has strictly more than $\ceil{\frac{n}{C^{i+1}}}$ vertices and \emph{small} if it has at most $\frac12 \ceil{\frac{n}{C^i}}$ vertices.
Then, in order to correctly declare a~component as small or large, it is enough to know the \emph{multiplicative approximation} $s'$ of the size $s$ of the component: a~positive real number $s' \geq 1$ such that $s' \leq s \leq Bs'$, where $B = 2/C > 1$.
We omit technical details here since completely analogous arguments appear in \cite{Thorup00,Holm18a}.

Thus, following \cite{Thorup00}, we implement counters as short \emph{floating-point numbers} with a~$b$-bit mantissa and an~$s$-bit exponent, i.e., approximate positive integers with values of the form $(1 + x \cdot 2^{-b}) \cdot 2^y$ for $x \in [0, 2^b)$, $y \in [0, 2^s)$.
Given two floating-point representations $r(\alpha)$, $r(\beta)$ of positive integers $\alpha$, $\beta$, we can add them together and round down the result to the largest representable number, producing a~representation $r(\alpha + \beta)$ of $\alpha + \beta$.
Then whenever $\alpha + \beta < 2^{2^s}$ and both $r(\alpha) \leq \alpha \leq r(\alpha) \cdot K$, $r(\beta) \leq \beta \leq r(\beta) \cdot K$ for some $K \geq 1$, we will have $r(\alpha + \beta) \leq \alpha + \beta \leq r(\alpha + \beta) \cdot K(1 + 2^{-b})$.\footnote{On the other hand, this representation prevents us from subtracting counters: notably, subtraction of floating-point representations of integers is prone to the risk of \emph{catastrophic cancellation}.}
This motivates the notion of the \emph{computation depth} of a~counter.
Each counter, on initialization with an~integer, has computation depth $1$, and the sum of two counters of computation depths $d_\alpha$ and $d_\beta$, respectively, has computation depth $\max\{d_\alpha, d_\beta\} + 1$.
Then the following simple fact is essentially argued in~\cite{Thorup00,Holm18a}:

\begin{lemma}[{\cite{Thorup00,Holm18a}}]
  \label{lem:approx-prec}
  Fix $B > 1$.
  Then there exists a~constant $A > 0$, depending only on $B$, such that a~counter of computation depth at most $A \cdot 2^b$ storing a~floating-point representation $r(\alpha)$ of an~integer $\alpha \in [1, 2^{2^s})$ satisfies $r(\alpha) \leq \alpha \leq r(\alpha) \cdot B$.
\end{lemma}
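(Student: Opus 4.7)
}
The approach is the standard relative-error analysis for floating-point accumulation: I would define the \emph{error factor} of a~stored representation $r(\alpha)$ of an~integer $\alpha$ as the ratio $\alpha/r(\alpha) \geq 1$, and track how this factor propagates along the computation DAG of counter additions. The goal is to show that the error factor is at most $(1+2^{-b})^d$ after $d$ operations, and then to calibrate the constant $A$ so that this bound stays below $B$ whenever $d \leq A \cdot 2^b$.

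First, I would establish the base case. When a~counter is initialized with an~exact integer $\alpha \in [1, 2^{2^s})$, the representation $r(\alpha)$ is obtained by rounding $\alpha$ down to the largest representable floating-point number. Writing $\alpha = (1 + x \cdot 2^{-b}) \cdot 2^y$ with $x \in [0, 2^b)$ real and $y \in \{0, \ldots, 2^s - 1\}$ integer, we get $r(\alpha) = (1 + \lfloor x \rfloor \cdot 2^{-b}) \cdot 2^y$, and an~elementary calculation shows $\alpha \leq r(\alpha) \cdot (1 + 2^{-b})$. So at depth $1$, the error factor is at most $1 + 2^{-b}$.

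Next, I would analyze a~single addition. Suppose $r(\alpha)$ and $r(\beta)$ are representations with error factors $K_\alpha, K_\beta$, respectively, and $\alpha + \beta < 2^{2^s}$. Since $r(\alpha) \leq \alpha$ and $r(\beta) \leq \beta$, we have $r(\alpha) + r(\beta) \leq \alpha + \beta$, and since $\alpha \leq K_\alpha r(\alpha)$ and $\beta \leq K_\beta r(\beta)$, we have $\alpha + \beta \leq \max\{K_\alpha, K_\beta\} \cdot (r(\alpha) + r(\beta))$. Rounding $r(\alpha) + r(\beta)$ down to the nearest representable floating-point value yields $r(\alpha + \beta)$ with $r(\alpha + \beta) \leq r(\alpha) + r(\beta) \leq r(\alpha + \beta) \cdot (1 + 2^{-b})$ by the same base-case argument. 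Chaining the two inequalities, the error factor of $r(\alpha + \beta)$ is at most $\max\{K_\alpha, K_\beta\} \cdot (1 + 2^{-b})$. By induction on the computation depth $d$, this gives the error factor bound $(1 + 2^{-b})^d$ for any counter of computation depth $d$.

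Finally, I would close the argument by choosing $A$. Using $\ln(1 + x) \leq x$ for $x > 0$, we have $(1 + 2^{-b})^d \leq \exp(d \cdot 2^{-b})$, so setting $A := \ln B$ guarantees that $d \leq A \cdot 2^b$ implies $(1 + 2^{-b})^d \leq B$, as required. Since $A$ depends only on $B$, this gives the claim. The only mild obstacle is being careful in the base case to account for the rounding error already present at initialization (which is why the inductive invariant uses $(1 + 2^{-b})^d$ rather than $(1 + 2^{-b})^{d-1}$), but everything else is a~routine telescoping computation.
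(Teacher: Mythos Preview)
Your proposal is correct and follows exactly the approach the paper sketches in the paragraph immediately preceding the lemma statement: the paper already notes that one addition step turns an error factor $K$ into $K(1+2^{-b})$, and the lemma itself is cited from \cite{Thorup00,Holm18a} rather than re-proved. Your induction on computation depth and the choice $A=\ln B$ are the standard way to close that argument, so there is nothing to add.
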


In the case of our biconnectivity data structure, we only use counters to find sizes of biconnected components (in terms of the number of vertices).
These never exceed $n$, so it is enough to choose $s \in \Theta(\log \log n)$.
Then, by \Cref{lem:approx-prec}, it is enough to pick some $b \in \Theta(\log d)$, where $d$ is the maximum computation depth of any counter in the biconnectivity data structure.
The following claim estimates the value of $d$.

\begin{lemma}
  \label{lem:float-depth}
  The data structure preserves $d \in \log^{\BigO(1)} n$.
\end{lemma}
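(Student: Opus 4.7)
The plan is to bound the computation depth of every counter stored anywhere in the data structure by $\BigO(\log^2 n)$, which lies in $\log^{\BigO(1)} n$. I~define the \emph{level} of a top tree cluster $C$ as its distance from the leaves of $\Tc$, so single-edge clusters are at level $0$ and the root is at level $h \in \BigO(\log n)$. I~then let $F(i)$ denote the supremum, over the entire lifetime of the execution, of the computation depth of any counter ever stored either at a cluster of level at most $i$ (such as $\totalcntvec_C$, $\diagcntvec^\star_{C,\cdot}$, or $\clustercntvec_{\vec{vw}}$ when $T[\vec{vw}]$ is such a cluster) or at an internal summary node of the heavy path trees and biased disjoint sets within some $N_x$ whose leaves are $\clustercntvec$ values associated with such clusters.

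The induction hypothesis I would aim for is $F(i) \leq C \cdot i \cdot \log n$ for an absolute constant $C$. The base case $F(0) \in \BigO(1)$ holds because leaf clusters of $\Tc$ have constant-zero counter vectors by construction.

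For the inductive step, I~inspect the merge of a cluster $C$ at level $i$ as specified in \Cref{ssec:tree-structure-counting}. The new counters at $C$ arise from a constant number of coordinate-wise additions, matrix splicings, and column sums applied to two kinds of inputs: (a)~counters stored at the children of $C$, which are at level $<i$ and hence of depth at most $F(i-1)$; and (b)~the outputs of $\SumCounters{}$ queries to the neighborhood data structure $N_x$ at the midpoint $x$ of $C$. For (b), the returned counter is a sum along a traversal of $N_x$ of combined height $\BigO(\log n)$ (light depth plus heavy path tree depth), whose underlying leaves are $\clustercntvec_{\vec{xw}}$ values associated with subtrees $T[\vec{xw}] \subsetneq C$ at level strictly less than $i$. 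The key auxiliary observation, proved by a straightforward induction on subtree size, is that in a binary summation tree of height $h'$ whose leaves are overwritten over time with values of depth at most $L$, every internal node's current value has depth at most $h' + L$, even if the internal nodes are stale. Applying this with $h' \in \BigO(\log n)$ and $L \leq F(i-1)$, the $\SumCounters{}$ output has depth at most $F(i-1) + \BigO(\log n)$. Combining (a) and (b), the new counter at $C$ has depth at most $F(i-1) + \BigO(\log n)$, yielding $F(i) \leq F(i-1) + \BigO(\log n)$, and hence $F(h) \in \BigO(\log^2 n)$.

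The main obstacle I expect is the \emph{persistence} of counters across operations: a cluster counter, or an internal summary inside some $N_x$, can live for many operations without being refreshed, so its depth is inherited from the state of the data structure at the moment of its last computation. I~address this by defining $F(i)$ as the supremum over all time (rather than a per-time snapshot), and by verifying that the inductive step uses only this time-independent bound on inputs: whenever a level-$i$ cluster is merged, the inputs at lower levels -- however old -- are already controlled by the inductive hypothesis for $F(i-1)$. A secondary subtlety worth checking carefully is the auxiliary ``height plus leaf depth'' bound for a balanced sum tree with stale internal nodes; I~expect this to follow by routine induction on tree size, as a stale internal node's depth is at most one more than the depth its children had at the time of its last update, which by unrolling is at most the tree height plus the maximum leaf depth ever contributed below it.
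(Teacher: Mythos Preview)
Your inductive scheme is sound in spirit and is actually more explicit than the paper's deliberately rough layer-by-layer argument, but the auxiliary observation as you state it does not close the induction. You phrase it as ``leaves overwritten over time with values of depth at most $L$ $\Rightarrow$ every internal node has depth at most $h'+L$'', and then set $L = F(i-1)$. For that application to be legitimate, every $\clustercntvec_e$ value \emph{ever} written into a leaf of $N_x$ would need depth at most $F(i-1)$. But at an earlier moment the enclosing cluster $I_x$ may have sat at some level $i''>i$, and the cluster edges at $x$ then had $T[e]$ at level up to $i''-1$; so the all-time maximum leaf depth at $N_x$ is essentially $F(h-1)$, not $F(i-1)$, and your recursion collapses to the vacuous $F(i)\le F(h-1)+\BigO(h')$.

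What you actually need is the stronger invariant: after every operation, each internal summation node has depth at most its height plus the maximum \emph{current} leaf depth in its subtree. This does hold --- every leaf update (via $\UpdateCounters{}$) refreshes the entire path to the root, so no ancestor can remain stale relative to a leaf that has since been overwritten, and the structural operations on the BDS and heavy-path trees recompute every affected internal node from children already satisfying the invariant --- but it is not the statement you wrote down, and it is not quite ``routine'' since it must survive restructuring as well as leaf overwrites. With this stronger invariant, at the moment a level-$i$ cluster is merged the current leaves of $N_x$ are genuinely $\clustercntvec$ values for clusters at level $<i$ (a cluster's level is fixed while it exists, since its descendants cannot change), and your recursion $F(i)\le F(i-1)+\BigO(h')$ goes through. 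Two smaller corrections: the summation-tree height $h'$ is $\BigO(\log n\log\log n)$ rather than $\BigO(\log n)$, since you omitted the BDS layers traversed at each dashed edge; and the paper itself takes a different route --- bounding depth separately for BDS, then neighborhood, then top tree --- which is looser still and does not make the feedback through $\clustercntvec$ explicit either.
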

\begin{proof}
  We make a~very rough estimate of $d$ here, disregarding any optimizations stemming from the use of weights in the internal data structures; in the arguments below, we implicitly use the facts that these weights never exceed $\BigO(n)$, and that $\lmax \in \BigO(\log n)$.

  In the BDS structure in \Cref{sec:bds}, we maintain a~binary tree of height $\BigO(\log n)$. Each node of the tree maintains a~counter vector computed from its children within a~constant number of operations, hence each counter has computation depth $\BigO(\log n)$.
  
  By the same token, the neighborhood structure of \Cref{sec:nbd} maintains a~balanced binary search tree of height $\BigO(\log n \log \log n)$, where each node stores $\BigO(1)$ counter vectors and additionally an~instance of the BDS structure.
  The counters in a~node are again easily found to be computable in $\BigO(1)$ operations from the counters stored in the children in the binary tree and the counters kept in the root of the associated BDS structure; hence every counter in the neighborhood structure has computation depth $\log^{\BigO(1)} n$.
  
  Finally, the cover level data structure is a~top tree of height $\BigO(\log n)$, with each cluster storing a~collection of counter vectors and matrices.
  These counters are only recomputed on cluster merges, where they are the product of a~constant number of operations on counter vectors and matrices stored in the children of the cluster and a~single neighborhood structure.
  Thus once again, all counters in the top tree have computation depth $\log^{\BigO(1)} n$.
  
  Finally, the biconnectivity data structure consumes the counters provided by the cover level data structure as they are, and only compares them to thresholds of the form $\left\lceil \frac{n}{C^i} \right\rceil$.
  Hence the lemma follows.
\end{proof}

So by \Cref{lem:float-depth} we can choose $b \in \Theta(\log \log n)$, and therefore we implement counters as floating-point numbers with a~representation of length $b + s \in \Theta(\log \log n)$.
Then using the standard bit-tricks exploiting the word RAM model~\cite{Thorup00,Holm18a}, we implement the \emph{counter vectors} of length $\BigO(\log n)$ by packing an~array of floating-point representations of these counters into $\BigO(\log \log n)$ RAM words (each of size $\Theta(\log n)$).
Hence we can choose $T(\lmax) \in \BigO(\log \lmax)$: all supported operations on counter vectors (see the list in \Cref{ssec:prelims-counters}) can be implemented in time $\BigO(\log \log n)$.

%
%
%
%
%
%
%

\bibliography{references}

\appendix

\section{Implementation of the graph structure} \label{sec:graph-impl}
\begin{algorithm}[H]
	\begin{algorithmic}
		
		\caption{PromoteEdge} \label{alg:promote-edge}
		\Function {PromoteEdge} {$x, z, i$}
		\State $N^i(x) \gets N^i(x) \setminus \{z\}$, $\UpdateMark{x, i}$
		\State $N^{i+1}(x) \gets N^{i+1}(x) \cup \{z\}$, $\UpdateMark{x, i+1}$
		\State $N^i(z) \gets N^i(z) \setminus \{x\}$, $\UpdateMark{z, i}$
		\State $N^{i+1}(z) \gets N^{i+1}(z) \cup \{x\}$, $\UpdateMark{z, i+1}$
		\State $\Cover{x, z, i+1}$
		\EndFunction
		
	\end{algorithmic}
\end{algorithm} 

\begin{algorithm}[H]
	\begin{algorithmic}
		
		\caption{FindNextEvent} \label{alg:find-next-event}
		\Function {FindNextEvent} {$a, b, i$}
		\State $(lr, p, u) \gets \FindFirstReach{a, b, i}$
		\If {$l = \perp$}
		\State \Return $((\perp, \perp), \perp, \perp)$
		\EndIf
		\If {$r = p$}
		\msinline{\footnotesize Above: Shouldn't we perhaps test for $p = l$? Below: What's $z$?}
		\wninline{No, there is no need to. It will be correctly handled by the last line. $z$ was a typo, should have been $r$.}
		\State $w \gets \FindStrongReach{a, b, lr, r, i}$
		\If {$w \neq \perp$}
		\State \Return $((w, \text{ any element of } N^i(w)), p, L(p))$
		\EndIf
		\EndIf
		\State \Return $((u, \text{ any element of } N^i(u)), p, R(p))$
		\msinline{\footnotesize above: Is it ok that we return the edge $ps^b(p)$ if $r=p$ and $w = \bot$?}
		\wninline{Yes, that's exactly the intention and the paragraph preceding that pseudocode is exactly the justification for that. But I guess that was not clear? Maybe it's better now that I actually described what FindNextEvent is supposed to return...}
		\EndFunction
		
	\end{algorithmic}
\end{algorithm} 

\begin{algorithm}[H]
	\begin{algorithmic}

		\caption{FindReplacement and Swap} \label{alg:swap}
		\Function{FindReplacement} {$u, v, i$}
		\ForAll {$(a, b) \in \{(u, v), (v, u)\})$}
		\While \True
		\State $((x, y), \cdot, \cdot, \cdot) \gets \FindNextEvent{a, b, i}$
		\If {$s^y(a) = b$}
		\State \Return $(x, y)$
		\Else
		\If {$\FindSize{x, y, i+1} \le \ceil{\frac{n}{2^{i+1}}}$}
		\State $\PromoteEdge{x, y}$
		\Else
		\State \Break
		\EndIf
		\EndIf
		\EndWhile
		\EndFor
		\EndFunction
		
		\medskip
		
		\Function {Swap} {$u, v$}
		\State $i \gets -1$
		\For {$j \gets \lmax \Downto 0$}
		\If {$\FindSize{u, v, j} > 2$}
		\State $i \gets j$
		\State \Break
		\EndIf
		\EndFor
		\State \Assert $i \neq -1$
		\State $(x, y) \gets \Call{FindReplacement}{u, v, i}$
		\State $\Cut{u, v}$
		\State $N^i(x) \gets N^i(x) \setminus \{y\}, \UpdateMark{x, i}$
		\State $N^i(y) \gets N^i(y) \setminus \{x\}, \UpdateMark{y, i}$
		\State $\Link{x, y}$
		\State $N^i(u) \gets N^i(u) \cup \{v\}, \UpdateMark{u, i}$
		\State $N^i(v) \gets N^i(v) \cup \{u\}, \UpdateMark{v, i}$
		\For {$j \gets 0 \ .. \ i$}
		\State $\Cover{u, v, i}$
		\EndFor
		\EndFunction
		
	\end{algorithmic}
\end{algorithm} 

\begin{algorithm}[H]
	\begin{algorithmic}
		
		\caption{Delete} \label{alg:delete}
		\Function {Delete} {$u, v$}
		\State $i \gets $ level of $uv$
		\If {$i = -1$}
		\State $\Call{Swap}{u, v}$
		\State $i \gets $ level of $uv$
		\EndIf
		\State $\Expose{u, v}$
		\For {$j \gets i \Downto 0$}
		\State $N^j(u) \gets N^j(u) \setminus \{v\}, \UpdateMark{u, j}$
		\State $N^j(v) \gets N^j(v) \setminus \{u\}, \UpdateMark{v, j}$
		\If {$j \neq 0$}
		\State $N^{j-1}(u) \gets N^{j-1}(u) \cup \{v\}, \UpdateMark{u, j-1}$
		\State $N^{j-1}(v) \gets N^{j-1}(v) \cup \{u\}, \UpdateMark{v, j-1}$
		\EndIf
		\State $\UncoverPath{u, v, j}$
		\EndFor
		\EndFunction
		
	\end{algorithmic}
\end{algorithm}

\begin{algorithm}[H]
	\begin{algorithmic}
		
		\caption{UncoverPath} \label{alg:uncover-path}
		\Function {UncoverPath} {$u, v, i$}
		\ForAll {$(a, b) \in \{(u, v), (v, u)\})$}
		\State $(l_{\rm prv}, r_{\rm prv}, p_{\rm prv}) \gets (\perp, \perp, \perp)$
		\While \True
		\State $((x, z), p, (l, r)) \gets \Call{FindNextEvent}{a, b, i}$
		\LComment{We assume that $l$ is closer to $a$ than $r$.}
		\msinline{\footnotesize Isn't this break too early? I imagine we might still need to do a~pending local uncover at $p_{\rm prv}$}
		\wninline{That was a good point. I changed this code a lot. Please verify.}
		\If {$p_{prv} = \perp$}
		\If {$x = \perp$}
		\State $\UniformUncover{a, b, i}$
		\State \Return
		\Else
		\State $\UniformUncover{a, p, i}$
		\EndIf
		\ElsIf {$(l_{\rm prv}, r_{\rm prv}, p_{\rm prv}) \neq (l, r, p)$}
		\If {$r_{\rm prv} = p_{\rm prv}$ \textbf{and} $\CoverLevel{s^a(p_{\rm prv}), s^b(p_{\rm prv})} = i$}
		\State $\LocalUncover{s^a(p_{\rm prv})p_{\rm prv}, p_{\rm prv} s^b(p_{\rm prv}), i}$
		\EndIf
		\If {$x = \perp$}
		\State $\UniformUncover{p_{prv}, b, i}$
		\State \Return
		\EndIf
		\If {$p \neq p_{\rm prv} \text{ \textbf{and} } p=l$ and  $\CoverLevel{s^a(p), s^b(p)} = i$}
		\State $\LocalUncover{s^a(p)p, ps^b(p), i)}$
		\EndIf
		\State $\UniformUncover{p_{\rm prv}, p, i}$
		\EndIf
		\If {$\FindSize{x, z, i+1} \le \ceil{\frac{n}{2^{i+1}}}$}
		\State $\PromoteEdge{x, z, i}$
		\Else
		\State \Break
		\EndIf
		\State $(l_{\rm prv}, r_{\rm prv}, p_{\rm prv}) \gets (l, r, p)$
		\EndWhile
		\EndFor
		\EndFunction
		
	\end{algorithmic}
\end{algorithm}

\section{Implementation of the tree structure}
 
\subsection{Bookkeeping in top trees}
\label{ssec:tree-impl-bookkeeping}

A~top trees data structure updates and propagates auxiliary information via the following callback functions:

\begin{itemize}
  \item $\OnCreate{C}$, called upon the creation of a~leaf cluster $C$.
    Recall that by our design choice, leaf clusters represent either single edges of the underlying tree $T$ (call these \emph{proper} leaf clusters), or dummy edges $vv'$ with one endpoint at a~vertex $v \in V(T)$ and the other at a~fictional leaf $v'$ (\emph{dummy} leaf clusters).
  \item $\OnMerge{C}$, called when a~cluster $C$ is formed as a~merge of several children of $C$.
    The merge always happens according to one of the cases in \Cref{fig:toptree-cases}; and, by looking at $C$, we can infer which case holds and whether $C$ is formed during a~transient (un)expose.
    After the merge, auxiliary information stored in $C$ needs to be recomputed based on data stored in the children of $C$; and whenever $C$ is formed outside of a~transient (un)expose, weights and the sets of selected items in neighborhood data structures need to be updated.
  \item $\Clean{C}$, called whenever lazy updates stored in $C$ need to be propagated into child clusters.
  This happens either just before $C$ is split, or whenever recursive search along a~vertical path of a~top tree is performed (e.g., in \Cref{ssec:implicit-refine}).
\end{itemize}

In this section, we show how to maintain the correct values of $\firstedge_{C,x}$ for $x \in \bnd C$, the enclosing clusters $I_x$ for $x \in V(T)$, the identification of cluster and interface edges (i.e., the sets $\interface_x$) and the correct edge and vertex weights.
Moreover, we construct neighborhood data structures $N_x$ for $x \in V(T)$, where we preserve the correct weights and sets of selected edges incident to~$x$.
Finally, for each cluster edge $\vec{uv}$, we maintain the smallest point cluster $\smallestpoint_{\vec{uv}}$ containing the edge with the boundary equal to $u$ (i.e., the smallest witness that $\vec{uv}$ is indeed a~cluster edge).

We refer to \Cref{alg:toptree-basic} for the implementation.
Here, $\size_C$ describes the number of vertices in the cluster $C$, excluding the boundary vertices or dummy leaves of $T$.


\begin{algorithm}[H]
\begin{algorithmic}
  \caption{Basic information maintenance in top trees and neighborhood data structures} \label{alg:toptree-basic}
  \Function{TS.OnCreate}{$C$}
    \State $\size_C \gets 0$
    \If{$C$ is a~proper leaf cluster}
      \State $\{v, w\} \gets \bnd C$
      \State $\firstedge_{C,v} \gets \vec{vw}$,\, $\firstedge_{C,w} \gets \vec{wv}$
    \EndIf
  \EndFunction

  \bigskip

  \Function{TS.OnMerge}{$C$}
    \If{$C = A \cup B$ \textbf{and} $\bnd C = \bnd A = \bnd B \eqqcolon \{v\}$} \Comment{Case \CasePointToPointPoint}
      \State $\size_{C,v} \gets \size_{A,v} + \size_{B,v}$
    \ElsIf{$C = A \cup B$ \textbf{and} $\bnd C \eqqcolon \{v\}$, $\bnd A \eqqcolon \{v, x\}$, $\bnd B \eqqcolon \{x\}$} \Comment{Case \CasePointToPathPoint}
      \State $I_x \gets C$
      \State $\firstedge_{C, v} \gets \firstedge_{A, v}$
      \State $\interface_x \gets \{\firstedge_{A,x}\}$
      \State $\smallestpoint_{\firstedge_{A,v}} \gets C$
      \State $\size_C \gets \size_A + \size_B + 1$
      \If{$C$ not formed during transient (un)expose}
        \State $N_x.\Select{\emptyset}$
        \State $N_v.\SetWeight{\firstedge_{A,v}, \size_C}$
        \State $N_x.\SetWeight{\firstedge_{A,x}, \size_B + 1}$
      \EndIf
    \Else  \Comment{Case \CasePath}
      \State $C \eqqcolon A \cup B \cup P$, $\bnd C \eqqcolon \{v, w\}$, $\bnd A \eqqcolon \{v, x\}$, $\bnd B \eqqcolon \{x, w\}$, $\bnd P \eqqcolon \{x\}$
      \State $I_x \gets C$
      \State $\firstedge_{C,v} \gets \firstedge_{A,v}$
      \State $\firstedge_{C,w} \gets \firstedge_{B,w}$
      \State $\interface_x \gets \{\firstedge_{A,x}, \firstedge_{B,x}\}$
      \State $\size_C \gets \size_A + \size_B + \size_P + 1$
      \If{$C$ not formed during transient (un)expose}
        \State $N_x.\Select{\interface_x}$
        \State $N_x.\SetWeight{\firstedge_{A,x}, \size_P + 1}$
        \State $N_x.\SetWeight{\firstedge_{B,x}, \size_P + 1}$
      \EndIf
    \EndIf
  \EndFunction
\end{algorithmic}
\end{algorithm}

\subsection{Cover level data structure}
\label{ssec:tree-impl-cover}

We augment the previous data structure with the information on cover levels.
In the following algorithms, we assume that transient expose is invoked via $\TransientExpose{v, w}$, and it returns three transient root clusters $R_v, R_w, R_{vw}$, such that $\bnd R_v = \{v\}$, $\bnd R_w = \{w\}$, $\bnd R_{vw} = \{v, w\}$.
Then at the end of the query, the top tree must be transiently unexposed by calling $\TransientUnexpose$, thus restoring the original shape of the data structure.

\begin{algorithm}[H]
\begin{algorithmic}
  \caption{Maintaining auxiliary information and lazy updates} \label{alg:toptrees-cover-info}

  \Function{CL.OnCreate}{$C$}
    \State $\cover_C,\, \coverfrom_C,\, \covertop_C \gets \lmax$
    \State $\argcover_C \gets \bot$
  \EndFunction

  \bigskip

  \Function{CL.OnMerge}{$C$}
    \If{$|\bnd C| = 1$} \Comment{Cases \CasePointToPointPoint, \CasePointToPointPath}
      \State $\cover_C \gets \lmax$
      \State $\argcover_C \gets \bot$
    \Else \Comment{Case \CasePath}
      \State $C \eqqcolon A \cup B \cup P$, $\bnd C \eqqcolon \{v, w\}$, $\bnd A \eqqcolon \{v, x\}$, $\bnd B \eqqcolon \{x, w\}$, $\bnd P \eqqcolon \{x\}$
      \State $c \gets N_x.\Level{\firstedge_{A,x}, \firstedge_{B,x}}$
      \State $\cover_C \gets \min\{\cover_A, c, \cover_B\}$
      \If{$\cover_C = \cover_A$ and $\firstedge_{A,v} \neq \firstedge_{A,x}$}
        \State $\argcover_{C,v} \gets \argcover_{A,v}$
      \ElsIf{$\cover_C = c$}
        \State $\argcover_{C,v} \gets (\firstedge_{A,x}, \firstedge_{B,x})$
      \Else
        \State $\argcover_{C,v} \gets \argcover_{B,x}$
      \EndIf
      \LComment{$\argcover_{C,w}$ is computed symmetrically.}
    \EndIf
    \State $\coverfrom_C,\, \covertop_C \gets \cover_C$
  \EndFunction

  \bigskip

  \Function{CL.PropagateLazyUpdates}{$C$, $D$}
    \State \textbf{assert} $\cover_D \geq \coverfrom_C$
    \If{$\covertop_C \geq \cover_D$}
      \State $\cover_D \gets \cover_C$
      \State $\covertop_D \gets \max\{\covertop_C, \covertop_D\}$
    \EndIf
  \EndFunction

  \bigskip

  \Function{CL.Clean}{$C$}
    \If{$C = A \cup B \cup P$} \Comment{Case \CasePath}
      \State $\bnd C \eqqcolon \{v, w\}$, $\bnd A \eqqcolon \{v, x\}$, $\bnd B \eqqcolon \{x, w\}$, $\bnd P \eqqcolon \{x\}$
      \State \Call{CL.PropagateLazyUpdates}{$C$, $A$}
      \State \Call{CL.PropagateLazyUpdates}{$C$, $B$}
      \State $c \gets N_x.\SelectedLevel$
      \If{$c < \cover_C$}
        \State $N_x.\LongZip{c, \cover_C}$
      \ElsIf{$c \leq \covertop_C$ and $c > \cover_C$}
        \State $N_x.\LongUnzip{c, \cover_C}$
      \EndIf
    \EndIf
    \State $\coverfrom_C,\, \covertop_C \gets \cover_C$
  \EndFunction
\end{algorithmic}
\end{algorithm}

\begin{algorithm}[H]
\begin{algorithmic}
  \caption{Performing updates} \label{alg:toptree-cover-updates}
  
  \Function{CL.RecursiveCover}{$C$, $i$}
    \State \textbf{assert} $C = A \cup B \cup P$, $\bnd C = \{v, w\}$, $\bnd A = \{v, x\}$, $\bnd B = \{x, w\}$, $\bnd P = \{x\}$ \Comment{Case \CasePath}
    \State \textbf{assert} $\cover_C \geq i - 1$
    \If{$C$ is a~transient cluster}
      \State \Call{CL.RecursiveCover}{$A$, $i$}
      \State \Call{CL.RecursiveCover}{$B$, $i$}
      \If{$N_x.\Level{\firstedge_{A,x}$, $\firstedge_{B,x}} = i - 1$}
        \State $N_x.\Zip{\firstedge_{A,x}$, $\firstedge_{B,x}}$
      \EndIf
    \Else
      \If{$\cover_{C_j} = i - 1$}
        \State $\cover_{C_j} \gets i$
        \State $\covertop_{C_j} \gets \max\{\covertop_{C_j}, i\}$
      \EndIf
    \EndIf
  \EndFunction
  
  \bigskip

  \Function{CL.Cover}{$p$, $q$, $i$}
    \State $R_p, R_q, R_{pq} \gets \TransientExpose{p, q}$
    \Call{CL.RecursiveCover}{$R_{pq}$, $i$}
    \State $\TransientUnexpose$
  \EndFunction

  \bigskip
  
  \Function{CL.RecursiveUniformUncover}{$C$, $i$}
    \State \textbf{assert} $C = A \cup B \cup P$, $\bnd C = \{v, w\}$, $\bnd A = \{v, x\}$, $\bnd B = \{x, w\}$, $\bnd P = \{x\}$ \Comment{Case \CasePath}
    \State \textbf{assert} $\cover_C \geq i$
    \If{$C$ is a~transient cluster}
      \State \Call{CL.RecursiveUniformUncover}{$A$, $i$}
      \State \Call{CL.RecursiveUniformUncover}{$B$, $i$}
      \If{$N_x.\SelectedLevel{\xspace} = i$}
        \State $N_x.\LongUnzip{i, i - 1}$
      \EndIf
    \Else
      \If{$\cover_{C_j} = i$}
        \State $\cover_{C_j} \gets i - 1$
      \EndIf
    \EndIf
  \EndFunction
  
  \bigskip

  \Function{CL.UniformUncover}{$p$, $q$, $i$}
    \State $R_p, R_q, R_{pq} \gets \TransientExpose{p, q}$
    \Call{CL.RecursiveUniformUncover}{$R_{pq}$, $i$}
    \State $\TransientUnexpose$
  \EndFunction

  \bigskip

  \Function{CL.LocalUncover}{$e_1$, $e_2$, $i$}
    \State $e_1, e_2 \eqqcolon px, xq$
    \State $\TransientExpose{x}$
    \State $N_x.\Unzip{e_1, e_2, i}$
    \State $\TransientUnexpose$
  \EndFunction
\end{algorithmic}
\end{algorithm}


\begin{algorithm}[H]
\begin{algorithmic}
  \caption{Answering cover level queries} \label{alg:toptree-cover-queries}

  \Function{CL.CoverLevel}{$p$, $q$}
    \State $R_p, R_q, R_{pq} \gets \TransientExpose{v, w}$
    \State $\mathrm{answer} \gets \cover_{R_{pq}}$
    \State $\TransientUnexpose$
    \State \Return $\mathrm{answer}$
  \EndFunction

  \bigskip
  
  \Function{CL.MinCoveredPair}{$p$, $q$}
    \State $R_p, R_q, R_{pq} \gets \TransientExpose{v, w}$
    \State $\mathrm{answer} \gets \argcover_{R_{pq}}$
    \State $\TransientUnexpose$
    \State \Return $\mathrm{answer}$
  \EndFunction
\end{algorithmic}
\end{algorithm}

\subsection{Counting reachable vertices}
\label{ssec:tree-impl-count}

We begin by implementing the helper function $N_v.\SumCounters{A}$ for a~given set $A$ of items stored in $N_v$, as shown in \Cref{lem:excl-sum-counters}; see \Cref{alg:toptree-count-sumcounters}.
Afterwards, we show how to maintain counters for the top tree clusters (\Cref{alg:toptrees-count-info}).
Finally, we resolve the $\FindSize{}$ query in \Cref{alg:toptree-count-query}.

In the algorithms, we extensively manipulate counter vectors and matrices; we use the notation laid down in \Cref{ssec:prelims-counters}.
We additionally denote the all-one counter by ${\bf 1}$.

\begin{algorithm}[H]
\begin{algorithmic}
  \caption{Implementation of $\SumCounters{}$ from \Cref{lem:excl-sum-counters}} \label{alg:toptree-count-sumcounters}
  \Function{$N_v$.SumCounters}{$A_+$, $A_-$}
    \If{$|A_+| = 0$}
      \State \Return ${\bf 0}$
    \Else
      \State $e \gets$ arbitrary element of $A_+$
      \State ${\bf c}_{e} \gets N_v.\SumCounters{e}$ \Comment{Provided by \Cref{lem:neighborhood-ds}}
      \State ${\bf c}_{\neq e} \gets N_v.\SumCounters{A_+ \setminus \{e\}, A_-}$
      \State $p \gets -1$
      \ForAll{$e' \in (A_+ \cup A_-) \setminus \{e\}$}
        \State $p \gets \max\{p, N_v.\Level{e, e'}\}$
      \EndFor
      \State \Return ${\bf c}_{\neq e} + \vectorsplice{\bf 0}{p + 1}{{\bf c}_e}$
    \EndIf
  \EndFunction
  
  \bigskip  
  
  \Function{$N_v$.SumCounters}{$A_+$}
    \State \Return \Call{$N_v$.SumCounters}{$A_+$, $\emptyset$}
  \EndFunction
\end{algorithmic}
\end{algorithm}

\begin{algorithm}[H]
\begin{algorithmic}
  \caption{Maintaining counters in the top tree clusters} \label{alg:toptrees-count-info}
  \Function{CR.OnCreate}{$C$}
    \State $\totalcntvec_C \gets {\bf 0}$
    \State $\diagcntvec^\star_{C,x} \gets {\bf 0}$ for each $x \in \bnd C$
  \EndFunction

  \bigskip

  \Function{CR.GetDiagCnt}{$C$, $v$} \Comment{Computes $\diagcntvec_{C,v}$}
    \State ${\bf A} \gets \matrixsplice{\diagcntvec^\star_{C,v}}{\covertop_C + 1}{\bf 0}$ \Comment{$A_{j,i} = \diagcnt^\star_{C,v,j,i} \cdot [j \leq \covertop_C]$}
    \State ${\bf B} \gets \matrixsplice{\bf 0}{\covertop_C + 1}{\diagcntvec^\star_{C,v}}$ \Comment{$B_{j,i} = \diagcnt^\star_{C,v,j,i} \cdot [j > \covertop_C]$}
    \State ${\bf w} \gets \matrixsum({\bf A})$ \Comment{$w_i = \sum_j A_{j,i}$}
    \State ${\bf M} \gets \addvector({\bf B}, {\bf w}, \cover_C)$ \Comment{$M_{j,i} = B_{j,i} + w_i \cdot [j = \cover_C]$}
    \State \Return ${\bf M}$
  \EndFunction

  \bigskip

  \Function{CR.GetTotalCnt}{$C$, $v$} \Comment{Computes $\totalcntvec_{C,v}$}
    \State ${\bf M} \gets \Call{CR.GetDiagCnt}{C, v}$
    \State ${\bf w} \gets \uppersum({\bf M})$ \Comment{$w_i = \sum_{j \geq i} M_{j, i}$}
    \State \Return ${\bf w}$
  \EndFunction

  \bigskip

  \Function{CR.OnMerge}{$C$}
    \If{$C = A \cup B$, $\bnd C \eqqcolon \{v\}$, $\bnd A \eqqcolon \{v, x\}$, $\bnd B \eqqcolon \{x\}$} \Comment{Case \CasePointToPathPoint}
      \State ${\bf c}^x \gets N_x.\SumCounters{\{\firstedge_{A, x}\}} + {\bf 1}$
      \State $e \gets \firstedge_{A, v}$
      \State $\totalcntvec_{A, v} \gets \Call{CR.GetTotalCnt}{A, v}$
      \State $\clustercntvec_e \gets \totalcntvec_{A, v} + \vectorsplice{{\bf c}^x}{\cover_A + 1}{\bf 0}$
      \State $N_v.\UpdateCounters{e, \clustercntvec_e}$
    \ElsIf{$C = A \cup B \cup P$, $\bnd C \eqqcolon \{v, w\}$, $\bnd A \eqqcolon \{v, x\}$, $\bnd B \eqqcolon \{x, w\}$, $\bnd P \eqqcolon \{x\}$} \Comment{Case \CasePath}
      \State ${\bf c}^{AB} \gets N_x.\SumCounters{\{\firstedge_{A, x},\, \firstedge_{B, x}\}} + {\bf 1}$
      \State ${\bf c}^{A} \gets N_x.\SumCounters{\{\firstedge_{A, x}\}} + {\bf 1}$
      \State ${\bf c}^{B} \gets N_x.\SumCounters{\{\firstedge_{B, x}\}} + {\bf 1}$
      \State ${\bf c}^{A \setminus B} \gets N_x.\SumCounters{\{\firstedge_{A, x}\}, \{\firstedge_{B, x}\}}$
      \State ${\bf c}^{B \setminus A} \gets N_x.\SumCounters{\{\firstedge_{B, x}\}, \{\firstedge_{A, x}\}}$
      \State $\ell_{AB} \gets N_x.\Level{\firstedge_{A, x}, \firstedge_{B, x}}$
      \State $\totalcntvec_C \gets \totalcntvec_A + \totalcntvec_B + {\bf c}^{AB}$
      \ForAll{$(\hat{v}, \hat{A}, \hat{B}) \in \{(v, A, B), (w, B, A)\}$}
        \State $\diagcntvec_{\hat{A}, \hat{v}} \gets \Call{CR.GetDiagCnt}{\hat{A}, \hat{v}}$
        \State $\diagcntvec_{\hat{B}, x} \gets \Call{CR.GetDiagCnt}{\hat{B}, x}$
        \State $r \gets \cover_{\hat{A}}$
        \State ${\bf M} \gets \matrixsplice{\diagcntvec_{\hat{B}, x}}{\min\{r, \ell_{AB}\}}{\diagcntvec_{\hat{A}, \hat{v}}}$
        \State ${\bf u} \gets \matrixsum(\matrixsplice{\bf 0}{\min\{r, \ell_{AB}\}}{\diagcntvec_{\hat{B}, x}})$
        \State ${\bf M} \gets \addvector({\bf M}, {\bf u}, \min\{r, \ell_{AB}\})$
        \If{$\ell_{AB} < r$}
          \State ${\bf M} \gets \addvector({\bf M}, {\bf c}^{\hat{A}}, r)$
          \State ${\bf M} \gets \addvector({\bf M}, {\bf c}^{\hat{B} \setminus \hat{A}}, \ell_{AB})$
        \Else
          \State ${\bf M} \gets \addvector({\bf M}, {\bf c}^{AB}, r)$
        \EndIf
        \State $\diagcntvec^\star_{\hat{A}, \hat{v}} \gets {\bf M}$
      \EndFor
    \EndIf
  \EndFunction
\end{algorithmic}
\end{algorithm}

\begin{algorithm}[H]
\begin{algorithmic}
  \caption{$\FindSize{}$ query} \label{alg:toptree-count-query}
  \Function{CR.FindSize}{$p$, $q$, $i$}
    \State $R_p, R_q, R_{pq} \gets \TransientExpose{p, q}$
    \State $\mathrm{answer}_p \gets 1 + \left(N_p.\SumCounters{\{\firstedge_{R_{pq}, p}\}}\right)_i$
    \State $\mathrm{answer}_q \gets 1 + \left(N_q.\SumCounters{\{\firstedge_{R_{pq}, q}\}}\right)_i$
    \State $\mathrm{answer}_{pq} \gets \totalcnt_{R_{pq}, i}$
    \State $\TransientUnexpose$
    \State \Return $\mathrm{answer}_p + \mathrm{answer}_q + \mathrm{answer}_{pq}$
  \EndFunction
\end{algorithmic}
\end{algorithm}

\subsection{Finding reachable vertices}
\label{ssec:tree-impl-find}

Recall that we keep, for each vertex $v$ of the tree, a~bit vector $\bmarksvec_v$ where $\bmarksvec_{v,i}$ denotes whether $v$ is $i$-marked.
Initially, all bit vectors are identically zero.
Moreover, we assume access to a~copy of the neighborhood data structure $N_v$ named $N^+_v$.
We assume that all updates to $N_v$ performed in \Cref{ssec:tree-impl-bookkeeping,ssec:tree-impl-cover} are also applied to $N^+_v$.

We first augment $N_v$ with methods $\FindMarked{A, i}$ and $\OrMarks{A, i}$ and $N^+_v$ with $\FindStrongMarked{A, i}$ as described in \Cref{lem:excl-or-marks,lem:excl-find-marked}; see \Cref{alg:toptree-find-aux}.
Afterwards, we maintain bit vectors in the top tree clusters analogously to the counters in \Cref{ssec:tree-impl-count} (\Cref{alg:toptree-find-info}).
Next, we resolve the updates and queries in \Cref{alg:toptree-find-updates,alg:toptree-find-queries}, delaying the implementation of the refinement functions until \Cref{alg:toptree-find-refine}.

\begin{algorithm}[H]
\begin{algorithmic}
  \caption{$\OrMarks{}$, $\FindMarked{}$ and $\FindStrongMarked{}$ from \Cref{lem:excl-or-marks,lem:excl-find-marked}} \label{alg:toptree-find-aux}
  \Function{$N_v$.OrMarks}{$A_+$, $A_-$}
    \State ${\bf a} \gets {\bf 0}$
    \ForAll{$e \in A_+$}
      \State ${\bf c}_e \gets N_v.\OrMarks{e}$ \Comment{Provided by \Cref{lem:neighborhood-ds}}
      \State $p \gets -1$
      \ForAll{$f \in A_-$}
        \State $p \gets N_v.\Level{e, f}$
      \EndFor
      \State ${\bf a} \gets {\bf a} \bitor [{\bf 0} : p + 1 : {\bf c}_e]$ 
    \EndFor
    \State \Return ${\bf a}$
  \EndFunction
  
  \bigskip
  
  \Function{$N_v$.OrMarks}{$A_+$}
    \State \Return \Call{$N_v$.OrMarks}{$A_+$, $\emptyset$}
  \EndFunction
  
  \bigskip
  
  \Function{$N_v$.FindMarked}{$A$, $i$}
    \ForAll{$e \in A$}
      \State $\mathrm{cand} \gets N_v.\FindMarked{e, i}$ \Comment{Provided by \Cref{lem:neighborhood-ds}}
      \If{$\mathrm{cand} \neq \bot$}
        \State \Return $\mathrm{cand}$
      \EndIf
    \EndFor
    \State \Return $\bot$
  \EndFunction

  \bigskip
  
  \Function{$N^+_v$.FindStrongMarked}{$A$, $i$}
    \ForAll{$e \in A$}
      \State $\mathrm{cand} \gets N^+_v.\FindMarked{e, i + 1}$ \Comment{Provided by \Cref{lem:neighborhood-ds}}
      \If{$\mathrm{cand} \neq \bot$}
        \State \Return $\mathrm{cand}$
      \EndIf
    \EndFor
    \State \Return $\bot$
  \EndFunction
\end{algorithmic}
\end{algorithm}

\begin{algorithm}[H]
\begin{algorithmic}
  \caption{Maintaining bit vectors in the top tree clusters} \label{alg:toptree-find-info}
  \Function{FR.OnCreate}{$C$}
    \State $\totalmarksvec_C \gets {\bf 0}$
    \State $\diagmarksvec^\star_{C,x} \gets {\bf 0}$ for each $x \in \bnd C$
  \EndFunction

  \bigskip

  \Function{FR.GetDiagMarks}{$C$, $v$} \Comment{Computes $\diagmarksvec_{C,v}$}
    \State ${\bf A} \gets \matrixsplice{\diagmarksvec^\star_{C,v}}{\covertop_C + 1}{\bf 0}$ \Comment{$A_{j,i} = \diagcnt^\star_{C,v,j,i} \wedge [j \leq \covertop_C]$}
    \State ${\bf B} \gets \matrixsplice{\bf 0}{\covertop_C + 1}{\diagmarksvec^\star_{C,v}}$ \Comment{$B_{j,i} = \diagcnt^\star_{C,v,j,i} \wedge [j > \covertop_C]$}
    \State ${\bf w} \gets \matrixsum({\bf A})$ \Comment{$w_i = \bigvee_j A_{j,i}$}
    \State ${\bf M} \gets \addvector({\bf B}, {\bf w}, \cover_C)$ \Comment{$M_{j,i} = B_{j,i} \,\operatorname{or}\, (w_i \wedge [j = \cover_C])$}
    \State \Return ${\bf M}$
  \EndFunction

  \bigskip

  \Function{FR.GetTotalMarks}{$C$, $v$} \Comment{Computes $\totalmarksvec_{C,v}$}
    \State ${\bf M} \gets \Call{FR.GetDiagMarks}{C, v}$
    \State ${\bf w} \gets \uppersum({\bf M})$ \Comment{$w_i = \bigvee_{j \geq i} M_{j, i}$}
    \State \Return ${\bf w}$
  \EndFunction

  \bigskip

  \Function{FR.OnMerge}{$C$}
    \If{$C = A \cup B$, $\bnd C \eqqcolon \{v\}$, $\bnd A \eqqcolon \{v, x\}$, $\bnd B \eqqcolon \{x\}$} \Comment{Case \CasePointToPathPoint}
      \State ${\bf c}^x \gets N_x.\OrMarks{\{\firstedge_{A, x}\}} \bitor \bmarksvec_x$
      \State $e \gets \firstedge_{A, v}$
      \State $\totalmarksvec_{A, v} \gets \Call{FR.GetTotalMarks}{A, v}$
      \State $\ismarkedvec_e \gets \totalmarksvec_{A, v} \bitor \vectorsplice{{\bf c}^x}{\cover_A + 1}{\bf 0}$
      \State $N_v.\UpdateMarks{e, \ismarkedvec_e}$
    \ElsIf{$C = A \cup B \cup P$, $\bnd C \eqqcolon \{v, w\}$, $\bnd A \eqqcolon \{v, x\}$, $\bnd B \eqqcolon \{x, w\}$, $\bnd P \eqqcolon \{x\}$} \Comment{Case \CasePath}
      \State ${\bf c}^{AB} \gets N_x.\OrMarks{\{\firstedge_{A, x},\, \firstedge_{B, x}\}} \bitor \bmarksvec_x$
      \State ${\bf c}^{A} \gets N_x.\OrMarks{\{\firstedge_{A, x}\}} \bitor \bmarksvec_x$
      \State ${\bf c}^{B} \gets N_x.\OrMarks{\{\firstedge_{B, x}\}} \bitor \bmarksvec_x$
      \State ${\bf c}^{A \setminus B} \gets N_x.\OrMarks{\{\firstedge_{A, x}\}, \{\firstedge_{B,x}\}}$
      \State ${\bf c}^{B \setminus A} \gets N_x.\OrMarks{\{\firstedge_{B, x}\}, \{\firstedge_{A,x}\}}$
      \State $\ell_{AB} \gets N_x.\Level{\{\firstedge_{A, x}, \firstedge_{B, x}\}}$
      \State $\totalmarksvec_C \gets \totalmarksvec_A \bitor \totalmarksvec_B \bitor {\bf c}^{AB}$
      \ForAll{$(\hat{v}, \hat{A}, \hat{B}) \in \{(v, A, B), (w, B, A)\}$}
        \State $\diagmarksvec_{\hat{A}, \hat{v}} \gets \Call{FR.GetDiagMarks}{\hat{A}, \hat{v}}$
        \State $\diagmarksvec_{\hat{B}, x} \gets \Call{FR.GetDiagMarks}{\hat{B}, x}$
        \State $r \gets \cover_{\hat{A}}$
        \State ${\bf M} \gets \matrixsplice{\diagmarksvec_{\hat{B}, x}}{\min\{r, \ell_{AB}\}}{\diagmarksvec_{\hat{A}, \hat{v}}}$
        \State ${\bf u} \gets \matrixsum(\matrixsplice{\bf 0}{\min\{r, \ell_{AB}\}}{\diagmarksvec_{\hat{B}, x}})$
        \State ${\bf M} \gets \addvector({\bf M}, {\bf u}, \min\{r, \ell_{AB}\})$
        \If{$\ell_{AB} < r$}
          \State ${\bf M} \gets \addvector({\bf M}, {\bf c}^{\hat{A}}, r)$
          \State ${\bf M} \gets \addvector({\bf M}, {\bf c}^{\hat{B} \setminus \hat{A}}, \ell_{AB})$
        \Else
          \State ${\bf M} \gets \addvector({\bf M}, {\bf c}^{AB}, r)$
        \EndIf
        \State $\diagmarksvec^\star_{\hat{A}, \hat{v}} \gets {\bf M}$
      \EndFor
    \EndIf
  \EndFunction
\end{algorithmic}
\end{algorithm}


\begin{algorithm}[H]
\begin{algorithmic}
  \caption{Marking updates} \label{alg:toptree-find-updates}
  \Function{FR.Mark}{$u$, $i$}
    \State $\TransientExpose{u}$
    \State $\bmarks_{u,i} \gets 1$
    \State $\TransientUnexpose$
  \EndFunction

  \bigskip

  \Function{FR.Unmark}{$u$, $i$}
    \State $\TransientExpose{u}$
    \State $\bmarks_{u,i} \gets 0$
    \State $\TransientUnexpose$
  \EndFunction
\end{algorithmic}
\end{algorithm}

\begin{algorithm}[H]
\begin{algorithmic}
  \caption{Marking queries} \label{alg:toptree-find-queries}
  
  \Function{FR.FindInPath}{$C$, $v$, $w$, $i$}
    \State \textbf{assert} $C = A \cup B \cup P$, $\bnd C = \{v, w\}$, $\bnd A \eqqcolon \{v, x\}$, $\bnd B \eqqcolon \{x, w\}$, $\bnd P \eqqcolon \{x\}$ \Comment{Case \CasePath}
    
    \Clean{$C$}  \Comment{Push down any pending lazy updates in $C$}
    \If{$\totalmarks_{A, i}$}
      \State \Return \Call{FR.FindInPath}{$A$, $v$, $x$, $i$}
    \ElsIf{$\bmarks_{x, i}$}
      \State \Return $(\firstedge_{A, x}, x, i)$
    \ElsIf{$f_A \coloneqq N_x.\FindMarked{\{\firstedge_{A, x}\}, i} \neq \bot$}
      \State \Return $(\firstedge_{A, x}, x, f_A)$
    \ElsIf{$f_B \coloneqq N_x.\FindMarked{\{\firstedge_{B, x}\}, i} \neq \bot$}
      \State \Return $(\firstedge_{B, x}, x, f_B)$
    \Else
      \State \textbf{assert} $\totalmarks_{B, i}$
      \State \Return \Call{FR.FindInPath}{$B$, $x$, $w$, $i$}
    \EndIf
  \EndFunction
  
  \bigskip
  
  \Function{FR.FindFirstReach}{$p$, $q$, $i$}
    \State $R_p, R_q, R_{pq} \gets \TransientExpose{p, q}$
    \State $\mathrm{answer} \gets (\bot, \bot, \bot)$
    \If{$f \coloneqq N_p.\FindMarked{\{\firstedge_{R_{pq}, p}\}, i} \neq \bot$}
      \State $\mathrm{answer} \gets (\firstedge_{R_{pq}, p}, p, f)$
    \ElsIf{$\totalmarks_{R_{pq}, i}$}
      \State $\mathrm{answer} \gets \Call{FR.FindInPath}{R_{pq}, p, q, i}$
    \ElsIf{$f \coloneqq N_q.\FindMarked{\{\firstedge_{R_{pq}, q}\}, i} \neq \bot$}
      \State $\mathrm{answer} \gets (\firstedge_{R_{pq}, q}, q, f)$
    \EndIf
    \LComment{Now $\mathrm{answer}$ is either negative $(\bot, \bot, \bot)$ or positive (of the form $(ab, c, Y)$, with $Y$ a~container for $y$).}
    \If {$(ab, c, Y) \coloneqq \mathrm{answer} \neq (\bot, \bot, \bot)$}
      \State $y \gets$ \Call{FR.RefineContainer}{$Y$, $i$} \Comment{Phase $2$ (\Cref{alg:toptree-find-refine})}
      \State $\mathrm{answer} \gets (ab, c, y)$
    \EndIf
    \State $\TransientUnexpose$
    \State \Return $\mathrm{answer}$
  \EndFunction
  
  \bigskip  
  
  \Function{FR.FindStrongReach}{$p$, $q$, $e$, $b$, $i$}
    \State $\TransientExpose{p, q}$ 
    \State $\mathrm{answer} \gets N^+_b.\FindStrongMarked{\{e\}, i}$
    
    \If{$\mathrm{answer}$ is a~container $Y$}
      \State $\mathrm{answer} \gets$ \Call{FR.RefineContainer}{$Y$, $i$} \Comment{Phase $2$ (\Cref{alg:toptree-find-refine})}
    \EndIf
    \State $\TransientUnexpose$
    \State \Return $\mathrm{answer}$
  \EndFunction
\end{algorithmic}
\end{algorithm}
\begin{algorithm}[H]
\begin{algorithmic}
  \caption{Cluster refinement} \label{alg:toptree-find-refine}
  \Function{FR.RefineClusterEdgeContainer}{$\vec{vw}$, $i$}
    \State \textbf{assert} $\ismarked_{\vec{vw}, i}$
    \State $C \gets \smallestpoint_{\vec{vw}}$
    \State \textbf{assert} $C = A \cup B$, $\bnd C = \{v\}$, $\bnd A \eqqcolon \{v, x\}$, $\bnd B \eqqcolon \{x\}$ \Comment{Case \CasePointToPathPoint}
    \If{$\totalmarks_{A, v, i}$}
      \State \Return \Call{FR.RefinePathClusterContainer}{$A$, $v$, $x$, $i$}
    \EndIf
    \State \textbf{assert} $\cover_A \geq i$
    \If{$\bmarks_{x,i}$}
      \State \Return $x$
    \EndIf
    \State $f \gets N_x.\FindMarked{\{\firstedge_{A, x}\}, i}$
    \State \textbf{assert} $f \neq \bot$
    \State \Return \Call{FR.RefineClusterEdgeContainer}{$f$, $i$}
  \EndFunction

  \bigskip

  \Function{FR.RefinePathClusterContainer}{$C$, $v$, $w$, $i$}
    \State \textbf{assert} $\totalmarks_{C, v, i}$
    \State \textbf{assert} $C = A \cup B \cup P$, $\bnd C = \{v, w\}$, $\bnd A \eqqcolon \{v, x\}$, $\bnd B \eqqcolon \{x, w\}$, $\bnd P \eqqcolon \{x\}$ \Comment{Case \CasePath}
    \State \Clean{$C$}
    \If{$\totalmarks_{A, v, i}$}
      \State \Return \Call{FR.RefinePathClusterContainer}{$A$, $v$, $x$, $i$}
    \EndIf
    \State \textbf{assert} $\cover_A \geq i$
    \If{$\bmarks_{x, i}$}
      \State \Return $x$
    \EndIf
    \If{$N_x.\Level{\firstedge_{A, x}, \firstedge_{B, x}} \geq i$ and $\totalmarks_{B, x, i}$}
      \State \Return \Call{FR.RefinePathClusterContainer}{$B$, $x$, $w$, $i$}
    \EndIf
    \State $f \gets N_x.\FindMarked{\{\firstedge_{A,x}\}, i}$
    \State \textbf{assert} $f \neq \bot$
    \State \Return \Call{FR.RefineClusterEdgeContainer}{$f$, $i$}
  \EndFunction

  \bigskip

  \Function{FR.RefineContainer}{$Y$, $i$}
    \If{$Y$ is the explicit container $y$}
      \State \Return $y$
    \ElsIf{$Y$ is the cluster edge container $\vec{vw}$}
      \State \Return \Call{FR.RefineClusterEdgeContainer}{$\vec{vw}$, $i$}
    \Else
      \State \textbf{assert} $Y$ is the path cluster container $(C, v, w)$
      \State \Return \Call{FR.RefinePathClusterContainer}{$C$, $v$, $w$, $i$}
    \EndIf
  \EndFunction
\end{algorithmic}
\end{algorithm}


\end{document}